\title{\textbf{A characterization of graph properties testable for \\ \emph{general planar graphs} with one-sided error \\ (It's all about forbidden subgraphs)
}}
\author{\textbf{Artur Czumaj}
        \thanks{Department of Computer Science and Centre for Discrete Mathematics and its Applications (DIMAP), University of Warwick. Email: A.Czumaj@warwick.ac.uk.  Research partially supported by the Centre for Discrete Mathematics and its Applications (DIMAP), by IBM Faculty Award, and by EPSRC award EP/N011163/1.}
    \and
\textbf{Christian Sohler}
        \thanks{Department of Computer Science, TU Dortmund. Email: christian.sohler@tu-dortmund.de. Research supported by ERC grant No. 307696.}
}
\date{}
\newcommand{\mydriver}{hypertex}
 \renewcommand{\mydriver}{pdftex}
\def\zeit{\number\shorthour:\ifnum\shortminute<10 0\number\shortminute
\else\number\shortminute\fi}
\let\oldproofname=\proofname
\renewcommand{\proofname}{\rm\bf{\oldproofname}}
\newtheorem{theorem}{Theorem}
\newtheorem{lemma}[theorem]{Lemma}
\newtheorem{defn}[theorem]{Definition}
\newenvironment{definition}[1][]{\IfStrEq{#1}{}{\begin{defn}}{\begin{defn}[#1]}\rm}{\end{defn}}
    \newtheorem{remk}[theorem]{Remark}
    \newenvironment{remark}[1][]{\IfStrEq{#1}{}{\begin{remk}}{\begin{remk}[#1]}\small\rm}{\qed\end{remk}}
\newtheorem{claim}[theorem]{Claim}
\newtheorem{fact}[theorem]{Fact}
\renewcommand{\qed}{\hspace*{\fill}\ensuremath{\blacksquare}}
\newcommand{\COMMENTED}[1]{{}}
\newcommand{\junk}[1]{\COMMENTED{#1}}
\newcommand{\NAT}{\ensuremath{\mathbb{N}}}
\newcommand{\NATURAL}{\NAT}
\renewcommand{\Pr}[1]{\ensuremath{\mathbf{Pr}[#1]}}
\newcommand{\Ex}[1]{\ensuremath{\mathbf{E}[#1]}}
\def\epsilon{\ensuremath{\varepsilon}}
\newcommand{\eps}{\ensuremath{\epsilon}}
\newcommand{\dg}{\ensuremath{\mathfrak{d}}}
\newcommand{\ld}{\ensuremath{\mathfrak{t}}}
\newcommand{\RBE}{{\bf\sffamily Random-Bounded-BFS-Exploration}}
\renewcommand{\RBE}{{\bf\sffamily Random-Exploration}}
\newcommand{\RLBD}{{\bf\sffamily Random-Bounded-BFS-Tester}}
\renewcommand{\RLBD}{{\bf\sffamily Tester}}
\newcommand{\RLBFS}{{\bf\sffamily Random-bounded-BFS}}
\renewcommand{\RLBFS}{{\bf\sffamily Random-Traverse}}
\newcommand{\Traverse}{{\bf\sffamily Bounded-BFS-Traverse}}
\newcommand{\HRLBD}{{\bf\sffamily HTester}}
\newcommand{\HRLBFS}{{\bf\sffamily Random-HTraverse}}
\newcommand{\AL}{\textbf{\sffamily Assigning-Levels}}
\newcommand{\DDCH}[1]{{\ensuremath{\mathcal{C}_{#1}^{\langle H \rangle}}}} 
\renewcommand{\DDCH}[1]{{\ensuremath{\mathbb{Q}_{#1}}}} 
\newcommand{\DCH}{\ensuremath{\mathcal{H}}} 
\renewcommand{\DCH}{{\ensuremath{\mathcal{C}^{\langle H \rangle}}}} 
\renewcommand{\DCH}{{\DDCH{}}} 
\newcommand{\G}{\ensuremath{\mathcal{G}}} 
\renewcommand{\G}{\ensuremath{G}} 
\newcommand{\GG}{\ensuremath{\mathfrak{H}}} 
\renewcommand{\GG}{{\ensuremath{\mathfrak{G}}}} 
\newcommand{\U}{\ensuremath{\mathcal{G}}}
\renewcommand{\U}{\ensuremath{\mathbb{G}}}
\newcommand{\M}{\ensuremath{\mathcal{M}}}
\renewcommand{\M}{\ensuremath{\mathcal{M}}}
\newcommand{\HQ}{\ensuremath{\mathcal{H}}}
\renewcommand{\HQ}{\ensuremath{\mathcal{H}}}
\renewcommand{\P}{\ensuremath{\mathcal{P}}}
\renewcommand{\P}{\ensuremath{\mathfrak{Rep}}}
\newcommand{\h}{\ensuremath{\mathfrak{h}}}
\newcommand{\e}{\ensuremath{\mathfrak{e}}}
\newcommand{\cc}{\ensuremath{\mathfrak{c}}}
\newcommand{\N}{\ensuremath{\mathcal{N}}}
\newcommand{\lab}{\ensuremath{\sigma}}
\newcommand{\clab}{\ensuremath{\sigma^*}}
\newcommand{\Hsize}{\ensuremath{|V(H)|}}
\newcommand{\Gi}{\ensuremath{G^*_i}}
\renewcommand{\Gi}{\ensuremath{\GG(\HQ_i(\DDCH{i}))}}
\newcommand{\Gii}{\ensuremath{G^*_i}}
\newcommand{\Vii}{\ensuremath{V^*_i}}
\newcommand{\Eii}{\ensuremath{E^*_i}}
\newcommand{\NEIG}[2]{\ensuremath{\mathcal{N}_{#1}(#2)}}
\newcommand{\undefine}[1]{\let#1\ThisCommandWasUndefined}
\newcommand{\Artur}[1]{\footnote{{\sc\small \textcolor[rgb]{1.00,0.00,0.00}{\textbf{Artur:}}} \textcolor[rgb]{1.00,0.00,0.00}{#1}}}
\newcommand{\SArtur}[1]{}
\newlength{\savedparindent}
\newcommand{\SaveIndent}{\setlength{\savedparindent}{\parindent}}
\newcommand{\RestoreIndent}{\setlength{\parindent}{\savedparindent}}
\newcommand{\InGray}[1]{%
\SaveIndent{} %
\noindent{} \fcolorbox[rgb]{0,0,0}{0.95,0.95,0.95}{
\begin{minipage}{0.965\linewidth} %
\RestoreIndent{}%
#1
\end{minipage}
} }
\newcommand{\InGrayMiddle}[1]{%
\SaveIndent{} %
\centerline{ \fcolorbox[rgb]{0,0,0}{0.95,0.95,0.95}{
\begin{minipage}{0.8\linewidth} %
\RestoreIndent{}
#1
\end{minipage}
} } }
\date{\small \zeit{}, \today}
\date{}
\begin{document}


\begin{titlepage}
\maketitle\thispagestyle{empty}

\begin{abstract}
The problem of characterizing testable graph properties (properties that can be tested with a number of queries independent of the input size) is a fundamental problem in the area of property testing. While there has been some extensive prior research characterizing testable graph properties in the dense graphs model and we have good understanding of the bounded degree graphs model, no similar characterization has been known for \emph{general graphs}, with \emph{no degree bounds}. In this paper we take on this major challenge and consider the problem of characterizing all testable graph properties in \emph{general planar graphs}.

We consider the model in which a general planar graph can be accessed by the random neighbor oracle that allows access to any given vertex and access to a random neighbor of a given vertex. We show that, informally, a graph property $\mathcal{P}$ is testable with one-sided error for general planar graphs \emph{if and only if} testing $\mathcal{P}$ can be reduced to testing for a finite family of finite forbidden subgraphs.
\junk{
We consider the model in which a general planar graph can be accessed by the random neighbor oracle enabling two queries: to access a random vertex and to access a random neighbor of a given vertex\Artur{I'm not sure, but maybe we just want to have ``to access a vertex and to access a random neighbor of a given vertex.''}. A property tester is called oblivious if its decisions are independent of the size of the input graph\Artur{I'm not sure yet if we need the assumption that the tester must be oblivious. (Term ``oblivious'' comes from \cite{AS08})}. We show that a graph property $\mathcal{P}$ has an oblivious one-sided error tester for planar graphs \emph{if and only if} testing $\mathcal{P}$ can be reduced to testing for a finite family of forbidden subgraphs.

It is not difficult to see that any testable graph property with one-sided error can be reduced to testing for a finite family of forbidden subgraphs.\Artur{You may disagree with the claim ``it's not difficult to see'' but I believe now that it's rather obvious that the claim is true. In fact, we possibly don't even need our paper \cite{CFPS19} here, though we definitely should keep it here.} Our main technical contribution is in showing that the fundamental property of being $H$-free can be tested with a constant number of queries and with one-sided error for every fixed finite graph $H$. The proof of this result is by analysing some basic random graph exploration algorithm: starting at a random vertex, we perform a random constant-width, constant-depth breadth-first-search, to determine whether a forbidden graph $H$ is as a subgraph. The analysis of this algorithm is the main technical challenge and it involves a series of nontrivial extensions of earlier works on testing bipartiteness and of the analysis of random explorations of planar graphs.
}%
While our presentation focuses on planar graphs, our approach extends easily to general minor-free graphs.

Our analysis of the necessary condition
relies on a recent construction of canonical testers in the random neighbor oracle model that is applied here to the one-sided error model for testing in planar graphs.
The sufficient condition in the characterization reduces the problem to the task of testing $H$-freeness in planar graphs, and is \emph{the main and most challenging technical contribution of the paper: we show that for planar graphs (with arbitrary degrees), the property of being $H$-free is testable with one-sided error for every finite graph $H$, in the random neighbor oracle model}.
\end{abstract}
\end{titlepage}


\section{Introduction}

\SArtur{Remember: testable in constant time or with constant query complexity = \emph{testable}. (Disclaimer: we don't want to use \emph{constant time}!)}
\SArtur{Emphasize: \emph{one-sided error}!}
The fundamental problem in the area of graph property testing is for a given undirected graph $G$ to distinguish if $G$ satisfies some graph property $\mathcal{P}$ or if $G$ is $\eps$-far from satisfying $\mathcal{P}$, where $G$ is said to be $\eps$-far from satisfying $\mathcal{P}$ if an $\eps$-fraction of its representation should be modified in order to make $G$ satisfy~$\mathcal{P}$. 
%
The notion of testability of combinatorial structures and of graphs, has been introduced by Goldreich et al.\ \cite{GGR98}, who have shown that many natural graph properties such as $k$-colorability or having a large clique are \emph{testable}, that is, have a tester, whose query complexity, that is, the number of oracle queries to the input representation (in \cite{GGR98}, to the graph adjacency matrix) can be upper bounded by a function that depends only on the property $\mathcal{P}$ and on $\eps$, the proximity parameter of the test, and is independent of the size of the input graph $G$. This has been later extended to show that testability in the \emph{dense graph model} (of \cite{GGR98}) is closely related to the graph regularity lemma as one can show that a property is testable (with two-sided error) if and only if it can be reduced to testing for a finite number of regular partitions \cite{AFNS06}; for one-sided error testing, it has been shown that a property is testable if and only if it is hereditary or close to hereditary \cite{AS08}. In particular, we know that subgraph freeness is testable with one-sided error in this model (see, e.g., \cite{AS05}). We also know of similar logical characterization of families of testable graph properties (for example, every first-order graph property of type $\exists\forall$ is testable, while there are first-order graph properties of type $\forall\exists$ that are not testable~\cite{AFKS00}).

While for many years the main efforts in property testing have been concentrated on the dense graph model, there has been also an increasing amount of research focusing on the \emph{bounded degree graph model} introduced by Goldreich and Ron \cite{GR97}, the model more suitable for sparse graphs. For example, while it is trivial to test the subgraph freeness with one-sided error in this model, testing $H$-minor freeness is more complex, and is possible with constant query complexity only if $H$ is cycle-free \cite{CGRSSS14}; if $H$ has a cycle, then $\widetilde{\Omega_{\eps}}(\sqrt{n})$ queries are required and effectively sufficient \cite{CGRSSS14,ERS17}. Among further highlights, it is known that every hyperfinite property is testable with two-sided error \cite{NS13} (see also earlier works in \cite{BSS10,CSS09,GR99}).

Rather surprisingly, much less is known for \emph{general graphs}, that is, graphs with no bound for the maximum degree (see, e.g., \cite[Chapter~10]{Gol17}). The model has been initially studied by Kaufman et al.\ \cite{KKR04}, Parnas and Ron \cite{PR02}, and Alon et al.\ \cite{AKKR08}, where the main goal was to study the trade-off between the complexity for sparse graphs with that for dense graphs (it should be noted though that these papers were using a slightly different access oracle to the input graph). These results show that most of even very basic properties are \emph{not testable}. Czumaj et al.\ \cite{CMOS11} addressed a related question in this model, and show that in fact if one restricts the input graphs to be planar (but without any constraints on the maximum degree), then the benchmark problem of testing bipartiteness is testable in the random neighbor query model. In a similar vein, Ito \cite{Ito16} extended the framework from \cite{NS13} and show that all graph properties are testable for a certain special class of multigraphs called hierarchical-scale-free multigraphs. Still, despite these few results and despite its natural importance, \emph{our understanding of graph property testing for degree-unconstrained graphs is very limited}.
%
%
In this paper we take on this major challenge and consider the problem of characterizing all testable graph properties in \emph{general planar graphs}. We consider the model in which a general planar graph can be accessed by the \emph{random neighbor oracle} that allows access to any given vertex and access to a random neighbor of a given vertex. We show that, informally, a graph property $\mathcal{P}$ is testable with one-sided error for general planar graphs \emph{if and only if} testing $\mathcal{P}$ can be reduced to testing for a finite family of finite forbidden subgraphs. While our presentation focuses on planar graphs, our approach extends easily to general minor-free graphs.


\paragraph{Testing for subgraphs-freeness.}

The central combinatorial problem considered in this paper is that of subgraph detection. The question of identifying frequent subgraphs in big graphs is one of the most fundamental problems in network analysis, extensively studied in the literature. It has been empirically shown that different classes of networks have the same frequent subgraphs and they differ for different network classes \cite{MSIKCA02}. In this context, frequently occurring subgraphs are also known as \emph{network motifs} \cite{MSIKCA02}. This raises the question how quickly we can identify the motifs of a given network. Recent work approaches this question by approximating the number of occurrences of certain subgraphs using random sampling \cite{ELRS17,ERS17,GRS11}. In this paper, we will study the corresponding property testing question: \emph{Can we distinguish a graph that has no copies of a predetermined subgraph $H$ from a graph in which we need to remove more than an $\eps$-fraction of its edges in order to obtain a graph that contains no copy of $H$}. This question has received a lot of attention in the property testing setting and it is known that subgraph freeness can be tested with a constant number of queries both in the dense graph model (see, e.g., \cite{AS05}) and in the bounded degree graph model, where testing subgraph freeness is simple. The problem of testing subgraph freeness has also been studied in the setting of general graphs \cite{AKKR08}, where the authors give a lower bound of $\Omega(n^{1/3})$ queries for testing triangle freeness\SArtur{I'm puzzled by this claim, since for general graphs, even for sparse ones, testing triangle freeness trivially requires $\Omega(\sqrt{n})$ complexity \dots take our standard example of a clique on $\sqrt{n}$ vertices and the rest consists of $n - \sqrt{n}$ isolated vertices.}, which can be extended to other non-bipartite subgraphs. They also give an upper bound of $O(n^{6/7})$ queries. We continue this line of research, but will put our focus on sparse graphs, i.e., graphs with bounded average degree. Since it seems that for many properties we cannot hope for extremely efficient, that is, testing algorithms with a constant number of queries in general graphs (often a hard example is a clique on $\sqrt{n}$ vertices), we focus our attention on \emph{planar graphs}. It has been only recently shown that bipartiteness in planar graphs can be tested with a constant number of queries  \cite{CMOS11}. 
Our result can be viewed as a major extension of that result: we prove that for every fixed graph $H$, the property of $H$-freeness can be tested with a constant number of queries. Our approach extends to general \emph{minor-free graphs}.


\subsection{Basic notation}

Before we proceed with detailed description of our results, let us begin with some basic definitions.


\paragraph{Notation.}
\label{Notation}

Throughout the paper we use several constants depending on $H$ (forbidden subgraph) and $\eps$. We use lower case Greek letters to denote constants that are typically smaller than $1$ (e.g., $\delta_i(\eps,H)$) and lower case Latin letters to denote constants that are usually larger than $1$ (e.g., $f_i(\eps,H)$). All these constants are always positive.
Furthermore, throughout the paper we use the asymptotic symbols $O_{\eps,H}(\cdot)$, $\Omega_{\eps,H}(\cdot)$, and $\Theta_{\eps,H}(\cdot)$, which ignore multiplicative factors that depend only on $H$ and $\eps$ and that are positive for $\eps>0$.

Throughout the paper, for any set of edge-disjoint subgraphs $\mathcal{S}$ of $G = (V,E)$, we write $\G[\mathcal{S}]$ to denote the graph with vertex set $V$ and edge set being the set of edges from the sets in $\mathcal{S}$.


\subsubsection{Property testing and $H$-freeness}

A \emph{graph property $\mathcal{P}$} is any family of graphs closed under isomorphism. (For example, bipartiteness is a graph property $\mathcal{P}$ defined by a family of all bipartite graphs.) We are interested in finding an algorithm (called \emph{tester}) for testing a given graph property $\mathcal{P}$, i.e., an algorithm that inspects only a very small part of the input graph $G$, and accepts if $G$ satisfies $\mathcal{P}$ with probability at least $\frac23$, and rejects $G$ if it is $\eps$-far away from $\mathcal{P}$ with probability at least $\frac23$, where $\eps$ is a proximity parameter, $0 \le \eps \le 1$. We say a simple graph $G$ is \emph{$\eps$-far from $\mathcal{P}$} if one has to delete or insert more than $\eps |V|$ edges from $G$ to obtain a graph satisfying $\mathcal{P}$\footnote{Similarly as in \cite{CMOS11}, we notice that the standard definition of being $\eps$-far (see, e.g., the definition in \cite{Gol17} or \cite{KKR04}) expresses the distance as the fraction of edges that must be modified in $G$ to obtain a graph satisfying $\mathcal{P}$; comparing this to our definition, instead of modifying $\eps |V|$ edges, one modifies $\eps |E|$ edges. In this paper we prefer to use the definition with $\eps |V|$ edge modifications because our focus is on the study of sparse graphs, graphs with $|E| = O(|V|)$. Indeed, for any class of planar graphs or graphs with an excluded minor, which are the main classes of graphs studied in this paper, the number of edges in the graph is upper bounded by $O(|V|)$. Moreover, unless the graph is very sparse (i.e., most of its vertices are isolated, in which case even finding a single edge in the graph may take a large amount of time), the number of edges in the graph is $\Omega(|V|)$. Thus, under the standard assumption that $|E| = \Omega(|V|)$, the $\eps$ in our definition and the $\eps$ in the previous definitions remain within a constant factor. We use our definition of being $\eps$-far for simplicity; our analysis can be extended to the standard definition in a straightforward way.}.
%

The main focus of this paper is on the study of testers with \emph{one-sided error}, that is, testers that always accept all graphs satisfying $\mathcal{P}$ and can err only for graphs $\eps$-far from $\mathcal{P}$. (In contrast, \emph{two-sided error} testers can err (with probability at most $\frac13$) both for graphs $\eps$-far from $\mathcal{P}$ \emph{and} for graphs satisfying $\mathcal{P}$.)

The main graph properties considered in this paper are related to \emph{forbidden subgraphs}. Throughout the entire paper we will fix $H = (V(E),E(H))$ to be an arbitrary, simple, finite undirected graph. The notion of a graph $H$ being \emph{finite} means that its size is constant, though we will allow the constant to be a function of $\eps$, the proximity parameter for property testing, which will be clear from the context. (That is, for a given graph property $\mathcal{P}$ and a proximity parameter $\eps$, $0 < \eps < 1$, we will say that a graph $H$ is \emph{finite} (for $\mathcal{P}$ and $\eps$) if there is $s  = s(\eps) = O_{\eps}(1)$, such that $|V(H)| \le s$ for every $n \in \NATURAL$.)
\junk{\Artur{You may want to have a more elaborate definition, but I don't think it's needed:
\it
For a given graph property $\mathcal{P} = (\mathcal{P}_n)_{n \in \NATURAL}$, a proximity parameter $\eps$, $0 < \eps < 1$, and $n \in \NATURAL$, we will say that a graph $H$ is \emph{finite} (for $\mathcal{P}$, $\eps$, and $n$) if there is $s  = s(\eps) = O_{\eps}(1)$, such that $|V(H)| \le s$.
For given $\mathcal{P}$, $\eps$, and $n$, a \emph{family $\mathcal{H}$ of finite graphs} is any family such that each $H \in \mathcal{H}$ is finite (for $\mathcal{P}$, $\eps$, and $n$).
Furthermore, notice that since each $H \in \mathcal{H}$ is of size $O_{\eps}(1)$, so is the size of $\mathcal{H}$, that is, $|\mathcal{H}| = O_{\eps}(1)$. Hence, $\mathcal{H}$ is also a \emph{finite family} of \emph{finite graphs}.
}}

We say that a given graph $G$ is \emph{$H$-free} if $G$ does not contain a copy of $H$.\SArtur{Is this clear? Should we emphasize that we're not talking about induced copies?} Following the definition above, we say that a simple graph $G$ is \emph{$\eps$-far from $H$-free} if one has to delete more than $\eps |V|$ edges from $G$ to obtain an $H$-free graph.

Our definitions extends to families of forbidden graphs in a natural way. If $\mathcal{H}$ is a finite family of finite graphs, then a given graph $G$ is \emph{$\mathcal{H}$-free} if for every $H \in \mathcal{H}$, $G$ is $H$-free. Similarly, $G$ is \emph{$\eps$-far from $\mathcal{H}$-free} if for every $H \in \mathcal{H}$, $G$ is $\eps$-far from $H$-free. Further, notice that if $\mathcal{H}$ is a finite family of finite graphs then since each $H \in \mathcal{H}$ is of size $O_{\eps}(1)$, so is the size of $\mathcal{H}$; hence, $\mathcal{H}$ is also a \emph{finite family} of \emph{finite graphs}.

In our paper we will also consider the following generalization of $\mathcal{H}$-freeness. In what follows, for a given graph property $\mathcal{P}$ and $n \in \NATURAL$, let $\mathcal{P}_n$ be the graph property $\mathcal{P}$ for $n$-vertex graphs.

\begin{definition}\textbf{(Semi-subgraph-freeness)}\it
\label{def:semi-H-freeness}
\SArtur{This definition should be independent of any complexity assumptions, independent on the query access model, etc.}%
A graph property $\mathcal{P} = (\mathcal{P}_n)_{n \in \NATURAL}$ is \textbf{\emph{semi-subgraph-free}} if for every $\eps$, $0 < \eps < 1$, and every $n \in \NATURAL$, there is a finite family $\mathcal{H}$ of finite graphs such that the following hold:
\begin{enumerate}[(i)]
\item any graph $G$ satisfying $\mathcal{P}_n$ is $\mathcal{H}$-free, and
\item any graph $G$ which is $\eps$-far from satisfying $\mathcal{P}_n$, is not $\mathcal{H}$-free (contains a copy of some $H \in \mathcal{H}$).
\end{enumerate}
\end{definition}


Let us emphasize that in Definition \ref{def:semi-H-freeness} by a \emph{finite} family $\mathcal{H}$ of \emph{finite graphs} we mean that even though $\mathcal{H}$ may depend on $n$, the sizes of $\mathcal{H}$  and of any $H \in \mathcal{H}$ are always upper bounded by a function independent on $n$, $|\mathcal{H}| = O_{\eps}(1)$ and $|V(H)| = O_{\eps}(1)$.


\subsubsection{Oracle access model: random neighbor queries}
\label{subsubsec:oracle}

The access to the input graph is given by an \emph{oracle}. We consider the \emph{random neighbor oracle}, in which an algorithm is given $n \in \NATURAL$ and access to an input graph $G = (V,E)$ by a query oracle, where $V = \{1, \dots, n\}$. The \emph{random neighbor query} specifies a vertex $v \in V$ and the oracle returns a vertex that is chosen i.u.r. (independently and uniformly at random) from the set of all neighbors of $v$. (Notice that in the random-neighbor model, since $V = \{1, \dots, n\}$, the algorithm can also trivially select a vertex from $V$ i.u.r.)

We believe that the random-neighbor model is the most natural model of computations in the property testing framework in the context of very fast algorithms, and therefore our main focus is on that model.

\begin{remark}
\label{remark:modified-random-neighbor-oracle}
We notice that all our results could be also presented in a variant of the model above in which we would allow only two types of queries: \emph{random vertex query}, which returns a random vertex, and \emph{random neighbor query}, which returns a random neighbor of a given vertex $v$.

Each time we call the random neighbor oracle, the returned random vertex or its random neighbor is chosen \emph{independently and uniformly at random (i.u.r.)}. All vertices of the input graph are accessible and distinguishable by their IDs, and there is no requirement about the IDs other than that they are all distinct. Notice that in this model, the tester does not know $n$, the size of the input graph $G$.
\end{remark}


\paragraph{Query complexity.}

The \emph{query complexity} of a tester is the number of oracle queries it makes.

We say a graph property $\mathcal{P}$ is \emph{testable} if it has a \emph{tester with constant query complexity}, that is, for every $\eps$, $0 < \eps < 1$, there is $q = q(\eps)$ such that for every $n \in \NATURAL$ the tester has query complexity upper bounded by~$q$ (the complexity may depend on $\mathcal{P}$ and $\eps$, but not on the input graph nor its size).\SArtur{We may want to make a definition out of it, but since this would change the numbering of theorems/claims/etc, I decided to not do it \emph{yet} so, leaving it in the main text.}


\paragraph{Other oracle access models.}

There are some natural variations of the random neighbor oracle model that have been considered in the literature and that can be relevant here.\SArtur{Of course, we could also mention here the original model of incidence and adjacency queries proposed/used by Goldreich and Ron (see \cite{Gol17}, and also \cite{AKKR08,KKR04,PR02}), which assumes that one knows $n$ and one can ask the following queries:
\begin{itemize}
\item an incidence function $g_1 : [n] \times [n-1] \rightarrow \{0, 1, \dots, n\}$ such that $g_1(u,i) = 0$ if $u$ has less than $i$ neighbors and $g_1(u,i) = v$ if $v$ is the $i$th neighbor of $u$;
\item an adjacency predicate $g_2 : [n] \times [n] \rightarrow \{0, 1\}$ such that $g_2(u,v) = 1$ if and only if $\{u, v\} \in E$.
\end{itemize}
While Goldreich \cite{Gol17} considers to allow both incidence queries and adjacency queries to be the ``most natural choice'' in the context of property testing of general graphs, we believe that especially in the context of analysing constant-query property testers (which is the main focus of this paper), the \emph{random neighbor oracle model} and other models considered in the paper to be \emph{more natural}.
}

\begin{enumerate}[I.]

\item One could extend the random neighbor oracle model to the \emph{random distinct neighbor oracle model}, where one allows for every vertex to query for \emph{distinct} random neighbors (that is, each time we call the random distinct neighbor query for a given vertex $v$, the oracle returns a neighbor of $v$ chosen i.u.r. 
    among all neighbors not returned earlier); if all neighbors have been already returned then the oracle would return a special symbol.

\item One could consider a model allowing two other types of queries:
\begin{inparaenum}
\item[\emph{degree queries:}] for every vertex $v \in V$, one can query the degree of $v$, and
\item[\emph{neighbor queries:}] for every vertex $v \in V$, one can query its $i^{\text{th}}$ neighbor.
\end{inparaenum}
Observe that by first querying the degree of a vertex, we can always ensure that the $i^{\text{th}}$ neighbor of the vertex exists in the second type of query.
\end{enumerate}

It should be noted that while our main focus is on the random neighbor oracle model, our testers (and their analysis) for $H$-freeness can be trivially modified to work in the other three oracle access models (in particular, Theorems \ref{thm:main-H-freeness}, \ref{thm:main-extension}, and \ref{thm:main-H-minor-free-extension} hold in all these models). However, our main result, the characterization of testable properties in planar graphs cannot be extended to the other models (see Section \ref{subsubsec:sensitivity-of-the-models}), other than the variant of the random neighbor oracle discussed in Remark \ref{remark:modified-random-neighbor-oracle}.


For the sake of completeness, in Appendix \ref{subsec:basic-planar} we recall some basic properties of planar graphs.


\subsection{Our results}

In this paper we present a characterization of all testable graph properties for general planar graphs in the random neighbor oracle model, showing that, informally, a graph property $\mathcal{P}$ is testable with one-sided error for general planar graphs if and only if testing $\mathcal{P}$ can be reduced to testing for a finite family of finite forbidden subgraphs (see Theorem \ref{thm:characterization}). Further, the results extend to general families of \emph{minor-free graphs}~$G$.

The result is proven in two steps: First we apply a recent result from \cite{CFPS19} (see Theorem \ref{thm:canonical-tester}) to argue in \textbf{Theorem \ref{thm:main-1-sided-reduces-to-H-freeness}} the (easier) necessary condition, that 
\begin{itemize}
\item in the random neighbor oracle model, any graph property $\mathcal{P}$ testable with one-sided error can be reduced to testing for a finite family of forbidden subgraphs.
\end{itemize}
Then we prove \emph{our main technical contribution}, \textbf{Theorem \ref{thm:main-H-freeness}}, that
\begin{itemize}
\item for a given connected finite graph $H$, \emph{subgraph freeness} is \emph{testable} (can be tested with a constant number of queries) on any input \emph{planar graph} $G$, assuming the access to $G$ is via the random neighbor oracle.
\end{itemize}
This latter result extends to arbitrary (not necessarily connected) finite graphs $H$ and to testing for $\mathcal{H}$-freeness for any finite family $\mathcal{H}$ of finite graphs, see Theorem \ref{thm:main-extension} in Section \ref{sec:dics-many}.
By combining these results, we obtain in \textbf{Theorem \ref{thm:characterization}}
\begin{itemize}
\item a characterization of graph properties testable with one-sided error for general planar graphs; this result extends to general minor-free graphs.
\end{itemize}
%


While we believe that our general characterization of all testable graph properties of planar and minor-free graphs is a central problem in property testing and is the main contribution of this paper, we also hope that our constant query time tester for subgraph freeness will further advance our understanding of efficient algorithms for that fundamental problem.\SArtur{Possibly more text here?}

Our work is a continuation of our efforts to understand the complexity of testing basic graph properties in graphs with no bounds for the degrees. Indeed, while major efforts in the property testing community have been put to study dense graphs and bounded degree graphs (cf. \cite[Chapter~8-9]{Gol17}), we have seen only limited advances in the study of arbitrary graphs, in particular, sparse graphs but without any bounds for the maximum degrees. We believe that this model is one of the most natural models, and it is also most relevant to computer science applications (see also the motivation in \cite[Chapter~10.5.3]{Gol17}). While the understanding of testing in general graphs is still elusive, our work makes a major step forward towards understanding of testing properties for most extensively studied classes of graphs, in our case of planar and minor-free graphs.


\subsubsection{Overview: Any testable property can be reduced to testing for forbidden subgraphs}
\label{subsubsec:necessary-cond-testing}

We begin with an \emph{easier part} of our characterization (see Section \ref{sec:testable-implies-forbidden-subgraphs} for details). Our approach follows the method of canonical testers for graph properties testable for general graphs developed recently in \cite{CFPS19}. The \emph{intuition} here is rather simple: if a graph property $\mathcal{P}$ is testable then all what the tester can do is for a given input graph $G$ to randomly sample a constant number of vertices and then to explore their neighborhoods of constant size, and on the basis of the visited subgraph $U$ of $G$ to decide whether to accept the input graph or to reject it. Further, the assumption that we consider a one-sided error tester implies that the tester must always accept any graph $G$ satisfying $\mathcal{P}$. Therefore, in particular, if we define $\mathcal{H}$ as the family of all $U$ for which the tester rejects any input graph $G$ that contains $U$, then we can argue that any graph $G$ satisfying $\mathcal{P}$ must be $\mathcal{H}$-free. The analysis can be easily extended to hold for an arbitrary class of the input graphs, e.g., for planar graphs.

(Notice that these arguments show only that any testable graph property $\mathcal{P}$ has a finite family $\mathcal{H}$ of finite graphs such that $\mathcal{P}$ is $\mathcal{H}$-free. However, we do not say anything about any other properties of $\mathcal{P}$; indeed, $\mathcal{P}$ may be not only $\mathcal{H}$-free but also may have some other properties. A good example showing the sensitivity of this notion is testing bipartiteness. It has been shown \cite{CMOS11} that for general planar graphs bipartiteness is testable with one-sided error, but clearly, bipartiteness cannot be defined as a property of $\mathcal{H}$-freeness for a \emph{finite} family $\mathcal{H}$ of forbidden graphs. However, one can easily show (cf. [10, Section 2.1]) that if an input graph $G$ is $\eps$-far from bipartitiness, then there must be an odd $k = O(1/\eps^2)$, so that $G$ is $O(\eps)$-far from $C_k$-free, and this fact suffices to argue that bipartitiness for planar graphs is testable.)

To turn this intuition into a formal proof, we need to do some additional work. We rely heavily on the canonical tester developed recently in \cite{CFPS19} to argue that to test any testable graph property we can assume that the tester at hand is ``oblivious'' and works non-adaptively. This allows us to obtain a clean characterization of forbidden subgraphs for any given testable property $\mathcal{P}$. Further, we lift this characterization to extend the analysis to \emph{semi-subgraph-free} graph properties, 
which are graph properties defined as $\mathcal{H}$-free or close to $\mathcal{H}$-free, for some finite family $\mathcal{H}$ of finite graphs. The analysis is presented in Section~\ref{sec:testable-implies-forbidden-subgraphs} (see Theorem~\ref{thm:main-1-sided-reduces-to-H-freeness}).


\subsubsection{Overview: Testing for forbidden subgraphs in planar graphs and minor-free graphs}
\label{subsubsec:sufficient-cond-testing}

The \emph{main technical contribution} of this paper is a proof that \emph{for planar graphs}, the property of being $H$-free is testable with one-sided error for every connected finite subgraph $H$, in the random neighbor oracle model, see Theorem \ref{thm:main-H-freeness}. 
This result extends to arbitrary (not necessarily connected) finite graphs $H$ and to testing for $\mathcal{H}$-freeness for any finite family $\mathcal{H}$ of finite graphs, see Theorem \ref{thm:main-extension}. Further, the results extend to general families of \emph{minor-free graphs} $G$, see Theorem 
\ref{thm:main-H-minor-free-extension}.

Let us first discuss the challenges of the task of testing $H$-freeness. It has been known for a long time that for \emph{bounded degree graphs} one can test $H$-freeness with a constant number of queries using the following simple tester: randomly sample a constant number of vertices and check whether any of them belongs to a copy of $H$. This result relies on two properties of bounded degree graphs:
\begin{inparaenum}[\it (i)]
\item that it is easy to test whether a given vertex belongs to a copy of $H$ (just run a BFS of depth $|V(H)|$), and
\item that if a given graph is $\eps$-far from $H$-free then it has many edge-disjoint copies of $H$ that cover a total of a linear number of vertices.
\end{inparaenum}
But both these properties fail to work for general graphs. For example, for {\it (ii)}, consider an $n$-vertex graph $G$ with $n - \sqrt{n}$ isolated vertices and $\sqrt{n}$ vertices forming a clique. It is easy to see that $G$ is $\eps$-far from $H$-free (for a sufficiently small $\eps$ with respect to the size of $H$), but all copies of $H$ in $G$ are covered only by $\sqrt{n}$ vertices and as the result, testing $H$-freeness trivially requires $\Omega(\sqrt{n})$ queries: one has to perform so many queries (in expectation) to hit a first non-isolated vertex.

In our analysis, by focusing on planar (or minor-free) graphs, we are able to circumvent the latter obstacle {\it (ii)} (argued implicitly in Lemma \ref{lemma:edge-disjoint-copies-H-free}), but the former obstacle {\it (i)} still persists. Our approach to cope with {\it (i)} is by devising a simple modification of BFS search, random bounded-breadth bounded-depth search. By bounding the breadth and depth of the graph exploration we are able to ensure that the complexity of the tester is bounded. However, then the main challenge in our analysis is to analyze this process, to show that indeed, it distinguishes between $H$-free graphs and graphs that are $\eps$-far from $H$-free.

Our approach relies on a proof that for any planar graph $G$ that is $\eps$-far from $H$-free there exists a set $\DCH$ of edge-disjoint copies of $H$ such that,
\begin{enumerate}[\it (i)]
\item if we can find a copy of $H$ in $\G[\DCH]$ with a constant number of queries, then also in $G$ we can find a copy of $H$ with a constant number of queries, and
\item if the input graph was $\G[\DCH]$, then we could find a copy of $H$ with a constant number of queries.\SArtur{To make the picture clearer, maybe we could have some arguments why on its own {\it (ii)} is non-trivial. I can see some potential readers claiming that {\it (ii)} is trivial since ???, e.g., \textcolor[rgb]{0.50,0.00,1.00}{\emph{it's obvious that if $G$ consists only of edge-disjoint copies of $H$, and in fact of a linear number of them, then we can find one copy easily}}.}
\end{enumerate}
%

The construction of the set $\DCH$ is existential, and is performed by a process of gradually deleting edges of $G$ so that after each round of edge deletions, {\it (i)} is maintained, and so that at the end, the structure of $\G[\DCH]$ is simple enough so that {\it (ii)} is easy. The process is controlled by a sequence of \emph{contractions}: we reduce the problem of finding a copy of $H$ in $G$ to the problem of finding a copy of $H$ with one vertex contracted, which in turn, we reduce to the problem of finding a copy of $H$ with two vertices contracted, and so on so forth. The idea is that if at the end of this process, we have to find a copy of $H$ contracted to single vertex, this task is easy to analyze. The main challenge of our analysis here is to carefully manage the contractions to have the analysis going through. In a similar context, the authors in \cite{CMOS11} have been arguing that this task is already very complex for cycles in the analysis of constant-length random walks in planar graphs, that is, graphs with good separators and bad expansion. However, by using a sequence of self-reductions relying on contractions (and hence reducing testing $C_k$-freeness to testing $C_{k-1}$-freeness, where $C_k$ is a cycle of length $k$), the authors in \cite{CMOS11} were able to show there that for planar graphs, \emph{testing bipartiteness} (implicitly, testing $C_k$-freeness for constant $k$) can be done with constant query complexity and with one-sided error.

The approach presented in our paper can be seen as a major extension of the approach used for testing bipartiteness in \cite{CMOS11} to test $H$-freeness, 
though
the problem of testing $H$-freeness is significantly more complex. Indeed, the central tool used for bipartiteness, contractions of a path or a cycle, becomes problematic when the forbidden graph $H$ has vertices of degree higher than~2. The challenge here is that to contract vertices of higher degrees, the information about their neighbors is difficult to be maintained. Still, we follow a similar approach, but since we cannot perform the contraction in term of graphs, we do it via introducing \emph{hyperedges}, to ensure that after contracting high degree vertices the information about their neighbors is memorized in a form of a \emph{hypergraph}. This extension of the framework from graphs to hypergraphs makes the entire analysis significantly more complicated and one of our main technical contributions is to make the analysis work for this case. For example, one central challenge is to ensure that the input graph, originally planar, maintain some planarity properties even after applying a sequence of contractions. This task is not very difficult if the contractions were performed in graphs, but when we have to process hypergraphs, maintaining planarity seems to be entirely \emph{hopeless}. Still, we will show how to efficiently model the connectivity information of the hypergraph using the concept of shadow graphs that are unions of planar graphs.

The analysis is long, with many subtle fine points, and is presented in details in Sections \ref{sec:outline-proof}--\ref{sec:final-proof}.


\begin{remark}
While in our analysis we did not try to optimize the complexity of the $H$-freeness tester, focusing on the task of obtaining the query complexity of $O_{\eps,H}(1)$, let us mention that in fact, with the analysis as it is now, without any optimization efforts, the complexity of our tester is doubly exponential in $\Hsize/\eps$.
\end{remark}


\begin{remark}
While our main focus is on the random neighbor oracle model, it is straightforward to extend our testers and their analysis for $H$-freeness and for $\mathcal{H}$-freeness to the other two oracle access models presented in Section \ref{subsubsec:oracle}.
(However, our main result, the characterization of testable properties in planar graphs (and Theorem \ref{thm:main-1-sided-reduces-to-H-freeness}), cannot be extended to the other models (cf. Section \ref{subsubsec:sensitivity-of-the-models}), except the variant of the random neighbor oracle from Remark~\ref{remark:modified-random-neighbor-oracle}.)\SArtur{Think about it (\& remember to stay consistent, since this issue was mentioned in some other places (cf. Section~\ref{sec:minor-free})).}
\end{remark}


\subsubsection{Characterization of graph properties testable with one-sided error for planar/minor-free graphs}
\label{Intro-characterization}

By combining the results sketched in Sections \ref{subsubsec:necessary-cond-testing} and \ref{subsubsec:sufficient-cond-testing}, the following characterization of graph properties testable with one-sided error (in the random neighbor oracle model) for general planar graphs and for minor-free graphs follows:\SArtur{Later one may want to add some text about \textbf{non-uniformity} etc.}

\begin{theorem}
\label{thm:characterization}
A graph property $\mathcal{P}$ is testable with one-sided error in the random neighbor oracle model for planar graphs (and for minor-free graphs) if and only if $\mathcal{P}$ is semi-subgraph-free.
\end{theorem}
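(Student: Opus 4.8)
The plan is to obtain Theorem~\ref{thm:characterization} as a direct corollary of the two principal results established earlier: the necessary direction is essentially Theorem~\ref{thm:main-1-sided-reduces-to-H-freeness}, and the sufficient direction is a short reduction to the $H$-freeness testers of Theorem~\ref{thm:main-extension} (planar case) and Theorem~\ref{thm:main-H-minor-free-extension} (minor-free case). The only thing that needs care is the accounting of the three proximity parameters and the preservation of one-sidedness.

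\emph{Necessity (testable with one-sided error $\Rightarrow$ semi-subgraph-free).} Suppose $\mathcal{P}$ has a one-sided error tester in the random neighbor oracle model for planar (resp.\ minor-free) inputs. Theorem~\ref{thm:main-1-sided-reduces-to-H-freeness}, applied within this class, already produces for every $\eps\in(0,1)$ and every $n$ a finite family $\mathcal{H}=\mathcal{H}(\eps,n)$ of finite graphs (with $|\mathcal{H}|$ and $\max_{H\in\mathcal{H}}|V(H)|$ bounded by functions of $\eps$ alone) such that (i) every graph satisfying $\mathcal{P}_n$ is $\mathcal{H}$-free and (ii) every graph $\eps$-far from $\mathcal{P}_n$ contains a copy of some $H\in\mathcal{H}$; comparing with Definition~\ref{def:semi-H-freeness}, this is precisely the assertion that $\mathcal{P}$ is semi-subgraph-free. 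The one step worth highlighting is why (i) holds \emph{exactly} rather than only up to $\eps n$ edits: one-sidedness forces the tester to accept every $\mathcal{P}_n$-graph with probability $1$, so no $\mathcal{P}_n$-graph can contain any of the constantly many constant-size local configurations whose appearance could ever trigger rejection; taking $\mathcal{H}$ to be the isomorphism types of those configurations makes $\mathcal{P}_n$-graphs $\mathcal{H}$-free, while (ii) follows because a graph $\eps$-far from $\mathcal{P}_n$ is rejected with probability at least $\tfrac23$ and hence must actually contain one of them.

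\emph{Sufficiency (semi-subgraph-free $\Rightarrow$ testable with one-sided error).} Fix $\eps\in(0,1)$ and invoke Definition~\ref{def:semi-H-freeness} with parameter $\eps_0:=\eps/3$ to obtain, for the given $n$, a finite family $\mathcal{H}$ of finite graphs with $|\mathcal{H}|=O_\eps(1)$. Put $\eps'':=\eps/(3|\mathcal{H}|)=\Theta_\eps(1)$, and define the tester to run, for each $H\in\mathcal{H}$, the one-sided error $H$-freeness tester guaranteed by Theorem~\ref{thm:main-extension} (planar inputs), resp.\ Theorem~\ref{thm:main-H-minor-free-extension} (minor-free inputs), with proximity parameter $\eps''$, and to accept if and only if all of these sub-testers accept. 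Its query complexity is $\sum_{H\in\mathcal{H}}O_{\eps'',H}(1)=O_\eps(1)$. For completeness, if $G$ satisfies $\mathcal{P}_n$ then by (i) it is $\mathcal{H}$-free, hence $H$-free for every $H\in\mathcal{H}$, so every sub-tester accepts with probability $1$ and no union bound is needed. For soundness it suffices to show that if $G$ (planar, resp.\ minor-free) is $\eps$-far from $\mathcal{P}_n$ then $G$ is $\eps''$-far from $H$-free for at least one $H\in\mathcal{H}$, since then the corresponding sub-tester, and hence the composite tester, rejects with probability at least $\tfrac23$. This last claim is the only computation: arguing contrapositively, if for every $H\in\mathcal{H}$ one could destroy all copies of $H$ in $G$ by deleting at most $\eps'' n$ edges, then deleting the union of these $|\mathcal{H}|$ edge sets — at most $|\mathcal{H}|\eps'' n=\eps n/3$ edges — yields an $\mathcal{H}$-free graph $G'$; by the contrapositive of (ii) taken with parameter $\eps_0=\eps/3$, $G'$ is not $\eps/3$-far from $\mathcal{P}_n$, so at most $\eps n/3$ further edge edits turn $G'$ into a graph satisfying $\mathcal{P}_n$; composing, at most $2\eps n/3<\eps n$ edits turn $G$ into a $\mathcal{P}_n$-graph, contradicting that $G$ is $\eps$-far from $\mathcal{P}_n$.

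\emph{Main obstacle.} All the genuine difficulty already sits in the two cited theorems, so what remains is bookkeeping; the single point that genuinely requires attention is the dependence of $\mathcal{H}$ on $n$. Since $\mathcal{H}=\mathcal{H}(\eps_0,n)$ and the tester is given $n$, the construction above yields a family of testers that is non-uniform in $n$ (no single algorithm is required to compute $\mathcal{H}$ from $n$), which matches the notion of testability used in the statement; in the size-oblivious variant of the model (Remark~\ref{remark:modified-random-neighbor-oracle}) one additionally needs the family $\mathcal{H}$ to be independent of $n$, so there the characterization must be read with $\mathcal{H}$ depending on $\eps$ only. A secondary, harmless point is integrality and the strict-versus-non-strict conventions in the definition of ``$\eps$-far'': the slack built into using thirds rather than halves above absorbs these.
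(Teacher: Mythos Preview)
Your proposal is correct and follows the same high-level route as the paper, which dispatches the theorem in one sentence by citing Theorem~\ref{thm:main-1-sided-reduces-to-H-freeness} for necessity and Theorems~\ref{thm:main-extension}/\ref{thm:main-H-minor-free-extension} for sufficiency. You have, however, filled in a genuine gap that the paper leaves implicit: the $\mathcal{H}$-freeness testers of Theorems~\ref{thm:main-extension}/\ref{thm:main-H-minor-free-extension} only reject graphs that are $\eps'$-\emph{far} from $\mathcal{H}$-free, whereas condition~(ii) of semi-subgraph-freeness merely guarantees that a graph $\eps$-far from $\mathcal{P}_n$ contains \emph{some} copy of an $H\in\mathcal{H}$; your two-step editing argument (take $\mathcal{H}(\eps/3,n)$, delete $\le \eps n/3$ edges to reach an $\mathcal{H}$-free $G'$, then invoke the contrapositive of~(ii) to reach $\mathcal{P}_n$ in $\le \eps n/3$ further edits) is exactly what is needed to bridge this and is not spelled out in the paper. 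The monotonicity observation that $H$-freeness can always be achieved by deletions alone, so the union of the per-$H$ edit sets really is $\mathcal{H}$-free, is also a point worth making explicit, and you handle it correctly. Your remarks on non-uniformity in $n$ match the paper's own caveat (cf.\ Section~\ref{subsec:uniform-characterization}).
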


The proof of Theorem \ref{thm:characterization} follows immediately from our Theorem \ref{thm:main-1-sided-reduces-to-H-freeness} (necessary condition) and Theorems~\ref{thm:main-extension} and \ref{thm:main-H-minor-free-extension} (sufficient condition).

One can read this characterization informally as follows:

\noindent\emph{A graph property $\mathcal{P}$ is testable with one-sided error in the random neighbor oracle model for planar graphs (or for minor-free graphs) if and only if $\mathcal{P}$ can be described as a property of testing forbidden subgraphs of constant size (the maximum size of any forbidden subgraph can be a function of $\mathcal{P}$ and $\eps$).}


\subsubsection{Remarks on the sensitivity and robustness of the oracle access models}
\label{subsubsec:sensitivity-of-the-models}

While our tester for $H$-freeness (Section \ref{subsubsec:sufficient-cond-testing}) is robust, the characterization presented in Theorem \ref{thm:characterization} is very sensitive to the oracle model. For example, it might be natural to consider a variant of our random neighbor oracle model to allow for every vertex to query for \emph{distinct} random neighbors. That is, each time we call the random distinct neighbor query for a given vertex $v$, the oracle will return a neighbor of $v$ chosen i.u.r. among all neighbors not returned earlier. One important feature of this model is that after $\deg(v)+1$ queries for a random distinct neighbor of vertex $v$, we are able to detect the degree $\deg(v)$ of vertex $v$ in the input graph. This makes this model more powerful than our random neighbor oracle model, and in particular, it allows to test some properties that cannot be reduced to testing for forbidden subgraphs. For example, in that model one can test connectivity with $O(1/\eps^3)$ queries and one-sided error (see, e.g., \cite{GR97}). Indeed, if the input graph $G$ is $\eps$-far from being connected, then it is easy to see that $G$ must have $\frac12 \eps n$ connected components of size at most $\frac2{\eps}$. Therefore, after randomly sampling $\frac{3}{\eps}$ vertices, with probability at least $\frac23$ one of the randomly sampled vertices will be in one of these small connected components. Since all vertices in this component must have degree at most $\frac2{\eps}$, we can run BFS algorithm to explore the entire connected component with $O(1/\eps^2)$ random distinct neighbor queries and verify that this connected component is indeed small, proving that the input graph is $\eps$-far from being connected. This can be easily formalized to obtain a one-sided error tester for connectivity with query complexity $O(1/\eps^3)$ in the random distinct neighbor oracle model. However, this task cannot be efficiently performed in our random neighbor oracle model (since we can never confirm with a finite number of queries a degree of a given vertex, even if its degree is constant, even if it is 1), and indeed, connectivity testing cannot be reduced to testing for a finite family of forbidden subgraphs and is not is a semi-subgraph-free graph property, even in planar graphs. (This is in contrast to other characterizations presented earlier in the literature, e.g., in \cite{AS08}, where the tester for the dense graphs model reduces to testing for forbidden \emph{induced} subgraphs, giving a complete characterization of properties testable with one-sided error in terms of hereditary properties.) And so, even for planar graphs, \emph{testing connectivity in the random neighbor oracle model is impossible with one-sided error}!\footnote{To see this, consider two \emph{planar} graphs: a cycle $C_n$ on $n$ vertices, which is connected, and a perfect matching $M_n$ on $n$ vertices, which is $\eps$-far from connected (for $\eps < \frac12$). Any tester should reject $M_n$ with probability at least $\frac23$. But at the same time, if we consider the tester on $C_n$ (which must be accepted) then after performing $q$ queries, with probability at least $2^{-q}$, and so with positive probability, it will see only a subgraph of $M_n$. Therefore, since we consider one-sided error testers which must accept $C_n$, we conclude that no \emph{one-sided error} tester \emph{can reject} $M_n$.}


\subsection{Organization of the paper}

We begin in Section \ref{sec:testable-implies-forbidden-subgraphs} with a formal analysis showing the necessary part of our characterization of testable properties, that any testable property 
is semi-subgraph-free (cf. Theorem \ref{thm:main-1-sided-reduces-to-H-freeness} in Section \ref{subsec:canonican-tester-and-reduction}).

Then, in Sections \ref{sec:outline-proof}--\ref{sec:final-proof}, we present the \emph{main technical contribution} of this paper, a complete analysis showing the sufficient part of our characterization of testable properties in planar graphs, that for any finite graph $H$, testing $H$-freeness is testable in planar graphs. The analysis here is split into several sections, with some auxiliary and technical results deferred to the appendix (Appendix \ref{sec:condition-a-prime}--\ref{sec:proof-lemma:small-vertices-new}). We begin in Section \ref{sec:outline-proof} with an outline of the proof of testing $H$-freeness, focusing on connected $H$. Then, in Section \ref{sec:exploration-testing-H-free}, we present our tester and define our framework. Section \ref{subsec:constructing-H1} gives the first (and easiest) step in our transformation and show that any graph that is $\eps$-far from $H$-free has a linear number of edge-disjoint copies of $H$. Then, in Section \ref{subsec:Ui->Ui+1}, we show how the contractions (cf. Section \ref{subsubsec:sufficient-cond-testing}) can be performed in hypergraphs, to ensure existence of a sought set $\DCH$ of edge-disjoint copies of $H$ in which we can detect a copy of $H$. The analysis is then completed in Section \ref{sec:final-proof}. Finally, in Section \ref{sec:dics-many} we discuss the extension to families of arbitrary finite graphs and in Section \ref{sec:minor-free} we discuss the extension to minor-free graphs.

Some final conclusions are in Section \ref{sec:conclusions}.
\SArtur{Do we want to have any additional section about the characterization? Or does Section \ref{Intro-characterization} suffice?}


\section{Any testable property can be reduced to testing for forbidden subgraphs}
\label{sec:testable-implies-forbidden-subgraphs}

\SArtur{Of course, one could change the title to the more precise: \textbf{``Any testable property is semi-subgraph-free''} but to me, such title would sound less informative. But maybe I'm wrong.}%
In this section we provide a formal proof of the necessary (and easier) condition in our characterization, Theorem \ref{thm:main-1-sided-reduces-to-H-freeness}, that any one-sided-error testable property for arbitrary graphs can be reduced to testing for forbidden subgraphs of constant size (this claims holds for any finite family of graphs, not only for planar graphs). It should be noted that each graph in the family of forbidden graphs may have size depending on $\eps$, the proximity parameter of the tester.\SArtur{Note: in principle, our tester have forbidden graphs depending on $n$, even if their sizes are upper bounded by $O_{\eps}(1)$.}

Our analysis critically relies on a recently developed in \cite{CFPS19} canonical tester that shows that to test any testable graph property we can assume that the tester at hand is ``oblivious'' and works non-adaptively. This will allow us later to obtain a clean characterization of forbidden subgraphs for any given testable property~$\mathcal{P}$.


\subsection{Bounded-breadth bounded-depth graph exploration and bounded-discs}
\label{subsec:BFS-like-search-bounded-discs}

Our analysis relies on a random (BFS-like) bounded-breadth bounded-depth search, \Traverse{} below, an exploration algorithm similar to BFS of depth $\ld$. The algorithm runs from a given vertex a random BFS-like exploration of breadth $\dg$ and of depth $\ld$ using the random neighbor oracle (i.e., every vertex selects $\dg$ of its neighbors i.u.r. and recursively continues the process from them, until depth $\ld$ is reached). The main difference is that instead of visiting all neighbors of every vertex, like in the standard BFS algorithm, we \emph{visit only $\dg$ neighbors chosen i.u.r.}, to limit the complexity of the search algorithm.

\begin{algo}\label{alg:traverse}
\Traverse\,${(G,v,\dg,\ld)}$:
\begin{itemize}
\item 
    Let $L_0 = \{v\}$.
\item For $\ell = 1$ to $\ld$ do:
    \begin{itemize}[$\diamond$]
    \item Let $L_{\ell} = \emptyset$ and $\mathcal{E}_{\ell} = \emptyset$.
    \item For every $u \in L_{\ell-1}$ do:
        \begin{itemize}[$\circ$]
        \item Choose $\dg$ neighbors of $u$ using $\dg$ \emph{random neighbor queries}; call them $\Gamma_u$.
        \item Let $\mathcal{E}_u = \{ (u,x): x \in \Gamma_u\}$.
        \item Set $L_{\ell} = L_{\ell} \cup \Gamma_u$ and $\mathcal{E}_{\ell} = \mathcal{E}_{\ell} \cup \mathcal{E}_u$.
        \end{itemize}
    \item $L_{\ell} = L_{\ell} \setminus \bigcup_{i=0}^{\ell-1} L_i$.
    \end{itemize}
\item \textbf{Return} the subgraph of $G$ induced by the edges $\bigcup_{\ell=1}^{\ld} \mathcal{E}_{\ell}$.
\end{itemize}
\end{algo}

We use the notion of bounded-breadth/depth search \Traverse{} to define bounded discs.
	
\begin{definition}\textbf{($(\dg,\ld)$-bounded disc)}\it
\label{def:bounded-disc}
For given $\dg, \ld \in \NATURAL$, graph $G = (V,E)$, and vertex $v \in V$, a \textbf{$(\dg,\ld)$-bounded disc} of $v$ in $G$ is any subgraph $U$ of $G$ that can be returned by \Traverse\,${(G,v,\dg,\ld)}$.
\junk{
satisfying the following properties:
\begin{itemize}
\item the maximum distance from $v$ to any other vertex in $U$ is at most $\ld$;
\item there exists an ordering $u_0 = v, u_1, \cdots, u_{|V(U)|}$ of vertices in $U$, and for each vertex $u_i$, there exists a subset $F_i$ of the set of all edges incident to $u_i$ with $|F_i| \le \max\{q, \deg(u_i)\}$, such that
    \begin{itemize}[$\diamond$]
    \item one can run \Traverse\,${(G,\dg,\ld)}$ starting at vertex $v$ with respect to the vertex ordering and for each $u_i$ uses edges in $F_i$ to exactly recover $U$.
    \end{itemize}
\end{itemize}
}

Vertex $v$ is called a \textbf{root} of the $(\dg,\ld)$-bounded disc $U$.
\end{definition}

Let us observe that, assuming that $\dg \ge 2$, \Traverse\,${(G,v,\dg,\ld)}$ performs $\sum_{i=1}^{\ld} \dg^i \le 2 \dg^{\ld}$ queries to the input graphs. Accordingly, for $\dg \ge 2$, any $(\dg,\ld)$-bounded disc has at most $\sum_{i=0}^{\ld} \dg^i
\le 2 \dg^{\ld}$ vertices and at most $\sum_{i=1}^{\ld} \dg^i \le 2 \dg^{\ld}$ edges.


\subsection{Rooted graphs, their basic properties, and semi-rooted-subgraph-freeness}

In our analysis it will be sometimes useful to consider also \emph{rooted} graphs, that is, graphs with some number of vertices distinguished as special vertices called \emph{roots}. (For example, bounded discs from Definition \ref{def:bounded-disc} are rooted graphs.) To analyze similarities between rooted graphs, we will use the following definition.
	
\begin{definition}\textbf{(Root-preserving isomorphism)}
\it
\label{def:root-preserving-isomorphism}
Let $Q = (V(Q), E(Q))$ and $Q' = (V(Q'), E(Q'))$ be two rooted graphs. A \textbf{root-preserving isomorphism} between $Q$ and $Q'$, denoted $Q \cong_r Q'$, is a bijection $f: V(Q) \rightarrow V(Q')$ such that $u$ is the root of $V(Q)$ if and only if $f(u)$ is the root of $V(Q')$, and $(u,v) \in E(Q)$ if and only if $(f(u),f(v)) \in E(Q')$.

If $Q \cong_r Q'$, then we say that $Q$ is \textbf{root-preserving isomorphic} to $Q'$.
\end{definition}

We will extend this definition to compare a rooted graph with its occurrences (in a sense of root-preserving isomorphisms) in a large graph (which does not necessarily have to be rooted).
	
\begin{definition}\it
\label{def:root-preserving-copies}
Let $G$ be an undirected graph and let $Q$ be a rooted graph.
A \textbf{rooted copy of $Q$ in $G$} is a subgraph $U$ of $G$ such that one can assign the roots to $U$ so that there is a root-preserving isomorphism between $Q$ and the rooted version of $U$.
%
%
For an arbitrary set $\mathcal{Q}$ of rooted graphs, we say that \textbf{$G$ is $\mathcal{Q}$-rooted-free} if for every $Q \in \mathcal{Q}$, there is no rooted copy of $Q$ in $G$.
\end{definition}

With these definitions, we are ready to present our auxiliary graph property notion.

\begin{definition}\textbf{(Semi-rooted-subgraph-freeness)}\it
\label{def:semi-H-rfreeness}
A graph property $\mathcal{P}$ is \textbf{\emph{semi-rooted-subgraph-free}} if for every $\eps$, $0 < \eps < 1$, and every $n \in \NATURAL$, there is a finite family $\mathcal{H}$ of finite graphs such that the following hold:
\begin{enumerate}[(i)]
\item any graph $G$ satisfying $\mathcal{P}_n$ is $\mathcal{H}$-rooted-free, and
\item any graph $G$ which is $\eps$-far from satisfying $\mathcal{P}_n$, is not $\mathcal{H}$-rooted-free.
\end{enumerate}
\end{definition}

Similarly as in Definition \ref{def:semi-H-freeness}, the notion of a family $\mathcal{H}$ of \emph{finite graphs} means that every graph $H \in \mathcal{H}$ is finite, i.e., $|V(H)| = O_{\eps}(1)$.


\subsection{Modeling forbidden subgraphs in rooted graphs}

While our analysis uses rooted graphs, their use is purely auxiliary because of the following simple fact.

\begin{lemma}
\label{lemma:semi-rooted-subgraph-free-yields-semi-subgraph-free}
If a graph property $\mathcal{P}$ is semi-rooted-subgraph-free then $\mathcal{P}$ is also semi-subgraph-free.
\end{lemma}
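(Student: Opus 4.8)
\textbf{Proof plan for Lemma \ref{lemma:semi-rooted-subgraph-free-yields-semi-subgraph-free}.}

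The plan is to show that a rooted forbidden family can always be converted into an (unrooted) forbidden family by ``forgetting'' the roots, and that this conversion preserves both conditions (i) and (ii) in Definition \ref{def:semi-H-freeness}. Suppose $\mathcal{P}$ is semi-rooted-subgraph-free, and fix $\eps$, $0 < \eps < 1$, and $n \in \NATURAL$. Let $\mathcal{H}$ be the finite family of finite rooted graphs guaranteed by Definition \ref{def:semi-H-rfreeness}. Define $\widehat{\mathcal{H}} = \{\widehat{H} : H \in \mathcal{H}\}$, where $\widehat{H}$ is the graph obtained from $H$ by simply deleting the root designations (keeping the same vertex and edge sets). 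Since $|V(\widehat{H})| = |V(H)| = O_{\eps}(1)$ and $|\widehat{\mathcal{H}}| \le |\mathcal{H}| = O_{\eps}(1)$, the family $\widehat{\mathcal{H}}$ is again a finite family of finite graphs, so it is a legitimate candidate for witnessing that $\mathcal{P}$ is semi-subgraph-free.

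It remains to verify the two conditions. For condition (ii), suppose $G$ is $\eps$-far from satisfying $\mathcal{P}_n$. By Definition \ref{def:semi-H-rfreeness}(ii), $G$ is not $\mathcal{H}$-rooted-free, i.e., there is some $H \in \mathcal{H}$ and a rooted copy $U$ of $H$ in $G$. By Definition \ref{def:root-preserving-copies}, $U$ is a subgraph of $G$ admitting a root-preserving isomorphism to $H$; in particular $U$ is a subgraph of $G$ isomorphic (ignoring roots) to $\widehat{H}$, so $G$ contains a copy of $\widehat{H} \in \widehat{\mathcal{H}}$ and hence is not $\widehat{\mathcal{H}}$-free. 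For condition (i), suppose $G$ satisfies $\mathcal{P}_n$ but, for contradiction, $G$ is not $\widehat{\mathcal{H}}$-free, so $G$ contains a subgraph $U$ isomorphic to $\widehat{H}$ for some $H \in \mathcal{H}$. Pull the isomorphism back: assign to each vertex of $U$ the root/non-root status of its preimage in $H$ under this isomorphism; this turns $U$ into a rooted subgraph of $G$ that is root-preserving isomorphic to $H$, i.e., a rooted copy of $H$ in $G$. Hence $G$ is not $\mathcal{H}$-rooted-free, contradicting Definition \ref{def:semi-H-rfreeness}(i). Therefore $G$ is $\widehat{\mathcal{H}}$-free, establishing condition (i), and $\widehat{\mathcal{H}}$ witnesses that $\mathcal{P}$ is semi-subgraph-free.

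The argument is essentially bookkeeping, so there is no real obstacle; the only point requiring a small amount of care is the direction in condition (i), where one must observe that \emph{any} (unrooted) copy of $\widehat{H}$ in $G$ can be promoted to a rooted copy of $H$ by transporting the root labels along the isomorphism — this is exactly the content of Definition \ref{def:root-preserving-copies}, which says a rooted copy is a subgraph that admits \emph{some} assignment of roots making it root-preserving isomorphic to $H$. Note also that the conversion is completely uniform in $n$, so if the rooted family $\mathcal{H}$ varies with $n$ only through its size bounds $O_{\eps}(1)$, the same is true of $\widehat{\mathcal{H}}$, which is all that Definition \ref{def:semi-H-freeness} requires.
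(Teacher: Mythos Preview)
Your proposal is correct and takes essentially the same approach as the paper: define the unrooted family by forgetting the root labels, then verify conditions (i) and (ii) of Definition~\ref{def:semi-H-freeness} using the observation that a rooted copy of $H$ yields an unrooted copy of $\widehat{H}$, and conversely any unrooted copy of $\widehat{H}$ can be promoted to a rooted copy of $H$ by transporting the root labels along the isomorphism. The paper organizes this slightly differently (it first isolates the two implications ``$\mathcal{H}$-rooted-free $\Rightarrow$ $\overline{\mathcal{H}}$-free'' and ``not $\mathcal{H}$-rooted-free $\Rightarrow$ not $\overline{\mathcal{H}}$-free'' as standalone claims and then plugs them in), but the content is identical.
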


\begin{proof}
This follows easily from the definitions of semi-rooted-subgraph-free and semi-subgraph-free properties. For any rooted graph $H$, let $\overline{H}$ denote the same graph with removed roots (that is, we remove the labels defining the roots); similarly, for any family $\mathcal{H}$ of rooted graphs, let $\overline{\mathcal{H}} = \{\overline{H}: H \in \mathcal{H}\}$. Then we claim that for any graph $G$ be an arbitrary graph and any family $\mathcal{H}$ of rooted graphs,
\begin{enumerate}[\it \ \ \ \ \ \ \ \ (a)]
\item if $G$ is $\mathcal{H}$-rooted-free then $G$ is also $\overline{\mathcal{H}}$-free, and
\item if $G$ is not $\mathcal{H}$-rooted-free, then $G$ is also not $\overline{\mathcal{H}}$-free.
\end{enumerate}

Indeed, to see part \emph{(a)}, suppose, by contradiction, that $G$ is not $\overline{\mathcal{H}}$-free, that is, there is $\overline{H}$ with $H \in \mathcal{H}$ such that $\overline{H}$ is a subgraph of $G$. But then $G$ has a rooted copy of $H$, since we can take the roots of $H$ and assign them to $\overline{H}$, so that there is a root-preserving isomorphism between $H$ and the rooted version of $\overline{H}$. Since $G$ has a rooted copy of $H$, we conclude that $G$ is not $\mathcal{H}$-rooted-free, which is contradiction.

To see part \emph{(b)}, suppose, by contradiction, that $G$ is $\overline{\mathcal{H}}$-free, that is, there is no $\overline{H} \in \overline{\mathcal{H}}$ such that $G$ has a copy of $\overline{H}$. But then, clearly, $G$ is $\mathcal{H}$-rooted-free, since otherwise, there would be $H \in \mathcal{H}$ such that $G$ had a rooted copy of $H$, which would imply that $\overline{H}$ was a subgraph $G$; contradiction.

Now, we are ready to complete the proof of Lemma \ref{lemma:semi-rooted-subgraph-free-yields-semi-subgraph-free}. By Definition \ref{def:semi-H-rfreeness}, if $\mathcal{P}$ is semi-rooted-subgraph-free then there exists a finite family $\mathcal{H}$ of finite rooted graphs such that
\begin{inparaenum}[(i)]
\item any graph $G$ satisfying $\mathcal{P}$ is $\mathcal{H}$-rooted-free, and
\item any graph $G$ which is $\eps$-far from satisfying $\mathcal{P}$, is not $\mathcal{H}$-rooted-free.
\end{inparaenum}
If we combine these properties with our claim above, then we obtain that for the finite family of finite graphs $\overline{\mathcal{H}} = \{\overline{H}: H \in \mathcal{H}\}$,
\begin{enumerate}[\it (i')]
\item any graph $G$ satisfying $\mathcal{P}$ is $\mathcal{H}$-rooted-free, and thus (by \emph{(a)}) also $\overline{\mathcal{H}}$-free, and
\item any graph $G$ which is $\eps$-far from satisfying $\mathcal{P}$, is not $\mathcal{H}$-rooted-free, and thus (by \emph{(b)}) also not $\overline{\mathcal{H}}$-free.
\end{enumerate}

Therefore $\mathcal{P}$ is semi-subgraph-free (cf. Definition \ref{def:semi-H-freeness}).
\end{proof}


\subsection{Canonical testers and reduction to testing for forbidden subgraphs}
\label{subsec:canonican-tester-and-reduction}

Next, our analysis follows the framework described in Section \ref{subsubsec:necessary-cond-testing}. We rely on the following Theorem~3.6 from \cite{CFPS19} describing a \emph{canonical way of designing any tester in the random neighbor oracle model}.\SArtur{A generic comment about the analysis below: I have taken \emph{some} version of the Canonical Tester from \cite{CFPS19}, and it's possible that one will have to revise the formulation of Theorem \ref{thm:canonical-tester} (I hope though, not the statement).}

\begin{theorem}[\textbf{Canonical tester \cite{CFPS19}}]
\label{thm:canonical-tester}
Let $\mathcal{P} = (\mathcal{P}_n)_{n \in \NATURAL}$ be a graph property that can be tested in the random neighbor oracle model with query complexity $q = q(\eps)$ and error probability at most~$\frac13$. Then for every $\eps$, there exists $q' = \Theta(q)$, and an infinite sequence $\mathcal{Q} = (\mathcal{Q}_{n})_{n \in \NATURAL}$ such that for every $n \in \NATURAL$,
\begin{itemize}
\item $\mathcal{Q}_{n}$ is a set of rooted graphs such that each $Q \in \mathcal{Q}_{n}$ is the union of $q'$ many $(q',q')$-bounded discs;
\item the property $\mathcal{P}_n$ on $n$-vertex graphs can be tested with error probability at most $\frac13$ by the following canonical tester (with query complexity $q^{O(q)}$):
    \begin{itemize}[$\diamond$]
	\item sample a set (possibly, a multiset) $S$ of $q'$ vertices chosen i.u.r.;
	\item for each sampled vertex $v$, run \Traverse\,${(G,v,q',q')}$ to get a $(q',q')$-bounded disc~$U_v$;
	\item reject if and only if there exists a root-preserving isomorphism between the union of the explored $(q',q')$-bounded discs and some element $Q \in \mathcal{Q}_{n}$, that is, there is $Q \in \mathcal{Q}_{n}$ with $\bigcup_{v \in S}U_v \cong_r Q$.
	\end{itemize}
\end{itemize}
Furthermore, if $\mathcal{P} = (\mathcal{P}_n)_{n \in \NATURAL}$ can be tested in the random neighbor oracle model with query complexity $q(\eps)$ with one-sided error, then the resulting canonical tester for $\mathcal{P}$ has one-sided error too.
\end{theorem}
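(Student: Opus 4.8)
The plan is to start from an arbitrary tester $T$ for $\mathcal{P}$ with query complexity $q$ and reshape it into the stated canonical form by three successive reductions. \emph{Reduction 1 (normal form).} Since $\mathcal{P}$ is closed under isomorphism, we may assume $T$ is label-oblivious: before it starts, $T$ draws a uniformly random injection from the (at most $2q$) vertices it will ever touch into a fresh label set, and bases its decision only on the isomorphism type of what it observes; this changes neither the query complexity nor the error probability. We may also pad $T$ so that it first performs exactly $q$ uniform vertex queries $v_1,\dots,v_q$ (its \emph{roots}) and then exactly $q$ random-neighbor queries, each issued to an already-seen vertex, raising its budget to $2q$.

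\emph{Reduction 2 (non-adaptive exploration).} Put $q':=2q$. Run \Traverse\,$(G,v_r,q',q')$ from each root $v_r$ and let $D:=\bigcup_{r}U_{v_r}$, a rooted graph that, by construction, is a union of $q'$ many $(q',q')$-bounded discs (pad with trivial discs if $T$ used fewer than $q'$ roots). The key point is a coupling under which the subgraph $\Phi_T$ observed by $T$ is a root-preserving subgraph of $D$: $T$ issues at most $q\le q'$ queries in total, so every vertex it reaches lies within edge-distance $q\le q'$ of its root and $T$ issues at most $q\le q'$ random-neighbor queries at it; hence the $(q',q')$-exploration from that root can be coupled to return the very same random neighbors and so contains everything $T$ sees. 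Since $D$ records strictly more than $\Phi_T$ (namely $q'$ ordered random neighbors at each explored vertex), and since, conditioned on $D$, the order of those recorded neighbors is uniform (the underlying answers being i.i.d.\ uniform), one can, given only $D$, re-randomize these orders and simulate $T$ on its coupled sub-view, reproducing \emph{exactly} the output distribution of $T$. Call the result $T''$: it performs the \Traverse{} explorations using $q'\cdot 2(q')^{q'}=q^{O(q)}$ queries and then outputs a simulated $T$-decision; it has the same error as $T$, and if $T$ errs only one-sidedly then so does $T''$ (a graph in $\mathcal{P}_n$ is rejected by $T$ on \emph{no} view, hence never by $T''$).

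\emph{Reduction 3 (lookup rule).} By label-obliviousness, the probability that $T''$ rejects, conditioned on the explored structure, is a function $\rho(\overline D)$ of the root-preserving isomorphism class $\overline D$ of $D$ alone, and $\rho$ has finite support because each possible $D$ is a union of $q'$ bounded discs. For the one-sided case put $\mathcal{Q}_n:=\{Q:\rho(Q)>0\}$; the canonical tester samples $q'$ vertices, explores each with \Traverse\,$(\cdot,\cdot,q',q')$, and rejects iff the resulting $D$ satisfies $D\cong_r Q$ for some $Q\in\mathcal{Q}_n$. It never rejects $G\in\mathcal{P}_n$, since such a $G$ yields only structures $D$ with $\rho(\overline D)=0$; and for $\eps$-far $G$, using $0\le\rho\le 1$, $\mathbf{Pr}[\textrm{reject}]=\mathbf{Pr}[\rho(\overline D)>0]\ge\mathbf{E}[\rho(\overline D)]=\mathbf{Pr}[T''\textrm{ rejects }G]\ge\tfrac23$. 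For the two-sided case, first amplify $T$'s success probability to $\ge\tfrac56$ by a constant number of independent repetitions (keeping $q'=\Theta(q)$), then set $\mathcal{Q}_n:=\{Q:\rho(Q)\ge\tfrac12\}$ and let the canonical tester make this majority decision; Markov's inequality applied to $\rho$ on $\mathcal{P}_n$-graphs and to $1-\rho$ on $\eps$-far graphs converts the $\le\tfrac16$ error of $T''$ into the required $\le\tfrac13$ error. In all cases $\mathcal{Q}_n$ (and $q'$) may depend on $n$, as the statement allows.

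I expect the crux to be the coupling in Reduction~2: showing that a fully adaptive exploration — which may revisit vertices, branch arbitrarily, and interleave fresh uniform samples — is dominated by the rigidly structured \Traverse{}, and that the surplus information contained in $D$ can be re-randomized without distorting the simulated distribution of $T$. The remaining ingredients (padding the budget to $\Theta(q)$, tracking which root ``owns'' each explored vertex so that the recorded neighbor lists align during the simulation, and the elementary Markov estimates) are routine bookkeeping.
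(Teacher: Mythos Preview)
The paper does not prove this theorem; it is imported verbatim as Theorem~3.6 of \cite{CFPS19} and used as a black box. So there is no proof in the paper to compare your attempt against, and I evaluate it on its own merits.

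Your outline has the right shape, but there is a genuine gap at the interface of Reductions~2 and~3. You write that ``$D$ records \dots\ $q'$ ordered random neighbors at each explored vertex'' and that ``conditioned on $D$, the order of those recorded neighbors is uniform,'' so that simulating $T$ from $D$ alone reproduces $T$'s output distribution exactly. But $D=\bigcup_r U_{v_r}$ is, by definition, the \emph{rooted subgraph} returned by the \Traverse\ calls --- a set of undirected edges --- and carries neither the ordered (nor even the multiset) oracle answers, nor does it determine which vertices were actually queried. Different traces $\tau$ producing the same $D$ can query different vertex sets (e.g.\ a vertex $v$ with $\mathrm{dist}_D(\text{root},v)\le t-1$ may land in $L_{t-1}$ under one trace and in $L_t$ under another), and the relative weights of such traces are products of $(1/\deg_G(u))^{q'}$ over the \emph{queried} vertices $u$, hence depend on degrees that $D$ does not see. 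Consequently $\rho(\overline D)$, as you define it, is not a function of the isomorphism class of $D$ alone, and the averaging/Markov step in your two-sided argument (which needs $\mathbf E[\rho(\overline D)]=\Pr{T''\text{ rejects}}$) does not go through.

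For the one-sided conclusion your choice $\mathcal{Q}_n=\{Q:\rho(Q)>0\}$ is in fact correct, but the justification requires an argument you do not give. The missing idea is a \emph{transplant}: if $G\in\mathcal{P}_n$ and the exploration of $G$ produces $D$, then \emph{any} trace $\tau'$ with $D(\tau')\cong_r D$ can be pushed along the isomorphism to a valid trace $\tau''$ in $G$ (every oracle answer in $\tau'$ is an edge of $D(\tau')$, hence maps to an edge of $D\subseteq G$); since $T$ is one-sided it accepts every trace in $G$, and by label-obliviousness it therefore accepts $\tau'$ as well. Thus no trace compatible with $D$ is rejected, so $[D]\notin\mathcal{Q}_n$, giving completeness. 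Soundness then follows because for $\eps$-far $G$ the actual trace $\tau$ already witnesses $\rho([D(\tau)])>0$ whenever $T$ rejects, whence $\Pr{[D]\in\mathcal{Q}_n}\ge\Pr{T\text{ rejects}}\ge\tfrac23$.
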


Theorem \ref{thm:canonical-tester} from \cite{CFPS19} shows that without loss of generality, we can assume that any testable graph property can be tested by a canonical tester with constant query complexity. With Theorem \ref{thm:canonical-tester}, Lemma \ref{lemma:semi-rooted-subgraph-free-yields-semi-subgraph-free}, and Definitions \ref{def:semi-H-freeness} and \ref{def:semi-H-rfreeness} at hand, we are now ready to present the main result of this section.

\begin{theorem}
\label{thm:main-1-sided-reduces-to-H-freeness}
If a graph property $\mathcal{P}$ is testable with one-sided error in the random neighbor oracle model then $\mathcal{P}$ is semi-subgraph-free.
\end{theorem}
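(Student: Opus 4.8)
The plan is to feed the one-sided error tester for $\mathcal{P}$ into the canonical tester of Theorem \ref{thm:canonical-tester}, read the family of forbidden rooted subgraphs directly off the canonical tester's rejection rule, verify that this family witnesses \emph{semi-rooted-subgraph-freeness} of $\mathcal{P}$, and finish by invoking Lemma \ref{lemma:semi-rooted-subgraph-free-yields-semi-subgraph-free}. Concretely, I would fix $\eps$ and $n$, apply Theorem \ref{thm:canonical-tester} to obtain the constant $q' = q'(\eps)$, the set $\mathcal{Q}_n$ of rooted graphs (each a union of $q'$ many $(q',q')$-bounded discs), and the associated canonical tester, which --- since the original tester has one-sided error --- is also one-sided: it samples a multiset $S$ of $q'$ i.u.r.\ vertices, runs \Traverse\,$(G,v,q',q')$ from each $v \in S$, and rejects iff $\bigcup_{v \in S} U_v \cong_r Q$ for some $Q \in \mathcal{Q}_n$. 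I would then set $\mathcal{H} := \mathcal{Q}_n$, keeping one representative per root-preserving isomorphism class. That $\mathcal{H}$ is a \emph{finite family of finite rooted graphs} is immediate from the vertex bound on $(q',q')$-bounded discs in Section \ref{subsec:BFS-like-search-bounded-discs}: every $Q \in \mathcal{H}$ has $O_{\eps}(1)$ vertices, so up to isomorphism there are only $O_{\eps}(1)$ of them.

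For condition (i) of Definition \ref{def:semi-H-rfreeness} --- every $G$ satisfying $\mathcal{P}_n$ is $\mathcal{H}$-rooted-free --- I would argue by contradiction. Suppose $G \in \mathcal{P}_n$ contains a rooted copy of some $Q \in \mathcal{H}$, witnessed by a subgraph $U \subseteq G$ and a root-preserving isomorphism $\phi$ from $Q$ onto $U$. Writing $Q = \bigcup_{i} D_i$ with $D_i$ a $(q',q')$-bounded disc rooted at $r_i$, the image $D_i' := \phi(D_i)$ is a subgraph of $G$ rooted at $v_i := \phi(r_i)$. The key point is that, because $D_i$ is --- by Definition \ref{def:bounded-disc} --- a subgraph that \Traverse\ can return when rooted at $r_i$, and all edges of $D_i'$ are present in $G$, the random neighbor queries of \Traverse\,$(G,v_i,q',q')$ reproduce $D_i'$ exactly \emph{with positive probability}; the relevant event is just that every query returns the vertex prescribed by $\phi$. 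Since this happens independently for all $i$, and the sampled multiset $S$ hits every root of $U$ with positive probability, with positive probability $\bigcup_{v\in S}U_v = \bigcup_i D_i' = U \cong_r Q$, so the canonical tester rejects $G$; but a one-sided error tester must accept every graph in $\mathcal{P}_n$ with probability $1$, a contradiction.

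For condition (ii) --- every $G$ that is $\eps$-far from $\mathcal{P}_n$ is not $\mathcal{H}$-rooted-free --- I would use that the canonical tester rejects such a $G$ with probability at least $\frac23 > 0$. Since for a fixed finite $G$ the tester makes finitely many random choices over finite ranges, some positive-probability execution rejects, giving a sample $S$ and outcomes $\{U_v\}_{v\in S}$ with $\bigcup_{v\in S}U_v \cong_r Q$ for some $Q \in \mathcal{H}$. Each $U_v$ is a subgraph of $G$, so their union is too, and with its roots taken to be the vertices of $S$ it is root-preserving isomorphic to $Q$; hence it is a rooted copy of $Q$ in $G$ (Definition \ref{def:root-preserving-copies}), so $G$ is not $\mathcal{H}$-rooted-free. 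Combining (i) and (ii) over all $\eps$ and $n$ shows that $\mathcal{P}$ is semi-rooted-subgraph-free, and Lemma \ref{lemma:semi-rooted-subgraph-free-yields-semi-subgraph-free} then gives that $\mathcal{P}$ is semi-subgraph-free.

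The step I expect to be the main obstacle is the positive-probability reproduction claim used in part (i): one must check that even though a vertex of $G$ may have far higher degree than its preimage in $Q$, \Traverse\ run inside $G$ can nonetheless trace out exactly the prescribed bounded disc $D_i'$ --- and nothing more. This is precisely where it matters that the elements of $\mathcal{Q}_n$ are \emph{unions of $(q',q')$-bounded discs} and not arbitrary small graphs (being a $(q',q')$-bounded disc is by definition the assertion that \Traverse\ can return it), together with the fact that \Traverse\ adds only the edges it explicitly queries, so the union of the explored discs equals $U$ rather than some larger subgraph of $G$. The remaining bookkeeping --- multiset samples, the $O_{\eps}(1)$ bound on $|\mathcal{H}|$, and the degenerate cases $\mathcal{Q}_n = \emptyset$ or $\mathcal{P}_n = \emptyset$ --- is routine.
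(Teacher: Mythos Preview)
Your proposal is correct and follows essentially the same route as the paper: reduce to semi-rooted-subgraph-freeness via Lemma~\ref{lemma:semi-rooted-subgraph-free-yields-semi-subgraph-free}, take $\mathcal{H} = \mathcal{Q}_n$ from the canonical tester of Theorem~\ref{thm:canonical-tester}, and verify conditions (i) and (ii) of Definition~\ref{def:semi-H-rfreeness} using one-sidedness and the rejection guarantee respectively. Your treatment is in fact more careful than the paper's on the point you flag as the main obstacle---the paper simply asserts that ``with positive probability the canonical tester $\mathfrak{T}$ on $G$ will take that copy of $Q \in \mathcal{Q}_n$,'' whereas you spell out why the bounded-disc structure of $Q$ guarantees that \Traverse\ run inside $G$ can reproduce each $D_i'$ exactly.
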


\begin{proof}
First, notice that thanks to Lemma \ref{lemma:semi-rooted-subgraph-free-yields-semi-subgraph-free}, it is enough to show that if a graph property $\mathcal{P}$ is testable with one-sided error in the random neighbor oracle model then $\mathcal{P}$ is semi-\emph{rooted}-subgraph-free (cf. Definition~\ref{def:semi-H-rfreeness}).
\junk{
(Our analysis follows similar arguments as those used for oblivious one-sided-error testing of semi-hereditary properties for the dense graph model, due to Alon and Shapira \cite{AS08}.)
}

Let us fix $n \in \NATURAL$ and $\eps$, and suppose that $\mathcal{P}_n$ is a graph property on $n$-vertex graphs that can be tested in the random neighbor oracle model with query complexity $q(\eps)$ and one-sided error. By Theorem \ref{thm:canonical-tester} from \cite{CFPS19}, we can assume that $\mathcal{P}_n$ is tested by a canonical tester $\mathfrak{T}$ that satisfies the conditions of Theorem \ref{thm:canonical-tester}. In particular, let $\mathcal{Q}_{n}$ be the family of forbidden rooted graphs for $\mathcal{P}_n$ (union of $q'$ many $(q', q')$-bounded discs) whose existence follows from Theorem \ref{thm:canonical-tester}. We will show that so defined family of rooted graphs satisfies the conditions in Definition~\ref{def:semi-H-rfreeness}, proving that $\mathcal{P}$ is semi-subgraph-free.

Let us first notice that each rooted graph $\mathcal{Q}_{n}$ has at most $2 (q')^{q'}$ vertices and at most $2 (q')^{q'}$ edges, and so $\mathcal{Q}_{n}$ is a finite family of finite rooted graphs.

Let us next show item {\it (i)} of Definition \ref{def:semi-H-rfreeness}, that any $n$-vertex graph $G$ satisfying $\mathcal{P}_n$ is $\mathcal{Q}_n$-rooted-free (cf. Definition~\ref{def:root-preserving-copies}). The proof is by contradiction and so suppose that there is a graph $G$ satisfying $\mathcal{P}_n$ which contains a rooted copy of $Q \in \mathcal{Q}_n$. Then, with a positive probability the canonical tester $\mathfrak{T}$ on $G$ will take that copy of $Q \in \mathcal{Q}_n$, and by the definition, it will reject $G$. This means that the tester has a nonzero probability of rejecting $G$, contradicting our assumption that the tester $\mathfrak{T}$ is one-sided.

Now, we want to prove item {\it (ii)} of Definition \ref{def:semi-H-rfreeness}. Let $G$ be an $n$-vertex graph that is $\eps$-far from satisfying $\mathcal{P}_n$. Any tester for $\mathcal{P}_n$ should reject $G$ with nonzero probability. By definition of our canonical tester $\mathfrak{T}$, $G$ is rejected by $\mathfrak{T}$ only if $G$ contains a rooted subgraph $U$ such that if the tester $\mathfrak{T}$ gets $U$ from the oracle, then $U \cong_r Q$. By definition of $\mathfrak{T}$ and $\mathcal{Q}_n$ this means that $Q \in \mathcal{Q}_n$, which proves item {\it (ii)} of Definition \ref{def:semi-H-rfreeness}.

We have shown that if a graph property $\mathcal{P}$ is testable with one-sided error in the random neighbor oracle model then $\mathcal{P}$ is semi-rooted-subgraph-free. By Lemma \ref{lemma:semi-rooted-subgraph-free-yields-semi-subgraph-free}, this yields that $\mathcal{P}$ is semi-subgraph-free, completing the proof.
\end{proof}

\begin{remark}
While Theorem \ref{thm:main-1-sided-reduces-to-H-freeness} is presented for any general graphs, it is straightforward to extend it to hold also for infinite classes of graphs, for example, for planar graphs, or for the family of minor-closed graphs.\SArtur{Do we need more comments here?}
\end{remark}


\subsection{Uniform characterization using oblivious testers and forbidden subgraphs}
\label{subsec:uniform-characterization}

While Theorem \ref{thm:canonical-tester} from \cite{CFPS19} allows to simplify the analysis of testable properties, the analysis as in Theorem \ref{thm:main-1-sided-reduces-to-H-freeness} obtains non-uniform testers, in the sense of the dependency on $n$. We could make our result uniform by considering a special class of uniform testers, which we call \emph{oblivious testers}, that capture the essence of testers of testable properties in the flavor of Theorem \ref{thm:canonical-tester} (see \cite{AS08} for a similar notion in the context of testing dense graphs). We will discuss this characterization in Appendix \ref{sec:uniform-characterization}.


\section{Testing $H$-freeness: high-level view}
\label{sec:outline-proof}

We begin our analysis with fixing an arbitrary finite, \emph{connected}, undirected, simple graph $H = (V(E),E(H))$.\footnote{While our analysis here assumes that $H$ is connected, this is clearly not required for the main result. If $H$ is disconnected then with the coloring trick (cf. Section \ref{subsubsec:coloring-H}), one could have identical analysis and consider all connected components one by one, extending the results to arbitrary, not necessarily connected $H$. We will discuss this in details in Section \ref{sec:dics-many}.}

Our tester of $H$-freeness relies on a simple graph exploration. We first describe our algorithm for testing $H$-freeness of planar graphs with arbitrary degrees and provide the high level structure of its analysis. We defer most of technical details to Sections \ref{sec:exploration-testing-H-free}-- 
\ref{sec:final-proof} and Appendix.

Our algorithm relies on a random bounded-breadth bounded-depth search, \RLBFS{} below, which uses \Traverse\,${(G,v,\dg,\ld)}$ from Section \ref{subsec:BFS-like-search-bounded-discs}. (Let us remind, cf. page~\pageref{alg:traverse}, that \Traverse\,${(G,v,\dg,\ld)}$ is similar to BFS of depth $\ld$ starting at vertex $v$, though instead of visiting all neighbors of every vertex, 
one visits only $\dg$ neighbors chosen i.u.r., to limit the complexity of the algorithm.)

\begin{algo}\label{alg:RLBFS}
\RLBFS\,${(G,\dg,\ld)}$:
\begin{itemize}
\item Pick a random vertex $v \in V$ i.u.r. and run \Traverse\,${(G,v,\dg,\ld)}$.
\end{itemize}
\junk{
\begin{itemize}
\item Pick a random vertex $v \in V$; let $L_0 = \{v\}$.
\item For $\ell = 1$ to $\ld$ do:
    \begin{itemize}[$\diamond$]
    \item Let $L_{\ell} = \emptyset$ and $\mathcal{E}_{\ell} = \emptyset$.
    \item For every $u \in L_{\ell-1}$ do:
        \begin{itemize}[$\circ$]
        \item Choose $\dg$ edges incident to $u$ in $G$ i.u.r.; call them $\mathcal{E}_u$.
        \item Let $\Gamma_u$ be the set of vertices in $\mathcal{E}_u$.
        \item Set $L_{\ell} = L_{\ell} \cup \Gamma_u$ and $\mathcal{E}_{\ell} = \mathcal{E}_{\ell} \cup \mathcal{E}_u$.
        \end{itemize}
    \item $L_{\ell} = L_{\ell} \setminus \bigcup_{i=0}^{\ell-1} L_i$.
    \end{itemize}
\item \textbf{Return} the edges $\bigcup_{\ell=1}^{\ld} \mathcal{E}_{\ell}$.
\end{itemize}
}
\end{algo}

Our tester \RBE\, runs $f(\eps,H)$ times our search algorithm \RLBFS{} with parameters $\dg = h(\eps,H)$, $\ld = g(\eps,H)$, each time checking if the graph induced by the visited edges contains a copy of $H$, or does not. The algorithm accepts $G$ as $H$-free if and only if all calls found no copy of $H$ in~$G$.

\medskip
\begin{walgo}
\emph{\textbf{Tester:}} \RBE\,$(G,H,\eps)$: \hfill{\small\it (with three implicit parameters, integer functions $f,g,h$)}
\begin{itemize}
\item Repeat $f(\eps,H)$ times:
    \begin{itemize}[$\diamond$]
    \item Run \RLBFS\,${(G,h(\eps,H),g(\eps,H))}$ and let $\mathcal{E}$ be the resulted set of edges.
    \item If the subgraph of $G$ induced by the edges $\mathcal{E}$ contains a copy of $H$, then \textbf{reject}.
    \end{itemize}
\item If every subgraph explored is $H$-free, then \textbf{accept}.
\end{itemize}
\end{walgo}

The following main theorem describes key properties of our tester.

\begin{theorem}
\label{thm:main-H-freeness}
Let $H$ be connected. There are positive functions $f$, $g$, $h$, such that for any planar graph $G$:
\begin{itemize}
\item if $G$ is $H$-free, then \RBE$(G,H,\eps)$ accepts $G$, and
\item if $G$ is $\eps$-far from $H$-free, then \RBE$(G,H,\eps)$ rejects $G$ with probability at least $0.99$.
\end{itemize}
\end{theorem}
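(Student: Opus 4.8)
\textbf{Overall strategy.} The completeness direction is immediate: \RBE{} rejects only when it has actually explored a subgraph of $G$ that contains a copy of $H$, so if $G$ is $H$-free it is always accepted. Thus the whole content is the soundness direction. The plan is to show that if a planar graph $G$ is $\eps$-far from $H$-free, then a single call to \RLBFS{} with the right parameters $\dg=h(\eps,H)$, $\ld=g(\eps,H)$ already discovers a copy of $H$ with probability $\Omega_{\eps,H}(1)$; then amplifying by running $f(\eps,H)=\Theta_{\eps,H}(1)$ independent trials boosts the detection probability to $0.99$ by a standard union/independence argument. So everything reduces to: \emph{one random bounded-breadth bounded-depth exploration hits a copy of $H$ with constant probability}.

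\textbf{Step 1: a linear collection of edge-disjoint copies.} First I would establish (this is Lemma~\ref{lemma:edge-disjoint-copies-H-free}, referenced in the excerpt) that any planar $G$ that is $\eps$-far from $H$-free contains $\Omega_{\eps,H}(|V|)$ edge-disjoint copies of $H$. The point of planarity here is that we cannot have a ``hidden clique''--type obstruction: the standard argument is that if $G$ had fewer than $\delta|V|$ edge-disjoint copies of $H$, we could delete one edge from each, obtaining an $H$-free graph within $\delta|V|<\eps|V|$ edge modifications, contradicting $\eps$-farness; and since $G$ is planar, $|E|=O(|V|)$, so these copies, being edge-disjoint, involve a constant fraction of the edges and — crucially — are spread over $\Omega_{\eps,H}(|V|)$ vertices, not concentrated on few high-degree ones. (A small subtlety: one must rule out that all copies pile up on a sublinear vertex set; here one uses that each copy of $H$ uses $|E(H)|=O_{\eps}(1)$ edges and planarity bounds edge multiplicity per vertex set.)

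\textbf{Step 2 (the hard part): the contraction/hypergraph reduction.} Having $\Omega_{\eps,H}(|V|)$ edge-disjoint copies is not enough, because \RLBFS{} only samples $\dg$ random neighbors at each vertex, so at a high-degree vertex the probability of following any particular edge of a copy is tiny. The core of the proof is to produce, existentially, a sub-collection $\DCH$ of edge-disjoint copies of $H$ such that (i) finding a copy of $H$ in $\G[\DCH]$ via the bounded exploration implies finding one in $G$ with only a constant-factor loss in probability, and (ii) in $\G[\DCH]$ a bounded-breadth bounded-depth exploration started at a uniformly random vertex succeeds with constant probability. The construction proceeds by a sequence of \emph{contractions}: one reduces finding a copy of $H$ to finding a copy of $H$ with one vertex contracted away, then two, and so on, until the target is a copy of $H$ contracted to a single vertex, which is trivially detectable. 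Because $H$ may have vertices of degree $>2$, contractions cannot be carried out inside the class of graphs — contracting a high-degree vertex loses the incidence structure of its neighbors — so the contractions are modeled via \emph{hyperedges}, turning $\G[\DCH]$ into a hypergraph. The key invariant to maintain through every contraction round is: (a) enough copies survive, so the target set still spans $\Omega_{\eps,H}(|V|)$ vertices; (b) the bounded exploration can still "simulate'' each contracted step with bounded query cost; and (c) some planarity-flavored structure is preserved, which is delicate for hypergraphs — the plan is to track connectivity via \emph{shadow graphs} that are unions of planar graphs rather than the hypergraph itself. I expect managing these invariants simultaneously across the sequence of contractions — in particular keeping the query complexity of the simulated exploration at $O_{\eps,H}(1)$ while not destroying the linear vertex span — to be the principal technical obstacle, and it is where the analysis of \cite{CMOS11} for cycles must be substantially generalized.

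\textbf{Step 3: the base case and assembling the bound.} Once the contraction sequence terminates, the surviving structure is simple enough that a random-root bounded exploration of breadth and depth $O_{\eps,H}(1)$ reaches a full (contracted) copy of $H$ with probability $\Omega_{\eps,H}(1)$: this is the base case where everything has collapsed to essentially a star-like configuration around a linear-sized set of ``good'' roots, and one argues directly that a uniformly random start vertex lies in, or adjacent to, such a configuration with constant probability and that $\dg,\ld$ chosen large enough (as functions of $\Hsize$ and $\eps$) suffice to traverse it. Unwinding the reduction chain — each step costs only a constant factor in success probability and a constant additive amount in $\dg,\ld$, and there are $|V(H)|=O_{\eps}(1)$ steps — yields that \RLBFS{} finds a copy of $H$ in $G$ with probability $p=\Omega_{\eps,H}(1)$. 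Finally, setting $f(\eps,H)=\lceil 5/p\rceil$ and using independence of the $f$ trials gives failure probability at most $(1-p)^{f}\le e^{-5}<0.01$, which proves the second bullet; the first bullet holds by construction as noted. This is exactly the statement of Theorem~\ref{thm:main-H-freeness}.
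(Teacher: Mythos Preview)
Your proposal is correct and follows essentially the same approach as the paper: reduce to a single-call success probability $\Omega_{\eps,H}(1)$ (the paper's Theorem~\ref{thm:main-H-freeness-single-call}) via the existence lemma with properties (a)/(b) (Lemma~\ref{lemma:ExistenceOfH}), obtained through the linear edge-disjoint-copy lemma, the contraction-to-hypergraph sequence $\HQ_1,\dots,\HQ_{\Hsize}$, and the shadow-graph device to retain planarity-like degree bounds; then amplify by $f(\eps,H)$ independent trials. Two minor calibration remarks: the linear edge-disjoint-copy step itself does not need planarity (it is pure greedy removal), planarity enters only when you argue the copies span $\Omega_{\eps,H}(|V|)$ vertices and in the shadow-graph analysis; and in the inductive step the parameters grow multiplicatively ($\dg\mapsto \Hsize\cdot\dg$, $\ld\mapsto 2\ld$, cf.\ Claim~\ref{claim:final-prop2}) rather than additively, though this still yields $O_{\eps,H}(1)$ after $\Hsize$ rounds.
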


It is obvious that the first claim holds: if $G$ is $H$-free, then so is every subgraph of $G$, and therefore \RBE{} always accepts. Therefore, to prove our main result,  Theorem \ref{thm:main-H-freeness}, it suffices to show that if $G$ is $\eps$-far from $H$-free, then \RBE{} rejects $G$ with probability at least $0.99$. In view of that, from now on, we assume that the input graph $G$ is $\eps$-far from $H$-free for some constant $\eps>0$.

We note  that it is enough to show that a \emph{single} instance of the random bounded-breadth bounded-depth search (\RLBFS) of breadth $O_{\eps,H}(1)$ and depth $O_{\eps,H}(1)$ finds a copy of $H$ with probability $\Omega_{\eps,H}(1)$. Indeed, for any functions $f$, $g$, and $h$, if
\RLBFS\,${(G,\dg,\ld)}$ with $h(\eps,H) = O_{\eps,H}(1)$ and $g(\eps,H) = O_{\eps,H}(1)$
finds a copy of $H$ with probability at least $5/f(\eps,H) = \Omega_{\eps,H}(1)$, then this implies that $f(\eps,H) = O_{\eps,H}(1)$ independent
calls to \RLBFS\,${(G,\dg,\ld)}$
detect at least one copy of $H$ with probability at least $1-(1-5/f(\eps,H))^{f(\eps,H)} \ge 1-e^{-5} \ge 0.99$. Therefore, in the remainder of the paper, we analyze the following algorithm \RLBD$(G,H,\dg,\ld)$.

\begin{walgo}
\RLBD\,${(G,H,\dg,\ld)}$:
\begin{itemize}
\item Run \RLBFS\,${(G,\dg,\ld)}$ and let $\mathcal{E}$ be the resulted set of edges.
\item If the subgraph of $G$ induced by the edges $\mathcal{E}$ contains a copy of $H$, then \textbf{reject}.
\item  If not, then \textbf{accept}.
\end{itemize}
\end{walgo}

We will show the following central technical theorem.

\begin{theorem}
\label{thm:main-H-freeness-single-call}
Let $H$ be a connected undirected graph. There are positive functions $\dg = \dg(\eps,H) = O_{\eps,H}(1)$ and $\ld = \ld(\eps,H) = O_{\eps,H}(1)$ such that for any planar graph $G$ that is $\eps$-far from $H$-free, \RLBD$(G,H,\dg,\ld)$ finds a copy of $H$ with probability $\Omega_{\eps,H}(1)$. The query complexity of \RLBD$(G,H,\dg,\ld)$ is $O(\dg^{\ld}) = O_{\eps,H}(1)$.
\end{theorem}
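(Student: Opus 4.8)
The plan is to show that for a planar graph $G$ that is $\eps$-far from $H$-free, the exploration \RLBD{} with suitably chosen constant breadth $\dg$ and depth $\ld$ discovers a copy of $H$ with probability bounded below by a constant depending only on $\eps$ and $H$. First I would use the fact (to be established as Lemma \ref{lemma:edge-disjoint-copies-H-free}) that planarity forces $G$ to contain a collection $\DCH$ of $\Omega_{\eps,H}(|V|)$ edge-disjoint copies of $H$; because $H$ is connected and these copies are edge-disjoint, the union $\G[\DCH]$ has $\Omega_{\eps,H}(|V|)$ vertices of positive degree, so a uniformly random vertex lies in $\G[\DCH]$ with probability $\Omega_{\eps,H}(1)$. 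The crux is then to argue that, conditioned on starting the \RLBFS{} at a vertex of $\G[\DCH]$, the bounded-breadth/bounded-depth exploration in the \emph{original} graph $G$ reaches an entire copy of $H$ with constant probability. This is exactly the content of the two-part reduction sketched in Section \ref{subsubsec:sufficient-cond-testing}: (i) finding a copy of $H$ within $\G[\DCH]$ via a bounded exploration suffices because every edge of $\G[\DCH]$ is an edge of $G$, so the random-neighbor walk in $G$ can only ``help'' (it sees at least the $\G[\DCH]$-edges); and (ii) a bounded exploration actually does find a copy of $H$ in $\G[\DCH]$.

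The heart of the argument — and the main obstacle — is establishing (ii), which is carried out through the contraction/hypergraph machinery of Sections \ref{subsec:constructing-H1}--\ref{sec:final-proof}. The difficulty is that although $\G[\DCH]$ is a sparse, planar, linear-size union of edge-disjoint copies of $H$, a constant-depth, constant-breadth random neighbor exploration can be foiled by high-degree vertices: from such a vertex the walk samples only $\dg$ of potentially $\Theta(|V|)$ neighbors, so the probability of picking the ``right'' next edge of a particular copy of $H$ can be $o(1)$. The resolution is the sequence of self-reductions: one reduces testing $H$-freeness to testing $H'$-freeness where $H'$ is $H$ with one vertex contracted (carried as a hyperedge), maintaining at each step a set of edge-disjoint (hyper)copies on a linear vertex set and maintaining enough planarity (via the shadow-graph device) for the combinatorial count to survive; after $|V(H)|$ contractions one is left with the trivial task of finding a single contracted vertex, which a depth-$1$ exploration handles. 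Unwinding the reductions bounds $\dg$ and $\ld$ by functions of $\eps$ and $H$ only (the remark notes this is doubly exponential in $\Hsize/\eps$, but still $O_{\eps,H}(1)$), and tracks that the success probability degrades by at most a constant factor (again depending only on $\eps, H$) at each of the constantly many reduction steps.

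Assembling these pieces: fix $\dg = \dg(\eps,H)$ and $\ld = \ld(\eps,H)$ as produced by the reduction analysis. A random root lands in $\G[\DCH]$ with probability $p_1 = \Omega_{\eps,H}(1)$; conditioned on that, by (ii) the exploration restricted to $\G[\DCH]$-edges would reach a full copy of $H$ with probability $p_2 = \Omega_{\eps,H}(1)$; and by (i) the exploration in $G$ dominates the exploration in $\G[\DCH]$ in the sense that whenever a copy of $H$ is found in the latter it is found in the former, so the overall success probability is at least $p_1 p_2 = \Omega_{\eps,H}(1)$. Since \RLBD{} rejects precisely when it sees a copy of $H$ among the explored edges, this proves the claimed $\Omega_{\eps,H}(1)$ detection probability. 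The query complexity bound $O(\dg^{\ld}) = O_{\eps,H}(1)$ is immediate from the observation recorded after Definition \ref{def:bounded-disc} that \Traverse\,${(G,v,\dg,\ld)}$ performs at most $2\dg^{\ld}$ queries. I expect the bookkeeping of the contraction steps — in particular keeping the hypergraph copies edge-disjoint, linear in number, and ``planar enough'' for the linear-many-copies count to propagate — to be where essentially all the technical work lies; the reduction of the multi-call \RBE{} to the single-call \RLBD{}, and the passage from $\G[\DCH]$ back to $G$, are comparatively routine.
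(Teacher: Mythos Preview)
Your overall outline matches the paper's strategy, and you correctly identify the contraction/hypergraph machinery as the technical core. However, there is a genuine gap in your step (i), the ``passage from $\G[\DCH]$ back to $G$,'' which you call comparatively routine but have argued incorrectly.

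You claim that ``the random-neighbor walk in $G$ can only `help' (it sees at least the $\G[\DCH]$-edges),'' and that ``the exploration in $G$ dominates the exploration in $\G[\DCH]$.'' This is false in the random neighbor oracle model. When \RLBFS{} is at a vertex $v$, each neighbor query returns a uniformly random neighbor among \emph{all} $\deg_G(v)$ neighbors of $v$ in $G$, not just the $\deg_{\G[\DCH]}(v)$ neighbors in $\G[\DCH]$. So the probability of following a specific $\G[\DCH]$-edge from $v$ is $1/\deg_G(v)$ in $G$ versus $1/\deg_{\G[\DCH]}(v)$ in $\G[\DCH]$. If $\deg_G(v) \gg \deg_{\G[\DCH]}(v)$ --- which can certainly happen, since $G$ has no degree bound --- the probability is \emph{much smaller} in $G$, and over $O_{\eps,H}(1)$ edge-steps the cumulative loss need not be $O_{\eps,H}(1)$.

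The paper repairs this with an explicit degree-balancing step (Lemma~\ref{lemma:transformation} / Lemma~\ref{lemma:basic}): one prunes $\DCH$ to a subset $\DCH' \subseteq \DCH$ with $|\DCH'| = \Omega_{\eps,H}(|V|)$ such that every vertex $v$ either is isolated in $\G[\DCH']$ or satisfies $\deg_{\G[\DCH']}(v) = \Omega_{\eps,H}(\deg_G(v))$. Only with this property (a') does each edge-following step lose at most an $O_{\eps,H}(1)$ factor when transferred from $\G[\DCH']$ to $G$, and since there are $O_{\eps,H}(1)$ steps the total loss is $O_{\eps,H}(1)$. This pruning uses planarity (the bound $\sum_v \deg_G(v) \le 6|V|$) in an essential way; your argument as written does not invoke it and would fail for, e.g., a vertex lying in one copy of $H$ but with $\Theta(|V|)$ neighbors in $G$. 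Once you insert this transformation (applied to the final $\DDCH{\Hsize}$ produced by the contraction machinery, as the paper does in Section~\ref{sec:final-proof}), the rest of your sketch aligns with the paper's proof.
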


Since by our discussion above Theorem \ref{thm:main-H-freeness-single-call} yields Theorem \ref{thm:main-H-freeness}, we will focus on proving Theorem \ref{thm:main-H-freeness-single-call}. We also notice that the query complexity of the tester follows directly from its definition, and so we will concentrate on showing that for $\dg = O_{\eps,H}(1)$ and $\ld = O_{\eps,H}(1)$, \RLBD$(G,H,\dg,\ld)$ finds a copy of $H$ with probability $\Omega_{\eps,H}(1)$.

\SArtur{
\begin{remark}
While in our analysis we did not try to optimize the complexity of the tester, focusing on the task of obtaining the query complexity of $O_{\eps,H}(1)$, let us mention that in fact, with the analysis as it is now, without any optimization efforts, the complexity of our tester is $2^{2^{\text{poly}(\Hsize/\eps)}}$. 
\end{remark}
}


\subsection{Outline of the proof of testing $H$-freeness}
\label{subsec:outline-proof}

In this subsection we outline the key ideas behind our proof of testing $H$-freeness. Since the proof is long and complex, we will give here mostly some underlying intuitions, leaving the details to Sections \ref{sec:exploration-testing-H-free}--\ref{sec:final-proof}.

\SArtur{I feel there are too many repetitions, especially since the overview presented in this section is often repeated in the section when one discusses that specific claim/framework. I'm not sure (yet) if it's good or not.}By our discussion above, 
it suffices to focus on the case when the input graph $G$ is $\eps$-far from $H$-free. Our analysis relies on the following result (shown in Lemma \ref{lemma:ExistenceOfH}) that every simple planar graph $G$ that is $\eps$-far from $H$-free has a subgraph $\U$ satisfying the following:
\begin{enumerate}[(a)]
\item if \RLBD{($\U,H,\dg,\ld$)} finds a copy of $H$ in $\U$ with probability $\Omega_{\eps,H}(1)$, then \RLBD{($G,H,\dg,\ld$)} finds a copy of $H$ in $G$ with probability $\Omega_{\eps,H}(1)$, and
\item \RLBD{($\U,H,\dg,\ld$)} finds a copy of $H$ in $\U$ with probability $\Omega_{\eps,H}(1)$.
\end{enumerate}

Our first (and easy) step towards proving this property is to show that $G$ contains a linear number of edge-disjoint copies of $H$ (see Lemma \ref{lemma:edge-disjoint-copies-H-free}). This follows by iteratively removing copies of $H$ and observing that by the definition of being $\eps$-far from $H$-free, we have to remove $\eps n$ edges to make $G$ free of copies of $H$. In the following we will use $\DCH$ to denote a set (of linear size) of edge-disjoint copies of $H$ in $G$. We continue by showing that given $\DCH$, we can compute a subset $\DCH'$ of linear size such that the graph $G[\DCH']$ (subgraph of $G$ on vertex set $V$ and with edge set being the union of the edges of the subgraphs in $\DCH'$) satisfies the first property above. The proof essentially shows that one can remove copies from $\DCH'$ until every vertex in $G[\DCH']$ has degree either $0$ or a small positive constant times its degree in $G$. 

Next, we would like to define a sequence of sets $\DCH = \DDCH{0} \supseteq \DDCH{1} \supseteq \dots \supseteq \DDCH{\Hsize}$ with associated hypergraphs with the following interpretation. The hyperedges will be labelled in such a way that we are able to recover the set $\DDCH{i}$ from it. We will use hyperedges to replace certain subgraphs of $H$ and their corresponding part in $G$.


\paragraph{Hyperedges.}
We will now describe the use of hyperedges as replacements for copies of subgraphs of $H$ in $G$. Let $G^*$ be a subgraph of $G$ that has a copy of $H$. Consider a subgraph $H_1$ of $H$ and let $u_1, \dots, u_{\ell}$ be the vertices in the copy of $H_1$ in $G^*$ that separate $G \setminus H_1$ from $H_1$, so that (cf. Figure \ref{fig:ex-gadgets-1}):
%
\begin{compactenum}[\sl \qquad (a)]
\item every vertex from $\{u_1, \dots, u_{\ell}\}$ is adjacent in $G^*$ to some vertex $H_1 \setminus \{u_1, \dots, u_{\ell}\}$,
\item every vertex in $H_1 \setminus \{u_1, \dots, u_{\ell}\}$ is adjacent in $G^*$ only to vertices from $H_1$, and
\item $\{u_1, \dots, u_{\ell}\}$ forms an independent set in $H_1$.
\end{compactenum}
%
Then, we can construct a gadget to represent that copy of $H_1$ by removing from $H_1$ all vertices and edges from $H_1 \setminus \{u_1, \dots, u_{\ell}\}$ and replacing them by a single \emph{hyperedge} $\{u_1, \dots, u_{\ell}\}$.

\begin{figure}[t]
\centerline
{(a) \includegraphics[width=.3\textwidth]{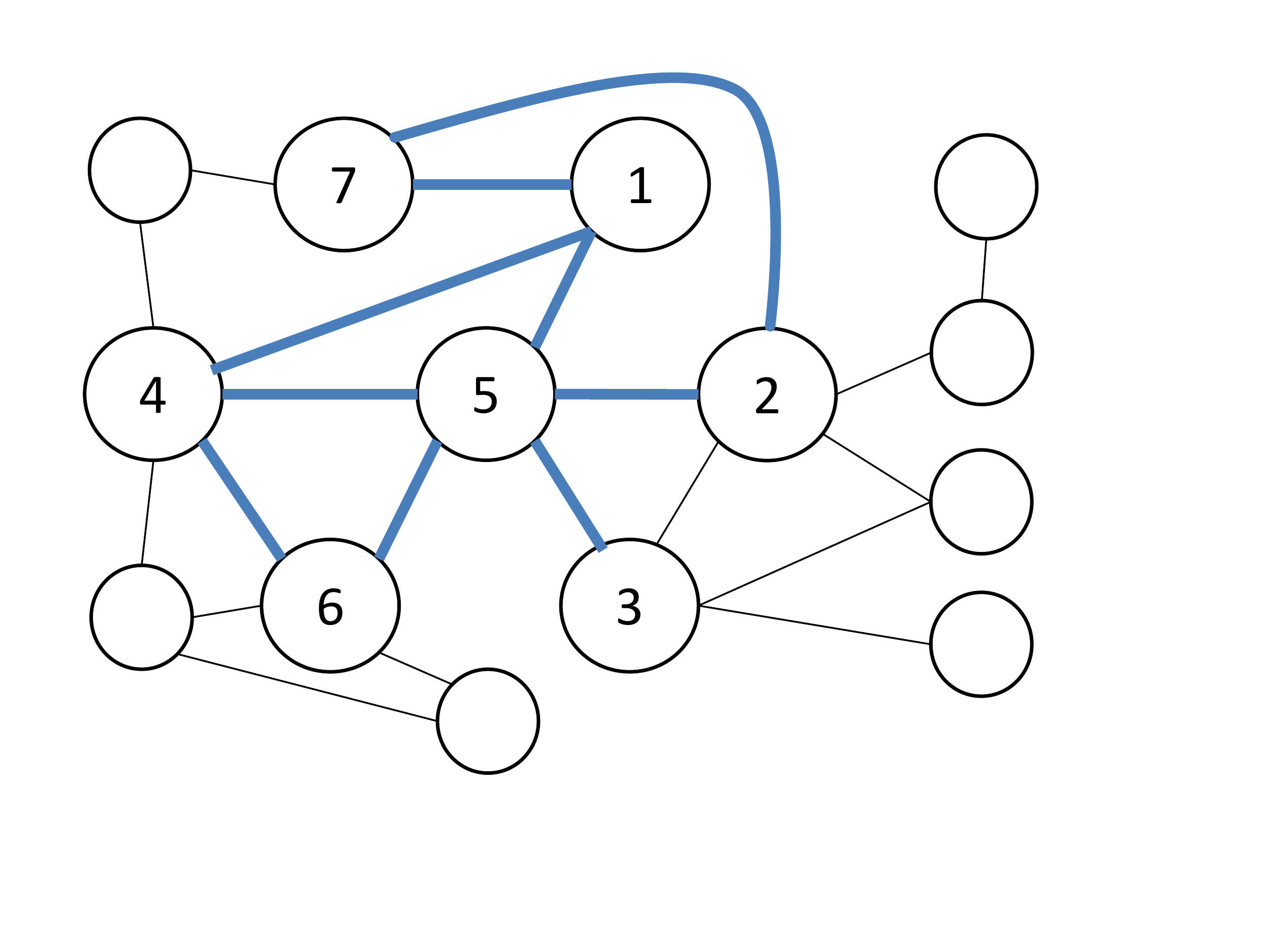}
(b) \includegraphics[width=.3\textwidth]{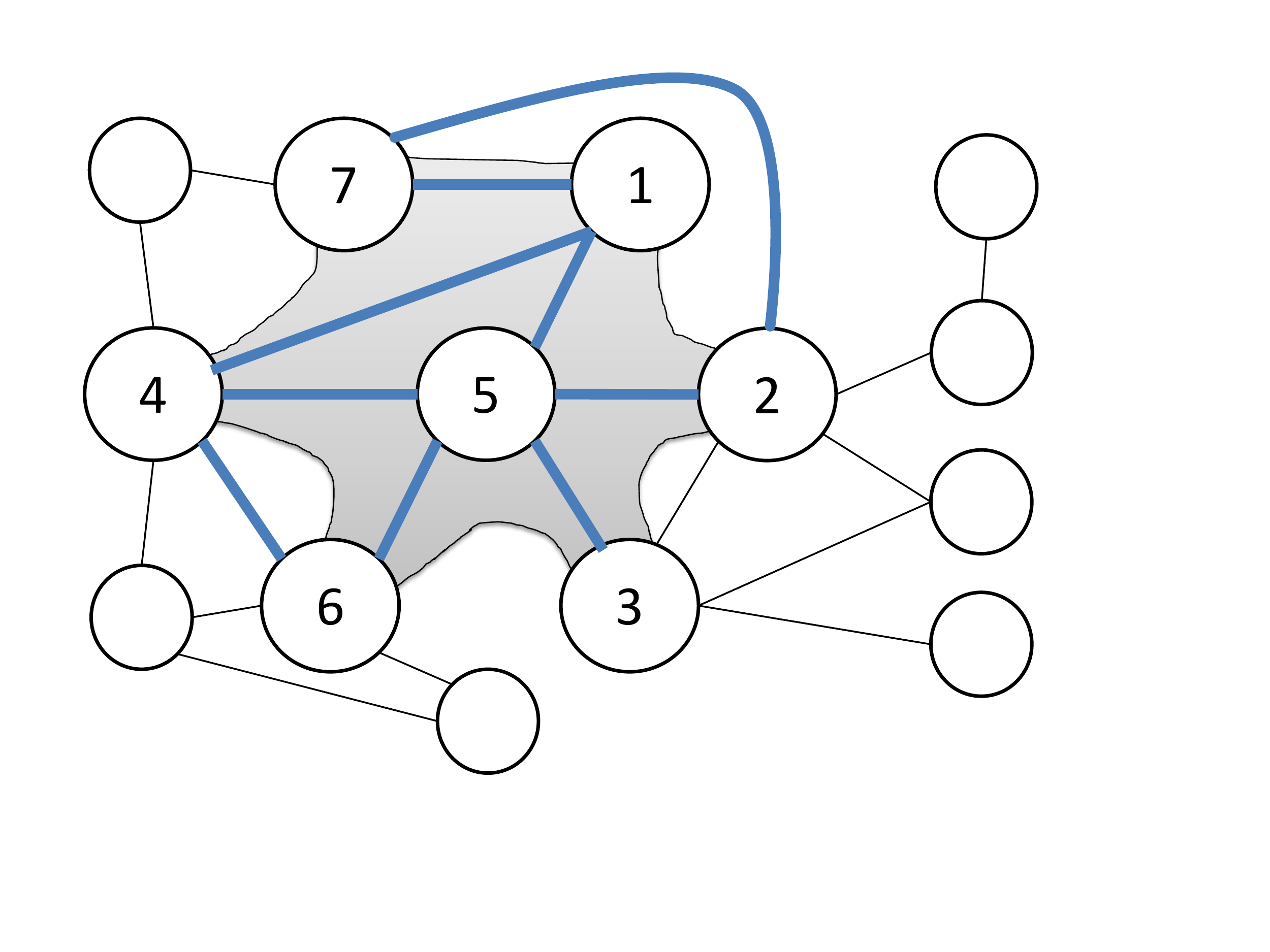}
(c) \includegraphics[width=.3\textwidth]{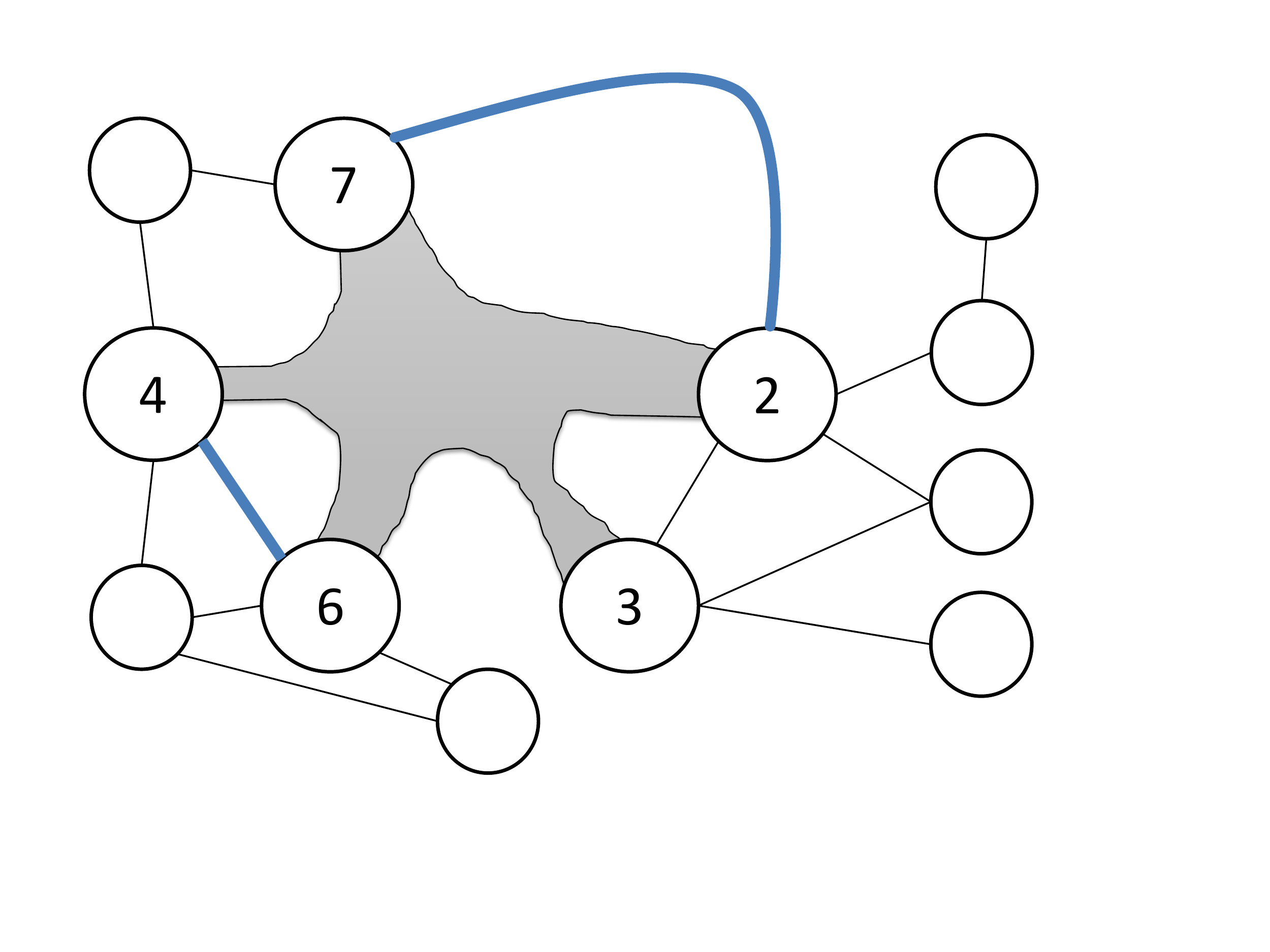}
}
\caption{\small (a) Consider a part of the input graph $G$ with numbered vertices corresponding to the colored vertices in a copy of $H$ in $G$ and thick edges corresponding to the edges in that copy of $H$.
(b) We have a subgraph $H_1$ of $H$ consisting of the vertices and edges marked by the grey area, with vertices $\{2,3,4,6,7\}$ separating $H_1$ from the rest of $G$.
(c) The gadget obtained by removing internal vertices $\{1,5\}$ and replacing $H_1$ by a hyperedge connecting vertices $\{2,3,4,6,7\}$.}
\label{fig:ex-gadgets-1}
\end{figure}

We will encode the structural information of the subgraph replaced by the hyperedge in a label, so that it may happen that we have parallel hyperedges with different labels. In addition to the above structural role we recall from the previous section that the idea of hyperedges was to encode that whenever (a hypergraph version of) \RLBFS\ enters the hyperedge then it will reach all its vertices. Our final goal will be to construct a hypergraph that only consists of selfloops, so that we can argue easily that our tester finds a copy of $H$ by finding a corresponding set of labelled selfloops.


\paragraph{Vertex coloring.}
A major difficultly in applying our approach is to find subgraphs that can be replaced. One way to simplify this question is to color both the vertices of $H$ and the vertices of $G$ with $\Hsize$ colors, such that every vertex of $H$ receives a distinct color and every copy of $H$ in $\DCH$ has the same coloring as $H$. We show in Lemma \ref{lemma:edge-disjoint-copies-H-free} that there is a coloring $\chi$ of $G$ and $H$ such that $G$ contains a set $\DCH$ containing a linear number of such edge-disjoint colored copies of $H$. An important feature of this coloring, which will be very useful in finding vertices that can be replaced by hyperedges, is that \emph{every vertex has the same role in all subgraphs from $\DCH$} it is contained in.


\paragraph{Getting from $\DDCH{i}$ to $\DDCH{i+1}$.}
Next we describe how we move from the set $\DDCH{i}$ to $\DDCH{i+1}$. This is the main step in our reduction and it will be partitioned in a number of substeps. We start with an overview. In each round we perform the following high level process:

\begin{walgo}\vspace*{-0.3in}
\begin{itemize}[$\bullet$]
\item Select a vertex $v_i \in V(H)$.
\item Simultaneously, contract every vertex $u \in V(\HQ_i(\DDCH{i}))$ with $\chi(u) = \chi(v_i)$ as follows:
\begin{itemize}[$\diamond$]
\item for every colored copy $\h$ of $H$ in $\DDCH{i+1}$ that contains vertex $u$:
    \begin{itemize}[$\circ$]
    \item add a new hyperedge consisting of vertices in $\N_i^{\h}\langle u \rangle$, where $\N_i^{\h}\langle u \rangle$ is the set of neighbors of $u$ in $\h$ (in the corresponding hypergraph) other than $u$ (that is, $u \notin \N_i^{\h}\langle u \rangle$);
    \end{itemize}
\item remove vertex $u$ (with all incident edges from $\HQ_i(\DDCH{i})$).
\end{itemize}
\end{itemize}
\end{walgo}

We remark that our algorithm above ensures that no neighboring vertices are contracted since the coloring $\chi$ has no monochromatic edges. This follows from the fact that every edge in $G[\DCH]$ belongs to some copy of $H$ and the coloring of $H$ has no monochromatic edge. Thus, we can perform the contractions independently.

In our construction we will require that the contracted vertices additionally satisfy some stronger properties. This is to maintain (approximately) some
basic properties of planar graphs.
\begin{itemize}
\item We want to ensure that all contractions in $\HQ_i(\DDCH{i})$ corresponding to the contraction of $v_i$ are \emph{consistent}, that is, the contraction of $u$ is the same in every colored copy of $H$ that contains~$u$ (that is, for every vertex $u$ in with $\chi(u) = \chi(v_i)$, for any two colored copies $\h_1,\h_2$ of $H$ in $\DDCH{i+1}$ containing vertex $u$, we have $\N_i^{\h_1}\langle u \rangle = \N_i^{\h_2}\langle u \rangle$).
\end{itemize}
The required property is captured in the following definition (see also Definition \ref{def:safe-vertices}).

\begin{definition}\textbf{(Safe vertices])}
Let $\DDCH{i}$ be a set of edge-disjoint colored copies of $H$ in $G$ and let $\DCH \subseteq \DDCH{i}$. We call a vertex $u$ \textbf{\emph{safe}} if for all colored copies $\h \in \DCH$ of $H$ that contain $u$, the sets $\N_i^{\h}\langle u\rangle$ are the same.
\end{definition}


\paragraph{Finding safe vertices.}
Our next challenge is to show that we can find many (a linear number) safe vertices of the same color. In order to do so, we will delete elements from the current set $\DDCH{i}$ in a controlled way until we can guarantee that many safe vertices of the same color exist. An important concept that we define here is that of a \emph{shadow graph}. A shadow graph is a union of $\Hsize$ planar graphs and it models the neighborhood relation of our hypergraph, such that two vertices are adjacent in the shadow graph if and only if they belong to the same edge in the hypergraph. The main use of shadow graphs is to show in the upcoming construction that our hypergraph still satisfies some near-planar properties that will be useful in the analysis. The concept of shadow graphs and the proof of their existence is one of the main new ideas in this paper.

Using the existence of shadow graphs, we can properly implement the process of contractions via hyperedges, proceed similarly as in an earlier paper about testing of bipartiteness in planar graphs \cite{CMOS11}, where the shadow graphs guarantee that we still approximately satisfy the properties of planar graphs that were used the previous paper \cite{CMOS11}: We first prove that we can construct a subset $\DCH$ of $\DDCH{i}$ of linear size such that every copy of $H$ in $\DCH$ has a vertex of constant degree in $G[\DCH]$. Then we use this claim in the proof of Lemma \ref{lemma:many-safe} to show how to construct a subset $\DCH^*$ of $\DDCH{i}$ such that every copy of $H$ in $\DCH^*$ contains a safe vertex.


\paragraph{Wrapping things up.}
What remains to do is to prove that our construction satisfies the second required property of our tester:

\begin{algo}\vspace*{-0.25in}
\begin{itemize}[$\otimes$]
\item \RLBD{($\G[\DCH],H,\dg,\ld$)} finds a copy of $H$ in $\G[\DCH]$ with probability $\Omega_{\eps,H}(1)$.
\end{itemize}
\end{algo}

We define $\DCH$ to be the set $\DDCH{\Hsize}$ obtained in the final round of our reduction.

We will then prove $\otimes$ by showing the following two properties (proven in Claims \ref{claim:final-prop1} and \ref{claim:final-prop2}), where, informally, \HRLBD\ is an extension of \RLBD\ to hypergraphs, $\HQ_{i}(\DCH)$ denotes the hypergraph corresponding to $\DDCH{i}$, $\M_i$ is the hypergraph corresponding to $H$ in round $i$, and $\P_i$ corresponds to the function assigning vertices contracted in the process to their representatives in the hypergraph $\HQ_{i}(\DCH)$.

\begin{walgo}\vspace*{-0.2in}
\begin{enumerate}[1.]
\item 
    the probability that \HRLBD\,${(\HQ_{\Hsize}(\DCH), \P_{\Hsize}, \M_{\Hsize}, \Hsize^2, 1)}$ finds a copy of $\M_{\Hsize}$ is $\Omega_{\eps,H}(1)$, and
\item 
    for every $i$, $1 \le i < \Hsize$, 
\begin{itemize}
\item if the probability that
    \HRLBD\,${(\HQ_{i+1}(\DCH),\P_{i+1},\M_{i+1},\dg,\ld)}$ finds a copy of $\M_{i+1}$ is $\Omega_{\eps,H}(1)$,
\item then the probability that \HRLBD\,${(\HQ_i(\DCH),\P_i,\M_i,\Hsize \cdot \dg, 2 \ld)}$ finds a copy of $\M_i$ is $\Omega_{\eps,H}(1)$.
\end{itemize}
\end{enumerate}
\end{walgo}

The proof of Claim \ref{claim:final-prop1} now exploits that $\M_{\Hsize}$ only consists of selfloops, which can easily be found and the proof of Claim \ref{claim:final-prop2} formalizes our idea that if our random walk enters a hyperedge in $\HQ_{i+1}(\DCH)$ then we perform with constant probability the same operation in $\HQ_{i+1}(\DCH)$ in two steps of our randomized process. Combining the results with our previous considerations yields our main statement: $H$-freeness in planar graphs is constant query-time testable.



\junk{
\Artur{Notation:
    \begin{compactitem}
    \item \texttt{$\backslash$eps} denotes $\eps$
    \item \texttt{$\backslash$RBE} denotes \RBE
    \item \texttt{$\backslash$RLBFS} denotes \RLBFS
    \item \texttt{$\backslash$RLBD} denotes \RLBD
    \item \texttt{$\backslash$ld} denotes $\ld$ (depth of BFS)
    \item \texttt{$\backslash$dg} denotes $\dg$ (breadth of BFS)
    \end{compactitem}
}
}
%


\section{Analysis of \RLBD{} when $G$ is $\eps$-far from $H$-free}
\label{sec:exploration-testing-H-free}

\junk{
\Artur{Notation:
    \begin{compactitem}
    \item \texttt{$\backslash$G[X]} denotes $\G[X]$, the subgraph of $G$ induced by $X$
    \item \texttt{$\backslash$U} denotes $\U$ (also in $\U_1, \dots, \U_k$, with $G \supseteq \U_1 \supseteq \U_2 \supseteq \dots \supseteq \U_k$); note that $\U_i = \G[\DDCH{i}]$
    \item \texttt{$\backslash$DDCH\{i\}} denotes $\DDCH{i}$ (edge-disjoint copies of $H$ in $G$, $\DDCH{1} \supseteq \DDCH{2} \supseteq \dots \supseteq \DDCH{k}$)
    \item \texttt{$\backslash$HQ} denotes $\HQ$ (also in $\HQ_i$, which is a hypergraph constructed from $\DDCH{i}$ by contracting vertices of $H$ in a specific, consistent way (cf. Section \ref{subsec:shrinking-copies-of-H}))
    \item \texttt{$\backslash$M} denotes $\M$ (also in $\M_1, \M_2, \dots, \M_{\Hsize}$, which is a sequence of hypergraphs that are ``shrunk'' copies of $H$, each $\M_i$ with $\Hsize-i+1$ vertices)
    \item \texttt{$\backslash$Hsize} denotes $\Hsize$
    \end{compactitem}
}
}
Because of the arguments from the previous section, the remainder of the paper deals with the main technical challenge of our result: proving Theorem \ref{thm:main-H-freeness-single-call} that in any simple planar graph $G=(V,E)$ that is $\eps$-far from $H$-free, our algorithm \RLBD{} finds with sufficient probability a copy of $H$.

Our analysis relies on the following lemma showing the existence of a special subgraph $\U$ of~$G$:

\begin{lemma}
\label{lemma:ExistenceOfH}
For every $\eps \in (0,1)$, there are $\dg = \dg(\eps,H)$ and $\ld = \ld(\eps,H)$, such that for every simple planar graph $G = (V,E)$ that is $\eps$-far from $H$-free, there is a subgraph $\U$ of $G$ with the following properties:
\begin{enumerate}[(a)]
\item if \RLBD{($\U,H,\dg,\ld$)} finds a copy of $H$ in $\U$ with probability $\Omega_{\eps,H}(1)$, then \RLBD{($G,H,\dg,\ld$)} finds a copy of $H$ in $G$ with probability $\Omega_{\eps,H}(1)$, and
    \label{part-a-lemma:ExistenceOfH}
\item \RLBD{($\U,H,\dg,\ld$)} finds a copy of $H$ in $\U$ with probability $\Omega_{\eps,H}(1)$.
    \label{part-b-lemma:ExistenceOfH}
\end{enumerate}
\end{lemma}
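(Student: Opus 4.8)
The plan is to exhibit $\U$ as $\G[\DCH]$ for a linear-size family $\DCH$ of edge-disjoint copies of $H$ in $G$ obtained by a sequence of prunings, so that a degree-preservation invariant on $\DCH$ yields property~(\ref{part-a-lemma:ExistenceOfH}) and the hypergraph-contraction reduction of Section~\ref{subsec:outline-proof} yields property~(\ref{part-b-lemma:ExistenceOfH}). First I would invoke Lemma~\ref{lemma:edge-disjoint-copies-H-free} to fix a proper colouring $\chi$ of $G$ with $\Hsize$ colours together with a set $\DCH$ of $\Omega_{\eps,H}(n)$ edge-disjoint copies of $H$, all coloured exactly as a fixed colouring of $H$; consequently $\chi$ has no monochromatic edge and every vertex plays the same role in all copies of $\DCH$ through it. The existence of $\Omega_{\eps,H}(n)$ edge-disjoint copies is the greedy argument (repeatedly delete the edge set of a copy; since $G$ is $\eps$-far from $H$-free one deletes more than $\eps|V|$ edges, hence more than $\eps|V|/|E(H)|$ copies), and the consistent colouring comes from averaging over random $\Hsize$-colourings of $V$.

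To secure property~(\ref{part-a-lemma:ExistenceOfH}) I would then iteratively discard copies from $\DCH$ until every vertex $v$ satisfies $\deg_{\G[\DCH]}(v)\in\{0\}\cup[\alpha\,\deg_G(v),\deg_G(v)]$ for a suitable constant $\alpha=\alpha(\eps,H)$, while keeping $|\DCH|=\Omega_{\eps,H}(n)$; a charging argument (each discarded copy is charged to a vertex whose surviving degree just dropped below threshold) bounds the loss. With this invariant, (\ref{part-a-lemma:ExistenceOfH}) is immediate: a uniformly random start vertex of \RLBD\ lies on an edge of $\U=\G[\DCH]$ with probability $\Omega_{\eps,H}(1)$, and conditioned on being at a vertex of $\U$ every random-neighbour query of \Traverse\ returns a $\U$-neighbour with probability at least $\alpha$, so a whole $(\dg,\ld)$-bounded exploration in $G$ coincides with one in $\U$ with probability $\ge\alpha^{2\dg^{\ld}}=\Omega_{\eps,H}(1)$; hence any copy of $H$ discovered by \RLBD{($\U,H,\dg,\ld$)} is discovered by \RLBD{($G,H,\dg,\ld$)} with at most a constant-factor loss in probability.

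The substantial work is property~(\ref{part-b-lemma:ExistenceOfH}), for which I would follow the hypergraph-tower strategy. Build $\DCH=\DDCH{0}\supseteq\DDCH{1}\supseteq\dots\supseteq\DDCH{\Hsize}$, where moving from $\DDCH{i}$ to $\DDCH{i+1}$ selects a colour $\chi(v_i)$ and simultaneously contracts all vertices of that colour, replacing in each copy of $H$ the subgraph hanging off the boundary of the contracted vertex by a single labelled hyperedge on that boundary. For the contraction to be well defined one needs a linear number of \emph{safe} vertices of a common colour in $\DDCH{i}$; this is where the \emph{shadow graph} (a union of $\Hsize$ planar graphs encoding the hyperedge-incidence relation) is used, so that the contracted structure still obeys a quantitative near-planarity bound and the degree/separator argument of Lemma~\ref{lemma:many-safe}, in the spirit of \cite{CMOS11}, can extract such safe vertices while discarding only a constant fraction of the copies. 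After $\Hsize=O_{\eps,H}(1)$ rounds $\DDCH{\Hsize}$ is still of size $\Omega_{\eps,H}(n)$, and re-establishing the degree-preservation invariant at each round keeps (\ref{part-a-lemma:ExistenceOfH}) valid for $\U=\G[\DDCH{\Hsize}]$. Property~(\ref{part-b-lemma:ExistenceOfH}) is then proved by backward induction along the tower: Claim~\ref{claim:final-prop1} is the base case, where $\M_{\Hsize}$ is a multiset of labelled selfloops that \HRLBD\ detects with constant breadth and depth; Claim~\ref{claim:final-prop2} is the inductive step, showing that whenever the hypergraph exploration of $\HQ_{i+1}(\DCH)$ enters a hyperedge, two ordinary levels of the exploration of $\HQ_i(\DCH)$ simulate this with constant probability (a hyperedge ``promises'' that once entered all of its vertices are reached). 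Unwinding the induction gives that plain \RLBD\ on $\G[\DDCH{\Hsize}]$ finds a copy of $H$ with probability $\Omega_{\eps,H}(1)$, which is exactly~(\ref{part-b-lemma:ExistenceOfH}).

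The hard part is propagating enough structure through the contractions in Step~3. Contracting a vertex of degree $\ge 3$ cannot be carried out inside the planar world, and recording neighbourhoods via hyperedges destroys planarity outright, so the shadow-graph device is the crucial substitute, and getting the labelling, the consistency guaranteed by safe vertices, and the probabilistic accounting of Claim~\ref{claim:final-prop2} to all mesh is the crux. Concretely I expect the two delicate points to be: (i) showing that each of the $\Hsize$ contraction rounds discards only a constant fraction of the copies --- otherwise $|\DDCH{\Hsize}|$ ceases to be linear in $n$; and (ii) the constant-probability simulation in the inductive step, which must reach all of the boundedly many vertices spanned by a hyperedge within a bounded budget of random-neighbour queries despite unknown and possibly large degrees.
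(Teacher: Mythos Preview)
Your proposal is correct and follows essentially the same approach as the paper: obtain a linear family of edge-disjoint coloured copies via Lemma~\ref{lemma:edge-disjoint-copies-H-free}, build the tower $\DDCH{1}\supseteq\dots\supseteq\DDCH{\Hsize}$ using shadow graphs and safe vertices (Lemmas~\ref{lemma:central-small-degrees}--\ref{lemma:defining-Qi+1}), and prove~(\ref{part-b-lemma:ExistenceOfH}) by backward induction via Claims~\ref{claim:final-prop1} and~\ref{claim:final-prop2}.

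The one organisational difference worth noting: you propose re-establishing the degree-preservation invariant at every contraction round, whereas the paper applies Lemma~\ref{lemma:transformation} \emph{once}, at the very end, to $\DDCH{\Hsize}$, producing $\DCH\subseteq\DDCH{\Hsize}$ on which it then runs the whole hypergraph argument. This works because consistency of the hypergraphs $\HQ_i(\cdot)$ is monotone under taking subsets (Claim~\ref{claim:consistent-for-subset}), so passing from $\DDCH{\Hsize}$ to the smaller $\DCH$ preserves the entire tower structure needed for Claims~\ref{claim:final-prop1}--\ref{claim:final-prop2}. Your per-round pruning is not wrong, but it is unnecessary and slightly obscures that property~(\ref{part-a-lemma:ExistenceOfH}) is a statement only about the final $\U$, decoupled from the contraction machinery.
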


Observe that if such a subgraph $\U$ as promised in Lemma \ref{lemma:ExistenceOfH} always exists, then these properties immediately imply that \RLBD{$(G,H,\dg,\ld)$} finds a copy of $H$ in $G$ with probability $\Omega_{\eps,H}(1)$ and therefore, by the discussion above, Theorems \ref{thm:main-H-freeness} and \ref{thm:main-H-freeness-single-call} follow.

In order to prove Lemma \ref{lemma:ExistenceOfH}, we will show that for any simple planar graph $G$ that is $\eps$-far from $H$-free, there exists a set $\DCH$ of edge-disjoint copies of $H$ in $G$ for which $\G[\DCH]$, the subgraph of $G$ induced by the edges of $\DCH$,
satisfies the properties of graph $\U$ in Lemma \ref{lemma:ExistenceOfH}. The construction of the set $\DCH$ and the analysis of its properties form the main technical contribution of our paper. While part (\ref{part-a-lemma:ExistenceOfH}) in Lemma \ref{lemma:ExistenceOfH} is rather easy to achieve and to analyze (thanks to Lemma \ref{lemma:transformation} in Section \ref{subsubsec:condition-a-prime}), the main challenge of our construction is in ensuring part (\ref{part-b-lemma:ExistenceOfH}) in Lemma \ref{lemma:ExistenceOfH}. For that, we use a rather elaborate construction to gradually find a sequence $\DDCH{1} \supseteq \DDCH{2} \supseteq \dots \supseteq \DDCH{\Hsize}$ of sets of edge-disjoint copies of $H$ in $G$, with $|\DDCH{\Hsize}| = \Omega_{\eps,H}(|V|)$, such that the final set $\DDCH{\Hsize}$ is the set $\DCH$ that defines $\U = \G[\DDCH{\Hsize}]$ in Lemma~\ref{lemma:ExistenceOfH}.

The construction of the sequence $\DDCH{1} \supseteq \DDCH{2} \supseteq \dots \supseteq \DDCH{\Hsize}$ of sets of edge-disjoint copies of $H$ in $G$, with $|\DDCH{\Hsize}| = \Omega_{\eps,H}(|V|)$, for which we could easily argue that \RLBD{($\G[\DDCH{\Hsize}],H,\dg,\ld$)} finds a copy of $H$ in $\G[\DDCH{\Hsize}]$ with probability $\Omega_{\eps,H}(1)$, is the most challenging and technical contribution of our paper. We begin with a simple construction of $\DDCH{1}$ which is a set of $\Omega_{\eps,H}(|V|)$ edge-disjoint copies of $H$ in $G$ (cf. Lemma \ref{lemma:edge-disjoint-copies-H-free}). Then our construction is iterative: we design a reduction that takes set $\DDCH{i}$ of $\Omega_{\eps,H}(n)$ edge-disjoint copies of $H$ and we construct from it another set $\DDCH{i+1} \subseteq \DDCH{i}$ with $|\DDCH{i+1}| = \Omega_{\eps,H}(|\DDCH{i}|)$ for which we simplify the structure of $\G[\DDCH{i+1}]$ with respect to that of $\G[\DDCH{i}]$. To guide our process, we associate with each $\DDCH{i}$ a certain \emph{hypergraph} $\HQ_i(\DDCH{i})$ that is constructed from $\DDCH{i}$ by contracting vertices of $H$ in a specific, consistent way (cf. Section \ref{subsec:shrinking-copies-of-H}). \emph{The purpose of $\HQ_i(\DDCH{i})$ is to model the copies of $H$ by a hypergraph on a smaller number of vertices, by contracting vertices (and incident edges) which are known to be visited by \RLBFS{} via other means.} 
We will construct a sequence of hypergraphs $\HQ_1(\DDCH{1}), \HQ_2(\DDCH{2}), \dots, \HQ_{\Hsize}(\DDCH{\Hsize})$ that correspond to sets $\DDCH{1}, \DDCH{2}, \dots, \DDCH{\Hsize}$, and a sequence of hypergraphs $\M_1, \M_2, \dots, \M_{\Hsize}$ that are ``shrunk'' copies of $H$, each $\M_i$ with $\Hsize-i+1$ vertices, such that, informally, for our algorithm of selecting $\DDCH{1}, \DDCH{2}, \dots, \DDCH{\Hsize}$, the following conditions holds:
\begin{itemize}
\item the probability of finding by \RLBFS{} a copy of $H$ in $\G[\DDCH{1}]$ is the same as the probability of finding by \RLBFS{} a copy of $\M_1$ in $\HQ_1(\DDCH{1})$,
\item the probability of finding by \RLBFS{} a copy of $\M_{i+1}$ in $\HQ_{i+1}(\DDCH{i+1})$ is similar to the probability of finding by \RLBFS{} a copy of $\M_i$ in $\HQ_i(\DDCH{i})$, and
\item using the fact that $\M_{\Hsize}$ has a single vertex, one can easily estimate the probability of finding by \RLBFS{} a copy of $\M_{\Hsize}$ in $\HQ_{\Hsize}(\DDCH{\Hsize})$.
\end{itemize}
With these three properties at hand, the main theorem will follow.

One central feature of our analysis via the study of hypergraphs is to ensure that the underlying hypergraphs have some basic planar graphs-like properties. (In particular, informally, in our analysis we would like to argue that there is always a constant fraction of low-degree vertices.) While we do not have a useful characterization of planar hypergraphs, we will be able to model some planarity-like properties of the hypergraphs using some special graph reduction (via \emph{shadow graphs}), see Lemma \ref{lemma:central-small-degrees} and Appendix \ref{subsec:planarization-of-hypergraphs}.

In the following sections we will develop this framework in details, finalizing it in Section \ref{sec:final-proof} that proves the desired properties above.


\subsection{Auxiliary technical tools}
\label{subsec:auxiliary-tools}

We begin with three auxiliary tools in our analysis, the study of the problem of finding \emph{colored} copies of $H$ in $G$ (Section \ref{subsubsec:coloring-H}), a reduction simplifying condition (a) of Lemma \ref{lemma:ExistenceOfH} (Section \ref{subsubsec:condition-a-prime}), and extension of the testing and graph exploration framework to hypergraphs (Section \ref{sec:exploration-in-hypergraphs}).


\subsubsection{Auxiliary tools: Finding \emph{colored copies} of $H$ in $G$}
\label{subsubsec:coloring-H}

To simplify the analysis, we will 
consider colored copies of $H$ in $G$. Let us \emph{color} all vertices of $H$ using $\Hsize$ colors, one color for each vertex (without loss of generality, the colors are $\{1, 2, \dots, \Hsize\}$). While the coloring is not needed by the algorithm, it will simplify the analysis. With this in mind, instead of showing that our algorithm \RLBD{} finds with sufficient probability a copy of $H$, we will show (cf. Lemma~\ref{lemma:edge-disjoint-copies-H-free}) that there is a coloring $\chi$ of vertices of $G$ such that \RLBD{} finds (with sufficient probability) a colored copy of $H$, that is, a copy of $H$ in $G$ with colors of the vertices in the copy consistent with the coloring $\chi$. (While this statement sounds trivial, since once we found a copy of $H$ in $G$ we can always color vertices of $G$ to be consistent with the coloring of $H$, the colors will be helpful in our analysis.) Therefore, from now on, whenever we will aim to find a copy of $H$ we will mean to find a colored copy of $H$ consistent with given coloring~$\chi$.

Let us notice one immediate implication of this assumption: if $\DDCH{i}$ and $\chi$ are fixed, then one can think about every edge $e$ as a \emph{labeled edge}, since the colors of its endpoints define a unique edge in $H$ that $e$ corresponds too. We will use this property implicitly throughout the paper, without mentioning it anymore.


\subsubsection{Auxiliary tools: Simplifying condition (\ref{part-a-lemma:ExistenceOfH}) of Lemma \ref{lemma:ExistenceOfH}: (via edge-disjoint copies of $H$)}
\label{subsubsec:condition-a-prime}

We show that one can simplify condition (\ref{part-a-lemma:ExistenceOfH}) of Lemma \ref{lemma:ExistenceOfH} for the special case when the subgraph $\U$ of $G$ is a union of a linear number of edge-disjoined colored copies of $H$ (a similar approach has been also used in \cite{CMOS11}). That is, if there is a graph $\G[\DCH]$ with a \emph{linear number of edge-disjoint colored copies of $H$}, then Lemma \ref{lemma:transformation} shows that there is always a subset $\DCH' \subseteq  \DCH$ with cardinality $|\DCH'| = \Omega_{\eps,H} (|\DCH|)$ such that the graph $\G[\DCH']$ satisfies property (\ref{part-a-lemma:ExistenceOfH}).

\begin{restatable}{lemma}{LPaI}
\emph{\textbf{(Transformation to obtain property (\ref{part-a-lemma:ExistenceOfH}))}}
\label{lemma:transformation}
Let $G = (V,E)$ be a simple planar graph. Let $\DCH$ be a set of $\Omega_{\eps,H}(|V|)$ edge-disjoint colored copies of $H$ in $G$. Then there exists a subset $\DCH' \subseteq \DCH$, $|\DCH'| = \Omega_{\eps,H}(|V|)$, such that the graph $\G[\DCH']$ satisfies condition (\ref{part-a-lemma:ExistenceOfH}) of Lemma \ref{lemma:ExistenceOfH}.
\end{restatable}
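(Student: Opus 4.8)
The goal is: given a linear-sized set $\DCH$ of edge-disjoint colored copies of $H$ in a planar graph $G$, pass to a still-linear-sized subset $\DCH' \subseteq \DCH$ so that $\G[\DCH']$ has the ``lifting'' property (\ref{part-a-lemma:ExistenceOfH}), i.e.\ detecting a copy of $H$ by \RLBD{} in $\G[\DCH']$ can be converted into detection in $G$ with only a constant-factor loss in success probability. The obstruction to lifting is clear: \RLBD{} in $G$ starts at a uniformly random vertex and at each vertex picks $\dg$ \emph{random} neighbors in $G$, whereas in $\G[\DCH']$ the same vertex has fewer neighbors. So if a vertex $u$ lying on copies of $\DCH'$ has $\G$-degree $\deg_G(u)$ hugely larger than its degree $\deg_{\G[\DCH']}(u)$, then a step of \RLBFS{} in $G$ that ought to follow a $\DCH'$-edge out of $u$ does so only with probability $\deg_{\G[\DCH']}(u)/\deg_G(u)$, which could be $1/\mathrm{poly}(n)$ — killing the reduction. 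The plan is therefore to choose $\DCH'$ so that every vertex which survives in $\G[\DCH']$ has $\deg_{\G[\DCH']}(u) \ge c\cdot \deg_G(u)$ for a constant $c = c(\eps,H)>0$; then each \RLBFS{} step in $G$ mimics the corresponding step in $\G[\DCH']$ with probability $\ge c$, and over the $O(\dg^{\ld}) = O_{\eps,H}(1)$ queries of a single exploration the overall loss is only a constant factor $c^{O(\dg^{\ld})} = \Omega_{\eps,H}(1)$ (also absorbing the $|V|$ vs.\ $|V(\G[\DCH'])|$ factor from the choice of the random start vertex, which is $\Omega_{\eps,H}(1)$ since $|\DCH'| = \Omega_{\eps,H}(|V|)$ forces $\G[\DCH']$ to touch a linear number of vertices).

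To produce such a $\DCH'$ I would run a cleanup procedure: repeatedly, as long as there is a vertex $u$ with $0 < \deg_{\G[\text{current set}]}(u) < c\cdot\deg_G(u)$, delete from the current set all copies of $H$ that pass through $u$. Each such deletion step removes at most $\deg_{\G[\cdot]}(u) < c\,\deg_G(u)$ edges (copies of $H$ are edge-disjoint, so at most that many copies, and each copy has $|E(H)|$ edges — so at most $c\,\deg_G(u)/1$ copies, losing $\le c\,\deg_G(u)\cdot|E(H)|$ edges, or more simply: we can charge the deleted copies to the edges at $u$ inside $\G[\cdot]$). The key accounting point is a potential/charging argument: assign to vertex $u$ a ``budget'' proportional to $\deg_G(u)$; every edge ever deleted in the process is deleted while incident to some low-degree-ratio vertex $u$, and the total number of edges removed on account of $u$ across the whole process is at most $\frac{c}{1-c}\deg_G(u)$ (a geometric-series bound, since after cleanup at $u$ we never re-add edges at $u$ and its $\G[\cdot]$-degree only drops). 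Summing over all $u$ and using planarity, $\sum_u \deg_G(u) = 2|E(G)| \le 6|V|$, so the total number of edges removed is at most $\frac{6c}{1-c}|V|$; choosing $c$ small enough (a function of $\eps$ and $H$), this is less than, say, $\frac12 |E(\G[\DCH])| = \Omega_{\eps,H}(|V|)$, so at least half the copies survive and $|\DCH'| = \Omega_{\eps,H}(|V|)$. By construction, upon termination every vertex with positive degree in $\G[\DCH']$ satisfies $\deg_{\G[\DCH']}(u) \ge c\,\deg_G(u)$, which is exactly what the lifting argument needs.

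Finally I would assemble the lifting argument itself. Suppose \RLBD{($\G[\DCH'],H,\dg,\ld$)} finds a copy of $H$ with probability $p = \Omega_{\eps,H}(1)$. Run \RLBD{($G,H,\dg,\ld$)}: with probability $|V(\G[\DCH'])|/|V| = \Omega_{\eps,H}(1)$ its random start vertex lands in $\G[\DCH']$; condition on the event that the run ``would have'' chosen the same edges as a successful run in $\G[\DCH']$ — at each of the $\le O(\dg^\ld)$ queried vertices $u$, the probability that \RLBFS{} in $G$ picks a prescribed $\G[\DCH']$-neighbor is $\ge 1/\deg_G(u) \ge (c/\deg_{\G[\DCH']}(u))\cdot 1 \ge c\cdot(\text{the corresponding probability in }\G[\DCH'])$ (more carefully, comparing the two random-neighbor distributions restricted to $\G[\DCH']$-edges, the $G$-distribution dominates the $\G[\DCH']$-distribution pointwise up to factor $c$), so coupling the two explorations succeeds with probability $\ge c^{O(\dg^\ld)}\cdot p = \Omega_{\eps,H}(1)$, and then \RLBD{($G$)} sees a copy of $H$ (the explored subgraph of $G$ contains the explored subgraph of $\G[\DCH']$). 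Hence \RLBD{($G,H,\dg,\ld$)} succeeds with probability $\Omega_{\eps,H}(1)$, which is condition (\ref{part-a-lemma:ExistenceOfH}).

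The main obstacle is the accounting in the cleanup procedure — showing that demanding the degree-ratio bound at every surviving vertex still leaves a linear number of copies of $H$. The geometric-series charging combined with the planar edge bound $|E(G)| \le 3|V|$ is what makes it go through, but the bookkeeping (that copies deleted on account of $u$ are genuinely bounded by the $\G[\cdot]$-degree of $u$ at the moment of deletion, and that degrees only decrease) needs to be stated carefully; everything else is routine.
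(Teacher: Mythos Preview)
Your approach is essentially the same as the paper's: reduce condition (\ref{part-a-lemma:ExistenceOfH}) to the degree-ratio condition (what the paper calls property (a$'$)), obtain (a$'$) by a greedy cleanup that deletes all copies through any vertex with bad ratio, and bound the total damage using $\sum_v \deg_G(v)\le 6|V|$ from planarity.

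One small simplification relative to what you wrote: the geometric-series bookkeeping is unnecessary. In your cleanup step you delete \emph{all} copies of $H$ through $u$, so $u$ becomes isolated immediately and is never processed again. Hence each vertex is charged at most once, and the number of copies deleted on account of $u$ is at most its current $\G[\cdot]$-degree $< c\,\deg_G(u)$ (copies are edge-disjoint, so each uses at least one edge at $u$). Summing gives at most $6c|V|$ copies removed in total; taking $c=\alpha/12$ with $\alpha=|\DCH|/|V|$ leaves $|\DCH'|\ge \tfrac{\alpha}{2}|V|$. This is exactly the paper's argument, just without the $\tfrac{c}{1-c}$ detour.
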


The proof of Lemma \ref{lemma:transformation}, as a natural extension of the approach from \cite{CMOS11}, is deferred to Appendix \ref{sec:condition-a-prime}.


\subsubsection{Traversing hypergraphs and testing hypergraph $\M$-freeness}
\label{sec:exploration-in-hypergraphs}

\junk{
\Artur{Notation:
    \begin{compactitem}
    \item \texttt{$\backslash$HRLBFS} denotes \HRLBFS
    \item \texttt{$\backslash$HRLBD} denotes \HRLBD
    \item \texttt{$\backslash$P} denotes representative function $\P: V \rightarrow V$
    \end{compactitem}
}
}
In Section \ref{sec:exploration-testing-H-free}, we described two central algorithms used for testing $H$-freeness:
\RLBFS, 
and \RLBD.
Both these algorithms were presented in a form required to test $H$-freeness in a graph. However, in our transformations we will apply the same algorithms to hypergraphs, to test whether a hypergraph $\HQ$ (in a form of $\HQ_i(\DDCH{i})$, as defined in Section \ref{subsec:shrinking-copies-of-H}) is $\M$-free, where $\M$ is a fixed hypergraph (which in our applications will be $\M_i$, as defined in Section \ref{subsec:shrinking-H}). While the modifications are rather straightforward, for the sake of completeness, we will describe below these algorithms to be run on a hypergraph. Furthermore, in our algorithms for hypergraphs we will have one additional parameter, a \emph{representative function} $\P: V \rightarrow V$, which describes the way how the edges have been contracted (cf. Definition \ref{def:canonical-representative-function} and Appendix~\ref{subsec:planarization-of-hypergraphs}). 
The idea behind the representative function $\P$ is that any vertex $u$ that either is in the hypergraph $\HQ$ or which does not belong to any set of copies of $H$ has $\P(u) = u$, but any other vertex $u$ from $G$ that has been contracted and now does not appear in $\HQ$, has $\P(u)$ equal to its representative in $\HQ$. In the latter case, the intuition is that the representative is a vertex in $\HQ$ that with probability $\Omega_{\eps,H}(1)$ can be reached from $u$ in $O_{\eps,H}(1)$ steps, if \RLBFS{} (run in $G$) started at~$u$.

\begin{remark}\label{remark:use-of-V-in-H}
Let us remark that in \HRLBD{} and \HRLBFS{} below we use the input graph $G$ implicitly, since in \HRLBFS{} we directly refer here to the set $V$, which is the vertex set of $G$, and we do so indirectly via the use of $\P$, whose domain and range are $V$. Further, in our applications we will always have that $V(\HQ) \subseteq V$.
\end{remark}

\begin{walgo}
\HRLBFS\,${(\HQ,\P,\dg,\ld)}$:
\begin{itemize}
\item Pick a vertex $v \in V$ i.u.r., and let $L_0 = \{\P(v)\}$
    \\
    {\small (i.e., $L_0$ has a randomly selected vertex, such that any $u \in V$ is chosen with probability $\frac{|\P^{(-1)}(u)|}{|V|}$)}.
\item If $v$ is a vertex of $\HQ$ then for $\ell = 1$ to $\ld$ do:
    \begin{itemize}[$\diamond$]
    \item Let $L_{\ell} = \emptyset$ and $\mathcal{E}_{\ell} = \emptyset$.
    \item For every $u \in L_{\ell-1}$ do:
        \begin{itemize}[$\circ$]
        \item Choose $\dg$ edges incident to $u$ in $\HQ$ i.u.r.; call them $\mathcal{E}_{\ell,u}$.
        \item Let $\Gamma_u$ be the set of vertices in $\mathcal{E}_{\ell,u}$.
        \item Set $L_{\ell} = L_{\ell} \cup \Gamma_u$ and $\mathcal{E}_{\ell} = \mathcal{E}_{\ell} \cup \mathcal{E}_{\ell,u}$.
        \end{itemize}
    \item $L_{\ell} = L_{\ell} \setminus \bigcup_{i=0}^{\ell-1} L_i$.
    \end{itemize}
\item \textbf{Return} the edges $\bigcup_{\ell=1}^{\ld} \mathcal{E}_{\ell}$.
\end{itemize}
\end{walgo}

\begin{walgo}
\HRLBD\,${(\HQ,\P,\M,\dg,\ld)}$:
\begin{itemize}
\item Run \HRLBFS\,${(\HQ,\P,\dg,\ld)}$ and let $\mathcal{E}$ be the resulted set of edges.
\item If the sub-hypergraph of $\HQ$ induced by the edges $\mathcal{E}$ contains a copy of $\M$, then \textbf{reject}.
\item  If not, then \textbf{accept}.
\end{itemize}
\end{walgo}

\section{Finding the first set $\DDCH{1}$ of edge-disjoint colored copies of $H$}
\label{subsec:constructing-H1}

We now proceed with a simple construction that for a given graph $G$ that is $\eps$-far from $H$-free, finds a set $\DDCH{1}$ of $\Omega_{\eps,H}(|V|)$ edge-disjoint colored copies of $H$ in $G$.

\begin{lemma}
\label{lemma:edge-disjoint-copies-H-free}
If $G$ is $\eps$-far from $H$-free, then one can color vertices of $G$ with $\Hsize$ colors $\chi$ such that $G$ has a set $\DCH$ of at least 
$\frac{\eps}{|E(H)| \cdot \Hsize^{\Hsize}} \cdot |V|$ edge-disjoint colored copies of $H$.
\end{lemma}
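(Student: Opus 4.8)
\textbf{Proof plan for Lemma~\ref{lemma:edge-disjoint-copies-H-free}.}

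The plan is to first extract a large collection of edge-disjoint (uncolored) copies of $H$ from $G$ by a greedy peeling argument, and then to find a single coloring $\chi$ under which a $\Hsize^{-\Hsize}$-fraction of these copies is ``monochromatically consistent'' with $H$. For the first part I would argue as follows: repeatedly find a copy of $H$ in the current graph and delete its (at most $|E(H)|$) edges; since $G$ is $\eps$-far from $H$-free, one must delete more than $\eps|V|$ edges before the remaining graph becomes $H$-free, so this process produces a set $\mathcal{D}$ of more than $\frac{\eps}{|E(H)|}|V|$ edge-disjoint copies of $H$. (Here I use that deleting edges only, not inserting, suffices to destroy all copies of $H$, which matches the definition of $\eps$-far from $H$-free given in the excerpt.)

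For the coloring step, I would fix, for each copy $C\in\mathcal{D}$, an isomorphism $\phi_C$ from $H$ onto $C$ (a ``reference labelling'' of the vertices of $C$ by $V(H)$). Recall we have colored the vertices of $H$ bijectively with $\{1,\dots,\Hsize\}$; the goal is to color $V(G)$ so that for many copies $C$, the color of $\phi_C(v)$ equals the color of $v$ for every $v\in V(H)$. I would pick $\chi\colon V(G)\to\{1,\dots,\Hsize\}$ uniformly at random. For a fixed copy $C$, since its $\Hsize$ vertices are distinct, the probability that $\chi$ restricted to $C$ agrees (via $\phi_C$) with the coloring of $H$ is exactly $\Hsize^{-\Hsize}$. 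Hence the expected number of copies in $\mathcal{D}$ whose coloring is consistent with $H$ is $|\mathcal{D}|\cdot\Hsize^{-\Hsize} > \frac{\eps}{|E(H)|\cdot\Hsize^{\Hsize}}|V|$, so some coloring $\chi$ achieves at least this many consistent copies. Taking $\DCH$ to be that subcollection of consistent copies — which are still pairwise edge-disjoint, being a subset of $\mathcal{D}$ — gives the lemma.

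The only point requiring a little care is the phrase ``colored copy of $H$ consistent with $\chi$'': one must make sure that the notion is well-defined independently of the reference isomorphism $\phi_C$. Since $H$ has a distinct color on each vertex, a subgraph $C$ of $G$ is a colored copy of $H$ consistent with $\chi$ iff the map sending each $u\in V(C)$ to the unique $v\in V(H)$ with the same color is a graph isomorphism $C\to H$; this is exactly the event analyzed above, and it does not depend on any auxiliary choice. I do not expect a serious obstacle here — the argument is a standard ``delete-and-recolor'' double counting. The only mild subtlety is that a random $\chi$ could color two adjacent vertices of $C$ the same and thereby fail consistency, but this is already accounted for, since consistency forces the colors on $C$ to be all distinct (as they are on $H$), so the $\Hsize^{-\Hsize}$ bound is exact and no separate handling of monochromatic edges is needed at this stage.
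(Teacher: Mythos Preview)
Your proposal is correct and follows essentially the same approach as the paper: a greedy peeling argument to extract at least $\frac{\eps|V|}{|E(H)|}$ edge-disjoint copies of $H$, followed by a uniformly random coloring and linearity of expectation to retain a $\Hsize^{-\Hsize}$ fraction as colored copies. Your additional remark clarifying that the notion of ``colored copy'' is independent of the chosen reference isomorphism is a nice touch that the paper leaves implicit.
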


\begin{proof}
%
We first find the copies of $H$ without considering the coloring of $V$ and $V(H)$, and then we will prove the existences of the relevant coloring~$\chi$.

We find edge-disjoint copies of $H$ in $G$ one by one. Suppose that we have already found in $G$ a set of $k$ edge-disjoint copies of $H$, where $k < \frac{\eps |V|}{|E(H)|}$. Then, since $G$ is $\eps$-far from $H$-free, the graph obtained from $G$ by removal of the $k$ copies of $H$ found already (which removes $k |E(H)| < \eps |V|$ edges from $G$) cannot be $H$-free, and hence $G$ must contain a copy of $H$. This copy would be edge-disjoint with all copies found before, what by induction shows that $G$ has at least $\frac{\eps \cdot |V|}{|E(H)|}$ edge-disjoint copies of $H$.

Let $H_1, \dots, H_{\ell}$ be the edge-disjoint copies of $H$ in $G$, with $\ell \ge \frac{\eps \cdot |V|}{|E(H)|}$. Let us consider a uniformly random coloring of vertices of $G$ (with $\Hsize$ colors) and let $X_i$ be the indicator random variable that $H_i$ has all vertices of the same color as in $H$; let $X = \sum_{i=1}^{\ell} X_i$. Clearly, for every $i$, $\Pr{X_i = 1} = \Ex{X_i} = \frac{1}{\Hsize^{\Hsize}}$. Therefore, $\Ex{X} = \Ex{\sum_{i=1}^{\ell} X_i} = \sum_{i=1}^{\ell} \Ex{X_i} = \frac{\ell}{\Hsize^{\Hsize}}$. This implies that there is a coloring of vertices of $G$ that has at least $\frac{\ell}{\Hsize^{\Hsize}} \ge 
\frac{\eps \cdot |V|}{|E(H)| \cdot \Hsize^{\Hsize}}$ edge-disjoint colored copies of $H$. Therefore, there is a coloring $\chi$ with this property, that is, after we color vertices of $G$ using $\chi$, then $G$ will have at least $\frac{\eps \cdot |V|}{|E(H)| \cdot \Hsize^{\Hsize}}$ edge-disjoint colored copies of $H$ that form the required set $\DCH$.
\end{proof}

Using the result from Lemma \ref{lemma:edge-disjoint-copies-H-free}, from now on, 
we will assume that the vertices of $G$ are colored using $\chi$ (the coloring from Lemma \ref{lemma:edge-disjoint-copies-H-free}) so that $G$ has at least $\Omega_{\eps,H}(|V|)$ edge-disjoint colored copies of~$H$.


\section{Constructing $\DDCH{i+1}$ from $\DDCH{i}$}
\label{subsec:Ui->Ui+1}

\junk{\Artur{Some random text that could be used here: As we said above, in our construction we will gradually shrink in $G$ all copies of $H$ into smaller sub-hypergraphs $\HQ_1(\DDCH{1}), \HQ_2(\DDCH{2}), \dots$. Each shrinking will follow the same rule in all copies of $H$, and will choose vertex $v_i$ (to be contracted, as defined in Section \ref{subsec:shrinking-H}) on the basis of the current structure of $\DDCH{i}$. Our key property (cf. Lemma \ref{lemma:small-vertices-new}) is that \emph{for every $i$, hypergraph $\HQ_i(\DDCH{i})$ is consistent for $\DDCH{i}$}. We will prove this claim by showing that assuming that $\HQ_i(\DDCH{i})$ is consistent for $\DDCH{i}$, there is always a subset $\DCH \subseteq \DDCH{i}$ of linear size (with $|\DCH| = \Omega_{\eps,H}(|\DDCH{i}|)$) for which there is a color $\cc$ such that every vertex of color $\cc$ in $V(\HQ_i(\DDCH{i}))$ is safe with respect to $\DDCH{i}$ and $\HQ_i(\DDCH{i})$. Using this claim, we will be able to perform an appropriate construction of the hypergraph $\HQ_{i+1}(\DDCH{i+1})$ by first setting $\DDCH{i+1} := \DCH$ and then taking $v_i$ with $\chi(v_i) = \cc$, and contracting (as $v_i$ in $\M_i$) all vertices of color $\cc$ in all copies of $H$ in $\DDCH{i+1}$.
\\
As we described in Section \ref{subsec:shrinking-copies-of-H}, the central idea behind our analysis is to study the relation between random exploration algorithms in hypergraphs $\HQ_i(\DDCH{i})$ vs $\HQ_i(\DDCH{i+1})$. In this context, one key task in our analysis is the choice of the order of vertices in $H$, $v_1, \dots, v_{\Hsize}$, and another is the selection of $\DDCH{i+1}$ as a subset of $\DDCH{i}$. Given $\DDCH{i}$ and assuming that $\HQ_i(\DDCH{i})$ is a hypergraph consistent for $\DDCH{i}$, we will be able to determine $v_i$ and to find a large set $\DDCH{i+1} \subseteq \DDCH{i}$, such that the hypergraph $\HQ_i(\DDCH{i+1})$ will be consistent for $\DDCH{i+1}$ (see Lemma \ref{lemma:small-vertices-new}). The proof of this claim relies on some planarity properties of the original graph $G$, and hence, indirectly, of the hypergraphs considered. However, since the notion of planar hypergraphs is not very suitable in our context, we will rely on some planar graph representation of them.
\\
In what follows, we will consider a set $\DDCH{i}$ of edge-disjoint colored copies of $H$ in $G$ and the corresponding hypergraph $\HQ_i(\DDCH{i})$. We will show in Lemma \ref{lemma:small-vertices-new} that if $\HQ_i(\DDCH{i})$ is consistent for $\G[\DDCH{i}]$, then a constant fraction of non-isolated vertices in $\HQ_i(\DDCH{i})$ will have a small number of distinct neighbors. Our proof uses some basic transformation of $\HQ_i(\DDCH{i})$ (as it has been constructed by the algorithm above) into planar graphs.
}}
The construction of $\DDCH{1}$ from Section \ref{subsec:constructing-H1} is rather simple, but it is significantly more complex to define $\DDCH{2}$, and then $\DDCH{3}, \dots, \DDCH{\Hsize}$. In what follows, we will first present key intuitions in Section \ref{subsubsec:gadgets}, then describe our framework in Sections \ref{subsec:shrinking-H} and \ref{subsec:shrinking-copies-of-H}, and present details of the construction of $\DDCH{i+1}$ in Section \ref{subsec:main-reduction}.


While our main focus is on the sets $\DDCH{i}$ of edge-disjoint colored copies of $H$ in $G$, in our analysis we will analyze these sets and the relevant graphs $\G[\DDCH{i}]$ via their suitable \emph{hypergraph representation}. Indeed, to prove that \RLBFS\ finds a copy of $H$, we will consider a hypergraph induced by ``shrunk'' copies of $H$ defining $\DDCH{i}$. The idea of this construction is two-folded:
\begin{itemize}
\item on one hand, using the hypergraph representation it will be easier to argue a lower bound for the probability that a copy of $H$ is found, and
\item on the other hand, the hypergraph representation will allow us to combine distinct colored copies of $H$ (or the subgraph of $H$) that are undistinguishable to \RLBFS.
\end{itemize}


\subsection{Overview: Gadgets, hypergraph representation and their use}
\label{subsubsec:gadgets}

Our analysis relies on special structures (gadgets) in the input graph and then representing these gadgets in a succinct way using \emph{hypergraphs}.

Let $G^*$ be a subgraph of $G$ that has a copy of $H$. Consider a subgraph $H_1$ of $H$ and let $u_1, \dots, u_{\ell}$ be the vertices in the copy of $H_1$ in $G^*$ that separate $G \setminus H_1$ from $H_1$, so that (cf. Figure \ref{fig:ex-gadgets-1}):
%
\begin{compactenum}[\sl \qquad (a)]
\item every vertex from $\{u_1, \dots, u_{\ell}\}$ is adjacent in $G^*$ to some vertex $H_1 \setminus \{u_1, \dots, u_{\ell}\}$,
\item every vertex in $H_1 \setminus \{u_1, \dots, u_{\ell}\}$ is adjacent in $G^*$ only to vertices from $H_1$, and
\item $\{u_1, \dots, u_{\ell}\}$ forms an independent set in $H_1$.
\end{compactenum}
%
Then, we can construct a gadget to represent that copy of $H_1$ by removing from $H_1$ all vertices and edges from $H_1 \setminus \{u_1, \dots, u_{\ell}\}$ and replacing them by a single \emph{hyperedge} $\{u_1, \dots, u_{\ell}\}$.

\junk{
\begin{figure}[t]
\centerline
{(a) \includegraphics[width=.3\textwidth]{example-gadgets-I-1a}
(b) \includegraphics[width=.3\textwidth]{example-gadgets-I-2a}
(c) \includegraphics[width=.3\textwidth]{example-gadgets-I-3a}
}
\caption{\small (a) Consider a part of the input graph $G$ with numbered vertices corresponding to the colored vertices in a copy of $H$ in $G$ and thick edges corresponding to the edges in that copy of $H$.
(b) We have a subgraph $H_1$ of $H$ consisting of the vertices and edges marked by the grey area, with vertices $\{2,3,4,6,7\}$ separating $H_1$ from the rest of $G$.
(c) The gadget obtained by removing internal vertices $\{1,5\}$ and replacing $H_1$ by a hyperedge connecting vertices $\{2,3,4,6,7\}$.}
\label{fig:ex-gadgets-1}
\end{figure}
}

We will be using this construction of gadgets to model the following scenario:
\begin{itemize}
\item \emph{when entering (in \RLBFS) $H_1$ via any single edge incident to any vertex from the separator $u_1, \dots, u_{\ell}$ is sufficient to visit (with constant probability) all edges in $H_1$.}
\end{itemize}
Therefore, for the analysis, this will correspond to the situation that
\begin{itemize}
\item there is a hyperedge $\{u_1, \dots, u_{\ell}\}$, and by visiting this hyperedge (in the hypergraph), the algorithm will visit (with constant probability) all edges in $H_1$ (in the original graph), and will be able to continue the search from \emph{all} separating vertices $u_1, \dots, u_{\ell}$.
\end{itemize}

Furthermore, the gadgets can be also helpful in the analysis of ``substitutable'' copies of a subgraph of $H$. Suppose that for a subgraph $H_1$ of $H$, the separator (as defined above) is identical in multiple copies, that is, vertices $u_1, \dots, u_{\ell}$ form the separator in multiple edge-disjoint copies of $H_1$. Then, we have multiple hyperedges $\{u_1, \dots, u_{\ell}\}$ and their multiplicity represents the fact that \emph{to find a copy of $H_1$ it is enough to visit just one of the hyperedges $\{u_1, \dots, u_{\ell}\}$}. In particular, if $u_1$ is incident to multiple copies of the identical hyperedge $\{u_1, \dots, u_{\ell}\}$, then the probability that the process will visit $H_1$ starting from $u_1$ increases with this multiplicity. And so, if the multiplicity is of order $\deg_G(u_1)$, then after reaching vertex $u_1$, the \HRLBFS\ algorithm (cf. Section \ref{sec:exploration-in-hypergraphs}) will visit the entire $H_1$ with a constant probability.

The central idea behind the gadgets as described above is to use them repeatedly to transform a subgraph of $G$ into a sub-hypergraph representing a smaller subgraph of $G$ for which we can easily analyze the \HRLBFS\ algorithm.


\subsection{The process of shrinking $H$ and hypergraph representation of $H$ by $\M_i$}
\label{subsec:shrinking-H}

\junk{\Artur{Notation:
    \begin{compactitem}
    \item \texttt{$\backslash$e} denotes $\e$, an edge/hyperedge
    \item \texttt{$\backslash$N\_i} denotes $\N_i$, the set of all neighbors of $v_i$ in $\M_i$
    \item \texttt{$\backslash$mathcal\{E\}\_i} denotes $\mathcal{E}_i$, the set of edges/hyperedges incident to vertex $v_i$ in $\M_i$
    \item \texttt{$\backslash$lab} denotes $\lab$, label of an edge/hyperedge, $\lab(\e)$
    \item \texttt{$\backslash$clab} denotes $\clab$, colored label of an edge/hyperedge, $\clab(\e)$ (note that $\clab(\e) = \{\chi(u): u \in \lab(\e)\}$)
    \end{compactitem}
}}
We will begin with an iterative procedure that gradually shrinks $H$ into a single vertex. This procedure processes $H$ and its contractions in a form of a \emph{hypergraph}. (See also Figures \ref{fig:ex-hypergraph-1}--\ref{fig:ex-hypergraph-3}.)


\begin{figure}[t]
\centerline
{(a) \includegraphics[width=.3\textwidth]{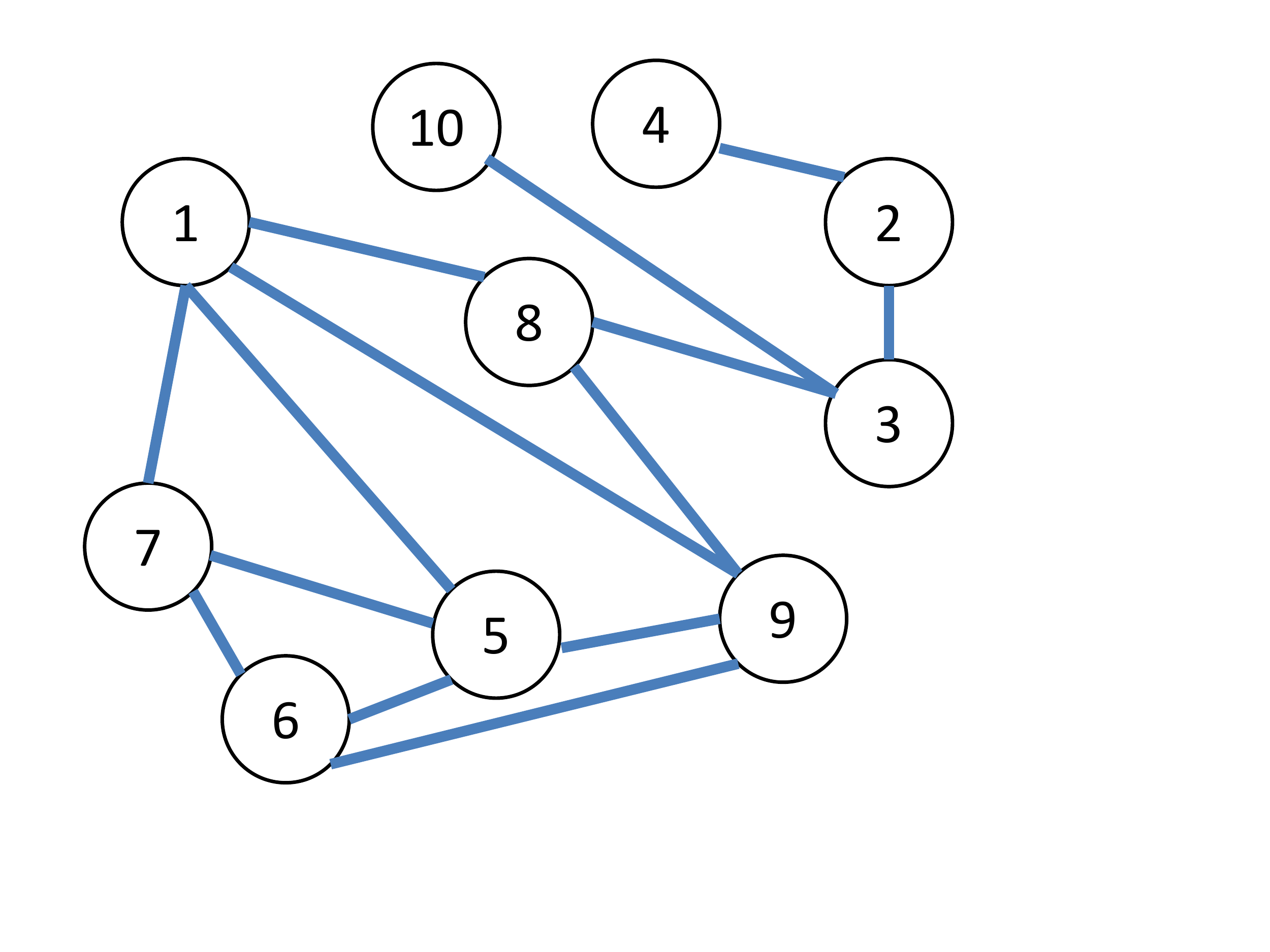}
(b) \includegraphics[width=.3\textwidth]{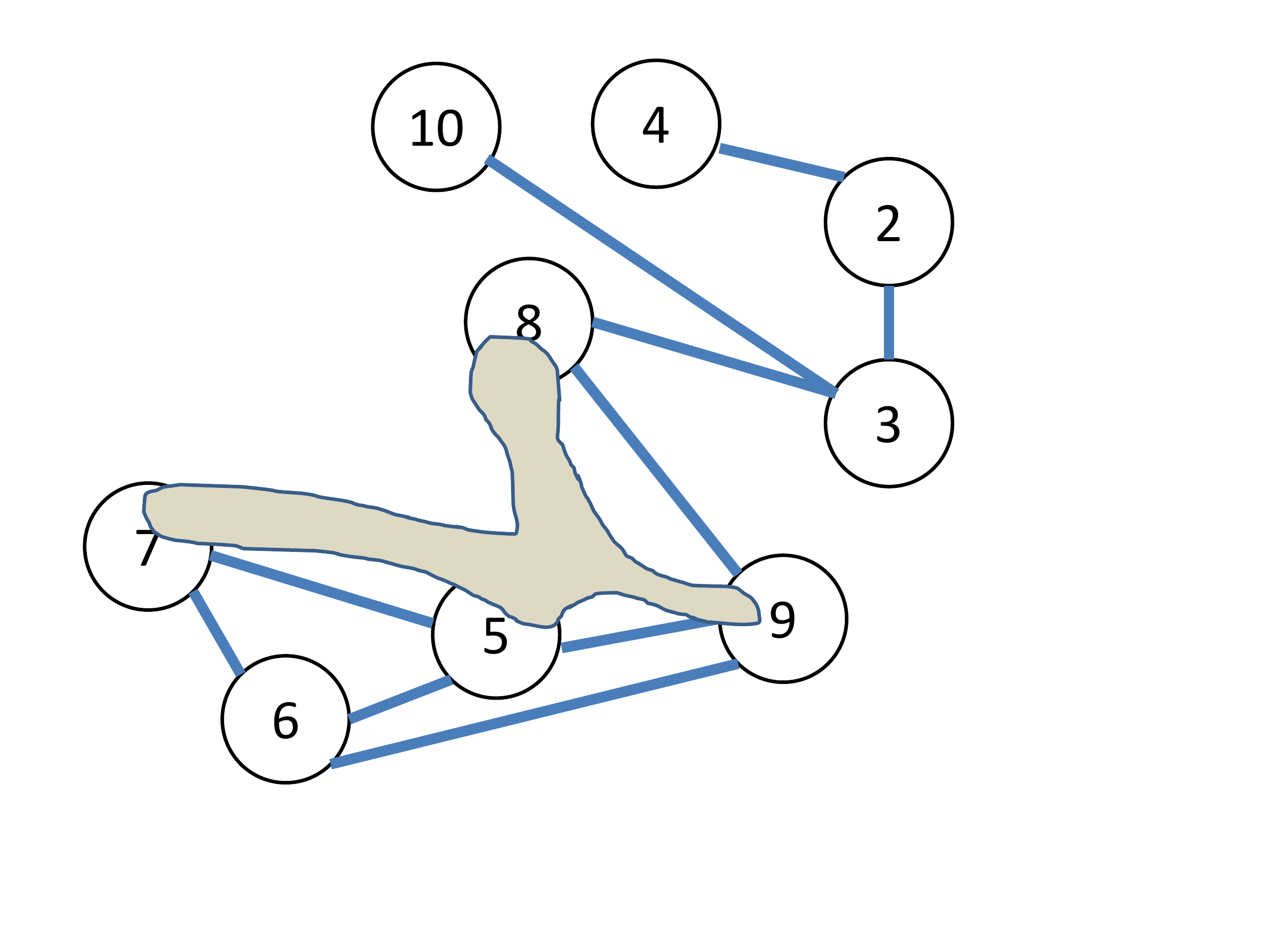}
(c) \includegraphics[width=.3\textwidth]{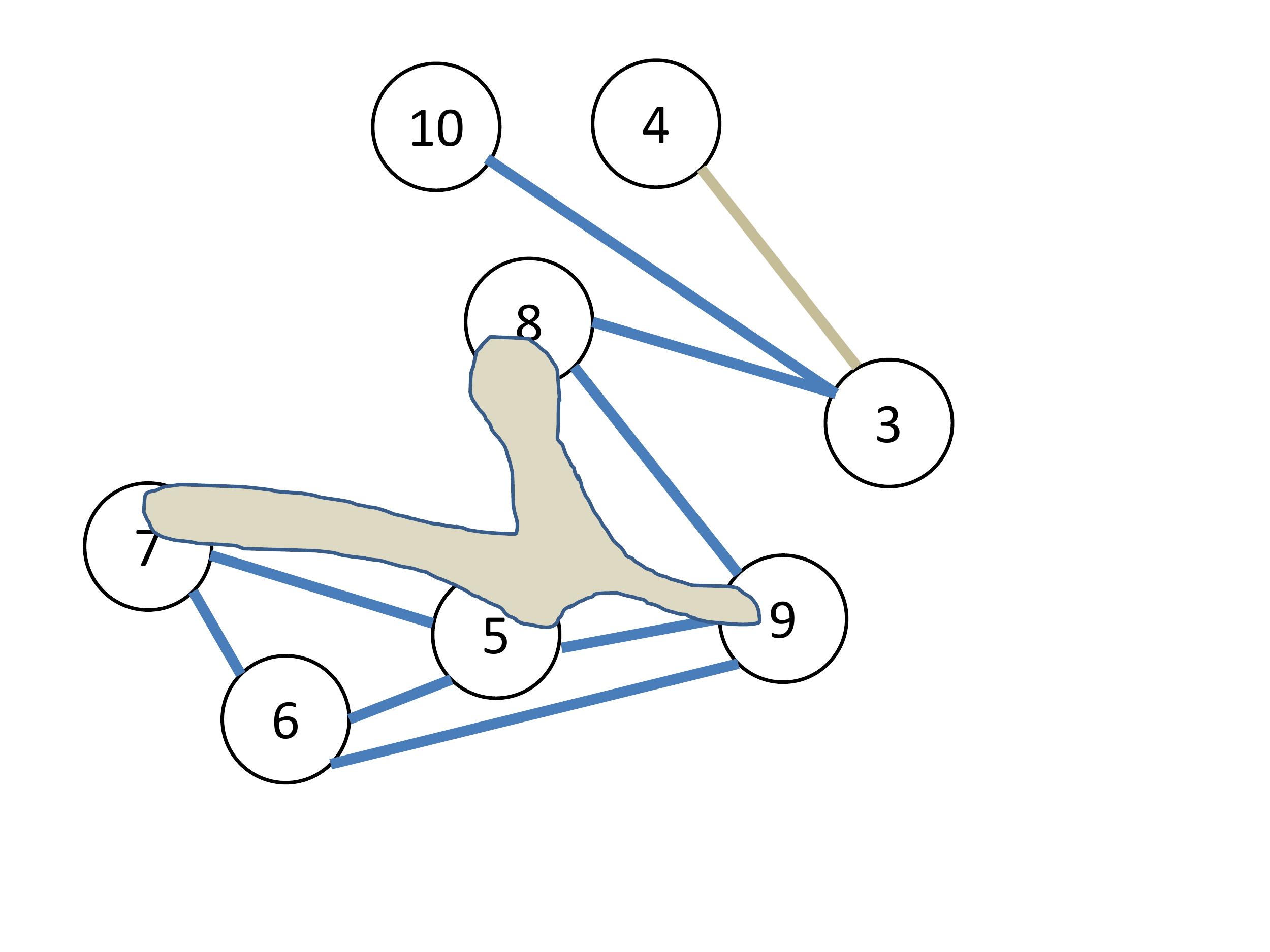}
}
\caption{\small Consider the input graph $G$ in Figure (a) and consider the process of shrinking $G$, as described in Section \ref{subsec:shrinking-H}.
(b) presents contraction of vertex 1 and adding of hyperedge $\{5,7,8,9\}$ (with label $\{1\}$).
(c) After contracting vertex 2 and adding hyperedge $\{3,4\}$ (with label $\{2\}$).
}
\label{fig:ex-hypergraph-1}
\end{figure}


%
Let us consider an arbitrary numbering of the vertices of $H$, $v_1, v_2, \dots, v_{\Hsize}$; this order is not known in advance and is independent of the coloring of $H$ (in fact, the order will be determined by the structure of $G$, and finding the right order $v_1, v_2, \dots, v_{\Hsize}$ is the central part of our analysis in the next sections,
finalized in Lemma \ref{lemma:defining-Qi+1}). In our analysis, we will perform a sequence of transformations on $H$, each transformation converting some hypergraph $\M_i$ corresponding to $H$ into some other hypergraph $\M_{i+1}$ corresponding to $H$, $1 \le i \le \Hsize-1$ (cf. Figures \ref{fig:ex-hypergraph-1}--\ref{fig:ex-hypergraph-3}), such that:

\begin{walgo}\vspace*{-0.3in}
\begin{itemize}
\item $\M_1 := H$, and
\item $\M_{i+1}$ is obtained from $\M_i$ by contracting vertex $v_i$ to its neighbors as follows:
    \begin{itemize}[$\diamond$]
    \item let $\N_i$ be the set of all neighbors of $v_i$ in $\M_i$; contract $v_i$ to its neighbors by removing $v_i$ from $\M_i$ and then adding a new hyperedge consisting of vertices in $\N_i$.
    \end{itemize}
\end{itemize}
\end{walgo}


\begin{figure}[t]
\centerline
{
(d) \includegraphics[width=.3\textwidth]{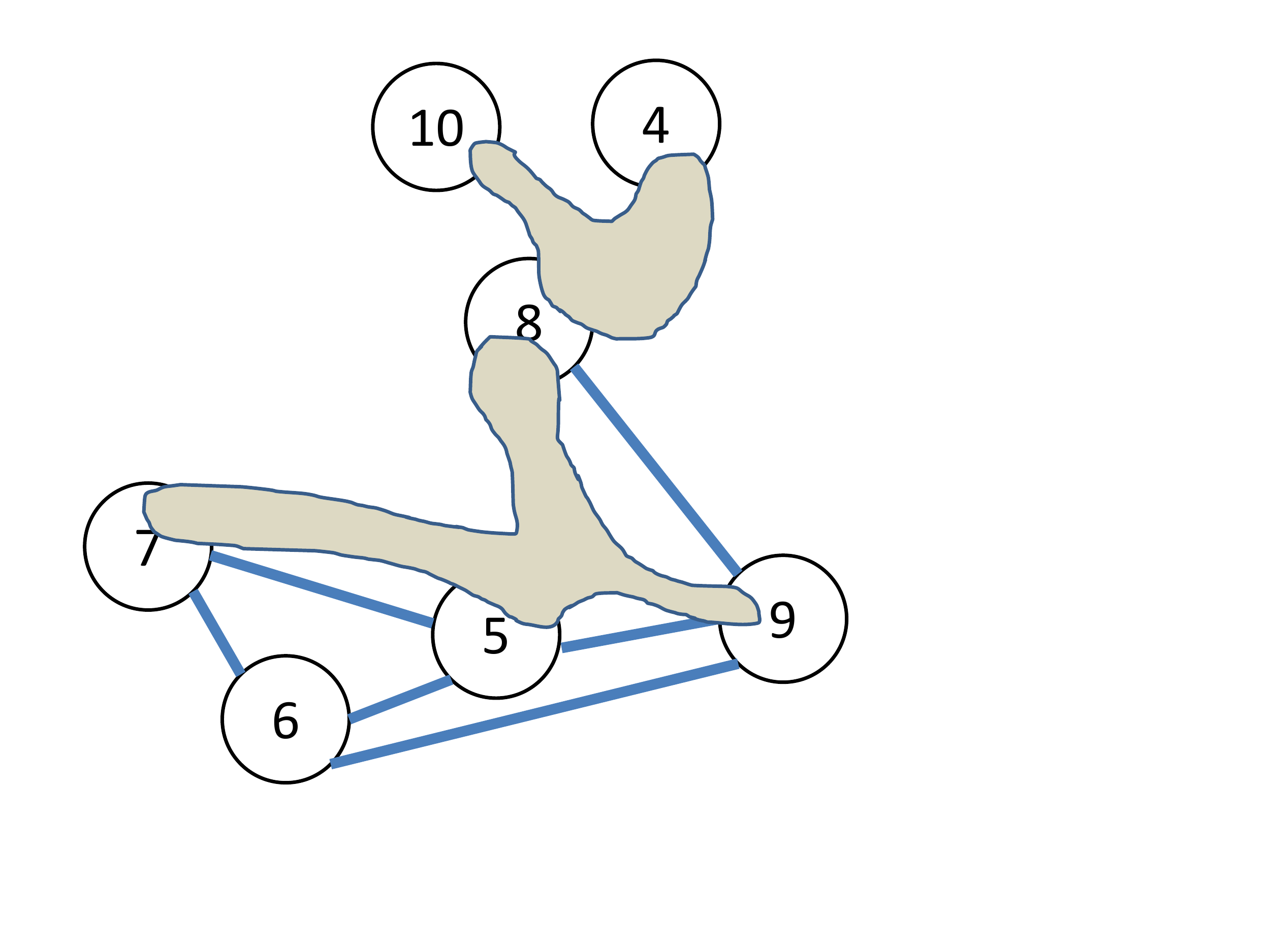}
(e) \includegraphics[width=.3\textwidth]{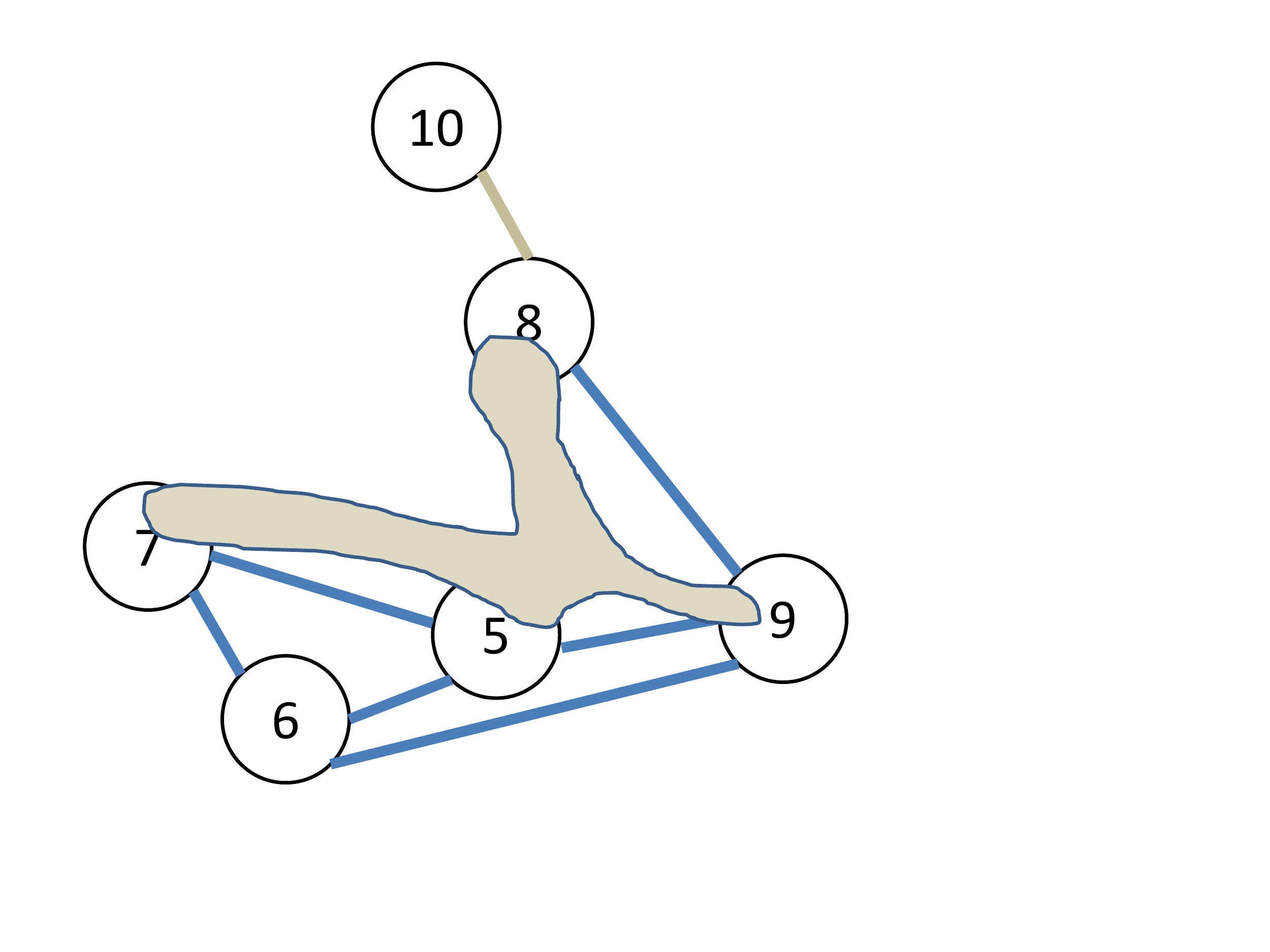}
(f) \includegraphics[width=.3\textwidth]{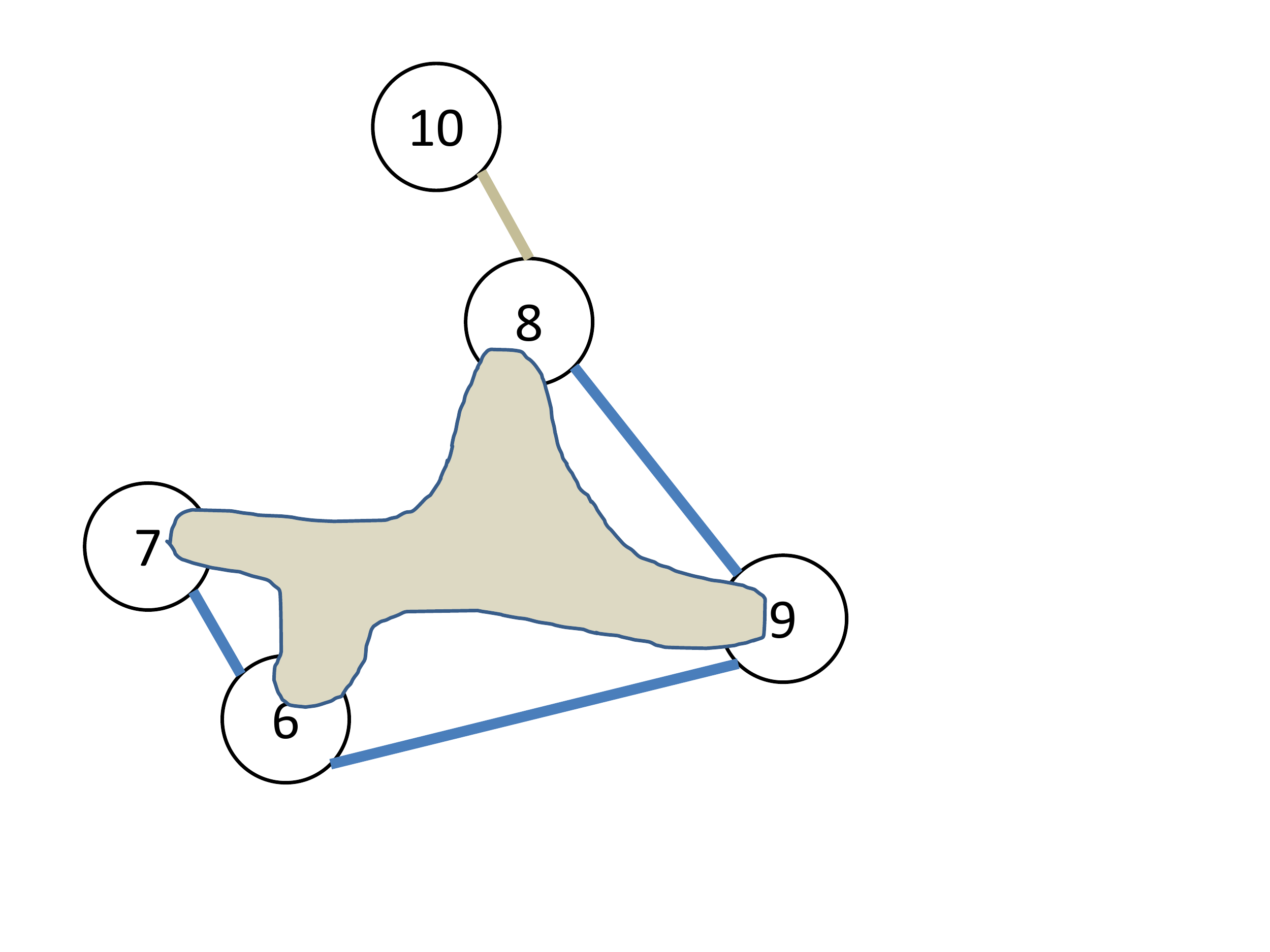}
}
\caption{\small Continuing the example from Figure \ref{fig:ex-hypergraph-1}: (d) After contracting vertex 3 and adding hyperedge $\{4,8,10\}$ (with label $\{2,3\}$).
(e) After contracting vertex 4 and adding hyperedge $\{8,10\}$ (with label $\{2,3,4\}$).
(f) After contracting vertex 5 and adding hyperedge $\{6,7,8,9\}$ (with label $\{1,5\}$).
}
\label{fig:ex-hypergraph-2}
\end{figure}



\begin{figure}[t]
\centerline
{(g) \includegraphics[width=.21\textwidth]{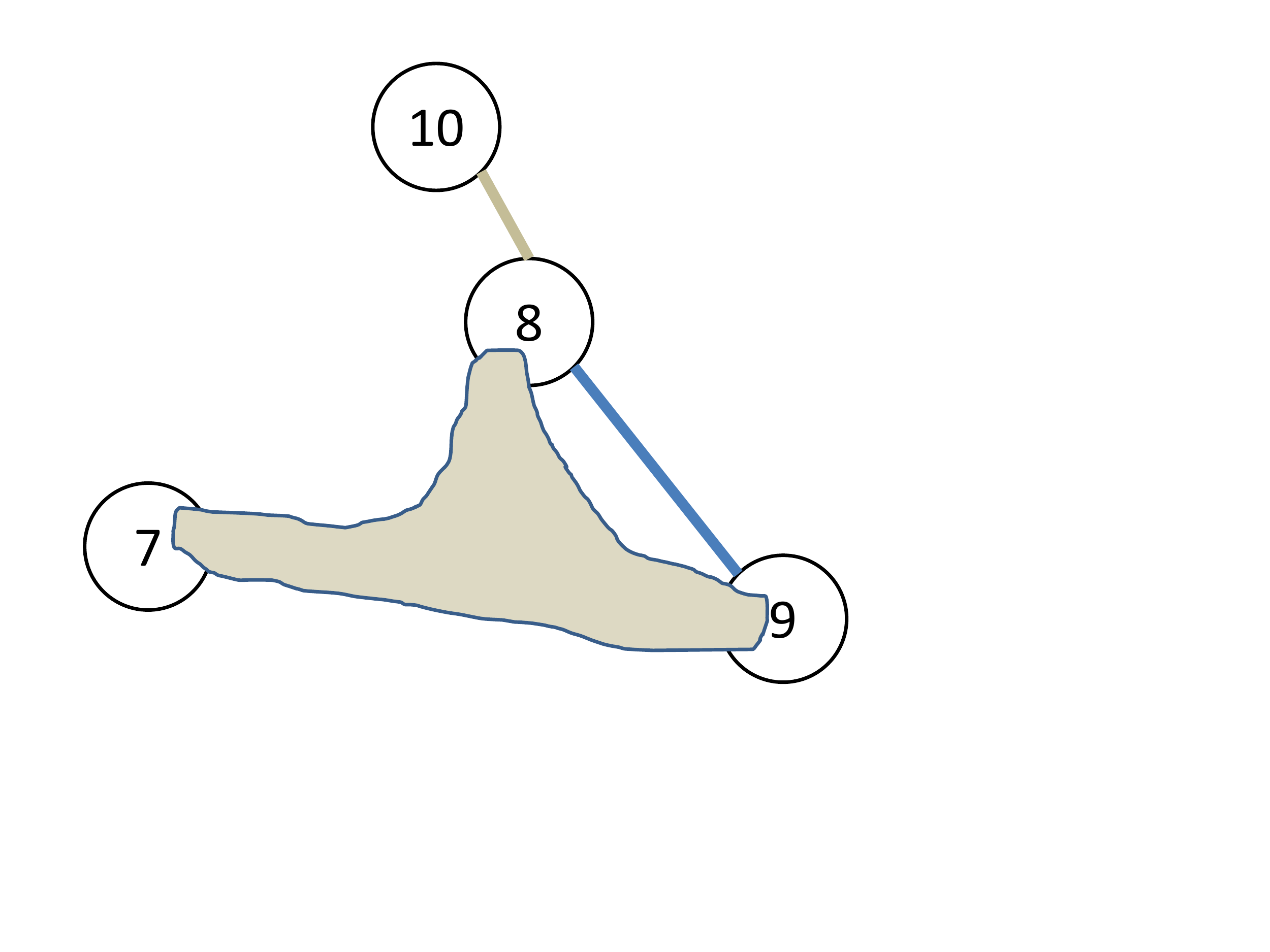}
(h) \includegraphics[width=.21\textwidth]{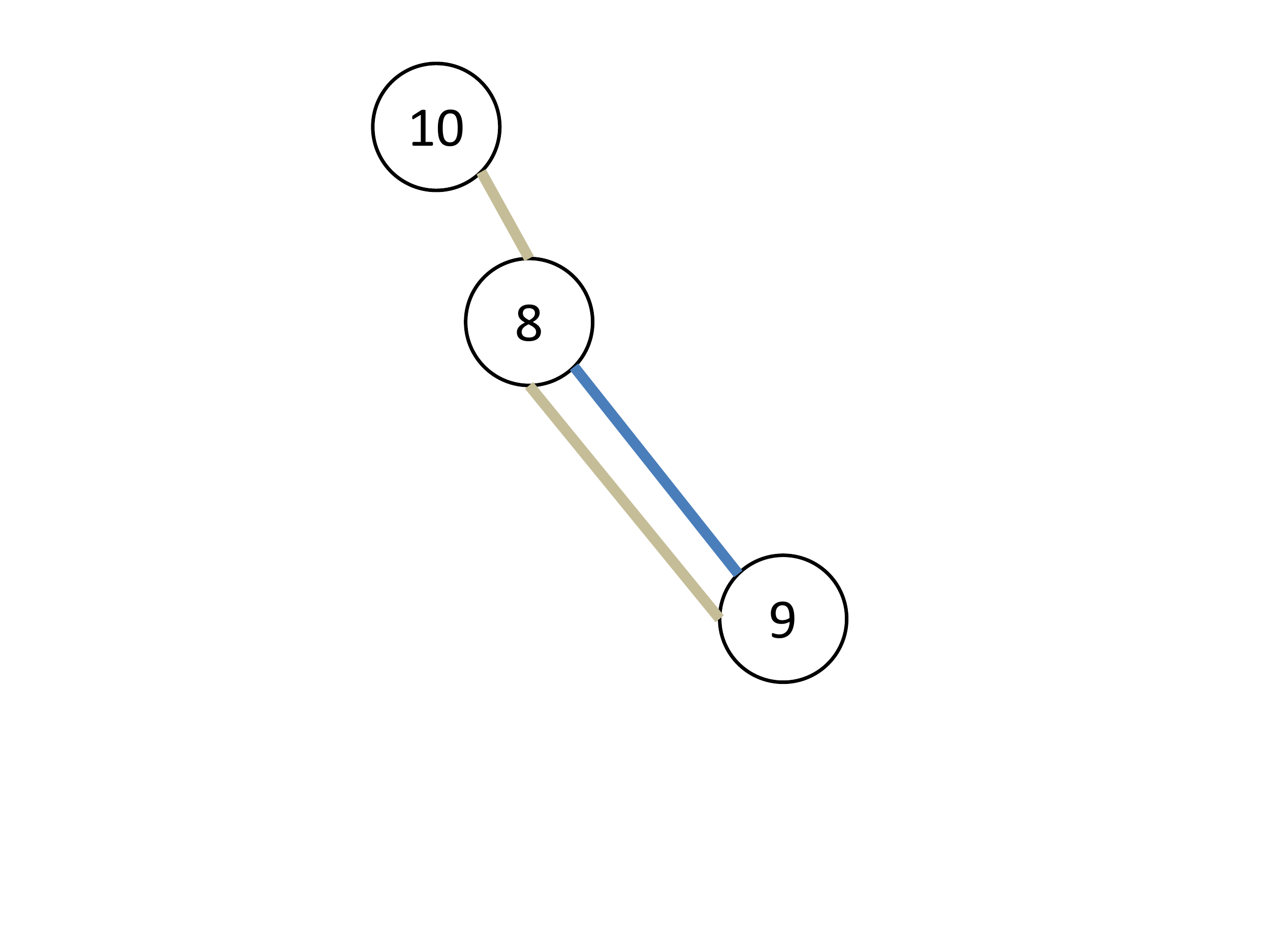}
(i) \includegraphics[width=.21\textwidth]{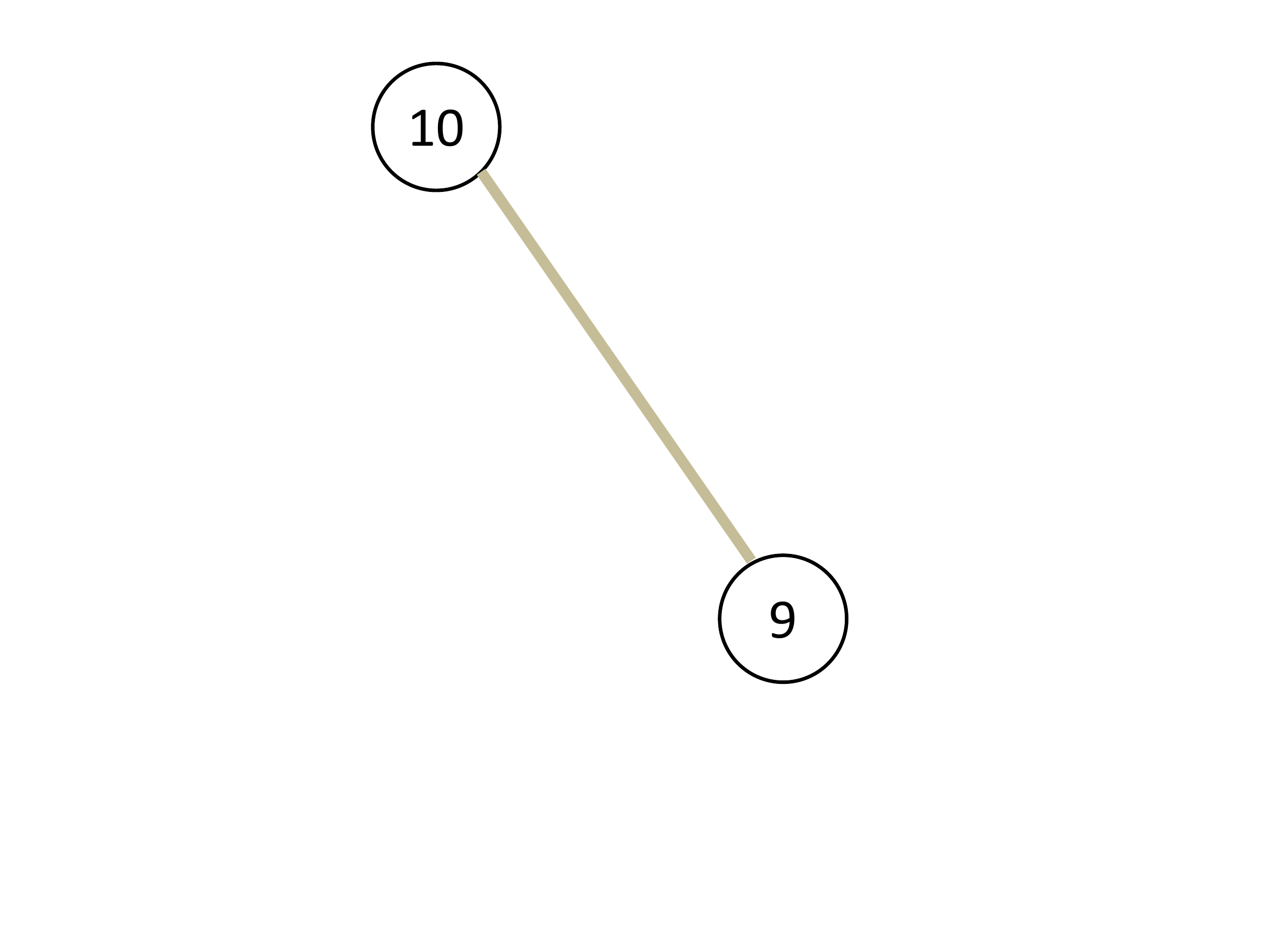}
(j) \includegraphics[width=.21\textwidth]{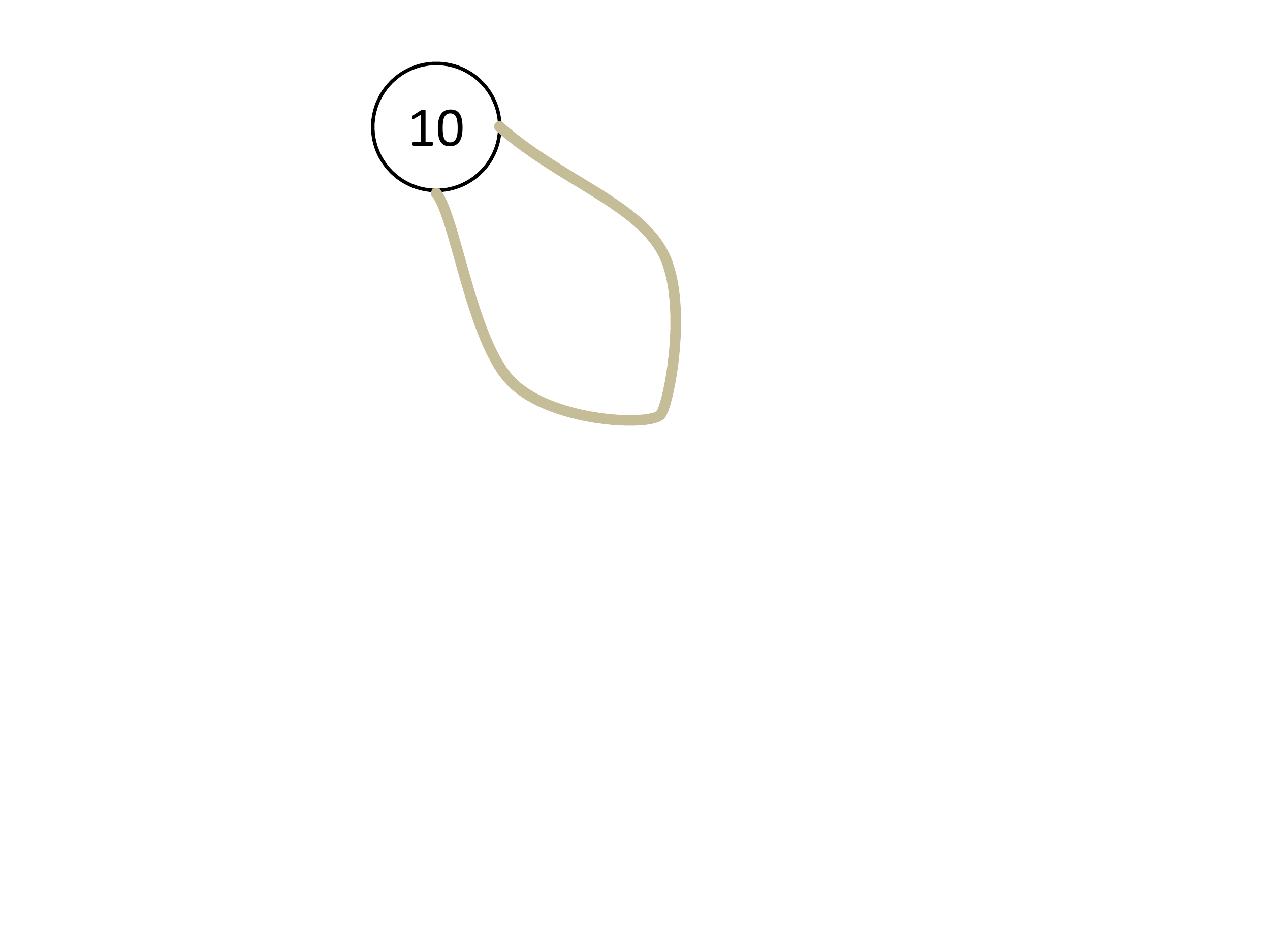}
}
\caption{\small Continuing the example from Figures \ref{fig:ex-hypergraph-1} and \ref{fig:ex-hypergraph-2}:  (g) After contracting vertex 6 and adding hyperedge $\{7,8,9\}$ (with label $\{1,5,6\}$).
(h) After contracting vertex 7 and adding hyperedge $\{8,9\}$ (with label $\{1,5,6,7\}$; note that as the result, we have two parallel edges between 8 and 9, but each of these edges is different, one corresponds to a direct edge between 8 and 9 with label $\emptyset$, and another corresponds to the gadget with separator $\{8,9\}$ and internal vertices $\{1,5,6,7\}$ (as shown by the label)).
(i) After contracting vertex 8 and adding hyperedge $\{9,10\}$ (with label $\{1,2,3,4,5,6,7,8\}$).
(j) After contracting vertex 9 and hyperedge $\{10\}$ (with label $\{1,2,3,4,5,6,7,8,9\}$).
}
\label{fig:ex-hypergraph-3}
\end{figure}


We will want to maintain information about all vertices which have been contracted to create a given hyperedge (e.g., in Figure~\ref{fig:ex-gadgets-1}, these would be vertices $\{1,5\}$) and so we will \emph{label} the hyperedges. We will denote the label of an edge $\e$ by $\lab(\e)$. A regular edge $e$ (original edge from $E(H)$) has an empty label, i.e., $\lab(e) = \emptyset$, and if $\mathcal{E}_i$ denotes the set of edges/hyperedges incident to vertex $v_i$ in $\M_i$, then the new hyperedge $\N_i$ obtained by contraction of $v_i$ will have label $\lab(\N_i) = \{v_i\} \cup \bigcup_{\e \in \mathcal{E}_i} \lab(\e)$ (i.e., its label is the union of $\{v_i\}$ and the union of the labels of the edges in $\mathcal{E}_i$).

Furthermore, we will also have \emph{colored label} $\clab$ of any edge $\e$, defined as the set of the colors of the vertices defining the label of $\e$, that is, $\clab(\e) = \{\chi(u): u \in \lab(\e)\}$. (Note that if $\lab(\e) = \emptyset$ then $\clab(\e) = \emptyset$.)

We will also use the following notion.

\begin{definition}
\label{def:modeling-hyperedge-in-Mi+1}
If in our construction, in $\M_i$, we had edges $\e_1, \dots, \e_{\ell}$ incident to $v_i$, then we will say that the newly created \emph{hyperedge $\N_i$ in $\M_{i+1}$ is modeled by edges $\e_1, \dots, \e_{\ell}$ in $\M_i$}.
\end{definition}

\medskip

In Appendix \ref{proofs-subsec:shrinking-H} we will present some basic properties of the process of shrinking $H$ and hypergraph representation of $H$ by $\M_i$, as defined in this section.

\begin{figure}[t]
\centerline
{(a) \includegraphics[width=.22\textwidth]{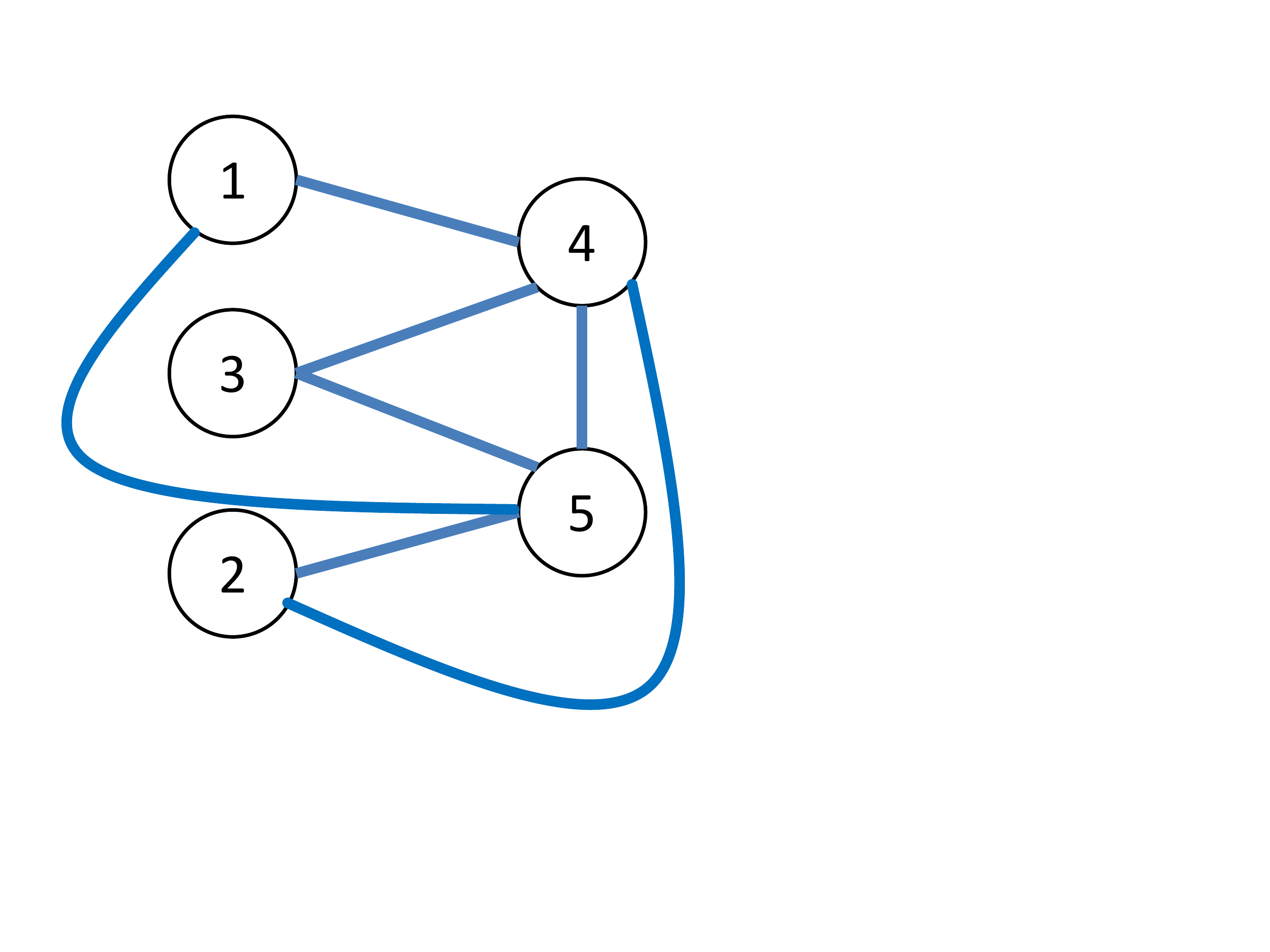}
(b) \includegraphics[width=.22\textwidth]{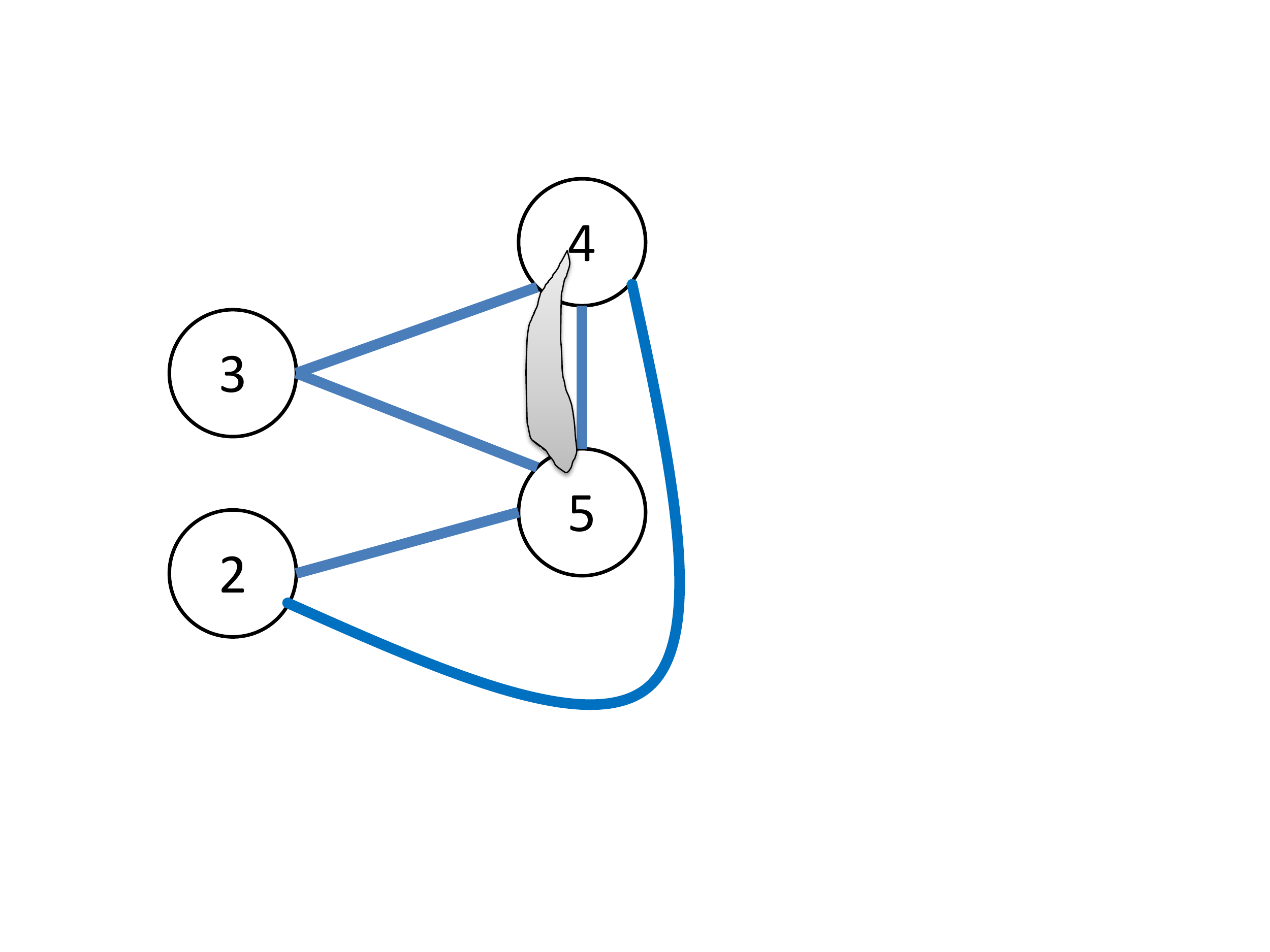}
(c) \includegraphics[width=.22\textwidth]{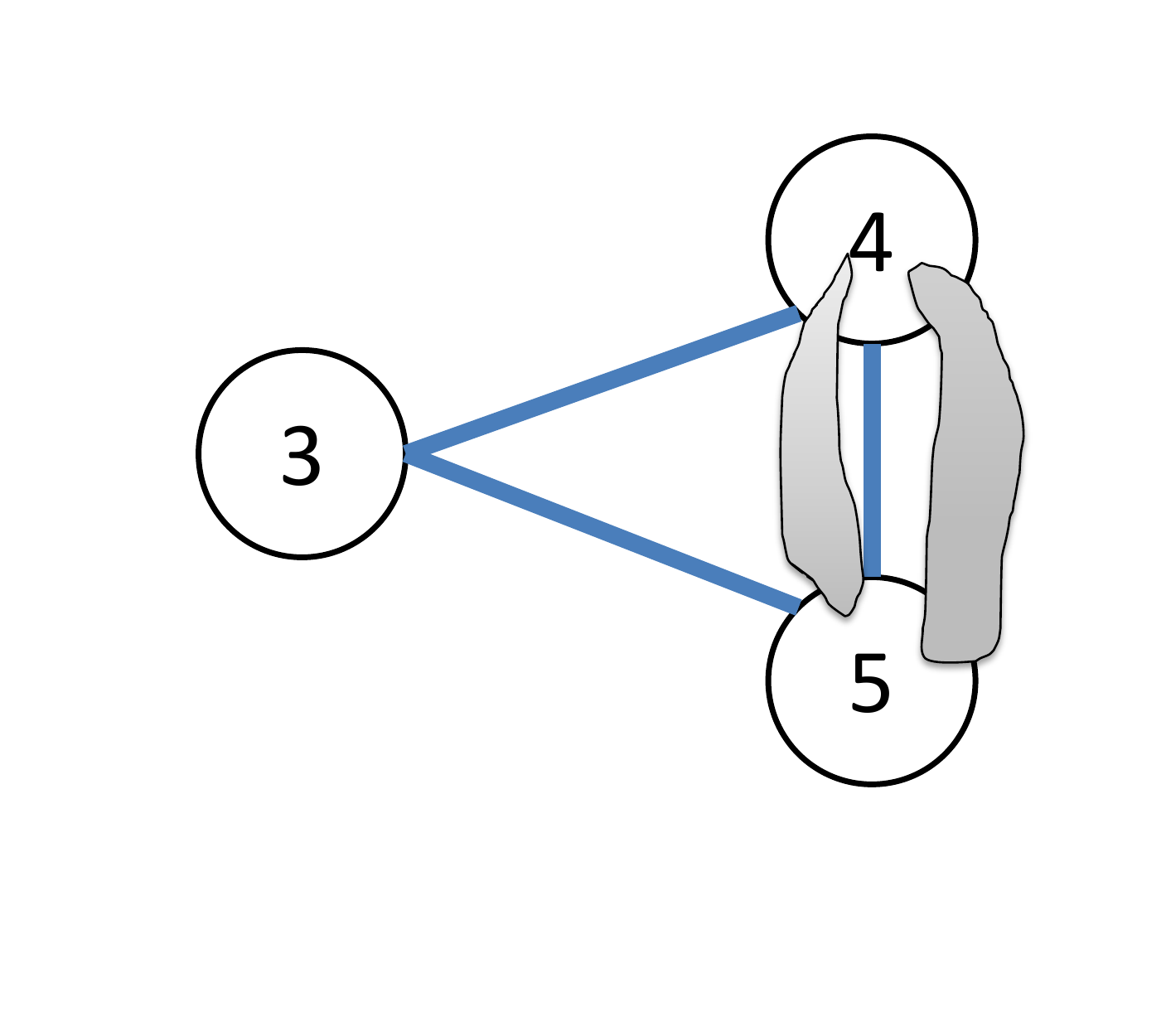}
(d) \includegraphics[width=.22\textwidth]{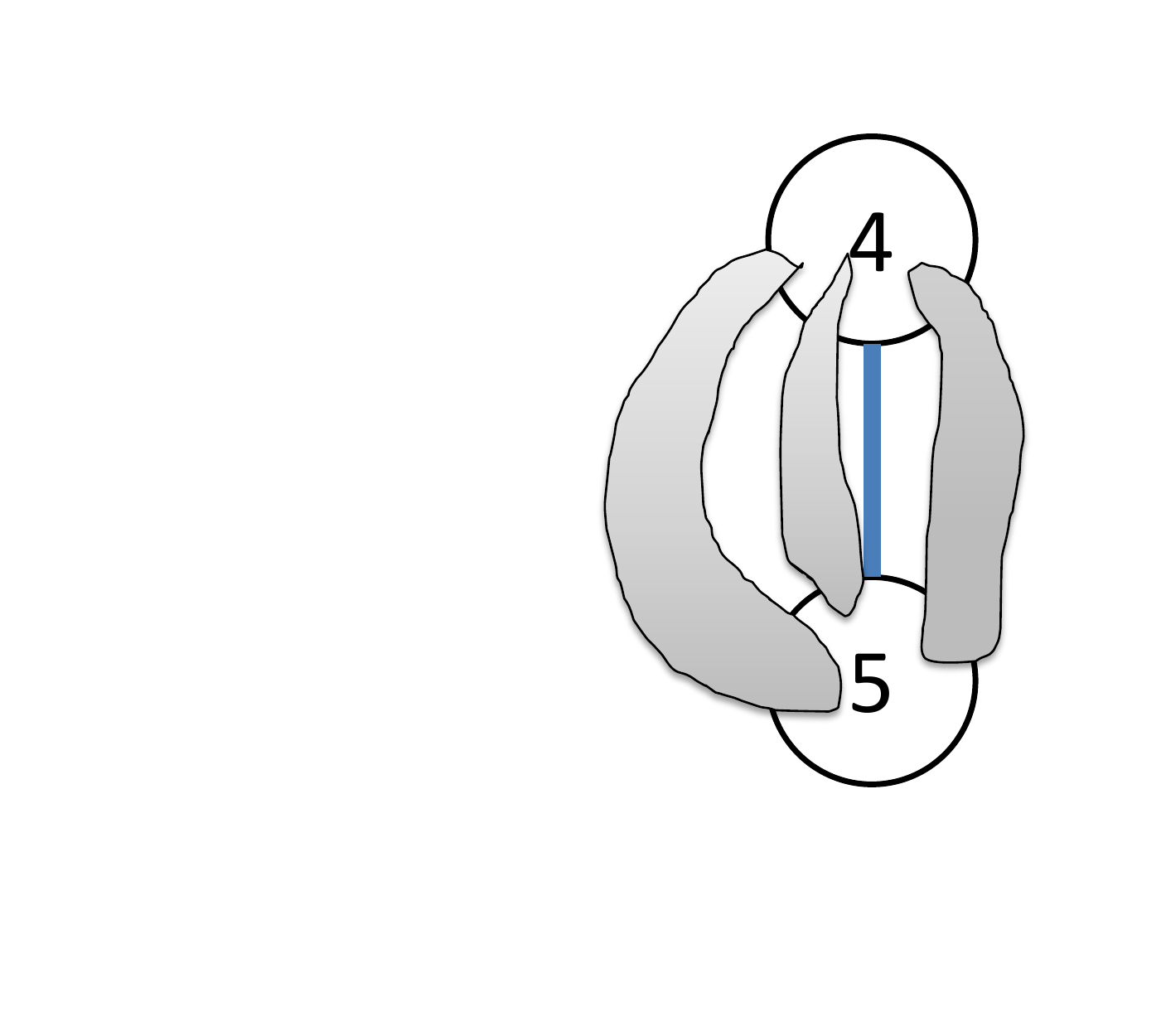}
}
\caption{\small Construction of hypergraphs (a) $\M_1$, (b) $\M_2$, (c) $\M_3$, (d) $\M_4$, with multiple hyperedges~$\{4,5\}$.}
\label{fig:ex-hypergraph-parallel}
\end{figure}

Let us note that the construction above allows ``\emph{selfloops},'' that is, hyperedges consisting of a single vertex, and that it allows multiple copies of hyperedges on the same vertex set (see, e.g., Figure \ref{fig:ex-hypergraph-3} (h) or Figure \ref{fig:ex-hypergraph-parallel}, and one could have many copies of hyperedges even with more than two vertices). An important feature of the latter case is that all these hyperedges will be considered as different hyperedges, since they correspond to different subgraphs of $H$ and have different labels. Note also that all labels are disjoint 
(i.e., $\lab(\e_1) \cap \lab(\e_2) = \emptyset$ for any distinct hyperedges $\e_1, \e_2$ in $\M_i$).


\subsection{Shrinking copies of $H$ in $G$ (via safe vertices and consistent hypergraphs)}
\label{subsec:shrinking-copies-of-H}

\junk{
\Artur{\textbf{This section requires revision.} It's possibly the most important section in the paper that the readers must understand in details, and it seems to me that the text and presentation require some more efforts.}
}
\junk{
\Artur{Notation:
    \begin{compactitem}
    \item \texttt{$\backslash$HQ} denotes $\HQ$ (also in $\HQ_i(\DDCH{i})$, which is a hypergraph constructed from $\DDCH{i}$ by contracting vertices of $H$)
    \item \texttt{$\backslash$N} denotes $\N$ (like in $\N_i^{\h}\langle u \rangle$, which is the set of neighbors of $u$ in $\h$ in the hypergraph $\HQ_i(\DDCH{i})$)
    \end{compactitem}
}
}
The central idea of our analysis is to mimic the corresponding transformation of $H$ (as described in Section \ref{subsec:shrinking-H}) in all relevant copies of $H$ in sets $\DDCH{1}, \DDCH{2}, \dots$, and then, instead of searching for a copy of $H$ in $\G[\DDCH{1}], \G[\DDCH{2}], \dots$, to search for copies of $\M_1, \M_2, \dots$ in the corresponding shrunk hypergraphs $\HQ_1(\DDCH{1}), \HQ_2(\DDCH{2}), \dots$. Then, we will argue that finding a copy of $H$ in $G$ is 
(almost) as easy
as finding a copy of $\M_1$ in $\HQ_1(\DDCH{1})$, which in turn can be reduced (by paying a small price) to finding a copy of $\M_2$ in $\HQ_2(\DDCH{2})$, and so on, reducing everything to finding a copy of $\M_{\Hsize}$ in $\HQ_{\Hsize}(\DDCH{\Hsize})$. And then, since $\M_{\Hsize}$ has only a single vertex, we would hope that finding its copy in $\HQ_{\Hsize}(\DDCH{\Hsize})$ is easy.

In order to incorporate this approach, we will transform appropriate subgraphs of $G$ into a sequence of hypergraphs, such that after $i$ transformations, every relevant copy of $H$ is shrunk into $\M_{i+1}$. (Let us emphasize that this step relies on the choice of vertex $v_i$ --- which is the same in all copies of $H$ --- to be determined by the structure of $\DDCH{i}$, as described in Lemma \ref{lemma:defining-Qi+1}.) In particular, we will mimic the corresponding transformation on $\DDCH{1}, \DDCH{2}, \dots$ as follows.

We consider a hypergraph, denoted by $\HQ_i(\DDCH{i})$, corresponding to $\DDCH{i}$, which has
%
\begin{itemize}
\item vertex set $V(\HQ_i(\DDCH{i})) = V \setminus \{ u \in V: \chi(u) \in \{\chi(v_j): j < i\}\}$ (vertices\footnote{Let us first remind that we are assuming that the vertices of $G$ are colored using $\chi$ so that $G$ has at least $\Omega_{\eps,H}(|V|)$ edge-disjoint colored copies of $H$, as promised by Lemma \ref{lemma:edge-disjoint-copies-H-free}.} in $G$ that have colors of vertices $\{v_i, \dots, v_{\Hsize}\}$, that is, that have not been contracted in $\M_i$ yet), and
\item edge set formed by an edge-disjoint collection of copies of $\M_i$ (we allow hyperedges to have some multiplicity).
\end{itemize}
Then, for some carefully chosen set $\DDCH{i+1} \subseteq \DDCH{i}$, a new hypergraph $\HQ_{i+1}(\DDCH{i+1})$ is obtained from $\HQ_i(\DDCH{i})$ by
%
\begin{itemize}
\item 
    removing all hyperedges
    corresponding to the edge-disjoint copies of $H$ in $\DDCH{i} \setminus \DDCH{i+1}$ and
\item then taking the set $\DDCH{i+1}$ of copies of $H$ and shrinking them, in the same way as $\M_i$ is transformed into $\M_{i+1}$:

\begin{walgo}\vspace*{-0.3in}
\begin{itemize}[$\bullet$]
\item Select a vertex $v_i \in V(H)$.
\item Simultaneously, contract every vertex $u \in V(\HQ_i(\DDCH{i}))$ with $\chi(u) = \chi(v_i)$ as follows:
\begin{itemize}[$\diamond$]
\item for every colored copy $\h$ of $H$ in $\DDCH{i+1}$ that contains vertex $u$:
    \begin{itemize}[$\circ$]
    \item add a new hyperedge consisting of vertices in $\N_i^{\h}\langle u \rangle$, where $\N_i^{\h}\langle u \rangle$ is the set of neighbors of $u$ in $\h$ (in the hypergraph $\HQ_i(\DDCH{i})$) other than $u$ (that is, $u \notin \N_i^{\h}\langle u \rangle$);
    \end{itemize}
\item remove vertex $u$ (with all incident edges from $\HQ_i(\DDCH{i})$).
\end{itemize}
\end{itemize}
\end{walgo}

\end{itemize}


Notice that in our construction of $\HQ_{i+1}(\DDCH{i+1})$ we are removing all vertices $u \in V$ with color $\chi(u) = \chi(v_i)$. And so, in particular, $V(\HQ_{i+1}(\DDCH{i+1})) = V \setminus \{ u \in V: \chi(u) \in \{\chi(v_j): j \le i\}\}$.


\begin{figure}[t]
\centerline
{(a) \includegraphics[width=.3\textwidth]{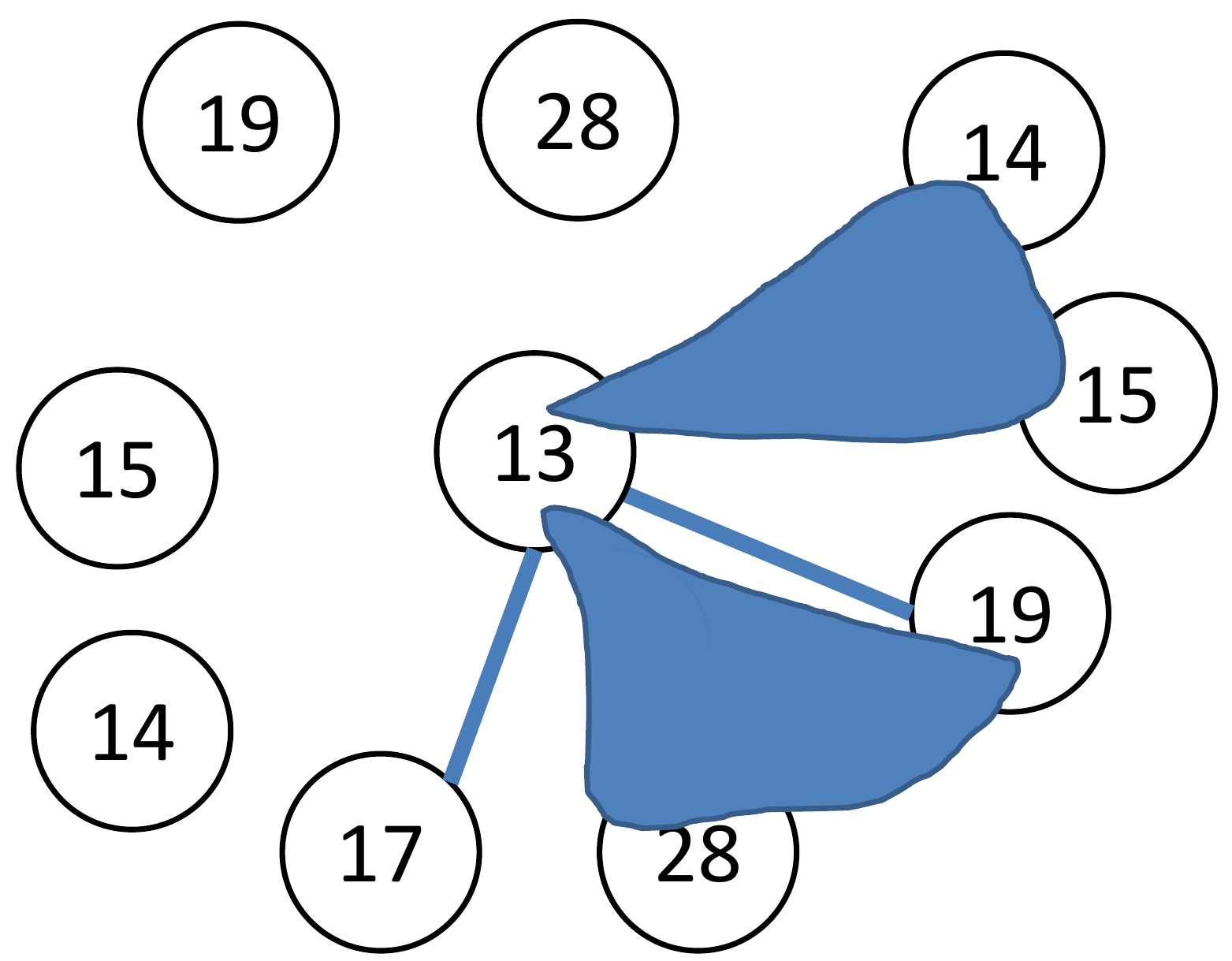}
(b) \includegraphics[width=.3\textwidth]{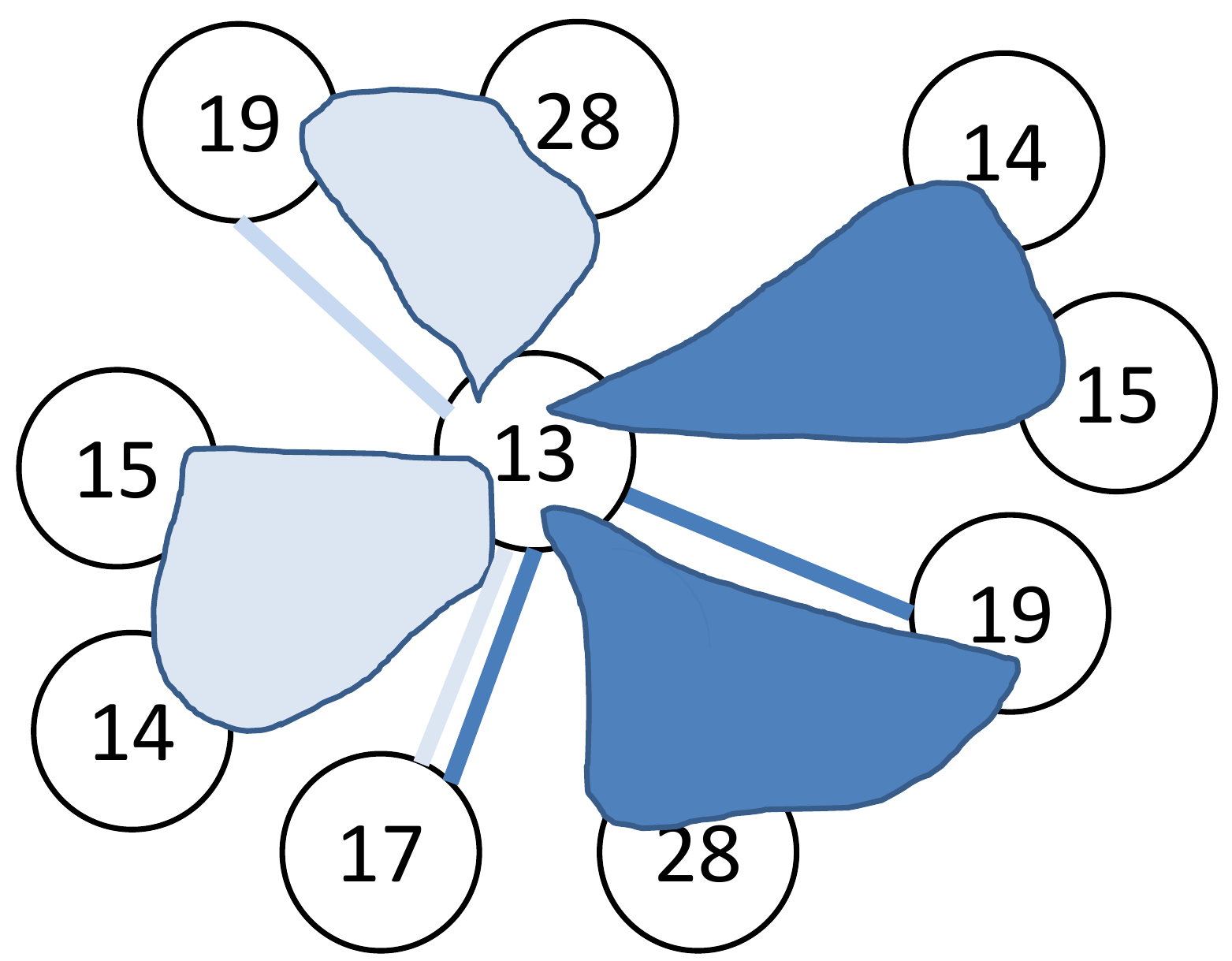}
(c) \includegraphics[width=.3\textwidth]{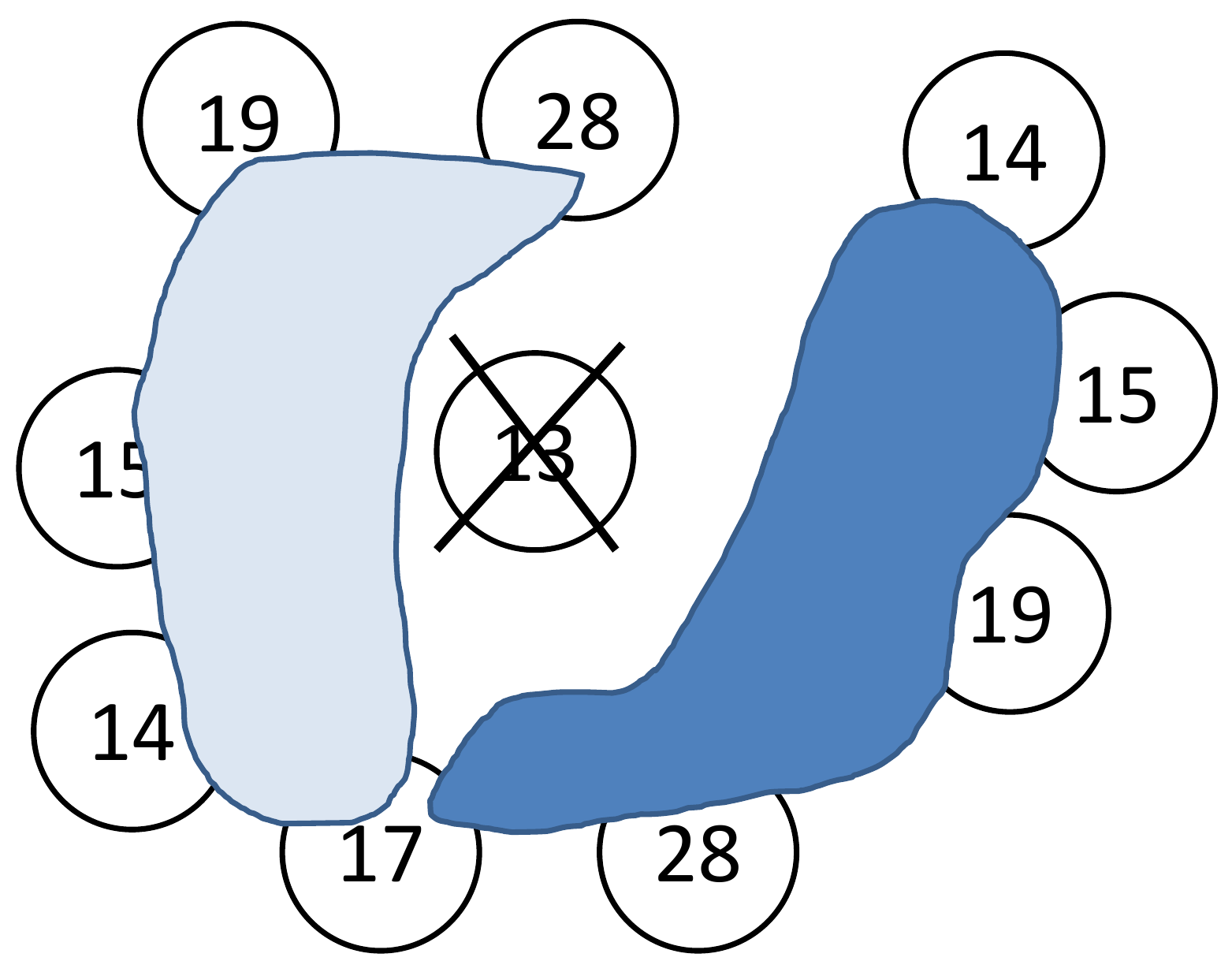}
}
\caption{\small Operation of contracting vertex of color 13 (numbers depicted correspond here to the colors) to define new hypergraph $\HQ_{14}(\DDCH{14})$.
(a) Describes the edges of a \emph{single copy} of $\M_{13}$ incident to vertex of color 13 (in the center) in $\HQ_{13}(\DDCH{13})$.
(b) Describes the edges of \emph{two copies} of $\M_{13}$ incident to vertex of color 13 in $\HQ_{13}(\DDCH{13})$. (Notice that in this case, vertex of color 13 is not safe.)
(c) Describes the situation of contracting vertex of color 13, which creates two new hyperedges, and removal of vertex of color 13 and all incident edges. This defines $\HQ_{14}(\DDCH{14})$.
}
\label{fig:contraction}
\end{figure}


Furthermore, since we contract only vertices of color $\chi(v_i)$ and since these vertices are independent in $\G[\DDCH{i+1}]$ (indeed, since $\DDCH{i+1}$ is the set of edge-disjoint colored copies of $H$, $\G[\DDCH{i+1}]$ does not have monochromatic edges), the operation above is well defined and the contractions of all vertices of color $\chi(v_i)$ can be performed independently in all copies of $H$ in $\DDCH{i+1}$. This yields an equivalent definition:

\begin{remark}
\label{remark:independent-shrinking}
The following is an equivalent definition of $\HQ_{i+1}(\DDCH{i+1})$:
\begin{itemize}
\item Start with graph $\G[\DDCH{i+1}]$.
\item For every copy $\h$ of $H$ in $\DDCH{i+1}$, perform the shrinking of $H$ into hypergraph $\M_{i+1}$.
\item Combine all copies of $\M_{i+1}$ obtained in that way.
\item Remove all vertices $u$ with $\chi(u) = \chi(v_j)$ for $j \le i$ that do not belong to any copy of $\DDCH{i+1}$.
\end{itemize}
The fact that this description is correct follows from the fact that the shrinking of different copies of $H$ can be performed independently because of vertex coloring, which ensures that if we contract a vertex $u$ with $\chi(u) = \chi(v_j)$ and create a new edge $\N_j^{\h}\langle u \rangle$, then this construction can be performed independently for different copies of $H$.

Notice that (as formally proven in Claim~\ref{claim:final-hypergraph-invariant} in Appendix \ref{subsec:properties-of-consistent-hypergraphs}) because of the construction above, to define $\HQ_{i+1}(\DDCH{i+1})$, we do not need to consider the constructions of $\HQ_{1}(\DDCH{1}), \HQ_{2}(\DDCH{2}), \dots, \HQ_{i}(\DDCH{i})$ one after another, but we could do it with the constructions of $\HQ_{1}(\DDCH{i+1}), \HQ_{2}(\DDCH{i+1}), \dots, \HQ_{i}(\DDCH{i+1})$, and from $\HQ_{i}(\DDCH{i+1})$ to build $\HQ_{i+1}(\DDCH{i+1})$.

(Note that the vertex set of $\HQ_{i+1}(\DDCH{i+1})$ is $V(\HQ_{i+1}(\DDCH{i+1})) = V \setminus \{ u \in V: \chi(u) \in \{\chi(v_j): j \le i\}\}$. Further, observe that $\HQ_{i+1}(\DDCH{i+1})$ may have (isolated) vertices $u$ that do not belong to any copy of $\DDCH{i+1}$.)
\end{remark}

The construction above maintains a relationship between edges in $\HQ_{i}(\DDCH{i})$ and edges in $\M_i$.

\begin{definition} \textbf{(Corresponding edges)}
\label{def:corresponding-edges}
If $\e$ is an edge in $\HQ_{i}(\DDCH{i})$ then the \emph{corresponding edge to $\e$ in $\M_i$} is edge $\e'$ in $\M_i$ such that the colors of vertices in $\e$ are the same as the colors of vertices in $\e'$ (i.e., $\{\chi(x): x \in \e\} = \{\chi(v_j): v_j \in \e'\}$), and the colored labels of $\e$ and $\e'$ are the same too (i.e., $\clab(\e) = \clab(\e^*)$).
\end{definition}

Notice that every edge in $\HQ_{i}(\DDCH{i})$ has a unique corresponding edge in $\M_i$. Furthermore, for any edge $\e'$ in $\M_i$, the number of edges in $\HQ_{i}(\DDCH{i})$ corresponding to edge $\e'$ in $\M_i$ is exactly $|\DDCH{i}|$.

Next, we can also mimic Definition \ref{def:modeling-hyperedge-in-Mi+1} in the context of our construction here as follows:

\begin{definition} \textbf{(Modeling edges in $\HQ_{i+1}(\DDCH{i+1})$ by edges in $\HQ_i(\DDCH{i})$)}
\label{def:modeling-hyperedge-in-Hi+1}
Let $u$ be a vertex in $\HQ_i(\DDCH{i})$ with $\chi(u) = \chi(v_i)$. Let $\h$ be a colored copy of $H$ in $\DDCH{i+1}$ that contains vertex $u$. Let $\e_1, \dots, e_{\ell}$ be the edges incident to $u$ $\HQ_i(\DDCH{i})$ corresponding to the copy $\h$. Then, we will say that the newly created hyperedge $\N_i^{\h}\langle u \rangle$ in $\HQ_{i+1}(\DDCH{i+1})$ is \emph{modeled by edges $\e_1, \dots, \e_{\ell}$} in $\HQ_i(\DDCH{i})$.
\end{definition}

\medskip

Now, we are ready to formalize the process of finding a colored copy of $\M_i$ in a hypergraph.

\begin{definition} \textbf{(Finding a colored copy of $\M_i$)}
\label{def:finding-colored-copies-of-Mi}
Let $v_i, \dots, v_{\Hsize}$ be the vertices in $\M_i$. We say that \HRLBD\,${(\HQ,\P,\M_i,\dg,\ld)}$ finds a colored copy of $\M_i$ in $\HQ$ if the corresponding algorithm \HRLBFS\,${(\HQ,\P,\dg,\ld)}$ returns a set of edges $\mathcal{E}$, such that
\begin{itemize}
\item the sub-hypergraph of $\HQ$ induced by the edges $\mathcal{E}$ contains vertices $x_i, \dots, x_{\Hsize}$ such that for every edge/hyperedge $\e$ in $\M_i$, $\mathcal{E}$ contains an edge corresponding to $\e$, or equivalently,
    \begin{compactitem}[$\diamond$]
    \item $\chi(x_j) = \chi(v_j)$ for every $j$, $i \le j \le \Hsize$, and
    \item for every edge $\{v_{j_1}, \dots, v_{j_r}\}$ in $\M_i$, $\mathcal{E}$ contains edge $\{x_{j_1}, \dots, x_{j_r}\}$.
    \end{compactitem}
\end{itemize}
\end{definition}


\subsubsection{Adjusting for planar graphs: safe vertices and consistent hypergraphs}
\label{subsubsec:safe-vertices-and-consistent-hypergraphs}
In our construction we will require more properties from the contractions defining $\HQ_{i+1}(\DDCH{i+1})$. To maintain \emph{some basic properties of planar graphs} (which are required by our analysis), we will want to model the operation of contraction of a vertex $u$ as the standard vertex contraction of $u$ to one of its neighbors, cf. Appendix \ref{subsec:planarization-of-hypergraphs}. For that, we will need an additional, \emph{stronger property}:
\begin{itemize}
\item we want to ensure that all contractions in $\HQ_i(\DDCH{i})$ corresponding to the contraction of $v_i$ in $\M_i$ are \emph{consistent}, that is, the contraction of $u$ is the same in every colored copy of $H$ that contains~$u$ (that is, for every vertex $u$ in with $\chi(u) = \chi(v_i)$, for any two colored copies $\h_1,\h_2$ of $H$ in $\DDCH{i+1}$ containing vertex $u$, we have $\N_i^{\h_1}\langle u \rangle = \N_i^{\h_2}\langle u \rangle$).
\end{itemize}


\begin{figure}[t]
\centerline
{(a) \includegraphics[width=.35\textwidth]{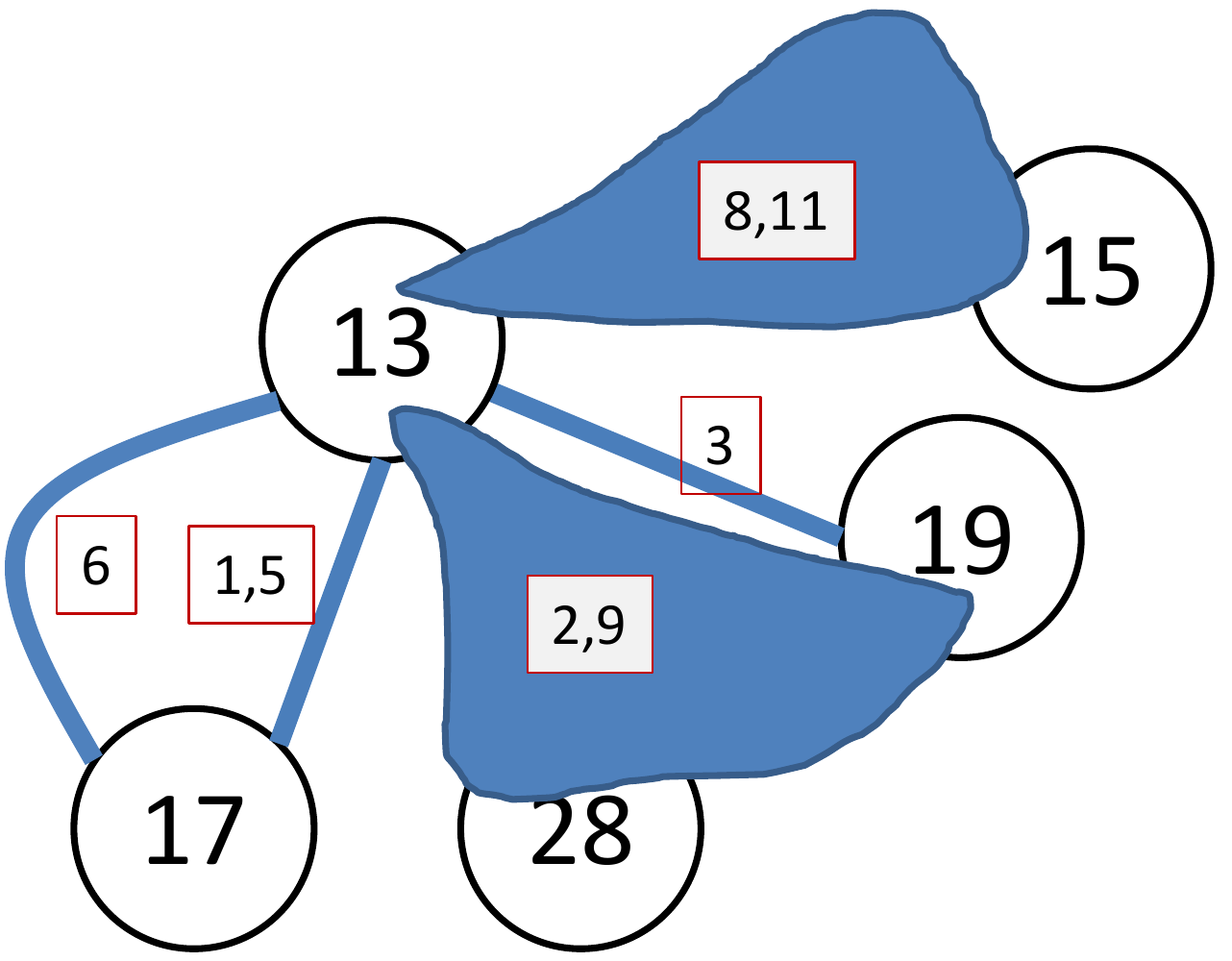}
\qquad
(b) \includegraphics[width=.35\textwidth]{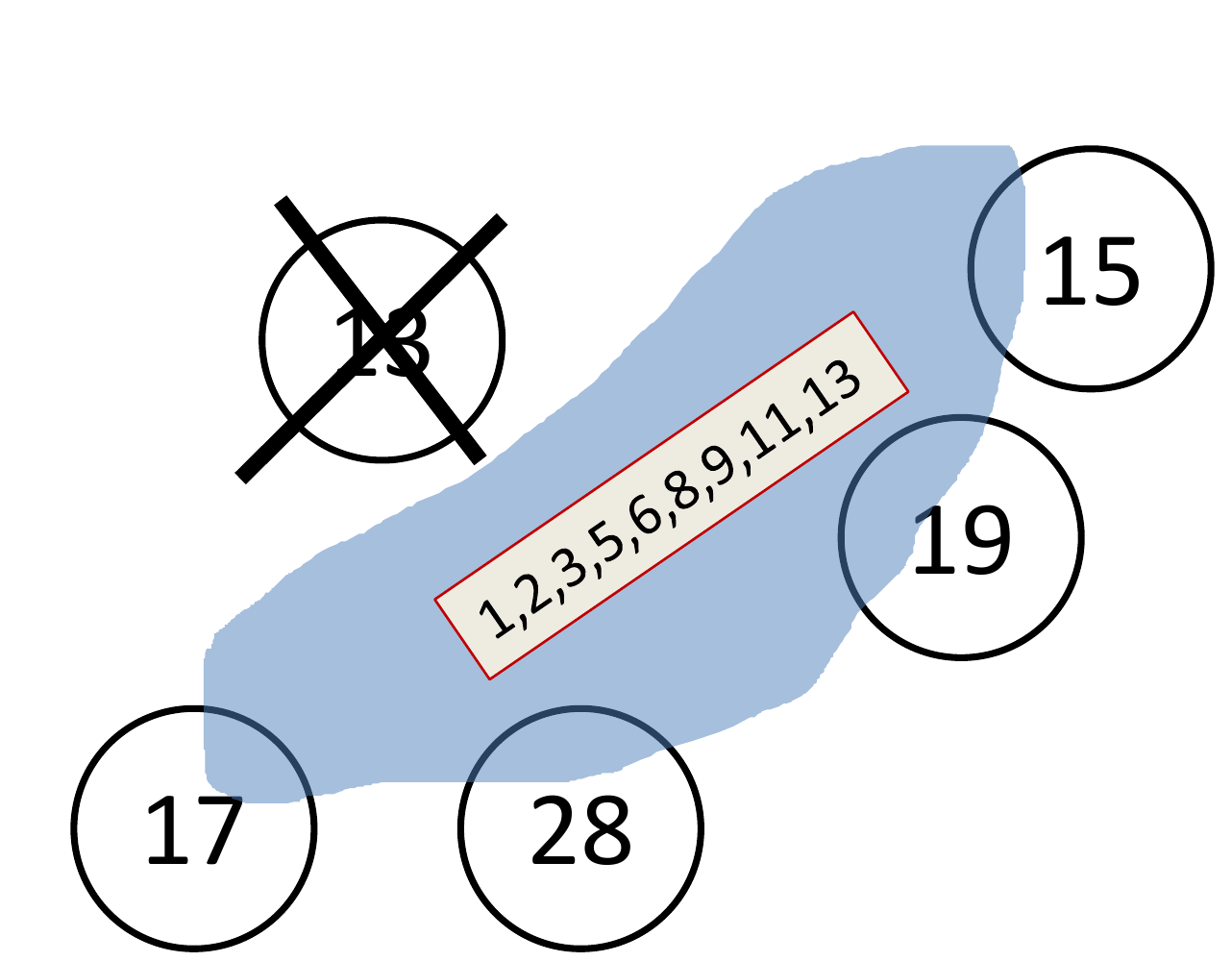}
}
\caption{\small Operation of contracting safe vertex of color 13 (numbers depicted in the vertices correspond to the colors, and number depicted next to the edges correspond to the colored labels of the edges) to define new hypergraph $\HQ_{14}(\DDCH{14})$.
(a) Describes the edges of a single copy of $\M_{13}$ incident to vertex of color 13 (in the center) in $\HQ_{13}(\DDCH{13})$. Since we want vertex of color 13 to be safe, it is possible there are many more copies of $\M_{13}$ incident to that vertex, in which case all of them use identical edges as in the depicted single copy (here identical means: on the same vertex set, with the same colored labels, but with distinct labels).
(b) Describes the situation after contracting vertex of color 13 in $\HQ_{14}(\DDCH{14})$. Notice that the new edge in $\HQ_{14}(\DDCH{14})$ has colored label $\{1,2,3,5,6,8,9,11,13\}$ and is modeled by five edges in $\HQ_{13}(\DDCH{13})$.
}
\label{fig:contraction-safe}
\end{figure}


To facilitate this property, we will use the following definitions.

\begin{definition}\textbf{(Safe vertices)}
\label{def:safe-vertices}
Let $\DDCH{i}$ be a set of edge-disjoint colored copies of $H$ in $G$ and let $\DCH \subseteq \DDCH{i}$. We call a vertex $u \in V(\HQ_i(\DDCH{i}))$ \textbf{\emph{safe}} (with respect to $\DCH$ and $\HQ_i(\DDCH{i})$) if for all colored copies $\h \in \DCH$ of $H$ that contain $u$, the sets $\N_i^{\h}\langle u\rangle$ are the same.
\end{definition}

\begin{remark}
\label{remark-properties-of-safe-vertices}
Note that Definition \ref{def:safe-vertices} means that for every safe vertex $u$ with respect to $\DCH$ and $\HQ_i(\DDCH{i})$, not only all edges incident to $u$ correspond to the edges from $\M_i$ incident to vertex $v$ in $\M_i$ with $\chi(u) = \chi(v)$, but also, if $u$ is incident to $r$ edges in $\HQ_i(\DDCH{i})$ and $v$ is incident to edges $\e_1, \dots, \e_{\ell}$ in $\M_i$, then
\begin{compactenum}[\it (i)]
\item we can partition the edges incident to $u$ into $\ell$ groups, each group corresponding to one of the edges $\e_1, \dots, \e_{\ell}$ in $\M_i$, each group of the same size $r / \ell$, such that two edges $\e', \e''$ from the same group have the same colored label (i.e., $\clab(\e') = \clab(\e'')$) and are defined by the same vertices (i.e., for every vertex $x$, $x \in \e'$ iff $x \in \e''$);
\item $|\N_i^{\h}\langle u \rangle| = |\bigcup_{j=1}^{\ell} \e_j \setminus \{v\}|$, that is, $u$ has as many neighbors in $\HQ_i(\DDCH{i})$ as $v$ has in $\M_i$;
\item $\{\chi(x): x \in \N_i^{\h}\langle u \rangle\} = \{\chi(x): x \in \bigcup_{j=1}^{\ell} \e_j \setminus \{v\} \}$.
\end{compactenum}
\end{remark}

Our next iterative definition extends the notion of safe vertices to the entire hypergraph.

\begin{definition}\textbf{(Consistent hypergraphs)}
\label{def:consistent-hypergraphs}
For any set $\DDCH{1}$ of edge-disjoint colored copies of $H$ in $G$, the hypergraph $\HQ_1(\DDCH{1})$ (which is equal to the graph $\G[\DDCH{1}]$) is called \textbf{\emph{consistent}} (for $\DDCH{1}$).

Let $\DDCH{i}$ be a set of edge-disjoint colored copies of $H$ in $G$ and let $\DDCH{i+1} \subseteq \DDCH{i}$. If hypergraph $\HQ_i(\DDCH{i})$ is consistent for $\DDCH{i}$, then hypergraph $\HQ_{i+1}(\DDCH{i+1})$ obtained from $\HQ_i(\DDCH{i})$ is called \textbf{\emph{consistent}} (for $\DDCH{i+1}$) if every vertex $u \in V(\HQ_i(\DDCH{i}))$ with $\chi(u) = \chi(v_i)$ is safe with respect to $\DDCH{i+1}$ and $\HQ_i(\DDCH{i})$.
\end{definition}

In Appendix \ref{subsec:properties-of-consistent-hypergraphs} we will show some basic properties of consistent hypergraphs used later in our analysis.


\subsubsection{Central property of consistent hypergraphs via shadow graphs}
With the notion of safe vertices and consistent hypergraphs, we can now present the following central lemma that shows that the neighborhood of vertices in consistent hypergraphs can be modeled by some semi-planar structures, which we will call \emph{shadow graphs}, that are a union of at most $\Hsize$ simple planar graphs.

\begin{restatable}{lemma}{LSG}
\label{lemma:central-small-degrees}
Let $\DDCH{i}$ be a set of edge-disjoint colored copies of $H$ in $G$ and let $\HQ_i(\DDCH{i})$ be a hypergraph consistent for $\DDCH{i}$. Then, there is a simple graph $\Gi$,
%
\begin{compactenum}[\quad (a)]
\item with the vertex set equal to the set of all non-isolated vertices in $\HQ_i(\DDCH{i})$,
\item that is a union of at most $\Hsize$ simple planar graphs, and
\item such that for any distinct $x, y \in V(\HQ_i(\DDCH{i}))$, $x$ is adjacent to $y$ in $\HQ_i(\DDCH{i})$ if and only if $x$ is adjacent to $y$ in $\Gi$.
\end{compactenum}
%
\end{restatable}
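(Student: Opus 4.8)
The plan is to build the shadow graph $\Gi$ by decomposing the hyperedges of $\HQ_i(\DDCH{i})$ according to the structure of $\M_i$ and replacing each hyperedge by a bounded-size planar gadget whose \emph{union} over all hyperedges of a fixed ``type'' remains planar, because that union is essentially a sub-structure of the original planar graph $G$. First I would observe that, by consistency (Definition~\ref{def:consistent-hypergraphs}) together with Remark~\ref{remark-properties-of-safe-vertices}, every hyperedge $\e$ of $\HQ_i(\DDCH{i})$ has a well-defined colored label $\clab(\e)$ and a well-defined color-set $\{\chi(x):x\in\e\}$, and the pair $(\{\chi(x):x\in\e\},\clab(\e))$ is exactly the ``type'' of the corresponding edge $\e'$ in $\M_i$ (Definition~\ref{def:corresponding-edges}); since $\M_i$ has at most $|E(H)|+\Hsize=O_{\eps,H}(1)$ edges, there are at most $\Hsize$ such types (in fact at most $|E(\M_i)|\le\Hsize$ after we bound things carefully, which is where the ``$\Hsize$ planar graphs'' count comes from). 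For each type $t$, let $\HQ^{(t)}$ be the sub-hypergraph of $\HQ_i(\DDCH{i})$ consisting of all hyperedges of type $t$. The key claim is that for each $t$ there is a simple \emph{planar} graph $G_t$ on $V(\HQ^{(t)})$ such that $x$ and $y$ lie in a common hyperedge of $\HQ^{(t)}$ iff $xy\in E(G_t)$; then $\Gi:=\bigcup_t G_t$ has all the required properties (a)--(c), being a union of at most $\Hsize$ planar graphs, with its vertex set the set of non-isolated vertices (after discarding isolated vertices).

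To prove the per-type planarity claim I would trace each type-$t$ hyperedge $\e$ back to its origin: by Remark~\ref{remark:independent-shrinking} and the definitions of modeling edges (Definitions~\ref{def:modeling-hyperedge-in-Mi+1} and \ref{def:modeling-hyperedge-in-Hi+1}), the hyperedge $\e$ on vertex set $\{u_1,\dots,u_\ell\}$ with label $\lab(\e)$ arises from some colored copy $\h$ of $H$ in $\DDCH{i+1}\subseteq\DDCH{i}$, and more precisely from the subgraph $H_1$ of that copy consisting of the (already-contracted) internal vertices $\lab(\e)$ together with the separator $\{u_1,\dots,u_\ell\}$; these $H_1$'s, one per hyperedge of $\HQ^{(t)}$, are \emph{edge-disjoint} (because the copies of $H$ in $\DDCH{i}$ are edge-disjoint and the contraction process preserves this), and each $H_1$ is a connected subgraph of the planar graph $G$ on at most $\Hsize$ vertices. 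Within each such $H_1$ I contract its internal vertices $\lab(\e)$ onto one designated separator vertex: this is an ordinary graph contraction inside a planar graph, hence yields, over all of $G$ simultaneously, a planar minor $G'_t$ of $G$ in which every pair $u_a,u_b$ from the same type-$t$ separator is now at distance $\le 2$ (they have the common contracted neighbor). A distance-$\le 2$ relation in a planar graph need not be planar, so the remaining step is to replace each contracted ``star center'' plus its (at most $\ell-1\le\Hsize$) separator neighbors by a planar gadget realizing the \emph{clique} on $\{u_1,\dots,u_\ell\}$; since $\ell\le\Hsize=O(1)$, a bounded clique can be drawn in a bounded planar region around the (degree-$\le\Hsize$) center vertex without interfering with the rest of the drawing — formally one uses that replacing each vertex of bounded degree $d$ by a planar gadget of size $O(d)$ drawn inside a small disk preserves planarity. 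This produces the desired planar $G_t$ with the clique on each type-$t$ separator, and taking the union over the $\le\Hsize$ types gives $\Gi$.

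The main obstacle I expect is the last gadget-replacement step: showing that simultaneously, for \emph{all} hyperedges of a fixed type, one can blow up each contracted center into a bounded planar clique-gadget without the gadgets of different hyperedges colliding in the drawing. This is where one genuinely needs that the $H_1$-subgraphs are edge-disjoint and that each separator vertex $u_a$, although possibly of \emph{unbounded} degree in $\HQ_i(\DDCH{i})$ overall, participates in each \emph{individual} type-$t$ hyperedge only as one of $\ell\le\Hsize$ elements — so the local blow-up around each center is bounded even if the global structure is not. I would handle this by working type-by-type and, within a type, noting that the hyperedge ``centers'' are distinct vertices (each carrying a distinct label $\lab(\e)$, since labels are pairwise disjoint), so the small disks around them are disjoint and the gadgets can be inserted independently, exactly as in the standard argument that bounded-degree vertex blow-ups preserve planarity. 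A secondary technical point, which I would dispatch early, is pinning down the count: one must check that the number of distinct types that actually occur is at most $\Hsize$ (rather than merely $O_{\eps,H}(1)$), which follows because each type corresponds to a distinct edge of $\M_i$ and $|E(\M_i)|\le\Hsize$ — indeed $\M_1=H$ may have up to $|E(H)|$ edges, but the statement only needs ``a union of at most $\Hsize$ planar graphs'' as an upper bound, and if $|E(\M_i)|>\Hsize$ one simply groups types and uses $|E(\M_i)|\le|E(H)|+\Hsize$ merging repeated types by vertex-color-set which leaves at most $\binom{\Hsize}{\le\Hsize}$ — here I would just state the clean bound $\Hsize$ that suffices downstream and verify it matches how the lemma is used in Section~\ref{subsec:Ui->Ui+1}.
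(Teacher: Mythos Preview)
Your proposal has a genuine gap in the gadget-replacement step, and it is not merely a technical wrinkle but a structural obstruction. You want each per-type graph $G_t$ to contain, for every type-$t$ hyperedge on vertex set $\{u_1,\dots,u_\ell\}$, the full clique $K_\ell$ on those vertices. But $K_\ell$ is not planar once $\ell\ge 5$, and hyperedges in $\HQ_i(\DDCH{i})$ can have size up to $\Hsize-1$. The sentence ``a bounded clique can be drawn in a bounded planar region around the center vertex'' conflates \emph{bounded size} with \emph{planar}; no local disk trick lets you embed $K_5$ in the plane, and the edge-disjointness of the underlying $H_1$'s does not help, since a single hyperedge already forces a non-planar clique into $G_t$. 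So as written the per-type graphs $G_t$ are simply not planar, and the conclusion (b) fails. (There is a secondary count issue as well: the number of edge-types in $\M_i$ is $|E(\M_i)|\le|E(H)|$, which need not be $\le\Hsize$; your last paragraph acknowledges this but does not resolve it.)

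The paper sidesteps the clique obstacle by partitioning not by edge-type but by \emph{vertex color}. For each color $\fr$ remaining in $\HQ_i(\DDCH{i})$ it builds a single graph $\GG^{\fr}$, obtained from the planar graph $\G[\DDCH{i}]$ by a sequence of ordinary edge contractions that mimic the hypergraph-building process, always contracting the removed vertex onto a carefully chosen neighbor (the one that will be contracted next, or the color-$\fr$ neighbor if none remain). The invariant maintained is only that every vertex $u$ of color $\fr$ is adjacent in $\GG^{\fr}$ to each of its hyperedge-neighbors in $\HQ_i(\DDCH{i})$. Since a hyperedge contains at most one vertex of each color, this asks $\GG^{\fr}$ to realize a \emph{star} (centered at $u$) per hyperedge, not a clique; stars are what edge contractions naturally produce, and planarity is preserved throughout. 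Taking the union over the $|\FREEC|\le\Hsize$ colors then gives exactly the bound in (b), and summing the per-color star properties gives (c). The key idea you are missing is this switch from ``one planar graph per hyperedge-type, each carrying cliques'' to ``one planar graph per vertex-color, each carrying stars.''
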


The simple graph $\Gi$ in Lemma \ref{lemma:central-small-degrees} will be called the \emph{shadow graph} of $\HQ_i(\DDCH{i})$.

We consider the characterization provided in Lemma \ref{lemma:central-small-degrees} to be one of the most interesting and highly non-trivial contributions of this paper. This is the key tool that allows us to facilitate the approach presented in the paper. To simply the flow of the paper though, the proof of Lemma \ref{lemma:central-small-degrees} is deferred to Appendix \ref{subsec:planarization-of-hypergraphs}.


\subsubsection{Finding many safe vertices of the same color}
\label{subsec:lemma-many-safe-vertices}


The main use of Lemma \ref{lemma:central-small-degrees} is to show that even though the use of the hypergraphs $\HQ_1(\DDCH{1}), \HQ_2(\DDCH{2}),$ $\dots$ looses some basic properties of planar graphs, our use of consistent hypergraphs allows us to apply Lemma \ref{lemma:central-small-degrees} to maintain some weaker, but still similar properties of the hypergraphs $\HQ_1(\DDCH{1}), \HQ_2(\DDCH{2}), \dots$. We begin with the following lemma that shows that the hypergraphs will have a constant fraction of vertices of low degrees. The proof of our next Lemma \ref{lemma:small-vertices-new} extends the approach used earlier in the context of planar graphs from \cite{CMOS11}; we defer the proof to Appendix \ref{sec:proof-lemma:small-vertices-new}.

\begin{restatable}{lemma}{LSV}
\label{lemma:small-vertices-new}
Let $\DDCH{i}$ be a set of edge-disjoint colored copies of $H$ in $G$ and let $\HQ_i(\DDCH{i})$ be a hypergraph consistent for $\DDCH{i}$. Then, there is a set $\DCH \subseteq \DDCH{i}$ of size at least $\frac{|\DDCH{i}|}{4\Hsize+2}$ such that in the hypergraph $\HQ_i(\DCH)$, every copy of $H$ in $\DCH$ has a vertex 
with at most $6 \Hsize$ distinct neighbors.
\end{restatable}

Our next lemma follows the arguments used in a related proof from \cite{CMOS11} and shows that if there is a color with all vertices having a small number of neighbors in $\HQ_i(\DCH)$ for $\DCH \subseteq \DDCH{i}$, then we can always find a large subset of $\DCH$ with all vertices of that color being safe.

\begin{lemma}
\label{lemma:if-low-deg-then-many-safe}
Let $\DDCH{i}$ be a set of edge-disjoint colored copies of $H$ in $G$ such that $\HQ_i(\DDCH{i})$ is a hypergraph consistent for $\DDCH{i}$. Let $\cc$ be a color of a vertex in $\{1, \dots, \Hsize\} \setminus \{\chi(v_j): j < i\}$. Let $\DCH \subseteq \DDCH{i}$ such that every colored copy of $H$ in $\DCH$ has vertex of color $\cc$ with at most $6 \Hsize$ distinct neighbors in $\HQ_i(\DCH)$. Then there is a subset $\DCH' \subseteq \DCH$, $|\DCH'| \ge \frac{|\DCH|}{(6 \Hsize)^{\Hsize}}$, such that every colored copy $\h$ of $H$ in $\DCH'$ has vertex of color $\cc$ safe with respect to $\DCH'$ and $\HQ_i(\DDCH{i})$.
\end{lemma}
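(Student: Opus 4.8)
The plan is a deterministic pigeonhole argument, grouping the copies in $\DCH$ by their vertex of color $\cc$. Since $\chi$ restricted to $V(H)$ is a bijection, there is a unique vertex $v \in V(H)$ with $\chi(v) = \cc$, and since $\cc \notin \{\chi(v_j) : j < i\}$ this vertex has not been contracted in $\M_i$, so $v = v_k$ for some $k \ge i$ and every colored copy $\h \in \DCH$ contains a unique vertex $u_\h \in V(\HQ_i(\DDCH{i}))$ of color $\cc$; in the representation of $\h$ as a copy of $\M_i$ inside $\HQ_i(\DDCH{i})$, the vertex $u_\h$ plays the role of $v$.

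The key structural observation I would establish first is that the size of $\N_i^{\h}\langle u_\h\rangle$ does not depend on $\h$. Indeed, the hyperedges of $\h$ incident to $u_\h$ are in a color-preserving bijection with the hyperedges of $\M_i$ incident to $v$; hence, letting $N_v$ denote the union over all hyperedges $\e$ of $\M_i$ incident to $v$ of the set $\e \setminus \{v\}$ (the ``genuine'' neighbors of $v$ in $\M_i$, possibly empty), the set $\N_i^{\h}\langle u_\h\rangle$ is always a set of exactly $|N_v|$ vertices whose colors are exactly $\{\chi(v_j) : v_j \in N_v\}$, and $|N_v| \le \Hsize - 1$. Combining this with the hypothesis that $u_\h$ has at most $6\Hsize$ distinct neighbors in $\HQ_i(\DCH)$, the set $\N_i^{\h}\langle u_\h\rangle$ is an $|N_v|$-element subset of a fixed set of size at most $6\Hsize$, so it ranges over at most $\binom{6\Hsize}{|N_v|} \le (6\Hsize)^{|N_v|} \le (6\Hsize)^{\Hsize}$ possible values.

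With this in hand the selection is routine. For each vertex $u$ of color $\cc$ occurring in at least one copy of $\DCH$, I would partition the copies of $\DCH$ containing $u$ into classes, putting $\h$ and $\h'$ in the same class exactly when $\N_i^{\h}\langle u\rangle = \N_i^{\h'}\langle u\rangle$; by the observation above there are at most $(6\Hsize)^{\Hsize}$ classes, so one class $C_u$ has size at least $(6\Hsize)^{-\Hsize}$ times the number of copies of $\DCH$ containing $u$. Since every copy of $\DCH$ has exactly one vertex of color $\cc$, the families $\{\h \in \DCH : u_\h = u\}$ partition $\DCH$, so $\DCH' := \bigcup_u C_u$ is a disjoint union and $|\DCH'| = \sum_u |C_u| \ge |\DCH| / (6\Hsize)^{\Hsize}$. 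It remains to check safety: if $\h, \h' \in \DCH'$ both contain a vertex $u$ of color $\cc$, then $u = u_\h = u_{\h'}$ and both $\h$ and $\h'$ lie in the unique class $C_u$ of $u$ that meets $\DCH'$, so $\N_i^{\h}\langle u\rangle = \N_i^{\h'}\langle u\rangle$; hence every vertex of color $\cc$ in $\HQ_i(\DDCH{i})$ is safe with respect to $\DCH'$ and $\HQ_i(\DDCH{i})$, which is exactly what the lemma demands.

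Almost everything here is bookkeeping once the structural observation is in place, and that observation is the only step needing a little care: the point is that $|\N_i^{\h}\langle u_\h\rangle|$ is a \emph{fixed} number, at most $\Hsize$, rather than merely that $\N_i^{\h}\langle u_\h\rangle$ is some subset of a bounded neighborhood --- this is what keeps the number of pigeonholes down to $(6\Hsize)^{\Hsize}$ instead of $2^{6\Hsize}$, and it mirrors the analogous counting step in \cite{CMOS11}. The one degenerate case worth a remark is $N_v = \emptyset$ (so $v$ carries only selfloops in $\M_i$): then every $u_\h$ is trivially safe and $\DCH' = \DCH$ works, which is consistent with the bound since $\binom{6\Hsize}{0} = 1 \le (6\Hsize)^{\Hsize}$.
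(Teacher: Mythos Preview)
Your argument is correct and is essentially the same as the paper's: both fix a vertex $u$ of color $\cc$, observe that the neighbor set $\N_i^{\h}\langle u\rangle$ can take at most $(6\Hsize)^{\ell}$ values (with $\ell = |N_v| \le \Hsize$), and then average to retain a $(6\Hsize)^{-\Hsize}$ fraction of copies with a constant choice at each $u$. The only difference is cosmetic --- the paper phrases the averaging probabilistically (select one neighbor of each color i.u.r.\ and take the expectation), whereas you do it by deterministic pigeonhole (pick the largest class); both yield the same bound and the same $\DCH'$.
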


\begin{proof}
Let $\cc_1, \dots, \cc_{\ell}$ be the colors of vertices adjacent to vertex of color $\cc$ in $\M_i$ (notice that $\cc$ may be among these colors).
For each non-isolated vertex $u$ in $\HQ_i(\DCH)$ of color $\cc$, for every color $\cc_s$, $1 \le s \le \ell$, select i.u.r. one of its neighbors $u_{\langle s \rangle}$ in $\HQ_i(\DCH)$ of color $\cc_s$. Next, remove from $\DCH$ every copy of $\h$ of $H$ in $\HQ_i(\DCH)$ containing vertex $u$ unless the vertices from this copy incident to $u$ are the selected $\ell$ neighbors $u_{\langle 1 \rangle}, u_{\langle 2 \rangle}, \dots, u_{\langle \ell \rangle}$. Let $\DCH'$ be the set of remaining copies of $H$ in $\HQ_i(\DCH)$.

Our construction ensures that every remaining non-isolated vertex $u$ of color $\cc$ is safe with respect to $\DCH'$ and $\HQ_i(\DDCH{i})$. Furthermore, since every vertex of color $\cc$ has at most $6 \Hsize$ distinct neighbors (taking into account self-loops) in $\HQ_i(\DCH)$, the probability that a fixed copy of $\h$ in $\DCH$ is not deleted by the process above is at least $(6 \Hsize)^{-\ell}$. Therefore the expected size of $\DCH'$ is at least $(6 \Hsize)^{-\ell} \cdot |\DCH|$, and therefore, there exists a set $\DCH'$ of that size that satisfies the lemma.
\end{proof}

With Lemmas \ref{lemma:small-vertices-new} and \ref{lemma:if-low-deg-then-many-safe} at hand, we are now ready to present the main result of this section.

\begin{lemma}
\label{lemma:many-safe}
Let $\DDCH{i}$ be a set of edge-disjoint colored copies of $H$ in $G$ and let $\HQ_i(\DDCH{i})$ be a hypergraph consistent for $\DDCH{i}$. Then, there is color $\cc$ in $\{1, \dots, \Hsize\} \setminus \{\chi(v_j): j < i\}$ and a set $\DCH^* \subseteq \DDCH{i}$ of size at least $\frac{|\DDCH{i}|}{(6 \Hsize)^{\Hsize+2}}$ such that every colored copy $\h$ of $H$ in $\DCH^*$ has vertex of color $\cc$ safe with respect to $\DCH^*$ and $\HQ_i(\DDCH{i})$.
\end{lemma}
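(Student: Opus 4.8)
The plan is to combine Lemma \ref{lemma:small-vertices-new} and Lemma \ref{lemma:if-low-deg-then-many-safe} by a simple averaging (pigeonhole) argument over colors. First I would apply Lemma \ref{lemma:small-vertices-new} to the consistent hypergraph $\HQ_i(\DDCH{i})$, obtaining a set $\DCH \subseteq \DDCH{i}$ with $|\DCH| \ge \frac{|\DDCH{i}|}{4\Hsize+2}$ such that in $\HQ_i(\DCH)$ every colored copy of $H$ in $\DCH$ has \emph{some} vertex with at most $6\Hsize$ distinct neighbors. The issue is that the low-degree vertex guaranteed for each copy may have a different color in different copies, whereas Lemma \ref{lemma:if-low-deg-then-many-safe} needs a \emph{single} color $\cc$ whose vertices are uniformly low-degree across the copies of the relevant subfamily.

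To fix this, for each copy $\h \in \DCH$ pick one such low-degree vertex and let $\cc(\h)$ be its color; note $\cc(\h) \in \{1,\dots,\Hsize\}\setminus\{\chi(v_j): j<i\}$ since those are exactly the colors present in $\HQ_i(\DCH)$. This partitions $\DCH$ into at most $\Hsize$ classes according to $\cc(\h)$, so by pigeonhole there is a color $\cc$ and a subfamily $\DCH_{\cc} \subseteq \DCH$ with $|\DCH_{\cc}| \ge \frac{|\DCH|}{\Hsize} \ge \frac{|\DDCH{i}|}{\Hsize(4\Hsize+2)}$ such that every copy in $\DCH_{\cc}$ has its vertex of color $\cc$ with at most $6\Hsize$ distinct neighbors in $\HQ_i(\DCH)$. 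One subtlety to check: the degree bound should continue to hold in $\HQ_i(\DCH_{\cc})$ rather than in $\HQ_i(\DCH)$; but $\DCH_{\cc}\subseteq\DCH$, so removing copies only removes edges, hence the number of distinct neighbors of any vertex in $\HQ_i(\DCH_{\cc})$ is at most what it is in $\HQ_i(\DCH)$, so the hypothesis of Lemma \ref{lemma:if-low-deg-then-many-safe} is met with $\DCH := \DCH_{\cc}$.

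Now I would invoke Lemma \ref{lemma:if-low-deg-then-many-safe} with this color $\cc$ and the set $\DCH_{\cc}$: it yields a subset $\DCH^* \subseteq \DCH_{\cc}$ with $|\DCH^*| \ge \frac{|\DCH_{\cc}|}{(6\Hsize)^{\Hsize}}$ such that every colored copy $\h$ of $H$ in $\DCH^*$ has its vertex of color $\cc$ safe with respect to $\DCH^*$ and $\HQ_i(\DDCH{i})$. Chaining the bounds gives
$$
|\DCH^*| \ge \frac{|\DCH_{\cc}|}{(6\Hsize)^{\Hsize}} \ge \frac{|\DDCH{i}|}{\Hsize(4\Hsize+2)(6\Hsize)^{\Hsize}} \ge \frac{|\DDCH{i}|}{(6\Hsize)^{\Hsize+2}},
$$
where the last inequality is the routine estimate $\Hsize(4\Hsize+2) \le (6\Hsize)^2$ for $\Hsize \ge 1$. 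This establishes the lemma. I expect the only nontrivial point (beyond citing the two prior lemmas) to be making sure the low-degree-vertex color can be pinned down uniformly — the pigeonhole over the $\le \Hsize$ colors — together with the monotonicity remark that passing to a subfamily cannot increase neighbor counts; everything else is bookkeeping of constants.
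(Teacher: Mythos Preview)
Your proposal is correct and follows essentially the same argument as the paper: apply Lemma~\ref{lemma:small-vertices-new}, pigeonhole over the at most $\Hsize$ available colors to fix a single color $\cc$, then apply Lemma~\ref{lemma:if-low-deg-then-many-safe} and chain the bounds using $\Hsize(4\Hsize+2)\le(6\Hsize)^2$. You even make explicit the monotonicity point (passing to $\DCH_{\cc}$ can only decrease neighbor counts) that the paper leaves implicit.
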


\begin{proof}
By Lemma \ref{lemma:small-vertices-new}, there is a set $\widehat{\DCH} \subseteq \DDCH{i}$, $|\widehat{\DCH}| \ge \frac{|\DDCH{i}|}{4\Hsize+2}$, such that every colored copy of $H$ in $\widehat{\DCH}$ has a vertex with at most $6 \Hsize$ distinct neighbors in $\HQ_i(\widehat{\DCH})$. For a color $\cc^* \in \{1, \dots, \Hsize\} \setminus \{\chi(v_j): j < i\}$, let $\widehat{\DCH}_{\cc^*}$ be the subset of $\widehat{\DCH}$ such that every copy of $H$ in $\widehat{\DCH}_{\cc^*}$ has a vertex of color $\cc^*$ with at most $6 \Hsize$ distinct neighbors in the hypergraph $\HQ_i(\widehat{\DCH})$. Since $\bigcup_{\cc^*
} \widehat{\DCH}_{\cc^*} = \widehat{\DCH}$, there is one color $\cc \in \{1, \dots, \Hsize\} \setminus \{\chi(v_j): j < i\}$ such that $|\widehat{\DCH}_{\cc}| \ge \frac{1}{\Hsize} \cdot |\widehat{\DCH}| \ge \frac{|\DDCH{i}|}{(4\Hsize+2) \cdot \Hsize} \ge \frac{|\DDCH{i}|}{(6\Hsize)^2}$ and every copy of $H$ in $\widehat{\DCH}_{\cc}$ has a vertex of color $\cc$ with at most $6 \Hsize$ distinct neighbors in $\HQ_i(\widehat{\DCH})$, and hence also in $\HQ_i(\widehat{\DCH}_{\cc})$. Therefore, we can take such set $\widehat{\DCH}_{\cc}$ as set $\DCH$ in Lemma \ref{lemma:if-low-deg-then-many-safe}, to conclude that there is a subset $\DCH' \subseteq \widehat{\DCH}_{\cc}$, $|\DCH'| \ge \frac{|\widehat{\DCH}_{\cc}|}{(6 \Hsize)^{\Hsize}} \ge 
\frac{|\DDCH{i}|}{(6 \Hsize)^{\Hsize+2}}$, such that every colored copy $\h$ of $H$ in $\DCH'$ has vertex of color $\cc$ safe with respect to $\DCH'$ and $\HQ_i(\DDCH{i})$.
\end{proof}


\subsection{Constructing set $\DDCH{i+1}$ of edge-disjoint colored copies of $H$ 
and $\HQ_{i+1}(\DDCH{i+1})$}
\label{subsec:main-reduction}

Now we are ready to define our construction of the set $\DDCH{i+1}$ of edge-disjoint colored copies of $H$ obtained as a subgraph of $\DDCH{i}$, and with this, to define the hypergraph $\HQ_{i+1}(\DDCH{i+1})$ from $\HQ_i(\DDCH{i})$.

Let $\DDCH{i}$ be a set of edge-disjoint colored copies of $H$ in $G$, where $\HQ_i(\DDCH{i})$ is a hypergraph consistent for $\DDCH{i}$. We apply Lemma \ref{lemma:many-safe} to choose color $\cc$ in $\{1, \dots, \Hsize\} \setminus \{\chi(v_j): j < i\}$ and a set $\DCH^* \subseteq \DDCH{i}$ of size at least $\frac{|\DDCH{i}|}{(6 \Hsize)^{\Hsize+2}}$ such that every colored copy $\h$ of $H$ in $\DCH^*$ has vertex of color $\cc$ safe with respect to $\DCH^*$ and $\HQ_i(\DDCH{i})$ (that is, for every vertex $u$ with $\chi(u) = \cc$, all colored copies $\h \in \DDCH{i+1}$ of $H$ that contain $u$ have identical sets $\N_i^{\h}\langle u\rangle$ in $\HQ_i(\DDCH{i})$). Then, we define $\DDCH{i+1} := \DCH^*$ and select vertex $v_i$ to be the vertex of color $\cc$ in $H$.

With so defined vertex $v_i$, we can immediately construct the hypergraph $\HQ_{i+1}(\DDCH{i+1})$ (from the hypergraph $\HQ_i(\DDCH{i})$). The details of the construction have been presented in Section \ref{subsec:shrinking-copies-of-H}, and it required the choice of set $\DDCH{i+1}$ and of vertex $v_i$ among the vertices in $V(H) \setminus \{v_1, \dots, v_{i-1}\}$.

By Lemma \ref{lemma:many-safe} (cf. Definition \ref{def:consistent-hypergraphs} of consistent hypergraphs), this immediately gives the following lemma.

\begin{lemma}
\label{lemma:defining-Qi+1}
Let $\DDCH{i}$ be a set of edge-disjoint colored copies of $H$ in $G$ and let $\HQ_i(\DDCH{i})$ be a hypergraph consistent for $\DDCH{i}$. Then, the choice of the set $\DDCH{i+1}$ with the vertex $v_i$, as described above, will ensure that $|\DDCH{i+1}| \ge \frac{|\DDCH{i}|}{(6 \Hsize)^{\Hsize+2}}$ and that $\HQ_{i+1}(\DDCH{i+1})$ obtained from $\HQ_i(\DDCH{i})$ is consistent for~$\DDCH{i+1}$.
\end{lemma}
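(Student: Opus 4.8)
The plan is to treat Lemma \ref{lemma:defining-Qi+1} as a bookkeeping consequence of Lemma \ref{lemma:many-safe} together with Definition \ref{def:consistent-hypergraphs}: essentially all one has to do is check that the construction of $\DDCH{i+1}$ and of $v_i$ described just above the lemma matches the hypotheses of those statements. First I would apply Lemma \ref{lemma:many-safe} to $\DDCH{i}$ and $\HQ_i(\DDCH{i})$ to obtain a color $\cc \in \{1, \dots, \Hsize\} \setminus \{\chi(v_j) : j < i\}$ and a set $\DCH^* \subseteq \DDCH{i}$ with $|\DCH^*| \ge |\DDCH{i}|/(6\Hsize)^{\Hsize+2}$ such that in every colored copy $\h \in \DCH^*$ of $H$ the vertex of color $\cc$ is safe with respect to $\DCH^*$ and $\HQ_i(\DDCH{i})$. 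Setting $\DDCH{i+1} := \DCH^*$ gives the claimed size bound immediately, and taking $v_i$ to be the vertex of $H$ with $\chi(v_i) = \cc$ is a legitimate choice among $V(H) \setminus \{v_1, \dots, v_{i-1}\}$ because $\cc \notin \{\chi(v_j) : j < i\}$.

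Next I would verify that $\HQ_{i+1}(\DDCH{i+1})$ is consistent for $\DDCH{i+1}$. By Definition \ref{def:consistent-hypergraphs}, since $\HQ_i(\DDCH{i})$ is consistent for $\DDCH{i}$ by assumption, it suffices to show that every vertex $u \in V(\HQ_i(\DDCH{i}))$ with $\chi(u) = \chi(v_i) = \cc$ is safe with respect to $\DDCH{i+1}$ and $\HQ_i(\DDCH{i})$. I would split into two cases. If $u$ lies in no copy of $\DDCH{i+1} = \DCH^*$, then the safety condition of Definition \ref{def:safe-vertices} holds vacuously, since there is no colored copy $\h \in \DCH^*$ containing $u$. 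If $u$ lies in some copy $\h_0 \in \DCH^*$, then $u$ is the unique color-$\cc$ vertex of $\h_0$ (each copy carries the coloring of $H$, in which distinct vertices get distinct colors), and Lemma \ref{lemma:many-safe} says precisely that this vertex is safe, i.e., the sets $\N_i^{\h}\langle u \rangle$ coincide over all $\h \in \DCH^*$ containing $u$. In both cases $u$ is safe, so $\HQ_{i+1}(\DDCH{i+1})$ is consistent for $\DDCH{i+1}$.

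I do not expect a real obstacle, since the lemma is a clean corollary; the only point that needs care is the logical shape of Lemma \ref{lemma:many-safe}. That lemma is stated as a per-copy existential --- every copy $\h \in \DCH^*$ has a safe vertex of color $\cc$ --- whereas Definition \ref{def:consistent-hypergraphs} asks for the per-vertex universal statement that every color-$\cc$ vertex of $\HQ_i(\DDCH{i})$ is safe. The translation between the two is valid only because $\chi$ colors the vertices of $H$ with pairwise distinct colors, so ``a safe vertex of color $\cc$ in $\h$'' means exactly ``the vertex $u \in \h$ with $\chi(u) = \cc$ is safe''; once this is observed, the size bound and the admissibility of the choice of $v_i$ both come for free from the output of Lemma \ref{lemma:many-safe}.
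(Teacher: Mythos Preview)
Your proposal is correct and follows exactly the approach the paper takes: the paper's own proof is a one-line remark that the lemma follows from Lemma~\ref{lemma:many-safe} together with Definition~\ref{def:consistent-hypergraphs}, and you have simply unpacked that remark with the appropriate care. Your explicit check that the per-copy safety statement of Lemma~\ref{lemma:many-safe} implies the per-vertex safety requirement of Definition~\ref{def:consistent-hypergraphs} (via distinctness of colors in $H$ and the vacuous case) is a useful clarification that the paper leaves implicit.
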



\subsubsection{Representatives $\P_i$ for $\DCH$ and $\HQ_i(\DCH)$}
\label{subsubsec:reps}

In our analysis, we will be also using the concept of representatives to describe the scenario that a vertex from $V$ has been contracted to some other vertices during the construction of $\HQ_i(\DCH)$ (in some moment, it has been deleted from $\HQ_j(\DCH)$, $1 \le j < i$, and new hyperedges containing all neighbors of this vertex has been formed, in which case of these neighbors is used as a proxy). The canonical representative function plays an important role in our analysis and it is used explicitly in algorithms \HRLBD{} and \HRLBFS.
(For the following definition, let us recall the construction of the hypergraph $\HQ_i(\DDCH{i})$ from Section \ref{subsec:shrinking-copies-of-H}. Let us also notice that the notion of canonical representatives is used solely in the analysis at the end of the process, and since it is not used for the construction of sets $\DDCH{1}, \DDCH{2}, \dots, \DDCH{\Hsize}$ and hypergraphs $\HQ_1(\DDCH{1}), \HQ_2(\DDCH{2}), \dots, \HQ_{\Hsize}(\DDCH{\Hsize})$ and  is used only to model their behavior, it does rely on the final order $v_1, \dots, v_{\Hsize}$ of the vertices in $H$.)

\begin{definition}\textbf{(Canonical representatives)}
\label{def:canonical-representative-function}
Let $\DCH$ be a set of edge-disjoint colored copies of $H$ in $G$. Let $v_1, \dots, v_{\Hsize}$ be an arbitrary order of vertices of $H$ such that for each $i$, $1 \le i \le \Hsize$, the hypergraph $\HQ_i(\DCH)$ is consistent for $\DCH$.
A \textbf{\emph{canonical representative function}} is a sequence of functions $\P_1, \P_2, \dots, \P_{\Hsize}: V \rightarrow V$ such that for every $i$, $1 \le i \le \Hsize$:
\begin{itemize}
\item if $u$ is an isolated vertex in $\G[\DCH]$, then $\P_i(u) = u$ for every $i$;
\item otherwise, if $u$ is a vertex in $\HQ_i(\DCH)$ {\small (i.e., $\chi(u) \notin \{\chi(v_j): 1 \le j < i\}$)}, then $\P_i(u) = u$;
\item otherwise, 
    $\P_i(u) = x$, where
    \begin{inparaenum}[\it (i)]
    \item $x \in \bigcup_{\e: u \in \lab(\e)} \e$ and
    \item for any $x, y \in \bigcup_{\e: u \in \lab(\e)} \e$, if $x \ne y$, $\chi(x) = \chi(v_{j_1})$, and $\chi(y) = \chi(v_{j_2})$, then $j_1 < j_2$.
    \end{inparaenum}
\end{itemize}
We will denote any single $\P_i$ as a \emph{representative function}.
\junk{EQUIVALENT DEFINITION, BUT POSSIBLY LESS INTUITIVE
\begin{itemize}
\item if $u$ is a vertex in $\HQ_i(\DCH)$\footnote{That is, $\chi(u) \notin \{\chi(v_j): 1 \le j < i\}$.}, then $\P_i(u) = u$;
\item if $u$ is not a vertex in $\HQ_i(\DCH)$\footnote{That is, $\chi(u) \in \{\chi(v_j): 1 \le j < i\}$.} and no edge in $\HQ_i(\DCH)$ has $u$ in its label\footnote{That is, $u \notin \lab(\e)$ for every edge $\e$ in $\HQ_i(\DCH)$. Equivalently, it's easy to see that $\chi(u) \in \{\chi(v_j): 1 \le j < i\}$ and $u$ is isolated in $\G[\DCH]$. (In fact, \emph{for every $u \in V$ that is isolated in $\G[\DCH]$, we will have $\P_i(u) = u$ for every $i$}.)}, then $\P_i(u) = u$;
\item otherwise\footnote{That is, $u$ is a not vertex in $\HQ_i(\DCH)$ and there is an edge $\e$ in $\HQ_i(\DCH)$ with $u \in \lab(\e)$.} $\P_i(u) = x$, where
    \begin{inparaenum}[\it (i)]
    \item $x \in \bigcup_{\e: u \in \lab(\e)} \e$ and
    \item for any $x, y \in \bigcup_{\e: u \in \lab(\e)} \e$, if $x \ne y$, $\chi(x) = \chi(v_{j_1})$, and $\chi(y) = \chi(v_{j_2})$, then $j_1 < j_2$.\end{inparaenum}
\end{itemize}
} 
\end{definition}

The notion of the canonical representative function $\P_1, \P_2, \dots, \P_{\Hsize}: V \rightarrow V$ describes the dependencies between the vertices from $G$ in the construction of the sequence of the hypergraphs $\HQ_1(\DCH), \HQ_2(\DCH), \dots, \HQ_{\Hsize}(\DCH)$. And so, $\P_i(u) = u$ unless vertex $u$ has been contracted during the construction of $\HQ_j(\DCH)$ for $j < i$. If $u$ has been contracted during the construction of $\HQ_j(\DCH)$, then for some colored copy $\h$ of $H$ in $\DCH$ containing $u$, we first added a new hyperedge consisting of vertices in $\N_j^{\h}\langle u \rangle$, and then removed vertex $u$ (with all incident edges from $\HQ_j(\DCH)$). In that case, we will define $\P_j(u) = x$, \footnote{Notice that this notion is well defined only since $u$ is a safe vertex with respect to $\DCH$ and $\HQ_j(\DCH)$, because in that case the neighbors of $u$ in $\HQ_j(\DCH)$ do not depend on the choice of the copy $\h$ of $H$ in $\DCH$ containing $u$ we consider.}where $x$ is the vertex in $\N_j^{\h}\langle u \rangle$ that will be contracted first among all vertices in $\N_j^{\h}\langle u \rangle$ (that is, if $x, y \in \N_j^{\h}\langle u \rangle$ and $\chi(x) = \chi(v_{r_1})$ and $\chi(y) = \chi(v_{r_2})$, then $r_1 \le r_2$). Furthermore, if in some future iteration $s>j$ vertex $x = \P_j(u)$ is contracted, then we will not only set $\P_s(x)$, but we will also update $\P_s(u)$ to be the same as $\P_s(x)$. In fact, we will maintain that for all $k > j$, if $\P_j(u) = x$ then $\P_k(u) = \P_k(x)$.\footnote{Note that function $\P_i$ defines a \emph{forest} on $V$, where in each ``tree'' the root is a vertex $u$ with $\P_i(u) = u$, and the ``leaves'' are formed by vertices $u$ with $\P_i^{(-1)}(u) \ne u$ (that is, for which there is no $v$ with $\P_i(v) = u$).}

\begin{remark}
\label{remark:canonical-representative-function}
Equivalently, one can define $\P_1, \P_2, \dots, \P_{\Hsize}: V \rightarrow V$ recursively as follows:
\begin{itemize}
\item if $u$ is an isolated vertex in $\G[\DCH]$, then $\P_i(u) = u$ for every $i$;
\item otherwise:
    \begin{itemize}[$\diamond$]
    \item $\P_1(u) = u$ for every vertex $u \in V$;
    \item for any $i$, $2 \le i \le \Hsize$, for every $u \in V$:
        \begin{itemize}[$\star$]
        \item if $u$ is a vertex in $\HQ_i(\DCH)$, then $\P_i(u) = u$;
        \item otherwise,
            \begin{itemize}[$\ast$]
            \item if $\P_{i-1}(u)$ has color different than $\chi(v_{i-1})$\footnote{That is, $\P_{i-1}(u)$ is not in $\HQ_{i-1}(\DCH)$.}, then $\P_i(u) = \P_{i-1}(u)$;
            \item else, $\P_i(u)$ is equal to the neighbor of vertex $\P_{i-1}(u)$ in $\HQ_{i-1}(\DCH)$ with the lowest color (that is, $\P_i(u)$ is the neighbor $x$ of $\P_{i-1}(u)$ in $\HQ_{i-1}(\DCH)$ that minimizes $j$ with $\chi(x) = \chi(v_j)$).
            \end{itemize}
        \end{itemize}
    \end{itemize}
\end{itemize}
Let us explain the choice of vertex $x$ in the last case of the definition of $\P_i(u)$. First of all, the choice of $\P_i(u)$ to be a neighbor of vertex $\P_{i-1}(u)$ in $\HQ_{i-1}(\DCH)$ is to ensure that $u$ will belong to the label of the newly created edge incident to that neighbor in $\HQ_{i}(\DCH)$. The choice of the neighbor with the ``lowest color'' is to ensure that that vertex will be the first to be contracted in the later procedure of shrinking $\HQ_j(\DCH)$, and thus, during that construction, the edge containing vertex $u$ will be replaced by another edge. Therefore, our choosing $x$ ensures that if $\chi(u) = \chi(v_r)$, then
\begin{compactitem}
\item for every $i \le r$, $\P_i(u) = u$, and
\item for every $i > r$, $\P_i(u)$ is a vertex in $\HQ_{i}(\DCH)$ and there is a hyperedge $\e$ incident to vertex $\P_i(u)$ such that $u \in \e$.
\end{compactitem}
\end{remark}


\section{Completing the proof of Lemma \ref{lemma:ExistenceOfH}, and of Theorems \ref{thm:main-H-freeness} and \ref{thm:main-H-freeness-single-call}}
\label{sec:final-proof}

We are now ready to complete the proof of Lemma \ref{lemma:ExistenceOfH}, and with this of Theorems \ref{thm:main-H-freeness} and \ref{thm:main-H-freeness-single-call}.

Let $G = (V,E)$ be a simple planar graph that is $\eps$-far from $H$-free. By our analysis in the previous sections (see Lemma \ref{lemma:defining-Qi+1}), we know that we can order the vertices of $H$ \ $v_1, \dots, v_{\Hsize}$ to define the hypergraphs $\M_1, \dots, \M_{\Hsize}$, so that there are sets $\DDCH{1}, \DDCH{2}, \dots, \DDCH{\Hsize}$ of edge-disjoint colored copies of $H$ in $G$ with $\DDCH{\Hsize} \subseteq \DDCH{\Hsize-1} \subseteq \dots \subseteq \DDCH{1}$ and $|\DDCH{\Hsize}| = \Omega_{\eps,H}(|V|)$, such that for each $i$, $1 \le i \le \Hsize$, the hypergraph $\HQ_i(\DDCH{i})$ is consistent for $\DDCH{i}$.

Let us first apply Lemma \ref{lemma:transformation} to the set $\DDCH{\Hsize}$ of edge-disjoint colored copies of $H$ in $G$ to obtain a subset $\DCH \subseteq \DDCH{\Hsize}$ with $|\DCH| = \Omega_{\eps,H}(|V|)$, such that the graph $\G[\DCH]$ satisfies condition (\ref{part-a-lemma:ExistenceOfH}) of Lemma \ref{lemma:ExistenceOfH}. Therefore, we only have to show that condition (\ref{part-b-lemma:ExistenceOfH}) of Lemma \ref{lemma:ExistenceOfH} holds too, that is, we have to show that if $G = (V,E)$ is a simple planar graph that is $\eps$-far from $H$-free, then

\begin{algo}\vspace*{-0.25in}
\begin{itemize}[$\otimes$]
\item \RLBD{($\G[\DCH],H,\dg,\ld$)} finds a copy of $H$ in $\G[\DCH]$ with probability $\Omega_{\eps,H}(1)$.
\end{itemize}
\end{algo}

$\DCH$ is a set of edge-disjoint colored copies of $H$ in $G$ such that $|\DCH| = \Omega_{\eps,H}(|V|)$, and (by Claim \ref{claim:consistent-for-subset}) such that for each $i$, $1 \le i \le \Hsize$, the hypergraph $\HQ_i(\DCH)$ is consistent for $\DCH$. Let us take the canonical representative function $\P_1, \P_2, \dots, \P_{\Hsize}: V \rightarrow V$, cf. Definition \ref{def:canonical-representative-function}.

We will prove $\otimes$ by showing the following two properties (proven below as Claims \ref{claim:final-prop1} and \ref{claim:final-prop2}):

\begin{walgo}\vspace*{-0.2in}
\begin{enumerate}[1.]
\item 
    the probability that \HRLBD\,${(\HQ_{\Hsize}(\DCH), \P_{\Hsize}, \M_{\Hsize}, \Hsize^2, 1)}$ finds a copy of $\M_{\Hsize}$ is $\Omega_{\eps,H}(1)$, and
    \label{final-prop1}
\item 
    for every $i$, $1 \le i < \Hsize$, 
\begin{itemize}
\item if the probability that
    \label{final-prop2} \HRLBD\,${(\HQ_{i+1}(\DCH),\P_{i+1},\M_{i+1},\dg,\ld)}$ finds a copy of $\M_{i+1}$ is $\Omega_{\eps,H}(1)$,
\item then the probability that \HRLBD\,${(\HQ_i(\DCH),\P_i,\M_i,\Hsize \cdot \dg, 2 \ld)}$ finds a copy of $\M_i$ is $\Omega_{\eps,H}(1)$.
\end{itemize}
\end{enumerate}
\end{walgo}

Indeed, if Property \ref{final-prop1} holds, then by iterating Property \ref{final-prop2}, we have that for some $\dg^*, \ld^* = \Omega_{\eps,H}(1)$, the probability that \HRLBD\,${(\HQ_1(\DCH),\P_1,\M_1,\dg^*,\ld^*)}$ finds a copy of $\M_1$ is $\Omega_{\eps,H}(1)$. Since $\P_1$ is the identity function $\P_1(u) = u$ for every $u \in V$, and since $\HQ_1(\DCH) \equiv \G[\DCH]$, the behavior of \HRLBFS\,${(\HQ_1(\DCH),\P_1,\dg^*,\ld^*)}$ is identical to the behavior of \RLBFS\,${(\G[\DCH],\dg^*,\ld^*)}$, and further, since $\M_1 \equiv H$, the behavior of \HRLBD\,${(\HQ_1(\DCH),\P_1,\M_1,\dg^*,\ld^*)}$ is identical to the behavior of \RLBD\,${(\G[\DCH],H,\dg^*,\ld^*)}$. Therefore, we obtain that the probability that \RLBD\,${(\G[\DCH],H,\dg^*,\ld^*)}$ finds a copy of $H$ is $\Omega_{\eps,H}(1)$, what yields $\otimes$.


What remains is to prove that Properties \ref{final-prop1} and \ref{final-prop2} hold, what we do in the following two central claims, whose proofs are deferred to Section \ref{sec:claim:final-prop1-2} below.

\begin{restatable}{claim}{CFI}
\label{claim:final-prop1}
The probability that \HRLBD\,${(\HQ_{\Hsize}(\DCH), \P_{\Hsize}, \M_{\Hsize}, \Hsize^2, 1)}$ finds a copy of $\M_{\Hsize}$ is $\Omega_{\eps,H}(1)$.
\end{restatable}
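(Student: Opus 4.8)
The plan is to exploit the two facts that make this the easy base case: first, $\M_{\Hsize}$ is a hypergraph on a single vertex $v_{\Hsize}$, so every edge of $\M_{\Hsize}$ is a selfloop on $v_{\Hsize}$ (these selfloops may carry distinct labels, but a copy of $\M_{\Hsize}$ just requires witnessing, for each edge of $\M_{\Hsize}$, a correspondingly-labelled selfloop incident to one vertex of color $\chi(v_{\Hsize})$ in $\HQ_{\Hsize}(\DCH)$); and second, $|\DCH| = \Omega_{\eps,H}(|V|)$ together with consistency of $\HQ_{\Hsize}(\DCH)$ for $\DCH$ means there is a linear number of vertices of color $\chi(v_{\Hsize})$ each carrying a full set of such selfloops (one per copy of $H$ in $\DCH$ through that vertex, times the edges of $\M_{\Hsize}$). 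So with constant probability a uniformly random start vertex $v$ already \emph{is} (after applying $\P_{\Hsize}$) such a vertex $\P_{\Hsize}(v)$, and then one level of breadth-$\Hsize^2$ sampling of incident edges suffices to see, among the returned edges $\mathcal{E}$, one selfloop for each of the (at most $|E(\M_{\Hsize})| \le |E(H)|$, hence $O_{\eps,H}(1)$) distinct labelled selfloop-types of $\M_{\Hsize}$.

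First I would count the relevant start vertices. By Lemma~\ref{lemma:edge-disjoint-copies-H-free} (and the running assumption) $|\DCH| = \Omega_{\eps,H}(|V|)$; each copy of $H$ in $\DCH$ contains exactly one vertex of color $\chi(v_{\Hsize})$, and these copies are edge-disjoint, so the set $W$ of vertices $u \in V$ with $\chi(u) = \chi(v_{\Hsize})$ that lie in at least one copy of $\DCH$ has $\sum_{u \in W}(\text{number of copies of } \DCH \text{ through } u) = |\DCH| = \Omega_{\eps,H}(|V|)$. For the sampling, I only need the total mass: $\PPr{v \in V \text{ with } \P_{\Hsize}(v) \in W' }$ where $W'$ is the set of vertices of $\HQ_{\Hsize}(\DCH)$ of color $\chi(v_{\Hsize})$ carrying selfloops. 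By the definition of the canonical representative function (Definition~\ref{def:canonical-representative-function}), every vertex $x$ of $G$ that got contracted into a hyperedge whose eventual representative is such a vertex $u$ has $\P_{\Hsize}(x) = u$; in particular every vertex of every copy of $H$ in $\DCH$ maps under $\P_{\Hsize}$ into a vertex of $\HQ_{\Hsize}(\DCH)$ of a fixed color, and those of color $\chi(v_{\Hsize})$ themselves map to themselves. Hence the number of $v \in V$ with $\P_{\Hsize}(v)$ a selfloop-bearing vertex of color $\chi(v_{\Hsize})$ is at least the number of vertices lying in some copy of $\DCH$ (divided by $\Hsize$ for the color split), which is $\Omega_{\eps,H}(|V|)$. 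So a uniform random $v$ lands in this good set with probability $\Omega_{\eps,H}(1)$.

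Next I would argue the one-level exploration succeeds. Condition on $v$ with $u := \P_{\Hsize}(v)$ a selfloop-bearing vertex of color $\chi(v_{\Hsize})$; \HRLBFS\ sets $L_0 = \{u\}$ and, since $u$ is a vertex of $\HQ_{\Hsize}(\DCH)$, performs $\ld = 1$ round in which it picks $\dg = \Hsize^2$ edges incident to $u$ i.u.r. The total number of edges incident to $u$ in $\HQ_{\Hsize}(\DCH)$ is, by consistency and Remark~\ref{remark:canonical-representative-function}/Remark~\ref{remark-properties-of-safe-vertices}, at most $\Hsize^2$ times the number of copies of $H$ through $u$ (crudely: at most $|E(\M_{\Hsize})| \le |E(H)| \le \Hsize^2$ labelled selfloop-types, each appearing with multiplicity $|\DCH_{\text{through }u}|$, plus possibly selfloops that were never subdivided) — the key point being that the edges group into $O_{\eps,H}(1)$ \emph{label-classes}, one per edge of $\M_{\Hsize}$, each class occupying at least a $1/\Hsize^2$ fraction of the edges incident to $u$. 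Hence a single random incident edge falls into a prescribed class with probability $\ge 1/\Hsize^2$, and with $\dg = \Hsize^2$ independent draws and a union bound over the at most $\Hsize^2$ classes, with probability $\Omega_{\eps,H}(1)$ the sampled set $\mathcal{E}$ contains one edge of every label-class, i.e.\ contains, for every edge $\e'$ of $\M_{\Hsize}$, an edge of $\HQ_{\Hsize}(\DCH)$ corresponding to $\e'$ in the sense of Definition~\ref{def:corresponding-edges}. By Definition~\ref{def:finding-colored-copies-of-Mi} this means \HRLBD\ finds a colored copy of $\M_{\Hsize}$. Multiplying the two $\Omega_{\eps,H}(1)$ probabilities gives the claim.

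The main obstacle I anticipate is purely bookkeeping: verifying that the edges incident to a selfloop-bearing color-$\chi(v_{\Hsize})$ vertex really do partition into exactly the label-classes dictated by the edges of $\M_{\Hsize}$, with (roughly) equal multiplicities, so that the union-bound over classes goes through with a constant (not super-constant) factor. This requires invoking consistency of $\HQ_{\Hsize}(\DCH)$ for $\DCH$, the structure of the shrinking process of Section~\ref{subsec:shrinking-copies-of-H}, and the fact (Remark~\ref{remark-properties-of-safe-vertices}(i)) that for a safe vertex the incident edges split evenly among the edge-types of the corresponding vertex in $\M_i$ — here $i = \Hsize$. Everything else is a two-line second-moment-free argument: a constant-probability hit on the start vertex, then a constant-probability coupon-collector on $O_{\eps,H}(1)$ coupons with $\Hsize^2$ draws.
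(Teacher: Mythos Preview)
Your overall strategy matches the paper's: land on a non-isolated vertex of $\HQ_{\Hsize}(\DCH)$ with probability $\Omega_{\eps,H}(1)$, then run a coupon-collector on the $\sis \le \Hsize - 1$ selfloop-types with $\Hsize^2$ draws. The coupon-collector half is fine, and in fact simpler than you feared: you do not need Remark~\ref{remark-properties-of-safe-vertices}(i) at all, since every copy of $H$ in $\DCH$ through $u$ contributes exactly one copy of $\M_{\Hsize}$ at $u$, hence exactly $\sis$ selfloops, one per colored-label class. So the classes are automatically of equal size, each a $1/\sis \ge 1/\Hsize$ fraction.

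The real gap is in your start-vertex count. You correctly reduce to showing that the number of vertices lying in some copy of $\DCH$ (equivalently, the non-isolated vertices of $\G[\DCH]$) is $\Omega_{\eps,H}(|V|)$, but then you simply assert it. The identity $\sum_{u \in W}(\text{copies through } u) = |\DCH|$ and the ``divided by $\Hsize$'' remark do not yield a lower bound on the number of \emph{distinct} vertices: edge-disjoint copies can overlap heavily on vertices (think of $\Theta(n^2)$ edge-disjoint triangles packed into $K_n$, spanning only $n = O(\sqrt{|\DCH|})$ vertices). This is exactly where planarity of $G$ enters, and it is the one place in the claim where it is needed. The paper's argument is: $\G[\DCH]$ restricted to its non-isolated vertices is a simple planar graph with $|\DCH|\cdot|E(H)|$ edges (the copies are edge-disjoint), so by Fact~\ref{fact:planar_limited_edges} it has at least $\tfrac{1}{3}|\DCH|\cdot|E(H)| = \Omega_{\eps,H}(|V|)$ vertices. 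Plugging this in completes your argument.
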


\begin{restatable}{claim}{CFII}
\label{claim:final-prop2}
Let $1 \le i < \Hsize$, $\dg = \dg(\eps,H) \ge \Hsize$, $\ld = \ld(\eps,H)$, $\dg^* = \Hsize \cdot \dg$ and $\ld^* = 2 \ld$. If the probability that \HRLBD\,${(\HQ_{i+1}(\DCH),\P_{i+1},\M_{i+1},\dg,\ld)}$ finds a copy of $\M_{i+1}$ is $\Omega_{\eps,H}(1)$, then the probability that \HRLBD\,${(\HQ_i(\DCH),\P_i,\M_i,\dg^*,\ld^*)}$ finds a copy of $\M_i$ is $\Omega_{\eps,H}(1)$.
\end{restatable}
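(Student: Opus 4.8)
The plan is to couple the two hypergraph explorations so that each level of \HRLBFS{} on $\HQ_{i+1}(\DCH)$ is simulated by two consecutive levels of \HRLBFS{} on $\HQ_i(\DCH)$: the doubling of the depth ($\ld^*=2\ld$) pays for the two‑step simulation of each hyperedge traversal, and the $\Hsize$‑fold increase of the breadth ($\dg^*=\Hsize\dg$) pays for the fan‑out that such a step may need. Recall (Section~\ref{subsec:shrinking-copies-of-H}) that $\HQ_{i+1}(\DCH)$ is obtained from $\HQ_i(\DCH)$ by deleting every vertex $u$ with $\chi(u)=\chi(v_i)$ and, for each colored copy $\h\in\DCH$ through such a $u$, replacing the $\ell$ edges of $\HQ_i(\DCH)$ incident to $u$ in $\h$ --- which \emph{model} the new hyperedge (Definition~\ref{def:modeling-hyperedge-in-Hi+1}) --- by the single hyperedge $\N_i^{\h}\langle u\rangle$. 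Thus each edge of $\HQ_{i+1}(\DCH)$ is either \emph{old} (already in $\HQ_i(\DCH)$) or \emph{new} (a hyperedge with a fixed bundle of $\ell=O(\Hsize)$ modeling edges at $u$ in $\HQ_i(\DCH)$); in the latter case consistency (Definition~\ref{def:consistent-hypergraphs}) forces $u$ to be safe, so by Remark~\ref{remark-properties-of-safe-vertices} the edges of $\HQ_i(\DCH)$ at $u$ split into $\ell$ colored‑label classes of equal size (edges in one class sharing both their vertex set and their colored label), and --- by the way $v_i$ was chosen via Lemmas~\ref{lemma:small-vertices-new}--\ref{lemma:many-safe} --- $u$ has at most $6\Hsize$ distinct neighbors in $\HQ_i(\DCH)$, though its total degree may be large. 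Finally, $\deg(w)$ in $\HQ_i(\DCH)$ and in $\HQ_{i+1}(\DCH)$ differ only by a factor bounded in terms of $H$, since each bundle of at most $|E(H)|$ parallel modeling edges at a vertex $w$ becomes a single new hyperedge at $w$; this will let us couple the i.u.r.\ edge‑picks of the two explorations.

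\textbf{One‑step simulation.} Start both explorations at the same i.u.r.\ vertex $v$; conditioning on $v\in V(\HQ_{i+1}(\DCH))$ --- the only case in which the $\HQ_{i+1}(\DCH)$‑exploration contributes to the success probability --- we have $\P_{i+1}(v)=\P_i(v)=v$, so the layers $L_0$ agree. Inductively, whenever the $\HQ_{i+1}(\DCH)$‑exploration, at a reached vertex $w$, traverses an edge $\e$: if $\e$ is old we want the $\HQ_i(\DCH)$‑exploration to pick $\e$ at $w$ as well; if $\e=\N_i^{\h}\langle u\rangle$ is new we want it to pick a $w$--$u$ edge (reaching $u$ in one level) and then, from $u$, to pick a modeling edge in each of the $\ell$ colored‑label classes --- thereby reaching, in the next level, every vertex of $\N_i^{\h}\langle u\rangle$ and recording all $\ell$ edges of $\M_i$ at $v_i$ on the correct vertices (safeness makes the required neighbor of $u$ of each color unique). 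The second part succeeds with probability $1-O(\Hsize)\,e^{-\Omega(\dg)}$, a constant, since each of the $\dg^*$ picks at $u$ lands in a prescribed class with probability $1/\ell$; and the remaining ingredient is a coupling of the $\dg^*$ picks at $w$ in $\HQ_i(\DCH)$ with the $\dg$ picks at $w$ in $\HQ_{i+1}(\DCH)$, via the map sending an edge of $\HQ_i(\DCH)$ to the edge of $\HQ_{i+1}(\DCH)$ containing it (itself if old, its hyperedge if a modeling edge), for which the bounded degree ratio and the $\Hsize$‑fold surplus of picks guarantee that, with probability bounded below by a constant, every edge picked by the finer exploration at $w$ has a member of its preimage among the picks of the coarser one.

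\textbf{From one step to a whole copy.} Condition on \HRLBD\,$(\HQ_{i+1}(\DCH),\P_{i+1},\M_{i+1},\dg,\ld)$ finding a colored copy $\mathcal{C}$ of $\M_{i+1}$, which has probability $\Omega_{\eps,H}(1)$ by hypothesis. Since $\M_{i+1}$ arises from $\M_i$ by contracting the single vertex $v_i$, exactly one edge of $\mathcal{C}$ --- the one corresponding to the new hyperedge $\N_i$ --- is ``new for this step'', every other edge being old and hence present in $\HQ_i(\DCH)$; moreover $\mathcal{C}$ has only $|E(\M_{i+1})|\le|E(H)|$ edges and at most $\Hsize$ vertices, so it is discovered by the $\HQ_{i+1}(\DCH)$‑exploration through at most $O_{\eps,H}(1)$ of the per‑level steps above (one new‑hyperedge detour plus old‑edge picks). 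Adding to $\mathcal{C}$ the contracted vertex $x_i:=u$ (of color $\chi(v_i)$) together with the $\ell$ modeling edges of that new hyperedge produces a colored copy of $\M_i$ in $\HQ_i(\DCH)$ in the sense of Definition~\ref{def:finding-colored-copies-of-Mi} --- safeness of $u$ being exactly what makes $x_i$ well defined and places its incident edges on the vertices of $\mathcal{C}$ with the required colored labels. Applying the one‑step simulation along these $O_{\eps,H}(1)$ steps, each succeeding with probability bounded below by a constant, the coupled run of \HRLBD\,$(\HQ_i(\DCH),\P_i,\M_i,\dg^*,\ld^*)$ discovers this copy of $\M_i$ with probability $\Omega_{\eps,H}(1)$ times that of the $\HQ_{i+1}(\DCH)$‑exploration, i.e.\ with probability $\Omega_{\eps,H}(1)$, as required.

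\textbf{Main obstacle.} The delicate point is the uniform control of the per‑step coupling at vertices of unbounded degree: one must show that the $\Hsize$‑fold ($H$‑dependent) surplus of breadth genuinely lets the random picks of the coarser exploration ``dominate'' those of the finer one at every reached vertex $w$, irrespective of $\deg(w)$. This is precisely where the structure forced by consistency and by the choice of $v_i$ must be used --- safe vertices with $O(\Hsize)$ distinct neighbors, colored‑label classes of equal size, and the bounded ratio between degrees in $\HQ_i(\DCH)$ and $\HQ_{i+1}(\DCH)$ --- while tracking colored labels and vertex sets carefully enough that the rediscovered edges literally form a colored copy of $\M_i$ per Definition~\ref{def:finding-colored-copies-of-Mi}. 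Setting up and maintaining this coupling from level to level, across the two hypergraphs, is the technical heart of the argument.
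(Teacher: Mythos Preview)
The overall simulation strategy --- two $\HQ_i$-levels per $\HQ_{i+1}$-level, degree-ratio coupling at survived vertices, and a coupon-collector step at the contracted vertex --- matches the paper's. But there is a genuine gap in how you handle the starting vertex. You condition on $v\in V(\HQ_{i+1}(\DCH))$ and assert this is ``the only case in which the $\HQ_{i+1}(\DCH)$-exploration contributes to the success probability.'' This is not correct: \HRLBFS{} sets $L_0=\{\P_{i+1}(v)\}$, so a random $v$ with $\chi(v)\in\{\chi(v_j):j\le i\}$ (hence $v\notin V(\HQ_{i+1})$) is still mapped by $\P_{i+1}$ to a vertex of $\HQ_{i+1}(\DCH)$, and this contribution can dominate --- indeed the base case (Claim~\ref{claim:final-prop1}) relies entirely on $\P_{\Hsize}$ funneling \emph{all} non-isolated vertices of $\G[\DCH]$ onto the possibly tiny set of color-$\chi(v_{\Hsize})$ vertices. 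Under your conditioning the inductive hypothesis yields no usable lower bound on the restricted event.

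The paper closes this gap with a dedicated ``choosing starting vertex'' argument: for each $u\in V(\HQ_{i+1}(\DCH))$ one has $\P_{i+1}^{-1}(u)=\P_i^{-1}(u)\cup\bigcup_x\P_i^{-1}(x)$, where $x$ ranges over the neighbors of $u$ in $\HQ_i(\DCH)$ of color $\chi(v_i)$. If the $\HQ_i$-exploration starts at such an $x$, then safeness of $x$ forces every edge of $\HQ_i(\DCH)$ at $x$ to contain $u$, so one round of picks puts $u\in L_1$ with probability at least $\tfrac12$; summing recovers the full mass $|\P_{i+1}^{-1}(u)|/|V|$ up to a factor~$2$. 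Your coupling on a common $v$ can be repaired with exactly this observation: by Remark~\ref{remark:canonical-representative-function}, $\P_{i+1}(v)$ is always either $\P_i(v)$ or a neighbor of $\P_i(v)$ in $\HQ_i(\DCH)$, so spend the first $\HQ_i$-level reaching $\P_{i+1}(v)$ before running the per-step simulation.

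A minor inaccuracy: your assertion that the contracted vertex $u$ has at most $6\Hsize$ distinct neighbors in $\HQ_i(\DCH)$ is neither needed nor guaranteed here --- that bound was used in Lemmas~\ref{lemma:small-vertices-new}--\ref{lemma:many-safe} to \emph{construct} $\DDCH{i+1}$, but need not survive the further restriction to $\DCH\subseteq\DDCH{\Hsize}$. What does survive (Claim~\ref{claim:consistent-for-subset}) is safeness, and safeness alone, via Remark~\ref{remark-properties-of-safe-vertices}, supplies the equal-size class partition you use for the coupon-collector step.
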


With Claims \ref{claim:final-prop1} and \ref{claim:final-prop2} at hand, we obtain that Properties \ref{final-prop1} and \ref{final-prop2} hold, and therefore we can conclude the proof of the proof of Lemma \ref{lemma:ExistenceOfH}, and with this of Theorems \ref{thm:main-H-freeness} and \ref{thm:main-H-freeness-single-call}.
\qed


\subsection{Proofs of central Claims \ref{claim:final-prop1} and \ref{claim:final-prop2} --- completing the proof of Lemma \ref{lemma:ExistenceOfH}}
\label{sec:claim:final-prop1-2}

In this section we give proofs of two our central results on which relies our proof of Lemma \ref{lemma:ExistenceOfH} (and with this of Theorems \ref{thm:main-H-freeness} and \ref{thm:main-H-freeness-single-call}): Claims \ref{claim:final-prop1} and \ref{claim:final-prop2}.

\medskip

We begin with the proof of Claim \ref{claim:final-prop1}.

\CFI* 
\junk{
\begin{claim}
\label{claim:final-prop1}
The probability that \HRLBD\,${(\HQ_{\Hsize}(\DCH), \P_{\Hsize}, \M_{\Hsize}, \Hsize^2, 1)}$ finds a copy of $\M_{\Hsize}$ is $\Omega_{\eps,H}(1)$.
\end{claim}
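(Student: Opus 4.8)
The plan is to exploit that $\M_{\Hsize}$ is essentially trivial: it is a single vertex $v_{\Hsize}$ carrying a short list of labelled selfloops, so a depth-$1$ exploration of breadth $\Hsize^2$ recovers it by a coupon‑collector argument.

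First I would pin down the structure of $\M_{\Hsize}$. Each of the $\Hsize-1$ contraction steps of Section \ref{subsec:shrinking-H} creates exactly one new hyperedge; every original edge of $H$ is destroyed (at the step contracting its lower‑indexed endpoint); and a created hyperedge $\N_j$ survives to $\M_{\Hsize}$ if and only if $\N_j=\{v_{\Hsize}\}$ already at the moment it is created. Hence $\M_{\Hsize}$ is a single vertex $v_{\Hsize}$ with $m$ selfloops, $1\le m\le\Hsize-1$, and, since distinct hyperedges carry disjoint nonempty labels while $\chi$ is injective on $V(H)$, these $m$ selfloops have pairwise distinct colored labels $\tau_1,\dots,\tau_m$. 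Transferring this through the consistency framework (Remark \ref{remark:independent-shrinking}, Claims \ref{claim:final-hypergraph-invariant} and \ref{claim:consistent-for-subset}), $\HQ_{\Hsize}(\DCH)$ is an edge‑disjoint union of $|\DCH|$ copies of $\M_{\Hsize}$, one per copy $\h\in\DCH$: the single vertex of the $\h$-copy is the vertex $z_\h\in V(\h)$ of color $\chi(v_{\Hsize})$, it receives exactly one selfloop of each colored label $\tau_1,\dots,\tau_m$, and selfloops of different copies are pairwise distinct.

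Next I would lower‑bound the probability that \HRLBFS\ starts usefully. Recall $|\DCH|=\Omega_{\eps,H}(|V|)$; since $\G[\DCH]$ is planar with $|\DCH|\cdot|E(H)|=\Omega_{\eps,H}(|V|)$ edges, it has $\Omega_{\eps,H}(|V|)$ non‑isolated vertices; call this set $V'$. By Definition \ref{def:canonical-representative-function} and Remark \ref{remark:canonical-representative-function}, for every $u\in V'$ the representative $\P_{\Hsize}(u)$ is a vertex of $\HQ_{\Hsize}(\DCH)$ incident to a hyperedge, hence equal to $z_{\h'}$ for some $\h'\in\DCH$. So with probability at least $|V'|/|V|=\Omega_{\eps,H}(1)$ the vertex $v$ picked by \HRLBFS\,$(\HQ_{\Hsize}(\DCH),\P_{\Hsize},\Hsize^2,1)$ satisfies $\P_{\Hsize}(v)=z_{\h'}$ for such a $z_{\h'}$, and then \HRLBFS\ returns $\Hsize^2$ independent uniformly random edges incident to $z_{\h'}$. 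Conditioned on this, $z_{\h'}$ is the center of $k\ge1$ edge‑disjoint copies of $\M_{\Hsize}$, so it carries exactly $km$ selfloops, $k$ of each colored label, and each query yields a $\tau_j$-labelled edge with probability $1/m$. A union bound gives that some $\tau_j$ is missed with probability at most $m(1-1/m)^{\Hsize^2}\le\Hsize(1-1/\Hsize)^{\Hsize^2}\le\Hsize e^{-\Hsize}\le e^{-1}$ (using $1\le m\le\Hsize$ and that $m\mapsto m(1-1/m)^{N}$ is nondecreasing; for $m=1$ this probability is $0$); otherwise the returned edge set contains, for each edge of $\M_{\Hsize}$, a selfloop on $z_{\h'}$ with the matching colored label, i.e.\ a copy of $\M_{\Hsize}$ by Definition \ref{def:finding-colored-copies-of-Mi}. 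Altogether the success probability is at least $\Omega_{\eps,H}(1)\cdot(1-e^{-1})=\Omega_{\eps,H}(1)$.

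The only non‑routine part is the first step: establishing the exact shape of $\M_{\Hsize}$ — in particular the bound $m\le\Hsize-1$ on the number of colored‑label types, which is exactly what makes breadth $\Hsize^2$ sufficient for coupon collection — and propagating it, through the consistency and shadow‑graph machinery of Section \ref{subsec:shrinking-copies-of-H}, to the claim that $\HQ_{\Hsize}(\DCH)$ splits into $|\DCH|$ edge‑disjoint copies of $\M_{\Hsize}$ with one selfloop of each label at every center (and verifying that \HRLBFS\ does launch its exploration from $\P_{\Hsize}(v)$). Everything downstream is an elementary probability estimate.
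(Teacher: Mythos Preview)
Your proposal is correct and follows essentially the same approach as the paper: identify $\M_{\Hsize}$ as a single vertex with at most $\Hsize-1$ labelled selfloops, use planarity of $\G[\DCH]$ to lower-bound the probability that \HRLBFS\ starts at a non-isolated center, and finish with a coupon-collector estimate showing that $\Hsize^2$ random incident edges collect all label types with probability $\Omega_{\eps,H}(1)$. Your write-up is in fact slightly more explicit than the paper's (e.g., the union bound $m(1-1/m)^{\Hsize^2}\le e^{-1}$ where the paper simply invokes coupon collection), but the structure and ideas are the same.
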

}

\begin{proof}
\newcommand{\ind}{\ensuremath{\mathfrak{ind}}} 
\newcommand{\sis}{\ensuremath{\mathfrak{s}}} 
%
Our construction (see Section \ref{subsec:shrinking-H}) ensures that $\M_{\Hsize}$ has some number $\sis$ of hyperedges $\e_1, \dots, \e_{\sis}$, each $\e_j$ consisting of a single vertex $v_{\Hsize}$, and with the labels of edges $\e_1, \dots, \e_{\sis}$ defining a partition of $\{v_1, \dots, v_{\Hsize - 1}\}$ (that is, $\bigcup_{j=1}^{\sis} \lab(\e_j) = \{v_1, \dots, v_{\Hsize - 1}\}$ and $\lab(\e_{j_1}) \cap \lab(\e_{j_2}) = \emptyset$ for any $j_1 \ne j_2$).

Similarly, our construction (see Section \ref{subsec:shrinking-copies-of-H}) ensures that $\HQ_{\Hsize}(\DCH)$ contains $\sis \cdot |\DCH|$ hyperedges, each hyperedge $\e$ in $\HQ_{\Hsize}(\DCH)$ consisting of a single vertex of color $\chi(v_{\Hsize})$. Furthermore, each such hyperedge $\e$ corresponds (cf. Definition \ref{def:corresponding-edges}) to a copy of one of the hyperedges $\e_1, \dots, \e_{\sis}$ from $\M_{\Hsize}$; let us denote by $\ind(\e)$ the index of the copy $\e_{\ind(\e)}$ corresponding to $\e$. Notice that $\clab(\e) = \clab(\e_{\ind(\e)})$ and $|\{\e \text{ in } \HQ_{\Hsize}(\DCH): \ind(\e) = j\}| = |\DCH|$ for any $j$, $1 \le j \le \sis$.

Let $\dg^* = \Hsize^2$. In view of the comments and the construction above, by Definition \ref{def:finding-colored-copies-of-Mi}, \HRLBD\,${(\HQ_{\Hsize}(\DCH), \P_{\Hsize}, \M_{\Hsize}, \dg^*, 1)}$ finds a copy of $\M_{\Hsize}$ if,
\begin{enumerate}[(1)]
\item in the call to \HRLBFS\,${(\HQ_{\Hsize}(\DCH), \P_{\Hsize}, \dg^*, 1)}$, it selects the starting vertex $u = \P_{\Hsize}(v)$ to be non-isolated in $\HQ_{\Hsize}(\DCH)$, and
\item vertex $u$ chooses among its $\dg^*$ random incident edges all copies of $\e_1, \dots, \e_{\sis}$.
\end{enumerate}

Our definition of $\P$ ensures that $\P_{\Hsize}(x)$ is a non-isolated vertex in $\HQ_{\Hsize}(\DCH)$ if and only if $x$ is a non-isolated vertex in $\G[\DCH]$. 
Therefore we only have to show that $\G[\DCH]$ has $\Omega_{\eps,H}(|V|)$ non-isolated vertices. Let $G^*[\DCH]$ be the subgraph of $\G[\DCH]$ induced by non-isolated vertices. Since $G^*[\DCH]$ consists of $|\DCH|$ edge-disjoint copies of $H$, $G^*[\DCH]$ has $|\DCH| \cdot |E(H)|$ edges. Since $G^*[\DCH]$ is a subgraph of a simple planar graph, $G^*[\DCH]$ is a simple planar graph too, and thus must have at least $\frac13 |\DCH| \cdot |E(H)|$ vertices (cf. Fact \ref{fact:planar_limited_edges}). Therefore, since $|\DCH| = \Omega_{\eps,H}(|V|)$, we conclude that $G^*[\DCH]$ has $\Omega_{\eps,H}(|V|)$ vertices, or equivalently, that $\G[\DCH]$ has $\Omega_{\eps,H}(|V|)$ non-isolated vertices. Therefore, with probability $\Omega_{\eps,H}(1)$ \HRLBFS\,${(\HQ_{\Hsize}(\DCH), \P_{\Hsize}, \dg^*, 1)}$ selects a non-isolated as the starting vertex.

Next, let us condition on the fact that the starting vertex $u = \P_{\Hsize}(v)$ is non-isolated in $\HQ_{\Hsize}(\DCH)$. Analogously to the classic coupon collector's problem, we can argue that if $u$ selects at least $\sis^2$ (in fact, $\sis \ln (1+\sis)$ would suffice too) times incident edges i.u.r., then with probability $\Omega_{\eps,H}(1)$, the set $\mathcal{E}_{\ell,u}$ will contain $\sis$ hyperedges $\e'_1, \dots, \e'_{\sis}$ with $\ind(\e'_j)=j$ for every $j$, $1 \le j \le \sis$. In this case, the set $\mathcal{E}_{\ell,u}$ will contain a copy of $\M_{\Hsize}$.

By our arguments above, this yields the claim.
\end{proof}

We now move to the proof of Claim \ref{claim:final-prop2}.

\CFII* 
\junk{
\begin{claim}
\label{claim:final-prop2}
Let $1 \le i < \Hsize$, $\dg = \dg(\eps,H) \ge \Hsize$, $\ld = \ld(\eps,H)$, $\dg^* = \Hsize \cdot \dg$ and $\ld^* = 2 \ld$. If the probability that \HRLBD\,${(\HQ_{i+1}(\DCH),\P_{i+1},\M_{i+1},\dg,\ld)}$ finds a copy of $\M_{i+1}$ is $\Omega_{\eps,H}(1)$, then the probability that \HRLBD\,${(\HQ_i(\DCH),\P_i,\M_i,\dg^*,\ld^*)}$ finds a copy of $\M_i$ is $\Omega_{\eps,H}(1)$.
\end{claim}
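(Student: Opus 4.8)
The plan is to prove Claim~\ref{claim:final-prop2} by a \emph{simulation/coupling argument}: we run \HRLBD\ on $\HQ_i(\DCH)$ with breadth $\Hsize\dg$ and depth $2\ld$, and couple it with a run of \HRLBD\ on $\HQ_{i+1}(\DCH)$ with breadth $\dg$ and depth $\ld$, so that whenever the latter traverses a hyperedge created by the contraction of $v_i$, the former takes a two‑step ``detour'' reconstructing the contracted vertex together with its entire neighbourhood. Below I call these the \emph{$\HQ_i$-run} and the \emph{$\HQ_{i+1}$-run}.

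\textbf{Structural preliminaries.}
Recall (Section~\ref{subsec:shrinking-H}) that $\M_{i+1}$ is $\M_i$ with $v_i$ and its incident edges $\e_1,\dots,\e_\ell$ removed and a single new hyperedge $\N_i=\bigcup_j\e_j\setminus\{v_i\}$ added; since $\chi$ is injective on $V(H)$, $\N_i$ is the \emph{unique} edge of $\M_{i+1}$ with colored label $\clab(\N_i)$, and $\ell<2\Hsize$ (an edge of $\M_i$ at $v_i$ is either an original edge of $H$ or a hyperedge created at an earlier step). Correspondingly (Section~\ref{subsec:shrinking-copies-of-H}), $\HQ_{i+1}(\DCH)$ arises from $\HQ_i(\DCH)$ by contracting \emph{every} vertex $u$ of colour $\chi(v_i)$; each such $u$ is \emph{safe} for $\DCH$ because $\HQ_{i+1}(\DCH)$ is consistent for $\DCH$ (Claim~\ref{claim:consistent-for-subset}, Definition~\ref{def:consistent-hypergraphs}), so by Remark~\ref{remark-properties-of-safe-vertices} the incident edges of $u$ in $\HQ_i(\DCH)$ split into $\ell<2\Hsize$ equal‑size groups (group $j$ having a common vertex set and colored label $\clab(\e_j)$) and $u$ has fewer than $\Hsize$ distinct neighbours $\N_i^{\h}\langle u\rangle$, a set independent of $\h$. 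Every new edge of $\HQ_{i+1}(\DCH)$ has the form $\N_i^{\h}\langle u\rangle$, has colored label $\clab(\N_i)$, and is \emph{modeled} (Definition~\ref{def:modeling-hyperedge-in-Hi+1}) by the $\ell$ edges of $u$ belonging to the copy $\h$ (one per group); all other edges of $\HQ_{i+1}(\DCH)$ are edges of $\HQ_i(\DCH)$ too. Hence a copy of $\M_i$ can be built from a copy $C$ of $\M_{i+1}$ by taking the unique edge $\e^*$ of $C$ with colored label $\clab(\N_i)$ --- which, being of the form $\N_i^{\h}\langle u\rangle$, exposes a vertex $u$ (put $x_i:=u$) --- and adjoining, for each group $j$, one incident edge of $u$ from group $j$; by safeness $\N_i^{\h}\langle u\rangle$ is exactly the set of vertices of $C$ incident to $\e^*$ and $u$ has a unique neighbour of each colour, so these edges sit on the vertex sets demanded by Definition~\ref{def:finding-colored-copies-of-Mi}.

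\textbf{The simulation.}
Since $V(\HQ_{i+1}(\DCH))\subseteq V(\HQ_i(\DCH))$ and $\P_i,\P_{i+1}$ agree there (Remark~\ref{remark:canonical-representative-function}), whenever the $\HQ_{i+1}$-run's exploration is non‑empty the $\HQ_i$-run starts from the same root. We maintain the invariant that every vertex reached by the $\HQ_{i+1}$-run at depth $t$ is reached by the $\HQ_i$-run at depth $\le 2t$. Whenever a frontier vertex $y$ of the $\HQ_{i+1}$-run (at depth $t\le\ld-1$) selects an incident edge $\e$: if $\e$ is also an edge of $\HQ_i(\DCH)$, then among its $\Hsize\dg$ picks the $\HQ_i$-run at $y$ selects $\e$; if $\e=\N_i^{\h}\langle u\rangle$ is new, the $\HQ_i$-run at $y$ selects one of the $<2\Hsize$ incident edges of $y$ modeling $\e$, reaching $u$ at depth $\le 2t+1$, and from $u$ selects its $\Hsize\dg$ edges. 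Because $u$ is safe (at most $2\Hsize$ equal‑size groups) and $\dg\ge\Hsize$, a coupon‑collector estimate gives that with probability $\ge 1-2\Hsize(1-1/(2\Hsize))^{\Hsize\dg}\ge 1-2\Hsize e^{-\dg/2}=\Omega_{\eps,H}(1)$ these picks meet all $\ell$ groups, so all of $\N_i^{\h}\langle u\rangle$ is reached at depth $\le 2(t+1)$, together with an edge of each colored label $\clab(\e_1),\dots,\clab(\e_\ell)$. This preserves the invariant up to depth $2\ld$, so whenever the $\HQ_{i+1}$-run finds a copy $C$ of $\M_{i+1}$ (and all detours along the way succeed) the $\HQ_i$-run finds the copy of $\M_i$ assembled from $C$ as above.

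\textbf{Probability bookkeeping and the main obstacle.}
A copy $C$ of $\M_{i+1}$ is connected with at most $\Hsize$ vertices and $O_{\eps,H}(1)$ edges, so a ``witness'' that the $\HQ_{i+1}$-run finds $C$ --- the depth‑$\le\ld$ breadth‑bounded BFS paths to the vertices of $C$, plus the individual picks realizing the edges of $C$ --- involves only $O_{\eps,H}(1)$ of the run's edge‑picks and only $O_{\eps,H}(1)$ traversals of new hyperedges; hence the $\HQ_i$-run needs to reproduce only $O_{\eps,H}(1)$ picks and perform only $O_{\eps,H}(1)$ detours. A standard computation shows that for a prescribed incident edge $\e$ of a vertex $y$ in $\HQ_{i+1}(\DCH)$, the probability that some one of $\Hsize\dg$ uniformly random incident edges of $y$ in $\HQ_i(\DCH)$ models $\e$ (the set of edges modeling $\e$ has size in $[1,2\Hsize]$, and $y$'s incident‑edge set in $\HQ_i(\DCH)$ is at most $2\Hsize$ times its incident‑edge set in $\HQ_{i+1}(\DCH)$) is at least $\frac{1}{2}$ the probability that some one of $\dg$ uniformly random incident edges of $y$ in $\HQ_{i+1}(\DCH)$ equals $\e$. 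As all picks in a run of \HRLBD\ are mutually independent, reproducing a fixed witness together with all detours has probability at least $(\text{const})^{O_{\eps,H}(1)}=\Omega_{\eps,H}(1)$ times the probability of that witness; since the $\HQ_{i+1}$-run visits only $O_{\eps,H}(1)$ vertices there are at most $O_{\eps,H}(1)$ candidate copies $C$ (and witnesses), and combining the maximum over these with the hypothesis (the $\HQ_{i+1}$-run finds $\M_{i+1}$ with probability $\Omega_{\eps,H}(1)$) gives that the $\HQ_i$-run finds $\M_i$ with probability $\Omega_{\eps,H}(1)$, which is the claim. The main obstacle is the coupling itself: the two hypergraphs differ precisely around colour $\chi(v_i)$, so a pick in one has no literal counterpart in the other and the two processes must be compared through the modeling relation with bounded distortion; and, crucially, the hidden contracted vertex's neighbourhood must be recovered by $O_{\eps,H}(1)$ random picks with probability bounded below by a constant depending only on $H$ --- exactly what consistency of $\HQ_{i+1}(\DCH)$ guarantees, and what in turn rests on the shadow‑graph Lemma~\ref{lemma:central-small-degrees} providing few distinct neighbours and equal‑size edge‑groups.
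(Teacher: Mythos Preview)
Your overall strategy --- simulate each step of the $\HQ_{i+1}$-run by at most two steps of the $\HQ_i$-run, replacing each traversal of a ``new'' hyperedge by a detour through the contracted safe colour-$\chi(v_i)$ vertex and a coupon-collector estimate to hit all of its $<2\Hsize$ edge-groups, and then assemble the copy of $\M_i$ from the modeling edges --- is exactly the paper's approach, and your structural bookkeeping (safeness, modeling edges, degree ratio $\le\Hsize$ between the two hypergraphs) is correct.

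There is, however, a genuine gap in your handling of the \emph{starting vertex}. You assert that because $\P_i$ and $\P_{i+1}$ agree on $V(\HQ_{i+1}(\DCH))$, ``the $\HQ_i$-run starts from the same root'' as the $\HQ_{i+1}$-run. The premise is true but the conclusion does not follow: the root is $\P_j(v)$ for $v$ uniform in all of $V$, not in $V(\HQ_{i+1}(\DCH))$, and whenever $x:=\P_i(v)$ has colour $\chi(v_i)$ we have $\P_i(v)=x\ne\P_{i+1}(v)$. This event is not negligible; indeed
\[
|\P_{i+1}^{-1}(u)|\;=\;|\P_i^{-1}(u)|\;+\!\!\sum_{\substack{x\ \text{adj.\ to }u\ \text{in }\HQ_i(\DCH)\\ \chi(x)=\chi(v_i)}}\!\!|\P_i^{-1}(x)|,
\]
so the extra mass that $\P_{i+1}$ puts on $u$ comes precisely from such $x$. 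The paper treats this case explicitly: $x$ is safe, $u:=\P_{i+1}(v)$ is one of its fewer-than-$\Hsize$ neighbours, and every copy of $\M_i$ at $x$ contributes at least one edge containing $u$; hence with $\dg^*=\Hsize\dg$ picks from $x$ the $\HQ_i$-run places $u$ in $L_1$ with probability at least $\tfrac12$. So the correct statement is $u\in L'_0\cup L'_1$, and your depth invariant should read ``depth $\le 2t+1$'' rather than ``$\le 2t$''. The repair is exactly the detour argument you already use for new hyperedges, but it must be invoked here too; the justification you gave for coinciding roots is a non-sequitur.
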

}

\begin{proof}
\newcommand{\sis}{\ensuremath{\mathfrak{s}}} 
\newcommand{\ris}{\ensuremath{\mathfrak{r}}} 
%
Let us refer to Definition \ref{def:finding-colored-copies-of-Mi} for the meaning of algorithm \HRLBD\,${(\HQ_{s}(\DCH),\P_{s},\M_{s},\dg',\ld')}$ (and thus also of \HRLBFS\,${(\HQ_{s}(\DCH),\P_{s},\dg',\ld')}$) finding a colored copy of $\M_{r}$.

The proof relies on two basic properties that hold with probability $\Omega_{\eps,H}(1)$:
\begin{itemize}
\item that a single step of \HRLBFS\,${(\HQ_{i+1}(\DCH),\P_{i+1},\dg,\ld)}$ can be simulated by 2 steps of \HRLBFS\,${(\HQ_{i}(\DCH),\P_{i},\dg^*,\ld^*)}$ with $\dg^* = \Hsize \cdot \dg$ and $\ld^* = 2 \ld$, and
\item that if \HRLBFS\,${(\HQ_{i+1}(\DCH),\P_{i+1},\dg,\ld)}$ starts at a vertex $u$, then the same vertex $u$ will be processed by \HRLBFS\,${(\HQ_{i}(\DCH),\P_{i},\dg^*,\ld^*)}$ in $L_0 \cup L_1$ {\small (i.e., in one of the first two rounds)}.
\end{itemize}

Once these two claims hold, the proof of Claim \ref{claim:final-prop2} follows immediately.

We begin with showing that a single step of \HRLBFS\,${(\HQ_{i+1}(\DCH),\P_{i+1},\dg,\ld)}$ can be simulated by 2 steps of \HRLBFS\,${(\HQ_{i}(\DCH),\P_{i},\dg^*,\ld^*)}$.

We begin with two auxiliary definitions.
For any pair of edges $\e$ and $\e'$, we say \emph{$\e$ and $\e'$ are semi-equivalent} if their vertex sets are the same and their colored labels are the same.
Let $\e$ be an edge in $\HQ_{i+1}(\DCH)$ that is modeled by edges $\e_1, \dots, e_{\ris}$ in $\HQ_{i}(\DCH)$ (cf. Definition \ref{def:modeling-hyperedge-in-Hi+1}). Then any $\ris$ edges $\e_1', \dots, e_{\ris}'$ in $\HQ_{i}(\DCH)$ are called \emph{sub-equivalent to $\e$} if for every $1 \le j \le \ris$, edges $\e_j$ and $\e_j'$ are semi-equivalent.

The first definition relates to the scenario when \HRLBD\,${(\HQ,\P,\M_j,\dg,\ld)}$ finds a colored copy of $\M_j$ in $\HQ$ that contains edge $\e$ in $\HQ$. In that case, we claim that the algorithm would have found a copy of $\M_j$ also if instead of using edge $\e$, it used any edge semi-equivalent to $\e$. The second definition is used to describe the scenario when \HRLBD\,${(\HQ_{i+1}(\DCH),\P_{i+1},\M_{i+1},\dg,\ld)}$ finds a colored copy of $\M_{i+1}$ by finding edges $\mathcal{E}$ in $\HQ_{i+1}(\DCH)$ matching $\M_{i+1}$. In that case, to find a colored copy of $\M_{i}$, it is enough that \HRLBD\,${(\HQ_{i}(\DCH),\P_{i},\M_{i},\dg^*,\ld^*)}$ finds only edges $\mathcal{E}'$ such that for every $\e \in \mathcal{E}$, $\mathcal{E}'$ contains edges $\e_1', \dots, e_{\sis}'$ in $\HQ_{i}(\DCH)$ that are sub-equivalent to $\e$.

Let us consider a step of creating set $L_{\ell}$ in \HRLBFS\,${(\HQ_{i+1}(\DCH),\P_{i+1},\dg,\ld)}$, and let $u$ be a vertex in $L_{\ell-1}$ with incident edge $\e$. Let $\e$ belong to a copy $\h_{\e}$ of $\M_{i+1}$ in $\HQ_{i+1}(\DCH)$ and let $\widehat{\e}$ be the corresponding edge in $\M_{i+1}$. By our construction, edge $\e$ was either already present in $\HQ_{i}(\DCH)$, or is a result of a contraction in $\HQ_{i}(\DCH)$ of a vertex $x$ with $\chi(x) = \chi(v_i)$. In the latter case, $\e$ is equal to $\N_i^{\h_{\e}}\langle x \rangle$, the set of neighbors of $x$ in $\h_{\e}$ (in $\HQ_i(\DDCH{i})$) other than $x$.

In \HRLBFS\,${(\HQ_{i+1}(\DCH),\P_{i+1},\dg,\ld)}$, when vertex $u$ selects $\dg$ incident edges i.u.r., the probability that $u$ chooses $\e$ among its $\dg$ incident edges in \HRLBFS\,${(\HQ_{i+1}(\DCH),\P_{i+1},\dg,\ld)}$ is $\mathfrak{p}_{u,\e} = 1 - (1 - 1/\deg_{\HQ_{i+1}(\DCH)}(u))^{\dg}$, where $\deg_{\HQ_{i+1}(\DCH)}(u)$ is the number of edges incident to vertex $u$ in $\HQ_{i+1}(\DCH)$.

If edge $\e$ was already present in $\HQ_{i}(\DCH)$, then the probability that $u$ chooses $\e$ among its $\dg$ incident edges in \HRLBFS\,${(\HQ_{i}(\DCH),\P_{i},\dg^*,\ld^*)}$ is equal to $1 - (1 - 1/\deg_{\HQ_{i}(\DCH)}(u))^{\dg^*}$. Next, we notice that for any vertex $x$ in $\HQ_{i+1}(\DCH)$, $\deg_{\HQ_{i+1}(\DCH)}(u) \le \deg_{\HQ_{i}(\DCH)}(u) \le \Hsize \deg_{\HQ_{i+1}(\DCH)}(u)$. (Indeed, for any colored copy $\h$ of $H$ in $\DCH$ that contains vertex $x$, if we contract in $\h$ a neighbor of $x$ in $\HQ_{i}(\DCH)$, then we remove up to $\Hsize$ edges from $\HQ_{i}(\DCH)$ and add exactly one new edge.) This implies that with our setting $\dg^* = \Hsize \cdot \dg$, we have $1 - (1 - 1/\deg_{\HQ_{i}(\DCH)}(u))^{\dg^*} \ge 1 - (1 - 1/(\Hsize \cdot \deg_{\HQ_{i+1}(\DCH)}(u)))^{\Hsize \cdot \dg} = \Omega_{\eps,H}(1 - (1 - 1/\deg_{\HQ_{i+1}(\DCH)}(u))^{\dg})$. \footnote{To see this, think about the following experiment. Choosing $\e$ in $\HQ_{i+1}(\DCH)$ is like choosing one out of $\deg_{\HQ_{i+1}(\DCH)}(u)$ incident edges, and repeating it $\dg$ times; choosing $\e$ in $\HQ_{i}(\DCH)$ is like choosing one out of up to $\Hsize \cdot \deg_{\HQ_{i+1}(\DCH)}(u)$ incident edges, and repeating it $\dg^*$ times. Now, to choose $\e$ in $\HQ_{i}(\DCH)$ we can also split all edges incident to $u$ in $\HQ_{i}(\DCH)$ into $\deg_{\HQ_{i+1}(\DCH)}(u)$ groups, each group of size approximately $\deg_{\HQ_{i}(\DCH)}(u)/\deg_{\HQ_{i+1}(\DCH)}(u)$. Then, the probability that we will choose an edge from the same group as $\e$ is $\mathfrak{p}_{u,\e}$ (approximately, because of rounding) the same as the probability that we will choose edge $\e$ in $\HQ_{i+1}(\DCH)$. Therefore, with probability at most $1/\Hsize$, we would then choose edge $\e$ in $\HQ_{i}(\DCH)$. If we repeat this $\Hsize$ time, we will get probability $\Omega_{\eps,H}(\mathfrak{p}_{u,\e})$. (Notice that we could also be happy with the probability $\mathfrak{p}_{u,\e}/\Hsize$, since this is $\Omega_{\eps,H}(\mathfrak{p}_{u,\e})$.)}
%
%
Therefore, we can conclude that:
\begin{itemize}[\textbf{Case~1:}]
\item if edge $\e$ is present in $\HQ_{i}(\DCH)$ and in \HRLBFS\,${(\HQ_{i+1}(\DCH),\P_{i+1},\dg,\ld)}$, vertex $u$ selects $\e$ among its $\dg$ incident edges with probability $\mathfrak{p}_{u,\e}$, then in \HRLBFS\,${(\HQ_{i}(\DCH),\P_{i},\dg^*,\ld^*)}$, vertex $u$ selects $\e$ among its $\dg^*$ incident edges with probability $\Omega_{\eps,H}(\mathfrak{p}_{u,\e})$.
\end{itemize}

The case when edge $\e$ is not present in $\HQ_{i}(\DCH)$ and has been obtained as a contraction of vertex $x$ with $\chi(x) = \chi(v_i)$, with $\e = \N_i^{\h_{\e}}\langle x \rangle$, is more complicated.

Since $\HQ_{i+1}(\DCH)$ is consistent for $\DCH$, vertex $x$ is safe with respect to $\DCH$ and $\HQ_{i}(\DCH)$.
Let $x$ be incident to $\deg_{\HQ_{i}(\DCH)}(x)$ edges in $\HQ_i(\DDCH{i})$ and note that $\chi(x) = \chi(v_i)$.
By Remark \ref{remark-properties-of-safe-vertices}, we can group edges incident to $x$ in $\HQ_i(\DDCH{i})$ into $\ris$ groups of \emph{the same size} each (equal to $\deg_{\HQ_{i}(\DCH)}(x)/\ris$), each group corresponding to a copy of one of the $\ris$ edges incident to $v_i$ in $\M_i$, any two edges from the same group being semi-equivalent.

After contracting vertex $x$, we will create $\sis = \deg_{\HQ_{i}(\DCH)}(x)/\ris$ new edges $\e_1, \dots, \e_{\sis}$ in $\HQ_{i+1}(\DCH)$, each new edge with the same vertex set $\N_i\langle x\rangle$ that correspond to the set of neighbors of $x$ in $\HQ_i(\DDCH{i})$, and having the same colored label. Thus all new edges $\e_1, \dots, \e_{\sis}$ are semi-equivalent. Furthermore, any $\e_1', \dots, \e_{\ris}'$ incident to $x$ in $\HQ_i(\DDCH{i})$ that are from $\ris$ different groups are sub-equivalent to every edge in $\e_1, \dots, \e_{\sis}$.

We will compare the probability that after arriving at vertex $u$, \HRLBFS\,${(\HQ_{i+1}(\DCH),\P_{i+1},\dg,\ld)}$ visits any of the edges $\e_1, \e_2, \dots, \e_{\sis}$, with the probability that after arriving at $u$, algorithm \HRLBFS\,${(\HQ_{i}(\DCH),\P_{i},\dg^*,\ld^*)}$ visits in $\HQ_{i}(\DCH)$ \ $\ris$ edges that are incident to $x$ in $\HQ_i(\DDCH{i})$ and that are from $\ris$ different groups (and hence are sub-equivalent to every edge in $\e_1, \dots, \e_{\sis}$).

In \HRLBFS, when vertex $u$ selects $\dg$ incident edges i.u.r., the probability that it chooses at least one of the edges $\e_1, \dots, \e_{\sis}$ among its $\dg$ incident edges in \HRLBFS\,${(\HQ_{i+1}(\DCH),\P_{i+1},\dg,\ld)}$ is equal to $\mathfrak{p}_{i+1} = 1 - (1-\sis/\deg_{\HQ_{i+1}(\DCH)}(u))^{\dg}$.

Let us compare it to the probability that in \HRLBFS\,${(\HQ_{i}(\DCH),\P_{i},\dg^*,\ld^*)}$, when vertex $u$ selects $\dg$ incident edges i.u.r. then one of these edges is incident to vertex $x$, \emph{and} when in \HRLBFS\,${(\HQ_{i}(\DCH),\P_{i},\dg^*,\ld^*)}$ vertex $x$ selects $\dg$ incident edges i.u.r. then at least one edge from each of the $\ris$ groups of edges incident to $x$ in $\HQ_{i}(\DCH)$ is chosen\footnote{Let us notice that we do not assume that $x$ will be processed in the \emph{next round} in \HRLBFS\,${(\HQ_{i}(\DCH),\P_{i},\dg^*,\ld^*)}$, after vertex $u$ is processed. This is because it is possible that vertex $x$ has been processed before vertex $u$, for example, as the very first vertex in the call to \HRLBFS\,${(\HQ_{i}(\DCH),\P_{i},\dg^*,\ld^*)}$. Our arguments imply that both $u$ and $x$ will be processed (in the way we want them to be processed) \emph{not later} than in the next round.}. The first probability, that one of the incident edges selected by $u$ is incident to $x$, is equal to $\mathfrak{p}_i \ge 
1 - (1-\sis/\deg_{\HQ_{i}(\DCH)}(u))^{\dg^*}$, since the number of edges containing both $u$ and $x$ in $\HQ_{i}(\DCH)$ is at least $\sis$. 
To estimate the second probability, similarly as we were already arguing in the proof of Claim \ref{claim:final-prop1} and analogously to the classic coupon collector's problem, if $x$ selects at least $\ris^2$ (in fact, $\ris \ln (1+\ris)$ would suffice too) times incident edges i.u.r. (and we have $\dg^* \ge \Hsize^2$), then with probability $\Omega_{\eps,H}(1)$, the corresponding set $\mathcal{E}_{\cdot,x}$ will contain at least one edge from each of the $\ris$ groups of edges incident to $x$ in $\HQ_{i}(\DCH)$. Therefore, in summary, with probability $\Omega_{\eps,H}(\mathfrak{p}_i)$, if \HRLBFS\,${(\HQ_{i}(\DCH),\P_{i},\dg^*,\ld^*)}$ visits vertex $u$, then the algorithm will visit (until at most two rounds later) edges $\e_1', \dots, \e_{\ris}'$ that are sub-equivalent to edges $\e_1, \dots, \e_{\sis}$.

Now we only have to match the probabilities of these events in \HRLBFS\,${(\HQ_{i}(\DCH),\P_{i},\dg^*,\ld^*)}$ and in \HRLBFS\,${(\HQ_{i+1}(\DCH),\P_{i+1},\dg,\ld)}$. Since, as we were arguing above, $\deg_{\HQ_{i+1}(\DCH)}(u) \le \deg_{\HQ_{i}(\DCH)}(u) \le \Hsize \deg_{\HQ_{i+1}(\DCH)}(u)$, we note that with our setting $\dg^* = \Hsize \cdot \dg$, we have $\mathfrak{p}_{i+1} = 1 - (1 - \sis/\deg_{\HQ_{i+1}(\DCH)}(u))^{\dg} =
\Omega_{\eps,H}(\mathfrak{p}_i)$, using the same arguments as before. This gives the following:
\begin{itemize}[\textbf{Case~2:}]
\item if edge $\e$ is not in $\HQ_{i}(\DCH)$, when \HRLBFS\,${(\HQ_{i+1}(\DCH),\P_{i+1},\dg,\ld)}$ arrives at vertex $u$, if $\mathfrak{p}_{i+1}$ is the probability that $u$ selects an edge semi-equivalent to $\e$ among its $\dg$ incident edges, then when \HRLBFS\,${(\HQ_{i}(\DCH),\P_{i},\dg^*,\ld^*)}$ arrives at $u$ (with $u \in L_{\ell}$), then with probability $\Omega_{\eps,H}(\mathfrak{p}_{i+1})$ the set $\bigcup_{j=1}^{\ell+2}\mathcal{E}_j$ of selected edges until at most two rounds later contains edges $\e_1', \dots, \e_{\ris}'$ that are sub-equivalent to~$\e$.
\end{itemize}

Therefore, in summary, our analysis of Case 1 and Case 2 above implies our claim that a single step of algorithm \HRLBFS\,${(\HQ_{i+1}(\DCH),\P_{i+1},\dg,\ld)}$ can be simulated by 2 steps of algorithm \HRLBFS\,${(\HQ_{i}(\DCH),\P_{i},\dg^*,\ld^*)}$, with the success probability loss of $O_{\eps,H}(1)$. That is, if one arrives at vertex $u$ in step $k$ of \HRLBFS\,${(\HQ_{i+1}(\DCH),\P_{i+1},\dg,\ld)}$ and the probability that one selects an edge semi-equivalent to $\e$ is $\mathfrak{p}_{i+1}$, then if one arrives at vertex $u$ in step $\ell$ of \HRLBFS\,${(\HQ_{i}(\DCH),\P_{i},\dg^*,\ld^*)}$, then with probability $\Omega_{\eps,H}(\mathfrak{p}_{i+1})$, either $\mathcal{E}_{\ell+1}$ contains an edge semi-equivalent to $\e$, or $\bigcup_{j=1}^{\ell+2}\mathcal{E}_j$ contains edges $\e_1', \dots, \e_{\ris}'$ that are sub-equivalent to~$\e$.

\paragraph{Choosing starting vertex.}
%
Let us recall that the probability to choose $u \in V(\HQ_{i+1}(\DCH))$ as a starting vertex $\P_{i+1}(v)$ in \HRLBFS\,${(\HQ_{i+1}(\DCH), \P_{i+1}, \dg, \ld)}$ is $\mathfrak{p}_{i+1} = \frac{|\P_{i+1}^{(-1)}(u)|}{|V|}$. Since we may contract many vertices into $u$ during our construction, the probability of choosing $u$ as a starting vertex in \HRLBFS\,${(\HQ_{i+1}(\DCH),\P_{i+1},\dg,\ld)}$ can be significantly larger than the probability of choosing $u$ in \HRLBFS\,${(\HQ_{i}(\DCH), \P_{i}, \dg^*, \ld^*)}$, which is $\mathfrak{p}_i = \frac{|\P_{i}^{(-1)}(u)|}{|V|}$. However, our definition of $\P_{i+1}$ ensures that
\begin{displaymath}
    |\P_{i+1}^{(-1)}(u)|
        =
    |\P_{i}^{(-1)}(u)| + \sum_{\text{$x$ adjacent to $u$ in $\HQ_{i}(\DCH)$}: \chi(x) = \chi(v_i)} |\P_{i}^{(-1)}(x)|
        \enspace.
\end{displaymath}

Let us notice that if a vertex $x$ of color $\chi(v_i)$ that is adjacent to $u$ in $\HQ_{i}(\DCH)$ is selected as the starting vertex in \HRLBFS\,${(\HQ_i(\DCH), \P_i, \dg^*, \ld^*)}$, which happens with probability $\frac{|\P_{i}^{(-1)}(x)|}{|V|}$, then since (cf. Lemma \ref{lemma:defining-Qi+1}) $x$ is a safe vertex with respect to $\DCH$ and $\HQ_i(\DCH)$, each copy of $\M_i$ in $\HQ_i(\DCH)$ containing vertex $x$ has at least one edge containing also vertex $u$. Therefore, in \HRLBFS\,${(\HQ_i(\DCH), \P_i, \dg^*, \ld^*)}$, we will not only have $x \in L_0$, but also if $\dg = \Omega_{\eps,H}(1)$ is sufficiently large ($\dg > \Hsize$ will suffice), then with probability at least $\frac12$ we will have $u \in L_1$. Summing up over all starting vertices (including $u$), we obtain that $u$ is in $L_0 \cup L_1$ with probability at least $\frac12 \mathfrak{p}_{i+1}$.

Now we are ready to complete the analysis and prove Claim \ref{claim:final-prop2}. Let us consider the random process \HRLBFS\,${(\HQ_{i+1}(\DCH), \P_{i+1}, \dg, \ld)}$ selecting vertices and edges to define $L_j$ and $\mathcal{E}_{j+1}$ for $0 \le j \le \ld$. Similarly, let us consider the random process of \HRLBFS\,${(\HQ_i(\DCH), \P_i, \dg^*, 2\ld)}$ selecting vertices and edges to define $L'_j$ and $\mathcal{E}'_{j+1}$ for $0 \le j \le 2\ld$. Notice that $|\bigcup_{j=0}^{\ld} L_j| = \Omega_{\eps,H}(1)$, $|\bigcup_{j=1}^{\ld} \mathcal{E}_j| = \Omega_{\eps,H}(1)$, $|\bigcup_{j=0}^{2\ld} L'_j| = \Omega_{\eps,H}(1)$, $|\bigcup_{j=1}^{2\ld} \mathcal{E}'_j| = \Omega_{\eps,H}(1)$. Suppose that \HRLBD\,${(\HQ_{i+1}(\DCH), \P_{i+1}, \M_{i+1}, \dg, \ld)}$ starts at a vertex $u = \P_{i+1}(v)$ and finds a copy of $\M_{i+1}$ consisting of edges $\e_1, \dots, \e_k$ in $\HQ_{i+1}(\DCH)$, where $k = \Hsize - i$. Then, our analysis above gives that with at most a constant-factor probability loss, \HRLBD\,${(\HQ_{i}(\DCH), \P_{i}, \M_{i}, \dg^*, 2\ld)}$ will have $u$ in $L'_0 \cup L'_1$, and then, for every edge $\e_j$, $1 \le j \le k$, will either have $\e_j \in \bigcup_{t=1}^{2\ld} \mathcal{E}'_t$ or $\e'_{j_1}, \dots, \e'_{j_\ris} \in \bigcup_{t=1}^{2\ld} \mathcal{E}'_t$, where $\e'_{j_1}, \dots, \e'_{j_\ris}$ are sub-equivalent to edges $\e_j$ (this defines a proper coupling, properly taking care of multiple edges equivalent to $\e_j$). Now, since every edge $\e_j$
\begin{compactenum}[\sl (i)]
\item either corresponds to an edge in both $\M_{i+1}$ and $\M_i$, or
\item corresponds to an edge $\widehat{\e}$ in $\M_{i+1}$ that is modeled by $\widehat{\e'_{j_1}}, \dots, \widehat{\e'_{j_\ris}}$ in $\M_i$, and edges $\e'_{j_1}, \dots, \e'_{j_\ris}$ correspond to the edges $\widehat{\e'_{j_1}}, \dots, \widehat{\e'_{j_\ris}}$,
\end{compactenum}
we can argue that in that case, \HRLBD\,${(\HQ_{i}(\DCH), \P_{i}, \M_{i}, \dg^*, 2\ld)}$ will find a copy of $\M_i$ (cf. Definition~\ref{def:finding-colored-copies-of-Mi}).

Therefore, with only a constant-factor probability loss, if \HRLBD\,${(\HQ_{i+1}(\DCH), \P_{i+1}, \M_{i+1}, \dg, \ld)}$ finds a copy of $\M_{i+1}$ then \HRLBD\,${(\HQ_{i}(\DCH), \P_{i}, \M_{i}, \Hsize \dg, 2\ld)}$ finds a copy of $\M_i$.
\end{proof}


\section{Extension to families of arbitrary (not necessarily connected) finite graphs}
\label{sec:dics-many}

Our result in Theorem \ref{thm:main-H-freeness} can be easily extended to allow the \emph{forbidden finite graphs $H$ to be arbitrary}, that is, not necessarily connected. Furthermore, the analysis extends in a straightforward way to the case when one wants to \emph{test if for a given arbitrary finite family $\mathcal{H}$ of finite graphs, the input planar graph $G$ is $\mathcal{H}$-free}, that is, contain no copy of any graph from $\mathcal{H}$.
\paragraph{Disconnected $H$.}
Notice that when $H$ is not connected, \RLBD\,${(G,H,\dg,\ld)}$ may not be able to find a copy of $H$ in $G$ since it explores only a small connected neighborhood of the randomly sampled starting vertex $v$. However, one can easily extend the tester to be run separately on each connected component of $H$ to do the job.

Let us assume that $H$ consists of connected components $\mathfrak{h}_1, \mathfrak{h}_2, \dots, \mathfrak{h}_r$. As in Section \ref{subsubsec:coloring-H}, we color the vertices of $H$ arbitrarily, using $\Hsize$ distinct colors $\{1, 2, \dots, \Hsize\}$, one color for each vertex. Our analysis in Section \ref{sec:exploration-testing-H-free} starts with (an existential) Lemma \ref{lemma:edge-disjoint-copies-H-free} that if $G$ is $\eps$-far from $H$-free, then one can color vertices of $G$ with $\Hsize$ colors $\chi$ such that $G$ has a set of $\Omega_{\eps,H}(|V|)$ edge-disjoint colored copies of $H$. It is easy to see that Lemma \ref{lemma:edge-disjoint-copies-H-free} holds also for disconnected $H$. And so, in particular, for every connected component $\mathfrak{h}_i$ of $H$, there are $\Omega_{\eps,H}(|V|)$ edge-disjoint colored copies of $\mathfrak{h}_i$ with colors of the vertices consistent with the coloring $\chi$ of $G$. Furthermore, since all connected components $\mathfrak{h}_1, \mathfrak{h}_2, \dots, \mathfrak{h}_r$ use distinct colors in $H$, these copies will be edge-disjoint between the copies of $\mathfrak{h}_1, \mathfrak{h}_2, \dots, \mathfrak{h}_r$. Then, for every connected component $\mathfrak{h}_i$ of $H$, we run \RLBD\,${(G,\mathfrak{h}_i,\dg_i,\ld_i)}$, and the identical analysis as in Sections \ref{sec:exploration-testing-H-free} -- \ref{sec:final-proof} concludes that Theorem \ref{thm:main-H-freeness-single-call} holds in the following way: there are positive functions $\dg_i = \dg_i(\eps,\mathfrak{h}_i) = O_{\eps,H}(1)$ and $\ld_i = \ld(\eps,\mathfrak{h}_i) = O_{\eps,H}(1)$, such that for any planar graph $G$ that is $\eps$-far from $H$-free, \RLBD$(G,\mathfrak{h}_i,\dg_i,\ld_i)$ finds a colored copy of $\mathfrak{h}_i$ with probability $\Omega_{\eps,H}(1)$. Since the colored copies of connected components $\mathfrak{h}_1, \mathfrak{h}_2, \dots, \mathfrak{h}_r$ are pairwise disjoint in $G$, this implies that if we run \RLBD\,${(G,\mathfrak{h}_i,\dg_i,\ld_i)}$ for $1 \le i \le r$, with appropriate $\dg_i = O_{\eps,H}(1)$ and $\ld_i = O_{\eps,H}(1)$, then for any planar graph $G$ that is $\eps$-far from $H$-free, we find a colored copy of $H$ with probability $\Omega_{\eps,H}(1)$. Therefore, if we repeat this process $O_{\eps,H}(1)$ many times, we can amplify the error probability and obtain that for any planar graph $G$ that is $\eps$-far from $H$-free, we find a colored copy of $H$ with probability at least $\frac23$.
\paragraph{Forbidden family.}
Next, we extend our study to test if a given planar graph contains no copy of any forbidden graph from a given finite family of finite graphs. Let $\mathcal{H}$ be an arbitrary finite family of finite graphs (for a given $\eps>0$, we allow the size to be $O_{\eps}(1)$). We say a simple graph $G$ is \emph{$\mathcal{H}$-free} if it is $H$-free for every $H \in \mathcal{H}$; $G$ is \emph{$\eps$-far from $\mathcal{H}$-free} if one has to delete more than $\eps |V|$ edges from $G$ to obtain an $\mathcal{H}$-free graph. This definition implies that since $\mathcal{H}$ is finite, if $G$ is $\eps$-far from $\mathcal{H}$-free, then there is $H \in \mathcal{H}$ such that $G$ is $\eps/|\mathcal{H}|$-far from $H$-free.

Let us suppose that $\mathcal{H}$ is an arbitrary finite family of finite graphs. (Note that since $\mathcal{H}$ is a finite family of finite graphs, $|\mathcal{H}| = O_{\eps}(1)$.) Then our analysis above can be easily extended to test with a constant number of queries if a planar graph is $\mathcal{H}$-free. Indeed, let us run a constant query-time $\eps/|\mathcal{H}|$-tester for every $H \in \mathcal{H}$, and reject if any of the tests rejects. Notice that if $G$ is $\mathcal{H}$-free then this tester will accept, and if $G$ is $\eps$-far from $\mathcal{H}$-free then since there is $H \in \mathcal{H}$ such that $G$ is $\eps/|\mathcal{H}|$-far from $H$, the tester will reject $G$ with probability at least $\frac23$. 

The discussion above can be summarized in the following theorem.

\begin{theorem}
\label{thm:main-extension}
Let $\mathcal{H}$ be an arbitrary collection of (not necessarily connected) finite graphs. Then there is a one-sided error property tester that for any simple planar graph $G$ performs a constant number of queries to the random neighbor oracle and accepts if $G$ is $\mathcal{H}$-free, and with probability at least $\frac23$ rejects if $G$ is $\eps$-far from $\mathcal{H}$-free.
\end{theorem}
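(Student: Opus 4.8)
The plan is to derive Theorem~\ref{thm:main-extension} entirely by black-box reduction to the single-graph, connected case already established in Theorem~\ref{thm:main-H-freeness} (equivalently Theorem~\ref{thm:main-H-freeness-single-call}), amplified appropriately. The reduction proceeds in two layers: first from a single connected $H$ to a single possibly-disconnected $H$, and then from a single graph to a finite family $\mathcal{H}$. In both layers the only things that need checking are (a) that $\eps$-farness from the target passes down to $\eps'$-farness (for a suitable constant $\eps' = \Omega_{\eps,\mathcal{H}}(1)$) from some connected ``atom'' for which we already have a tester, and (b) that detecting all the required atoms simultaneously, each with constant success probability, can be boosted to overall success probability at least $\tfrac23$ by independent repetition; the one-sided error is preserved automatically since each constituent tester only rejects upon actually witnessing a copy.

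First I would handle the disconnected single graph $H$ with connected components $\mathfrak{h}_1,\dots,\mathfrak{h}_r$. The key point is that $\eps$-farness of $G$ from $H$-free is \emph{not} in general inherited by each $\mathfrak{h}_i$ (a graph could be $\eps$-far from $H$-free while being $\mathfrak{h}_i$-free for some $i$ --- impossible actually, since if $G$ is $\mathfrak{h}_i$-free then $G$ is $H$-free and hence $0$-far; so $G$ $\eps$-far from $H$-free forces $G$ to contain $\mathfrak{h}_i$, and in fact to be far from $\mathfrak{h}_i$-free). More carefully: if $G$ is $\eps$-far from $H$-free, then for every $i$, $G$ must be $\eps$-far from $\mathfrak{h}_i$-free as well, because deleting fewer than $\eps|V|$ edges to destroy all copies of $\mathfrak{h}_i$ would in particular destroy all copies of $H$. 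Hence I may invoke Theorem~\ref{thm:main-H-freeness-single-call} for each connected $\mathfrak{h}_i$ separately, running \RLBD$(G,\mathfrak{h}_i,\dg_i,\ld_i)$ with $\dg_i,\ld_i = O_{\eps,H}(1)$ to find a copy of $\mathfrak{h}_i$ with probability $\Omega_{\eps,H}(1)$; repeating each $O_{\eps,H}(1)$ times boosts each individual success probability, and by a union bound over the $r = O_{\eps,H}(1)$ components, with appropriate constants all $r$ searches succeed simultaneously with probability at least, say, $\tfrac23$, yielding a copy of $H$. Using the coloring trick of Section~\ref{subsubsec:coloring-H} with $\Hsize$ distinct colors assigned across all components ensures the detected copies of the $\mathfrak{h}_i$ are vertex- (hence edge-) disjoint and thus genuinely assemble into a copy of $H$; this is the one extra subtlety over the connected case and it is exactly what the colored version of Lemma~\ref{lemma:edge-disjoint-copies-H-free} gives us.

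Next I would pass from a single (possibly disconnected) graph to a finite family $\mathcal{H}$. Since $\mathcal{H}$ is a finite family of finite graphs, $|\mathcal{H}| = O_\eps(1)$. If $G$ is $\eps$-far from $\mathcal{H}$-free, then any edge set whose deletion makes $G$ $\mathcal{H}$-free in particular makes $G$ $H$-free for the $H \in \mathcal{H}$ that is ``hardest'' to fix; more precisely, since making $G$ $\mathcal{H}$-free requires deleting $> \eps|V|$ edges, there must exist some $H^\star \in \mathcal{H}$ such that $G$ is $\eps/|\mathcal{H}|$-far from $H^\star$-free --- otherwise for each $H\in\mathcal{H}$ one could delete at most $\tfrac{\eps}{|\mathcal{H}|}|V|$ edges to kill all copies of $H$, and unioning these deletions over all $H\in\mathcal{H}$ gives at most $\eps|V|$ deletions making $G$ $\mathcal{H}$-free, a contradiction. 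So I run the single-graph tester (from the previous layer) with proximity parameter $\eps/|\mathcal{H}|$ for \emph{every} $H \in \mathcal{H}$, and reject if any of them rejects. Completeness: if $G$ is $\mathcal{H}$-free it is $H$-free for all $H$, so no tester ever sees a forbidden copy and the combined tester accepts with probability $1$ --- this gives the one-sided error guarantee. Soundness: if $G$ is $\eps$-far from $\mathcal{H}$-free, the tester for $H^\star$ with parameter $\eps/|\mathcal{H}|$ rejects with probability at least $\tfrac23$ (by the previous layer applied to $H^\star$), and since the combined tester rejects whenever \emph{any} constituent rejects, it too rejects with probability at least $\tfrac23$. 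The total query complexity is $\sum_{H\in\mathcal{H}} O_{\eps/|\mathcal{H}|, H}(1) = O_{\eps,\mathcal{H}}(1)$, i.e.\ constant.

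The main obstacle, such as it is, is purely bookkeeping: one must be careful that the ``farness descends'' arguments are stated in the right direction (farness from a conjunction of freeness conditions forces farness from one of the conjuncts, up to the $|\mathcal{H}|$ or $r$ factor) and that the coloring in the disconnected case is global across components so that the disjoint detected subcopies actually glue into a full copy of $H$; none of this requires new ideas beyond what is already in Sections~\ref{sec:exploration-testing-H-free}--\ref{sec:final-proof}. I expect no genuinely hard step here --- the entire content of Theorem~\ref{thm:main-extension} is the combinatorial heavy lifting done earlier for connected $H$, and this section is the routine ``wrapper'' that repackages it. I would therefore keep the proof short, essentially as written in the excerpt, emphasizing the two farness-descent inequalities and the union-bound amplification, and explicitly noting that one-sided error is inherited because rejection only ever happens upon exhibiting an actual forbidden copy.
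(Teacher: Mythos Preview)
Your proposal is correct and follows essentially the same two-layer reduction as the paper. One small exposition note on the disconnected-$H$ layer: your observation that $G$ is $\eps$-far from each $\mathfrak{h}_i$-free is correct but, as you yourself flag, a black-box call to Theorem~\ref{thm:main-H-freeness-single-call} for each $\mathfrak{h}_i$ does not produce vertex-disjoint copies --- the paper sidesteps this by applying Lemma~\ref{lemma:edge-disjoint-copies-H-free} directly to the full (disconnected) $H$ to obtain one global coloring and then rerunning the per-component colored analysis (which is exactly the fix you describe); for the family layer your argument is identical to the paper's.
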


Theorem \ref{thm:main-extension} holds also if $\mathcal{H}$ varies with different $\eps$. That is, if for a given $\eps > 0$, the goal is to test if $G$ is $\mathcal{H}$-free or is $\eps$-far from $\mathcal{H}$-free, for a finite family of graphs $\mathcal{H}$ that may depend on $\eps$.


\section{Extending the analysis to minor-free graphs}
\label{sec:minor-free}

While throughout the paper we focused on testing $H$-freeness of \emph{planar graphs}, our techniques can easily be extended to any \emph{class of minor-free graphs}. Recall that a graph $L$ is called a \emph{minor} of a graph $G$ if $L$ can be obtained from $G$ via a sequence of vertex and edge deletions, and edge contractions. For any graph $L$, a graph $G$ is called \emph{$L$-minor-free} if $L$ is not a minor of $G$. (For example, by Kuratowski's Theorem, a graph is planar if and only if it is $K_{3,3}$-minor-free and $K_5$-minor-free.)

Let us fix a graph $L$ and consider the input graph $G$ to be an $L$-minor-free graph. We now argue now that entire analysis presented in the previous sections easily extends to testing $H$-freeness of $G$. The key observation is that our analysis in Sections \ref{sec:outline-proof}--\ref{sec:final-proof} relies only on the following two properties of planar graphs:
\begin{enumerate}[\it (i)]
\item every minor of a planar graph is planar (cf. Fact \ref{fact:planar_minor_planar}),
\item the number of edges in a planar graph is $O(n)$, where $n$ is the number of vertices (cf. Fact \ref{fact:planar_limited_edges}).
\end{enumerate}
It is known that these two properties hold for any class of $L$-minor-free graphs (that is, the first property would be that every minor of an $L$-minor-free graph is $L$-minor-free). Therefore, we can proceed with nearly identical analysis for $L$-minor-free graphs and arrive at the following version of Theorem \ref{thm:main-H-freeness}.

\begin{theorem}
\label{thm:main-H-minor-free}
Let $L$ be a fixed graph. There are positive functions $f$, $g$, and $h$ such that for any $L$-minor-free-graph $G$:
\begin{itemize}
\item if $G$ is $H$-free, then \RBE\,$(G,H,\eps)$ accepts $G$, and
\item if $G$ is $\eps$-far from $H$-free, then \RBE$(G,H,\eps)$ rejects $G$ with probability at least $0.99$.
\end{itemize}
\end{theorem}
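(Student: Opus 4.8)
The plan is to observe that Theorem~\ref{thm:main-H-minor-free} is not a new theorem to be proved from scratch, but rather a direct transplant of the entire machinery developed in Sections~\ref{sec:outline-proof}--\ref{sec:final-proof} (which established Theorem~\ref{thm:main-H-freeness}) onto the class of $L$-minor-free graphs, with planarity replaced everywhere by $L$-minor-freeness. So the first step is to audit the proof of Theorem~\ref{thm:main-H-freeness} and identify exactly which structural facts about planar graphs were used. As the excerpt already flags, there are precisely two: (i) every minor of a planar graph is planar, and (ii) a planar graph on $n$ vertices has $O(n)$ edges. The entire reduction --- constructing $\DDCH{1}$ via Lemma~\ref{lemma:edge-disjoint-copies-H-free} (uses (ii) implicitly through the ``sparse'' philosophy, and in the proof of Claim~\ref{claim:final-prop1} uses (ii) directly to conclude $G^*[\DCH]$ has $\Omega_{\eps,H}(|V|)$ vertices), the shadow-graph construction and Lemma~\ref{lemma:central-small-degrees}, the small-degree Lemma~\ref{lemma:small-vertices-new}, and the hypergraph contraction framework --- goes through a sequence of minors/subgraphs of $G$, so (i) is what guarantees that every intermediate object still lies in the ambient class, and (ii) is what gives the crucial ``constant fraction of low-degree vertices'' type bounds at each stage.

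Second, I would invoke the classical theorem of Mader (and Kostochka/Thomason for the sharp constant) that for any fixed graph $L$ there is a constant $c_L$ such that every $L$-minor-free graph on $n$ vertices has at most $c_L \cdot n$ edges; this is the analogue of Fact~\ref{fact:planar_limited_edges} with a constant depending on $L$ instead of the absolute constant $3$. Likewise, the analogue of Fact~\ref{fact:planar_minor_planar} is immediate: the class of $L$-minor-free graphs is, by definition, closed under taking minors (deletions and contractions), hence in particular under taking subgraphs. With these two substitutes in hand, every lemma in Sections~\ref{sec:outline-proof}--\ref{sec:final-proof} is re-proved verbatim, with the numerical constants $3$, $6\Hsize$, $4\Hsize+2$, etc.\ replaced by constants that additionally depend on $c_L$ (hence on $L$); crucially $c_L$ is a universal constant for the fixed excluded minor $L$, so all bounds remain $O_{\eps,H}(1)$ once $L$ is fixed. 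One should check that the shadow-graph argument (Appendix~\ref{subsec:planarization-of-hypergraphs}) only ever uses edge-counting and minor-closedness of the ambient class, not any embedding-specific property of planar graphs --- and indeed it does, since a shadow graph is built as a union of $\Hsize$ graphs each obtained from $G$ by contractions, so each summand is $L$-minor-free and has $O(c_L n)$ edges, giving the same average-degree bound up to a factor depending on $L$.

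The main obstacle --- and honestly it is a mild one --- is to be careful that nowhere in the long argument is a planarity-specific fact smuggled in beyond (i) and (ii): for instance one must make sure that the ``edge-disjoint copies cover a linear number of vertices'' step (implicit in Lemma~\ref{lemma:edge-disjoint-copies-H-free} and explicit in Claim~\ref{claim:final-prop1}) is genuinely just an edge-count argument ($|\DCH|\cdot|E(H)|$ edges force $\Omega_{c_L}(|\DCH|\cdot|E(H)|)$ vertices in an $L$-minor-free graph), and that the planarization-of-hypergraphs machinery does not rely on the Euler-formula bound $|E|\le 3|V|-6$ in any way that would break when the constant changes. Having verified this, the conclusion is exactly the statement of Theorem~\ref{thm:main-H-minor-free}: there exist functions $f,g,h$ (now depending additionally on $L$) such that \RBE$(G,H,\eps)$ always accepts $H$-free $G$ and rejects $\eps$-far $G$ with probability at least $0.99$, for every $L$-minor-free $G$. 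As with the planar case, the extension to disconnected $H$ and to finite families $\mathcal{H}$ (Theorem~\ref{thm:main-extension}, yielding Theorem~\ref{thm:main-H-minor-free-extension}) follows by running the tester on each connected component and on each member of the family separately and amplifying, exactly as in Section~\ref{sec:dics-many}.
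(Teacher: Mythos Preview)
Your proposal is correct and follows exactly the paper's own approach: the paper's ``proof'' of Theorem~\ref{thm:main-H-minor-free} is precisely the observation that the analysis in Sections~\ref{sec:outline-proof}--\ref{sec:final-proof} uses only the two properties (minor-closedness and linear edge density) of planar graphs, both of which hold for any class of $L$-minor-free graphs. If anything, you have been more thorough than the paper in auditing where these properties are invoked (shadow graphs, Lemma~\ref{lemma:small-vertices-new}, Claim~\ref{claim:final-prop1}), but the underlying argument is identical.
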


Furthermore, in the same way as in Section \ref{sec:dics-many}, we can extend Theorem \ref{thm:main-extension} to obtain the following.

\begin{theorem}
\label{thm:main-H-minor-free-extension}
Let $L$ be a fixed graph. Let $\mathcal{H}$ be an arbitrary collection of (not necessarily connected) finite graphs. Then there is a one-sided error property tester that for any $L$-minor-free-graph $G$ performs a constant number of queries to the random neighbor oracle and accepts if $G$ is $\mathcal{H}$-free, and with probability at least $\frac23$ rejects if $G$ is $\eps$-far from $\mathcal{H}$-free.
\end{theorem}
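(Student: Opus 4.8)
The plan is to piggyback entirely on the planar development: first establish the minor-free analogue of the single-graph connected case (Theorem~\ref{thm:main-H-minor-free}), then replay the two reductions of Section~\ref{sec:dics-many} verbatim. The first step is the observation already flagged in Section~\ref{sec:minor-free}: a careful audit of Sections~\ref{sec:outline-proof}--\ref{sec:final-proof} shows that planarity is used only through Facts~\ref{fact:planar_minor_planar} and~\ref{fact:planar_limited_edges} (closure of planarity under minors, and the $O(n)$ edge bound). Concretely, Fact~\ref{fact:planar_limited_edges} is invoked in Lemma~\ref{lemma:edge-disjoint-copies-H-free} to lower-bound the number of edge-disjoint copies of $H$, in the proof of Claim~\ref{claim:final-prop1} to argue $\G[\DCH]$ has $\Omega_{\eps,H}(|V|)$ non-isolated vertices, and inside the shadow-graph machinery (Lemma~\ref{lemma:central-small-degrees} and Lemma~\ref{lemma:small-vertices-new}) to obtain a constant fraction of low-degree vertices; Fact~\ref{fact:planar_minor_planar} is used to guarantee that the intermediate (shadow) graphs obtained by deletions and contractions remain in the class. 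Since any class of $L$-minor-free graphs satisfies both properties — $L$-minor-free graphs are closed under taking minors, and they have at most $c_L \cdot n$ edges for a constant $c_L$ depending only on $L$ — every lemma carries over after replacing ``planar'' by ``$L$-minor-free'' and absorbing $c_L$ into the $O_{\eps,H}(\cdot)$ and $\Omega_{\eps,H}(\cdot)$ constants (equivalently, into the functions $f,g,h$). This yields Theorem~\ref{thm:main-H-minor-free} for connected $H$, and then, exactly as in the paragraph ``Disconnected $H$'' of Section~\ref{sec:dics-many}, for arbitrary finite $H$: color $V(H)$ with $\Hsize$ distinct colors, apply the minor-free version of Lemma~\ref{lemma:edge-disjoint-copies-H-free} to get $\Omega_{\eps,H}(|V|)$ edge-disjoint colored copies of $H$ (hence of each connected component $\mathfrak{h}_i$, and these are edge-disjoint across components because the components use disjoint color sets), run \RLBD$(G,\mathfrak{h}_i,\dg_i,\ld_i)$ for each $i$, and amplify by $O_{\eps,H}(1)$ repetitions.

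For the family version, suppose $\mathcal{H}$ is an arbitrary collection of (not necessarily connected) finite graphs; since it is a finite family of finite graphs, $|\mathcal{H}| = O_{\eps}(1)$. If $G$ is $\eps$-far from $\mathcal{H}$-free then, by an averaging argument over the (at most $|\mathcal{H}|$) stages of any minimal repair sequence, there exists $H \in \mathcal{H}$ with $G$ being $\eps/|\mathcal{H}|$-far from $H$-free. Run, for each $H \in \mathcal{H}$, the one-sided $(\eps/|\mathcal{H}|)$-tester for $H$-freeness on $L$-minor-free graphs constructed in the previous paragraph, boosting each so that its rejection probability on inputs that are $(\eps/|\mathcal{H}|)$-far from $H$-free is at least $\tfrac23$, and accept iff all of them accept. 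If $G$ is $\mathcal{H}$-free then every individual tester accepts with probability $1$ (one-sided error), so the combined test accepts; if $G$ is $\eps$-far from $\mathcal{H}$-free then the tester for the witnessing $H$ rejects with probability at least $\tfrac23$. The total query complexity is $|\mathcal{H}| \cdot O_{\eps,H}(1) = O_{\eps}(1)$, independent of $|V|$, which is exactly the statement of Theorem~\ref{thm:main-H-minor-free-extension}; the same argument with $\eps$-dependent $\mathcal{H}$ goes through unchanged.

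I do not expect a genuine obstacle here: the theorem is, by design, a transfer result, and all the heavy lifting has already been done for planar graphs. The only real task is the \emph{audit} — verifying line by line that planarity is never used except via Facts~\ref{fact:planar_minor_planar} and~\ref{fact:planar_limited_edges}. The most delicate place to check is the shadow-graph construction underlying Lemma~\ref{lemma:central-small-degrees} and Lemma~\ref{lemma:small-vertices-new}: one must confirm that the argument only needs (a) that each of the $\Hsize$ constituent graphs lies in the minor-closed class (which follows from closure under deletions and contractions), and (b) a linear edge bound for graphs in the class, and does not secretly rely on, say, a fixed planar embedding or on the specific constant $3$ in the planar edge bound. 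Once that check passes, the rest is bookkeeping: replacing the constant $3$ (and similar planar constants) by $c_L$ throughout, which only changes the hidden constants in $O_{\eps,H}(\cdot)$ and $\Omega_{\eps,H}(\cdot)$ and hence the doubly-exponential query bound by a further factor depending only on $L$.
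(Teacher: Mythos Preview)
Your proposal is correct and follows essentially the same approach as the paper: observe that the planar analysis uses only closure under minors and the linear edge bound (both of which hold for $L$-minor-free graphs), transfer Theorem~\ref{thm:main-H-freeness} to Theorem~\ref{thm:main-H-minor-free}, and then replay the disconnected-$H$ and family reductions of Section~\ref{sec:dics-many} verbatim. One tiny inaccuracy in your audit: Lemma~\ref{lemma:edge-disjoint-copies-H-free} itself does not invoke planarity (it holds for arbitrary graphs via the $\eps$-far definition); the linear edge bound enters instead in Lemma~\ref{lemma:transformation}, in the shadow-graph lemmas, and in Claim~\ref{claim:final-prop1}, but this does not affect the validity of your argument.
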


\begin{remark}
It should be noted that while our main focus is on the random neighbor oracle model, it is straightforward to extend our testers (and their analysis) for $\mathcal{H}$-freeness to the other three oracle access model presented in Section \ref{subsubsec:oracle}. Indeed, since each of these models can trivially simulate the random neighbor oracle model without any loss in the query complexity, Theorem \ref{thm:main-H-minor-free-extension} (and also Theorems \ref{thm:main-H-freeness} and \ref{thm:main-extension}) holds also for all these oracle access models.

However, our main result, the characterization of testable properties in planar graphs, as well as our reduction in Theorem \ref{thm:main-1-sided-reduces-to-H-freeness}, cannot be extended to the other models (see Section \ref{subsubsec:sensitivity-of-the-models}).
\end{remark}


\section{Conclusions}
\label{sec:conclusions}

The fundamental problem in the area of property testing is to understand the complexity of testing graph properties in all natural models. One of the central questions here is to provide characterizations of testable graph properties in these models, that is, to determine which graph properties can be tested with constant query complexity. While we have characterizations of graph properties testable in the dense graph model, and some understanding of testable graph properties in the bounded-degree graph model, finding such a characterization in a very natural case of general graphs, without any bounds for their maximum degrees, remains a challenging and elusive open problem. The main result of this paper, Theorem \ref{thm:characterization}, resolves an important natural special case of this open problem, which concerns property testers for planar graphs and for minor-closed graphs with one-sided error in the random neighbor oracle model.

\medskip

Our main technical, algorithmic contribution significantly extend the approach from \cite{CMOS11} to prove that $H$-freeness is testable with a constant number of queries for general planar graphs. Our result was proven via a new type of analysis of random exploration of planar graphs and their combination of the study of hypergraph representations of contractions in planar graphs. Our analysis easily carries over to classes of graphs defined by general fixed forbidden minors.

\medskip

Our work is a continuation of our efforts to understand the complexity of testing basic graph properties in graphs with no bounds for the degrees. Indeed, while major efforts in the property testing community have been put to study dense graphs and bounded degree graphs (cf. \cite[Chapter~8-9]{Gol17}), we have seen only limited advances in the study of general graphs, in particular, sparse graphs but without any bounds for the maximum degrees. We believe that this model is one of the most natural models, and it is also most relevant to computer science applications. Similarly as it has been done in \cite[Chapter~10.5.3]{Gol17}, we would advocate further study of this model because of its importance, its applications, and the variety (and beauty) of techniques used to advance this topic.




\newcommand{\Proc}{Proceedings of the~}
\newcommand{\ALENEX}{Workshop on Algorithm Engineering and Experiments (ALENEX)}
\newcommand{\BEATCS}{Bulletin of the European Association for Theoretical Computer Science (BEATCS)}
\newcommand{\CCCG}{Canadian Conference on Computational Geometry (CCCG)}
\newcommand{\CIAC}{Italian Conference on Algorithms and Complexity (CIAC)}
\newcommand{\COCOON}{Annual International Computing Combinatorics Conference (COCOON)}
\newcommand{\COLT}{Annual Conference on Learning Theory (COLT)}
\newcommand{\COMPGEOM}{Annual ACM Symposium on Computational Geometry}
\newcommand{\DCGEOM}{Discrete \& Computational Geometry}
\newcommand{\DISC}{International Symposium on Distributed Computing (DISC)}
\newcommand{\ECCC}{Electronic Colloquium on Computational Complexity (ECCC)}
\newcommand{\ESA}{Annual European Symposium on Algorithms (ESA)}
\newcommand{\FOCS}{IEEE Symposium on Foundations of Computer Science (FOCS)}
\newcommand{\FSTTCS}{Foundations of Software Technology and Theoretical Computer Science (FSTTCS)}
\newcommand{\ICALP}{Annual International Colloquium on Automata, Languages and Programming (ICALP)}
\newcommand{\ICCCN}{IEEE International Conference on Computer Communications and Networks (ICCCN)}
\newcommand{\ICDCS}{International Conference on Distributed Computing Systems (ICDCS)}
\newcommand{\IJCGA}{International Journal of Computational Geometry and Applications}
\newcommand{\INFOCOM}{IEEE INFOCOM}
\newcommand{\IPCO}{International Integer Programming and Combinatorial Optimization Conference (IPCO)}
\newcommand{\ISAAC}{International Symposium on Algorithms and Computation (ISAAC)}
\newcommand{\ISTCS}{Israel Symposium on Theory of Computing and Systems (ISTCS)}
\newcommand{\JACM}{Journal of the ACM}
\newcommand{\LNCS}{Lecture Notes in Computer Science}
\newcommand{\PODS}{ACM SIGMOD Symposium on Principles of Database Systems (PODS)}
\newcommand{\RANDOM}{International Workshop on Randomization and Approximation Techniques in Computer Science (RANDOM)}
\newcommand{\RSA}{Random Structures and Algorithms}
\newcommand{\SICOMP}{SIAM Journal on Computing}
\newcommand{\SODA}{Annual ACM-SIAM Symposium on Discrete Algorithms (SODA)}
\newcommand{\SPAA}{Annual ACM Symposium on Parallel Algorithms and Architectures (SPAA)}
\newcommand{\STACS}{Annual Symposium on Theoretical Aspects of Computer Science (STACS)}
\newcommand{\STOC}{Annual ACM Symposium on Theory of Computing (STOC)}
\newcommand{\SWAT}{Scandinavian Workshop on Algorithm Theory (SWAT)}
\newcommand{\TALG}{ACM Transactions on Algorithms}
\newcommand{\UAI}{Conference on Uncertainty in Artificial Intelligence (UAI)}
\newcommand{\WADS}{Workshop on Algorithms and Data Structures (WADS)}
\newcommand{\TCS}{Theory of Computing Systems}

\renewcommand{\Proc}{{\rm In} Proceedings of the~}


\bibliographystyle{IEEEtranS}


\bigskip
\newpage

\appendix
\begin{center}\huge\bf Appendix\end{center}


\section{Basic properties of planar graphs}
\label{subsec:basic-planar}

For the sake of completeness, we discuss here some basic (and well known) properties of planar graphs, as frequently used in our paper.

The graph $G' = (V',E')$ obtained by the \emph{contraction of an edge} $(u,v) \in E$ into vertex $u$ is defined as follows: $V' = V \setminus \{v\}$ and $E' = \{(x,y)\in E: x \ne v \land y \ne v\} \cup \{(x,u) : (x,v) \in E \land x \ne u\}$. A graph $G'$ that can be obtained from a graph $G$ via a sequence of edge removals, vertex removals, and edge contractions is called a \emph{minor} of $G$.
Equivalently, a graph $G$ contains an $h$-vertex graph $G'$ as a minor if $G$ contains $\ell$ pairwise disjoint vertex sets $V_1, \dots, V_{\ell}$ such that the graph induced by $G$ on each of these sets is connected, and if $(i,j) \in E(G')$ then $G$ contains at least one edge connecting a vertex of $V_i$ to a vertex of $V_j$. If $G'$ is not a minor of $G$, then $G$ is said to be $G'$-minor free. A \emph{graph property $P$ is minor-closed} if every minor of a graph in $P$ is also in $P$, or equivalently if $P$ is closed under removal of edges, removal of vertices and contraction of edges.

We use the following well-known property of planar graphs.

\begin{fact}
\label{fact:planar_minor_planar}
Any minor of a planar graph is planar.
\end{fact}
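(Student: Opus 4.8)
The plan is to verify that each of the three elementary operations defining a minor — deletion of an edge, deletion of a vertex, and contraction of an edge — preserves planarity, and then to conclude by induction on the length of a sequence of such operations witnessing that the given graph $H$ is a minor of the planar graph $G$. So I would write $G = G_0 \to G_1 \to \dots \to G_k = H$, where each arrow is one of the three operations, and prove planarity is an invariant of each step.

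First I would fix a plane embedding of $G$, i.e.\ a drawing in $\mathbb{R}^2$ (equivalently on $S^2$) in which vertices are distinct points, edges are simple arcs meeting only at shared endpoints, and no two arcs cross. Deleting an edge just removes one arc from the drawing, and deleting a vertex removes a point together with the finitely many arcs incident to it; in both cases what remains is still a plane embedding, so planarity is trivially preserved. The only operation requiring an argument is the contraction of an edge $e=(u,v)$: here $e$ is drawn as a simple arc $\gamma$ from $u$ to $v$ whose interior is disjoint from the rest of the drawing, and I would pick a thin closed "tube" around $\gamma$ (a disc neighbourhood) meeting the rest of the drawing only in small stars around $u$ and $v$, which exists since $\gamma$ is compact and the remaining drawing is closed and disjoint from $\gamma$'s interior. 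Inside this tube one slides the point $v$ along $\gamma$ toward $u$, dragging the initial segments of the arcs incident to $v$ and re-routing them inside the tube without crossings, until $v$ is identified with $u$; the result is a plane embedding of $G/e$. An immediate induction using these three cases then shows every $G_i$, and in particular $H$, is planar.

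The step I expect to be the only real obstacle is making this "sliding" rigorous: one must exhibit the required ambient homeomorphism of the tube, or, combinatorially, argue that the rotation system of the contracted graph can be taken to be the concatenation at $u$ and at $v$ (each cut at the arc $e$) of the two local rotations — a routine but slightly fussy point in combinatorial topology. Everything else is immediate.

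An alternative write-up, which I would in fact prefer here since Wagner's/Kuratowski's characterization is already invoked elsewhere in the paper, bypasses all of this: the minor relation is transitive, so if $H$ is a minor of $G$ and either $K_5$ or $K_{3,3}$ were a minor of $H$, it would be a minor of $G$, contradicting planarity of $G$; by Wagner's theorem ($G$ is planar iff it has neither $K_5$ nor $K_{3,3}$ as a minor), $H$ is therefore planar. With that theorem in hand the statement reduces to the trivial transitivity of "is a minor of" and there is no obstacle at all — the only cost being reliance on a nontrivial external theorem in place of the elementary embedding argument above.
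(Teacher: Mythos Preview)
Your proposal is correct; both the direct embedding argument and the Wagner's-theorem shortcut are valid proofs of the fact. The paper, however, does not prove this statement at all: it simply records it as a well-known property of planar graphs and uses it without justification, so there is nothing to compare against beyond noting that you have supplied more than the paper does.
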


Furthermore, we use the following upper bound on the number of edges in a simple planar graph, which follows immediately from Euler's formula.

\begin{fact}
\label{fact:planar_limited_edges}
For any simple planar graph $G = (V,E)$ (with no self-loops or parallel edges), $|E| \le 3|V|-6$.
\end{fact}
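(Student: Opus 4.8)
The plan is to derive the inequality from Euler's polyhedral formula $|V|-|E|+|F|=2$, which the paper already invokes. First I would reduce to the connected case. If $G$ is disconnected with $k\ge 2$ connected components, I would fix a planar drawing of $G$ and add $k-1$ new edges, each joining a vertex on the outer face of one component to a vertex on the outer face of another, chosen so that the result $G'$ is still a \emph{simple} planar graph on the same vertex set with $|E(G')|=|E|+k-1$. Proving the bound for $G'$ then gives $|E|\le|E(G')|\le 3|V|-6-(k-1)\le 3|V|-6$. I would also dispose of the degenerate cases $|V|\le 2$ directly (where the statement is understood in the usual convention that it is only of interest, and only actually correct, for $|V|\ge 3$).

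For a connected simple planar graph $G$ with $|V|\ge 3$, I would fix a planar embedding, let $F$ be its set of faces, and use the double-counting estimate on face boundaries. Each face is bounded by a closed walk; since $G$ has no self-loops and no parallel edges, every such boundary walk traverses at least $3$ edges. Summing boundary lengths over all faces counts each edge exactly twice (an edge bordering only one face, i.e.\ a bridge, is traversed twice by that face's boundary walk), so $3|F|\le\sum_{\text{faces}}\mathrm{len}(\partial f)=2|E|$, i.e.\ $|F|\le\tfrac23|E|$. Substituting into Euler's formula, $2=|V|-|E|+|F|\le|V|-|E|+\tfrac23|E|=|V|-\tfrac13|E|$, hence $\tfrac13|E|\le|V|-2$, that is $|E|\le 3|V|-6$.

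The main obstacle here is not depth but care: one must justify that the boundary-length sum counts every edge exactly twice even with bridges present, and that simplicity is precisely what forces every boundary walk to have length at least $3$ (this is where the hypothesis of no loops and no parallel edges is genuinely used, and the conclusion is false without it). Euler's formula itself I would cite rather than reprove, and the reduction to the connected case together with the small-$|V|$ corner cases rounds out the argument; no new idea is needed.
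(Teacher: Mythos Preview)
Your proposal is correct and is exactly the standard derivation the paper has in mind: the paper does not give an explicit proof but simply states that the bound ``follows immediately from Euler's formula,'' and your argument is precisely that derivation. There is nothing to add or compare.
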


We remark that for any class of graphs $\mathcal{G}$ that is defined by a finite collection of forbidden minors similar statements are true, i.e., if $G \in \mathcal{G}$, then any minor of $G$ also belongs to $\mathcal{G}$ and if $G = (V,E) \in \mathcal{G}$, then $G$ has $O(|V|)$ edges (where the constant in the Big-Oh notation depends on the set of forbidden minors).


\section{Uniform characterization using oblivious testers and forbidden subgraphs}
\label{sec:uniform-characterization}

\SArtur{I don't know if we want this section --- I'm leaving the decision to you.}As mentioned in Section \ref{subsec:uniform-characterization}, while Theorem \ref{thm:canonical-tester} from \cite{CFPS19} allows to simplify the analysis of testable properties, the analysis as in Theorem \ref{thm:main-1-sided-reduces-to-H-freeness} obtains non-uniform testers, in the sense of the dependency on $n$. In this section, we consider a special class of uniform testers, which we call \emph{oblivious testers}, that capture the essence of testers of testable properties in the flavor of Theorem \ref{thm:canonical-tester} (see \cite{AS08} for a similar notion in the context of testing dense graphs).

\begin{definition}\textbf{(Oblivious tester)}\it
\label{def:oblivious-tester}
A tester (one-sided or two-sided) for a graph property $\mathcal{P}$ is said to be \textbf{\emph{oblivious}} if it works as follows: Given an $\eps$, $0 < \eps < 1$, the tester
\begin{itemize}
\item computes an integer $q = q(\eps)$,
\item queries $q$ times the random vertex oracle to obtain a set (possibly, a multiset) $S$ of $q$ random vertices,
\item from each vertex $v \in S$, runs \Traverse\,${(G,v,q,q)}$ to get a $(q,q)$-bounded disc~$U_v$,
\item and then accepts or rejects (possibly randomly) according to $\eps$ and the visited graph $\bigcup_v U_v$.
\end{itemize}
\end{definition}

Notice that thanks to Theorem \ref{thm:canonical-tester}, Definition \ref{def:oblivious-tester} captures the essence of property testing in the random neighbor oracle model, and in that context, it is natural to consider oblivious testers.

\begin{remark}\SArtur{This remark is heavily based on a discussion in \cite{AS08}.}
While oblivious testers seem to be quite natural in our setting, there are two major restrictions that Definition \ref{def:oblivious-tester} imposes on an oblivious tester. The first is that such a tester cannot use the size of the input in order to determine the parameter $q$ which is later used for the size of the sample set $S$ and for the depth and breadth of the bounded discs. While this seems to be a rather simple assumption, it is not difficult to construct non-oblivious testers whose query complexity is $O_{\eps}(1)$, upper bounded by a function of $\eps$, but in fact it depends on the size of the graph (e.g., $q(\eps,n) = 1/\eps +(-1)^n$). Though this seems like a non-important and annoying technicality, it has been noted in other property testing models (see, e.g., \cite{AS08b}) that this subtlety may have nontrivial implications. The second restriction on an oblivious tester is that it cannot use the size of the input in order to make its decisions after the $q$ copies of $(q,q)$-bounded disc has been visited by the tester. (A similar phenomenon has been also noted earlier (cf. \cite{AS08}).) For example, \cite{AS08} gave the following simple example: A graph on an even number of vertices satisfies $\mathcal{P}$ if and only if it is bipartite, while a graph on an odd number of vertices satisfies $\mathcal{P}$ if and only if it is triangle-free. Any tester for $\mathcal{P}$ must use the size of the input graph in order to make its decision.
\end{remark}

Notice that in Definitions \ref{def:semi-H-freeness} and \ref{def:semi-H-rfreeness}, the families of finite graphs $\mathcal{H}$ depend on the graph property $\mathcal{P}$, $\eps$, and $n$. If $\mathcal{H}$ is independent of $n$ (that is, $\mathcal{H}$ depends only on $\mathcal{P}$ and $\eps$), then we will call $\mathcal{P}$ in Definitions \ref{def:semi-H-freeness} and \ref{def:semi-H-rfreeness}, respectively, \emph{uniformly semi-subgraph-free} and \emph{uniformly semi-rooted-subgraph-free}.\SArtur{Since I'm not sure whether we want to keep the discussion about oblivious testers, I've put here the definitions of uniformly semi-subgraph-free and uniformly semi-rooted-subgraph-free properties. If however, we will want to keep Theorem
\ref{thm:main-1-sided-oblivious-reduces-to-H-freeness} then, then maybe these definitions could go directly to Definitions \ref{def:semi-H-freeness} and \ref{def:semi-H-rfreeness}.}

With the definitions of oblivious testers, uniformly semi-subgraph-free and uniformly semi-rooted-subgraph-free properties, and Lemma \ref{lemma:semi-rooted-subgraph-free-yields-semi-subgraph-free} at hand, we can obtain a variant of Theorem \ref{thm:main-1-sided-reduces-to-H-freeness} for oblivious testers.

\begin{theorem}
\label{thm:main-1-sided-oblivious-reduces-to-H-freeness}
If a graph property $\mathcal{P}$ has an oblivious one-sided error tester in the random neighbor oracle model then $\mathcal{P}$ is uniformly semi-subgraph-free.
\end{theorem}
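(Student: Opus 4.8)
The plan is to mirror the proof of Theorem~\ref{thm:main-1-sided-reduces-to-H-freeness}, but now working directly with an \emph{oblivious} tester instead of appealing to the canonical tester of \cite{CFPS19}. The point is that an oblivious tester, by Definition~\ref{def:oblivious-tester}, already computes a single parameter $q = q(\eps)$ that depends only on $\eps$ (and not on $n$), then samples $q$ vertices i.u.r., runs \Traverse\,${(G,v,q,q)}$ from each, and decides based on $\bigcup_v U_v$ and $\eps$. Because $q$ is independent of $n$, the resulting family of forbidden rooted graphs will automatically be independent of $n$, which is exactly what ``uniformly'' semi-subgraph-free requires. So the main work is to extract this family and verify the two conditions in Definition~\ref{def:semi-H-rfreeness} (then invoke Lemma~\ref{lemma:semi-rooted-subgraph-free-yields-semi-subgraph-free} to descend to semi-subgraph-freeness, noting that the passage to $\overline{\mathcal{H}}$ preserves $n$-independence).

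Concretely, fix $\eps$, let $q = q(\eps)$ be the parameter the oblivious tester computes, and define $\mathcal{H} = \mathcal{H}(\eps)$ to be the set of all rooted graphs $Q$ that arise as a union $\bigcup_{v\in S}U_v$ of $q$ many $(q,q)$-bounded discs (with their roots) for which the oblivious tester rejects with positive probability. Each such $Q$ has at most $q \cdot 2q^q$ vertices and edges (using the bound on $(\dg,\ld)$-bounded discs from Section~\ref{subsec:BFS-like-search-bounded-discs}), so $\mathcal{H}$ is a finite family of finite graphs, and crucially it depends only on $\eps$. For condition~(i): if $G$ satisfies $\mathcal{P}_n$ and contained a rooted copy of some $Q \in \mathcal{H}$, then with positive probability the oblivious tester's sample would realize exactly that configuration of bounded discs, hence reject $G$, contradicting one-sidedness. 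For condition~(ii): if $G$ is $\eps$-far from $\mathcal{P}_n$, the tester must reject with probability at least $\tfrac23 > 0$; but it can only reject if the visited graph $\bigcup_v U_v$ is (root-preserving isomorphic to) some rejecting configuration, i.e.\ to some element of $\mathcal{H}$, so $G$ is not $\mathcal{H}$-rooted-free. This establishes that $\mathcal{P}$ is uniformly semi-rooted-subgraph-free; Lemma~\ref{lemma:semi-rooted-subgraph-free-yields-semi-subgraph-free} then yields uniformly semi-subgraph-free, since $\overline{\mathcal{H}} = \{\overline{H}: H\in\mathcal{H}\}$ is obtained by merely dropping root labels and hence still depends only on $\eps$.

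The one subtlety I would be careful about, and which I expect to be the main (though modest) obstacle, is the ``decision'' step of the oblivious tester: Definition~\ref{def:oblivious-tester} allows the tester to accept or reject \emph{randomly} according to $\eps$ and the visited graph. So ``$\mathcal{H}$ = configurations on which the tester rejects'' should really be ``configurations on which the tester rejects with nonzero probability.'' With that reading the two arguments above go through verbatim: for (i) positive rejection probability on the realized configuration contradicts one-sidedness; for (ii) the overall rejection probability is a convex combination over realized configurations of their individual rejection probabilities, so if it is positive then some realized configuration has positive rejection probability and hence lies in $\mathcal{H}$. A second, purely bookkeeping point is that the oblivious tester uses a fixed breadth/depth $q$ rather than separate $q'$, so $(q',q')$-bounded discs in the statement of Theorem~\ref{thm:canonical-tester} are here simply $(q,q)$-bounded discs; no essential change. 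Thus the proof is a direct adaptation of Theorem~\ref{thm:main-1-sided-reduces-to-H-freeness}, with the canonical-tester invocation replaced by the built-in structure of oblivious testers, and with the $n$-independence tracked throughout.

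\begin{proof}[Proof sketch]
By Lemma~\ref{lemma:semi-rooted-subgraph-free-yields-semi-subgraph-free} (and since removing root labels does not introduce any dependence on $n$), it suffices to show that $\mathcal{P}$ is uniformly semi-rooted-subgraph-free. Fix $\eps$, $0 < \eps < 1$, and let $q = q(\eps)$ be the integer computed by the oblivious one-sided error tester $\mathfrak{T}$ for $\mathcal{P}$. Let $\mathcal{H} = \mathcal{H}(\eps)$ be the set of all rooted graphs $Q$ that can be written as the union $\bigcup_{v\in S}U_v$ of $q$ many $(q,q)$-bounded discs (with the roots of the discs) such that $\mathfrak{T}$, upon seeing $\bigcup_{v\in S}U_v$, rejects with nonzero probability. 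By the bounds from Section~\ref{subsec:BFS-like-search-bounded-discs}, each $Q\in\mathcal{H}$ has at most $2q^{q+1}$ vertices and at most $2q^{q+1}$ edges, so $\mathcal{H}$ is a finite family of finite rooted graphs, and $\mathcal{H}$ depends only on $\eps$.

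We verify the two conditions of Definition~\ref{def:semi-H-rfreeness} for every $n\in\NATURAL$. For item~(i), suppose some $n$-vertex graph $G$ satisfying $\mathcal{P}_n$ contains a rooted copy of some $Q\in\mathcal{H}$. Then with positive probability the sample $S$ of $\mathfrak{T}$ and the resulting traversals produce $\bigcup_{v\in S}U_v \cong_r Q$, and by definition of $\mathcal{H}$, conditioned on this event $\mathfrak{T}$ rejects with positive probability. Hence $\mathfrak{T}$ rejects $G$ with positive probability, contradicting one-sidedness. Therefore $G$ is $\mathcal{H}$-rooted-free. For item~(ii), let $G$ be an $n$-vertex graph that is $\eps$-far from $\mathcal{P}_n$. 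Then $\mathfrak{T}$ rejects $G$ with probability at least $\tfrac23 > 0$. Since the rejection probability of $\mathfrak{T}$ on $G$ is the average, over the realized configurations $\bigcup_{v\in S}U_v$, of the conditional rejection probabilities, there must be some realized configuration $U$ with positive conditional rejection probability; by definition $U \in \mathcal{H}$, and $U$ is a rooted copy in $G$, so $G$ is not $\mathcal{H}$-rooted-free.

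This shows $\mathcal{P}$ is uniformly semi-rooted-subgraph-free, and by Lemma~\ref{lemma:semi-rooted-subgraph-free-yields-semi-subgraph-free} applied with the $n$-independent family $\overline{\mathcal{H}}=\{\overline{H}:H\in\mathcal{H}\}$, $\mathcal{P}$ is uniformly semi-subgraph-free.
\end{proof}
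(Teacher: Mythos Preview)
Your proposal is correct and follows essentially the same approach as the paper: define $\mathcal{H}$ as the family of rooted configurations (unions of $q$ many $(q,q)$-bounded discs) on which the oblivious tester rejects with positive probability, observe that $q=q(\eps)$ makes $\mathcal{H}$ independent of $n$, verify conditions~(i) and~(ii) of Definition~\ref{def:semi-H-rfreeness} by the one-sidedness contradiction and the positive-rejection argument respectively, and finish via Lemma~\ref{lemma:semi-rooted-subgraph-free-yields-semi-subgraph-free}. The only cosmetic difference is that the paper phrases membership in $\mathcal{H}$ via a witness graph $G'$ on which the tester visits $H$ and rejects, whereas you phrase it intrinsically; your explicit handling of the tester's possibly randomized accept/reject decision is a nice clarification that the paper leaves implicit.
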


\begin{proof}
We follow the proof of Theorem \ref{thm:main-1-sided-reduces-to-H-freeness}. As before, thanks to Lemma \ref{lemma:semi-rooted-subgraph-free-yields-semi-subgraph-free}, it is enough to show that if a graph property $\mathcal{P}$ has an oblivious one-sided error tester then $\mathcal{P}$ is uniformly semi-\emph{rooted}-subgraph-free.

Let $\mathcal{P}$ be a graph property that has an oblivious one-sided error tester $\mathfrak{T}$. Fix $\eps$, $0 < \eps < 1$. We define $\mathcal{H}$ as a family of rooted graphs, such that a rooted graph $H$ belongs to $\mathcal{H}$, if for some input graph $G$, when the tester $\mathfrak{T}$ is run on $G$ with given $\eps$, then with positive probability
\begin{inparaenum}[(i)]
\item $\mathfrak{T}$ visits (exactly)\SArtur{I'm not sure if I like phrase ``(exactly)'', but here I wanted to stress that the entire visited subgraph $U$ must be root-preserving isomorphic to $H$, not just a subgraph of $U$.} a subgraph of $G$ that is root-preserving isomorphic to $H$ and
\item $\mathfrak{T}$ rejects $G$.
\end{inparaenum}
Observe that $\mathcal{H}$ is independent of $n$. We will show that so defined family $\mathcal{H}$ of rooted graphs satisfies the conditions in Definition~\ref{def:semi-H-rfreeness}, proving that $\mathcal{P}$ is uniformly semi-subgraph-free.

Let us first notice that each rooted graph $\mathcal{H}$ has at most $2 (q(\eps))^{q(\eps)}$ vertices and at most $2 (q(\eps))^{q(\eps)}$ edges, and so $\mathcal{H}$ is a finite family of finite rooted graphs.

Let us next show item {\it (i)} of Definition \ref{def:semi-H-rfreeness}, that any graph $G$ satisfying $\mathcal{P}$ is $\mathcal{H}$-rooted-free. The proof is by contradiction. Suppose that there is a graph $G$ satisfying $\mathcal{P}$ which contains a rooted copy of $H \in \mathcal{H}$.
By definition of $\mathcal{H}$, there must be an input graph $G'$, such that $G'$ has a rooted copy of $H$, and if $\mathfrak{T}$ is run on $G'$ with the fixed $\eps$, then with positive probability, $\mathfrak{T}$ visits that rooted copy of $H$ and then rejects $G'$.
But this implies that if for that $\eps$ we run $\mathfrak{T}$ on $G$, then also with positive probability $\mathfrak{T}$ visits that rooted copy of $H$ in $G$. But since on that basis $\mathfrak{T}$ rejects $G'$ with positive probability, so it must do for $G$. This means that the tester has a nonzero probability of rejecting $G$, contradicting our assumption that the tester $\mathfrak{T}$ is one-sided.

Now, we want to prove item {\it (ii)} of Definition \ref{def:semi-H-freeness}. Let $G$ be a graph that is $\eps$-far from satisfying $\mathcal{P}$. Any tester for $\mathcal{P}$ should reject $G$ with nonzero probability. By definition of an oblivious tester, $G$ must contain a rooted subgraph $H$ such that if the tester $\mathfrak{T}$ gets $H$ from the oracle, then it rejects $G$. By definition of $H$ this means that $H \in \mathcal{H}$, which proves item {\it (ii)} of Definition \ref{def:semi-H-freeness}.

We showed that if $\mathcal{P}$ has an oblivious one-sided error tester then $\mathcal{P}$ is uniformly semi-rooted-subgraph-free. By Lemma \ref{lemma:semi-rooted-subgraph-free-yields-semi-subgraph-free}, this yields that $\mathcal{P}$ is uniformly semi-subgraph-free, completing the proof.
\end{proof}


\section{Auxiliary tools: Simplifying condition (\ref{part-a-lemma:ExistenceOfH}) of Lemma \ref{lemma:ExistenceOfH}}
\label{sec:condition-a-prime}

In this section we show how one can simplify condition (\ref{part-a-lemma:ExistenceOfH}) of Lemma \ref{lemma:ExistenceOfH} and prove Lemma \ref{lemma:transformation}. Let us recall that Lemma \ref{lemma:transformation} states that if there is a graph $\G[\DCH]$ with a linear number of edge-disjoint colored copies of $H$, then there is always a subset $\DCH' \subseteq  \DCH$ with cardinality $|\DCH'| = \Omega_{\eps,H} (|\DCH|)$ such that the graph $\G[\DCH']$ satisfies property (\ref{part-a-lemma:ExistenceOfH}).

Our arguments follow the approach presented in \cite{CMOS11}. We begin by showing that condition (\ref{part-a-lemma:ExistenceOfH}) of Lemma \ref{lemma:ExistenceOfH} is implied by a simple condition on the degrees of the vertices in $\U$, namely, the degree of each vertex is either $0$ or is a constant factor of its corresponding degree in $G$.

\begin{lemma}[\textbf{Property (\ref{part-a-lemma:ExistenceOfH}')}]
\label{lemma:basic}
Let $G = (V,E)$ be a simple graph and let $\dg, \ld = \Theta_{\eps,H}(1)$. Let $\U$ be a subgraph of $G$ on vertex set $V$ such that the following property holds:
\begin{itemize}
\item[(\ref{part-a-lemma:ExistenceOfH}')] for every vertex $v \in V$, either $deg_{\U}(v) = 0$ or $deg_{\U}(v) = \Omega_{\eps,H}(deg_G(v))$.
\end{itemize}
Then property (\ref{part-a-lemma:ExistenceOfH}) of Lemma \ref{lemma:ExistenceOfH} is satisfied, that is, if \RLBD{($\U,H,\dg,\ld$)} finds a copy of $H$ in $\U$ with probability $\Omega_{\eps,H}(1)$, then \RLBD{($G,H,\dg,\ld$)} finds a copy of $H$ in $G$ with probability $\Omega_{\eps,H}(1)$.
\end{lemma}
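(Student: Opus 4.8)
The plan is to compare the two executions \RLBD{($\U,H,\dg,\ld$)} and \RLBD{($G,H,\dg,\ld$)} by a direct coupling argument. The key observation is that \RLBFS\ only ever queries random neighbors of vertices it has already reached, starting from a uniformly random vertex. So the plan is: first, observe that the starting vertex is chosen i.u.r.\ from $V$ in both runs, so its distribution is identical. Second, fix a copy $H^*$ of $H$ in $\U$ (which is also a copy in $G$, since $\U$ is a subgraph of $G$), and fix a ``good'' event $\mathcal{A}$ for the $\U$-run that consists of a particular BFS-exploration pattern that reaches all of $H^*$; by hypothesis $\Pr{\mathcal{A}} = \Omega_{\eps,H}(1)$. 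I would then show that the corresponding exploration in $G$ that makes the ``same'' branching decisions — i.e.\ at each reached vertex $u$ of $\U$, among the $\dg$ random-neighbor queries, those that in the $\U$-run landed on edges of $H^*$ (or of whatever subgraph of $\U$ the analysis of $\mathcal{A}$ uses) land on the same edges in the $G$-run — has probability at least a constant factor of $\Pr{\mathcal{A}}$.

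The heart of the argument is the per-query comparison. In the $\U$-run, a random-neighbor query at $u$ hits a fixed desired neighbor $w$ (with $(u,w) \in E(\U)$) with probability exactly $1/\deg_{\U}(u)$. In the $G$-run the same query hits $w$ with probability $1/\deg_G(u)$. Since $u$ is reached only if $\deg_{\U}(u) \neq 0$, property (\ref{part-a-lemma:ExistenceOfH}') guarantees $\deg_{\U}(u) \ge c \cdot \deg_G(u)$ for some $c = \Omega_{\eps,H}(1)$, hence $1/\deg_G(u) \ge c \cdot (1/\deg_{\U}(u))$. So each single ``correct'' query in the $G$-run succeeds with probability at least $c$ times the corresponding probability in the $\U$-run. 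Because \Traverse\ makes at most $2\dg^{\ld} = O_{\eps,H}(1)$ queries in total, a correct joint exploration in $G$ occurs with probability at least $c^{O_{\eps,H}(1)} = \Omega_{\eps,H}(1)$ times the probability of the analogous event in $\U$. More carefully, I would make this a formal coupling: run both processes on the same underlying randomness, where at the $j$-th random-neighbor query in the $\U$-run landing on a vertex $w$, we declare the $j$-th query of the $G$-run ``successful'' (landing on the same $w$) with the appropriate conditional probability, which is $\ge c$; successes across the at most $2\dg^{\ld}$ queries are handled one at a time, and if all the queries relevant to discovering $H^*$ succeed, then the $G$-run discovers $H^*$ too, hence finds a copy of $H$.

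Putting it together: conditioned on the event that \RLBD{($\U,H,\dg,\ld$)} finds a copy of $H$ — an event of probability $\Omega_{\eps,H}(1)$ — this finding is witnessed by some bounded-size subgraph discovered along a bounded-length exploration; under the coupling, the $G$-run reproduces that discovery with probability $\Omega_{\eps,H}(1)$, and in particular then also finds a copy of $H$ (the very same copy). Therefore \RLBD{($G,H,\dg,\ld$)} finds a copy of $H$ with probability $\Omega_{\eps,H}(1)$, which is exactly property (\ref{part-a-lemma:ExistenceOfH}) of Lemma \ref{lemma:ExistenceOfH}. The main obstacle I anticipate is bookkeeping: one has to be careful that the exploration in $\U$ that discovers $H^*$ may involve vertices and edges outside $H^*$ (since \Traverse\ picks $\dg$ neighbors, not just the ``useful'' ones), but this does not matter — we only need the $G$-run to succeed on the queries that actually contribute to finding \emph{a} copy of $H$, and we can be generous and simply condition on the finite exploration tree of the $\U$-run, then re-sample the $G$-run's query answers to agree on exactly the edges needed; the union bound / product over at most $2\dg^{\ld}$ queries then yields the constant. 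A minor subtlety is that the discovered copy might use a vertex of degree $0$ in $\U$ — but that is impossible, since every vertex of a copy of $H$ in $\U$ has positive $\U$-degree, so (\ref{part-a-lemma:ExistenceOfH}') applies to all vertices involved.
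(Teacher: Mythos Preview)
Your proposal is correct and follows essentially the same approach as the paper: fix an exploration outcome $\mathcal{E}$ in $\U$ that contains a copy of $H$, observe that every vertex touched has positive $\U$-degree so the degree condition applies, and then compare the probability of following each edge of $\mathcal{E}$ in $G$ versus $\U$ query-by-query, losing at most an $O_{\eps,H}(1)$ factor per query and hence $O_{\eps,H}(1)^{O_{\eps,H}(1)} = O_{\eps,H}(1)$ overall. The paper's proof is just a terser version of your argument, without the explicit coupling language.
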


\begin{proof}
Take any set of edges $\mathcal{E}$ that can be found by a single call of \RLBFS\,${(\U,\dg,\ld)}$ such that the subgraph of $\U$ induced by the edges $\mathcal{E}$ contains a copy of $H$. Since $\U$ is a subgraph of $G$, \RLBFS\,${(G,\dg,\ld)}$ can find (explore) the same edge set $\mathcal{E}$. Now, we will estimate the relation between the probability that \RLBFS\,${(\U,\dg,\ld)}$ finds $\mathcal{E}$ and the probability that \RLBFS\,${(G,\dg,\ld)}$ finds~$\mathcal{E}$.

By the assumption of the lemma, every vertex visited during the finding of $\mathcal{E}$ must have $\deg_{\U}(v) = \Omega_{\eps,H}(\deg_G(v))$ (since these vertices cannot be isolated in $\U$). Therefore, at every step of the exploration algorithm \RLBFS\,${(\U,\dg,\ld)}$, the probability of following a single edge from $\mathcal{E}$ decreases in $G$ by at most a factor of $O_{\eps,H}(1)$, compared to $\U$. Overall the probability of finding $\mathcal{E}$ in $G$ versus finding it in $\U$ decreases by at most a factor of $\left(O_{\eps,H}(1) \right)^{|\mathcal{E}|} = \left(O_{\eps,H}(1) \right)^{O_{\eps,H}(1)} = O_{\eps,H}(1)$.
\end{proof}


Lemma \ref{lemma:basic} provides a useful tool that simplifies the framework from Lemma \ref{lemma:ExistenceOfH}, and Lemma \ref{lemma:transformation} shows that in fact the condition on degrees can be always obtained by a simple reduction. That is, if there is a graph $\G[\DCH]$ with a \emph{linear number of edge-disjoint colored copies of $H$}, then Lemma \ref{lemma:transformation} shows that there is always a subset $\DCH' \subseteq  \DCH$ with cardinality $|\DCH'| = \Omega_{\eps,H} (|\DCH|)$ such that the graph $\G[\DCH']$ satisfies property (a) via showing that it satisfies property (a').

\junk{
\begin{lemma}[\textbf{Transformation to obtain property (\ref{part-a-lemma:ExistenceOfH}')}]
\label{lemma:transformation}
Let $G = (V,E)$ be a planar graph. Let $\DCH$ be a set of $\Omega_{\eps,H}(|V|)$ edge-disjoint colored copies of $H$ in $G$. Then there exists a subset $\DCH' \subseteq \DCH$, $|\DCH'| = \Omega_{\eps,H}(|V|)$, such that the graph $\G[\DCH']$ satisfies condition (\ref{part-a-lemma:ExistenceOfH}') of Lemma \ref{lemma:basic}. That is, for every $v \in V$, either $\deg_{\G[\DCH']}(v) = 0$ or $\deg_{\G[\DCH']}(v) = \Omega_{\eps,H}(\deg_G(v))$.
\end{lemma}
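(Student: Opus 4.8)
The plan is to reduce the statement to Lemma~\ref{lemma:basic}: it suffices to exhibit a subset $\DCH' \subseteq \DCH$ with $|\DCH'| = \Omega_{\eps,H}(|V|)$ for which $\U := \G[\DCH']$ satisfies property~(\ref{part-a-lemma:ExistenceOfH}'), i.e. every vertex $v\in V$ has $\deg_{\U}(v)=0$ or $\deg_{\U}(v)=\Omega_{\eps,H}(\deg_G(v))$; Lemma~\ref{lemma:basic} then yields condition~(\ref{part-a-lemma:ExistenceOfH}) of Lemma~\ref{lemma:ExistenceOfH}. I would obtain $\DCH'$ by an iterative \emph{cleaning} procedure, in the spirit of the corresponding argument in \cite{CMOS11}. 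Fix the constant $\gamma=\gamma(\eps,H)=\Omega_{\eps,H}(1)$ with $|\DCH|\ge \gamma|V|$, and set the threshold $\beta:=\gamma/12$.

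Starting from $\DCH_0:=\DCH$, repeat: if the current collection $\DCH_t$ has a \emph{bad} vertex $v$, meaning $0<\deg_{\G[\DCH_t]}(v)<\beta\,\deg_G(v)$, then form $\DCH_{t+1}$ by removing from $\DCH_t$ every colored copy of $H$ that uses $v$; otherwise stop and put $\DCH':=\DCH_t$. Each round deletes at least one copy, so the process terminates, and when it stops no bad vertex remains, hence every $v$ satisfies $\deg_{\G[\DCH']}(v)=0$ or $\deg_{\G[\DCH']}(v)\ge \beta\,\deg_G(v)=\Omega_{\eps,H}(\deg_G(v))$, which is exactly property~(\ref{part-a-lemma:ExistenceOfH}'). (A harmless side remark: a vertex $v$ with $\deg_G(v)<1/\beta$ can never be bad, since an integer degree $\ge 1$ already exceeds $\beta\,\deg_G(v)$, so such vertices automatically satisfy the degree condition.)

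It remains to bound the loss, i.e. to show $|\DCH'|=\Omega_{\eps,H}(|V|)$. Here I would use two observations. First, once all copies through a vertex $v$ have been removed, $v$ has degree $0$ in every later collection, because deletions only decrease degrees; thus each vertex is ``processed'' at most once. Second, at the moment $v$ is processed, the number of copies of $H$ through $v$ is at most $\deg_{\G[\DCH_t]}(v)<\beta\,\deg_G(v)$: since $H$ is connected, $v$ has degree $\ge 1$ in every copy containing it, and the copies are edge-disjoint. Summing over the (distinct) processed vertices and invoking planarity, $\sum_{v}\deg_G(v)=2|E|\le 6|V|$ by Fact~\ref{fact:planar_limited_edges}, the total number of deleted copies is less than $\beta\cdot 6|V| = \gamma|V|/2 \le |\DCH|/2$. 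Hence $|\DCH'|\ge |\DCH|/2 \ge \gamma|V|/2 = \Omega_{\eps,H}(|V|)$, and combined with the previous paragraph and Lemma~\ref{lemma:basic} this proves Lemma~\ref{lemma:transformation}.

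The main point requiring care is the second observation above: one must make sure every bad vertex, when processed, is responsible for only $O(\beta\,\deg_G(v))$ deleted copies and is charged only once, so that the global loss is a constant fraction of $|E|$ (hence, by planarity, of $|V|$) rather than something uncontrolled. This is precisely where edge-disjointness of the copies in $\DCH$ and the linear edge bound for planar (or minor-free) graphs are used; everything else is routine bookkeeping of the iteration.
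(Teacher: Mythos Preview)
Your proof is correct and essentially identical to the paper's: the paper sets $\alpha=|\DCH|/|V|$ and threshold $\alpha/12$, iteratively deletes all copies through any non-isolated vertex whose current degree is at most $\tfrac{\alpha}{12}\deg_G(v)$, charges the deleted copies to that vertex (each vertex charged once), and bounds the total loss by $\sum_v \tfrac{\alpha}{12}\deg_G(v)\le \tfrac{\alpha}{2}|V|$ using $\sum_v\deg_G(v)\le 6|V|$ from planarity---exactly your argument with $\gamma=\alpha$ and $\beta=\alpha/12$. Your explicit remark that edge-disjointness plus connectedness of $H$ bounds the number of copies through $v$ by its current degree makes transparent a step the paper leaves implicit.
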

}

\LPaI* 

\begin{proof}
We will show that if $\DCH$ is a set of $\Omega_{\eps,H}(|V|)$ edge-disjoint colored copies of $H$ in $G$, then there exists a subset $\DCH' \subseteq \DCH$, $|\DCH'| = \Omega_{\eps,H}(|V|)$, such that the graph $\G[\DCH']$ satisfies condition (\ref{part-a-lemma:ExistenceOfH}') of Lemma \ref{lemma:basic} (that is, for every $v \in V$, either $\deg_{\G[\DCH']}(v) = 0$ or $\deg_{\G[\DCH']}(v) = \Omega_{\eps,H}(\deg_G(v))$). By Lemma \ref{lemma:basic}, this yields the proof of Lemma \ref{lemma:transformation}.

We construct the subset $\DCH'$ by deleting some copies of $H$ from $\DCH$. The process of deleting copies of $H$ is based on the comparison of the original degree of the vertices with the current degree in $\G[\DCH']$. To implement this scheme, we write $\deg_G(v)$ to denote the degree of $v$ in the original graph $G$ and we use the term \emph{current degree} of a vertex $v$ to denote its current degree in the graph $\G[\DCH']$ induced by the \emph{current} set $\DCH'$ of copies of $H$ (where ``current'' means at a given moment in the process). Let $\alpha = \frac{|\DCH|}{|V|} = \Omega_{\eps,H}(1)$. We repeat the following procedure as long as possible: if there is a non-isolated vertex $v \in V$ with current degree in $\G[\DCH']$ at most $\frac{\alpha}{12} \deg_G(v)$, then we delete from $\DCH'$ all copies of $H$ in the current $\DCH'$ incident to $v$. To estimate the number of copies of $H$ deleted, we charge to $v$ the number of deleted copies of $H$ in each such operation. Observe that each $v \in V$ will be processed not more than once. Indeed, once $v$ has been used, it becomes isolated, and hence it is not used again. Therefore, at most $\frac{\alpha}{12} \deg_G(v)$ copies of $H$ from $\DCH'$ can be charged to any single vertex. This, together with the inequality $\sum_{v \in V} \deg_G(v) \le 6 |V|$ by planarity of $\G[\DCH']$, implies that the total number of copies of $H$ removed from $\DCH$ to obtain $\DCH'$ is upper bounded by $\sum_{v \in V} \frac{\alpha}{12} \deg_G(v) \le \frac{\alpha}{2} |V|$. Since $|\DCH| = \alpha |V|$, we conclude that $|\DCH'| \ge |\DCH| - \frac{\alpha}{2} |V| = \frac{\alpha}{2} |V| = \Omega_{\eps,H}(|V|)$.
\end{proof}


\section{Some basic properties of the process of shrinking $H$ and hypergraph representation of $H$ by $\M_i$ (Section \ref{subsec:shrinking-H})}
\label{proofs-subsec:shrinking-H}

In this section we present some basic properties of the process of shrinking $H$ and hypergraph representation of $H$ by $\M_i$, as defined in Section \ref{subsec:shrinking-H}. While not all of them are necessary for our analysis, we believe they are useful to better understand the ideas behind our approach.

We begin with the following simple claim.

\begin{claim}
\label{claim:properties-of-labels-in-M_i}
For any $i$, $1 \le i \le \Hsize$,
\begin{itemize}
\item $V(\M_i) = \{v_i, v_{i+1}, \dots, v_{\Hsize}\}$,
\item for every hyperedge $\e \in E(\M_i)$, $\lab(\e) \subseteq V(H) \setminus V(\M_i)$, 
    and
\item for any $\e \in E(\M_i)$, every vertex in $\lab(\e)$ is adjacent in $H$ only to vertices in $\e \cup \lab(\e)$.
\end{itemize}
\end{claim}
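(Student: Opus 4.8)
\textbf{Proof plan for Claim \ref{claim:properties-of-labels-in-M_i}.}

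The plan is to prove all three statements simultaneously by induction on $i$, following the inductive definition of the shrinking process from Section \ref{subsec:shrinking-H}. The base case $i=1$ is immediate: $\M_1 = H$, so $V(\M_1) = V(H) = \{v_1, \dots, v_{\Hsize}\}$, every edge $\e$ is an original edge of $E(H)$ with $\lab(\e) = \emptyset \subseteq V(H) \setminus V(\M_1) = \emptyset$, and the third bullet holds vacuously since $\lab(\e)$ is empty. So I would assume the three properties hold for $\M_i$ and derive them for $\M_{i+1}$, recalling that $\M_{i+1}$ is obtained from $\M_i$ by removing $v_i$ and adding a single new hyperedge $\N_i$ consisting of the neighbors of $v_i$ in $\M_i$, with label $\lab(\N_i) = \{v_i\} \cup \bigcup_{\e \in \mathcal{E}_i} \lab(\e)$, where $\mathcal{E}_i$ is the set of edges incident to $v_i$ in $\M_i$.

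First I would handle the vertex set: $V(\M_{i+1}) = V(\M_i) \setminus \{v_i\} = \{v_i, \dots, v_{\Hsize}\} \setminus \{v_i\} = \{v_{i+1}, \dots, v_{\Hsize}\}$ by the inductive hypothesis, which is exactly the first bullet. Next, for the label-containment bullet, I split the edges of $\M_{i+1}$ into the old edges (those surviving from $\M_i$, i.e., not incident to $v_i$) and the one new edge $\N_i$. For an old edge $\e$, the inductive hypothesis gives $\lab(\e) \subseteq V(H) \setminus V(\M_i) \subseteq V(H) \setminus V(\M_{i+1})$ since $V(\M_{i+1}) \subseteq V(\M_i)$; I also need that $\lab(\e)$ does not contain $v_i$, but that is automatic since $v_i \in V(\M_i)$ and $\lab(\e) \subseteq V(H) \setminus V(\M_i)$, so in fact $\lab(\e) \subseteq V(H) \setminus V(\M_i) = (V(H) \setminus V(\M_{i+1})) \setminus \{v_i\} \subseteq V(H) \setminus V(\M_{i+1})$. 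For the new edge $\N_i$, its label is $\{v_i\} \cup \bigcup_{\e \in \mathcal{E}_i} \lab(\e)$; each $\lab(\e) \subseteq V(H) \setminus V(\M_i)$ by induction, and $v_i \in V(\M_i) \setminus V(\M_{i+1})$, so the union lies in $V(H) \setminus V(\M_{i+1})$, which handles this case. Then I would note, as a small byproduct to be used next, that the labels remain pairwise disjoint (the statement mentioned near the end of Section \ref{subsec:shrinking-H}): the labels in $\mathcal{E}_i$ are disjoint from each other and from $\{v_i\}$ by induction (and $v_i \notin \lab(\e)$ for any old $\e$), so $\lab(\N_i)$ is disjoint from every surviving label.

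The third bullet — every vertex in $\lab(\e)$ is adjacent in $H$ only to vertices in $\e \cup \lab(\e)$ — is the one I expect to require the most care, so I would do it last. For a surviving old edge $\e$, the vertex set $\e$ is unchanged and $\lab(\e)$ is unchanged, so the property is inherited directly from $\M_i$. The real content is the new edge $\N_i$: I must show that every $w \in \lab(\N_i) = \{v_i\} \cup \bigcup_{\e \in \mathcal{E}_i} \lab(\e)$ is $H$-adjacent only to vertices of $\N_i \cup \lab(\N_i)$. For $w = v_i$: its $H$-neighbors are its neighbors in $H$; by the inductive structure, the neighbors of $v_i$ in $\M_i$ are exactly the vertices $\N_i$, and any $H$-neighbor of $v_i$ not in $\N_i$ must lie in some $\lab(\e)$ for $\e \in \mathcal{E}_i$ — here I would invoke that an $H$-edge from $v_i$ to a contracted vertex $u$ is ``recorded'' in the edge of $\M_i$ whose label contains $u$, and that edge is incident to $v_i$, hence in $\mathcal{E}_i$; so $N_H(v_i) \subseteq \N_i \cup \bigcup_{\e \in \mathcal{E}_i}\lab(\e) = \N_i \cup (\lab(\N_i)\setminus\{v_i\}) \subseteq \N_i \cup \lab(\N_i)$. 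For $w \in \lab(\e)$ with $\e \in \mathcal{E}_i$: by the inductive hypothesis applied to $\e$ in $\M_i$, $N_H(w) \subseteq \e \cup \lab(\e)$; now $\lab(\e) \subseteq \lab(\N_i)$, and $\e \subseteq \{v_i\} \cup \N_i \subseteq \lab(\N_i) \cup \N_i$ because every vertex of $\e$ other than $v_i$ is a neighbor of $v_i$ in $\M_i$ (as $\e$ is incident to $v_i$) and hence lies in $\N_i$. Combining, $N_H(w) \subseteq \N_i \cup \lab(\N_i)$, completing the induction. The main obstacle, as indicated, is making the bookkeeping in this last step fully rigorous — in particular pinning down precisely the invariant that an original $H$-edge incident to an already-contracted vertex $u$ survives as (a contribution to the label of) the unique $\M_i$-edge whose label contains $u$, and that this edge is incident to whichever still-present vertex of $H$ that $u$ was contracted toward; this is exactly what lets me control $N_H(w)$ for contracted $w$.
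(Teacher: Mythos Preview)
Your proof is correct and follows essentially the same inductive argument as the paper's own proof. The only difference is that the paper dispatches your ``main obstacle'' for $w = v_i$ in one line: any $H$-neighbor $u$ of $v_i$ with $u \notin V(\M_i)$ lies in $\lab(\e)$ for some $\e \in E(\M_i)$, and the inductive hypothesis (third bullet for $\M_i$) then forces $v_i \in \e \cup \lab(\e)$, hence $v_i \in \e$ (since $\lab(\e) \cap V(\M_i) = \emptyset$), so $\e \in \mathcal{E}_i$ --- no extra invariant is needed.
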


\begin{proof}
Let us first notice that the first fact that $V(\M_i) = \{v_i, v_{i+1}, \dots, v_{\Hsize}\}$ follows trivially from our construction, and so we focus on proving the other two claims.

The proof of the other two parts is by induction on $i$. For $i=1$ the claim is true since $\M_1 = H$ and since in $\M_1$, we have $\lab(\e) = \emptyset$ for every $\e$.
Therefore, let us assume the claim for $i < \Hsize$, and consider it for $i+1$.

The construction of $\M_{i+1}$ ensures that the only changes between $\M_i$ and $\M_{i+1}$ are in vertex $v_i$ and in the edges/hyperedges incident to $v_i$ in $\M_i$.

To see the second part of the claim, note that $\bigcup_{\e \in E(\M_{i+1})} \e = \bigcup_{\e \in E(\M_i)} \e \setminus \{v_i\}$ and $\bigcup_{\e \in E(\M_{i+1})} \lab(\e) = \{v_i\} \cup \bigcup_{\e \in E(\M_i)} \lab(\e)$, and hence the claim that $\bigcup_{\e \in E(\M_{i+1})} \e \cap \bigcup_{\e \in E(\M_{i+1})} \lab(\e) = \emptyset$ follows by induction.

To see the third part of the claim, if $\e \in E(\M_i)$ and $\e \in E(\M_{i+1})$, then the claim follows by induction. Otherwise, if $\e \in E(\M_{i+1})$ and $\e \not\in E(\M_i)$, then $\e = \N_i$. If $\mathcal{E}_i$ denotes the set of edges/hyperedges incident to vertex $v_i$ in $\M_i$, then $\lab(\N_i) = \{v_i\} \cup \bigcup_{\e^* \in \mathcal{E}_i} \lab(\e^*)$. Since by induction, for any $\e^* \in \mathcal{E}_i$ (which is an edge/hyperedge in $\M_i$), every vertex in $\lab(\e^*)$ is adjacent in $H$ only to vertices in $\e^* \cup \lab(\e^*)$, the fact that $\e^* \cup \lab(\e^*) \subseteq \N_i \cup \lab(\N_i)$ implies that every vertex in $\lab(\e^*)$ is adjacent in $H$ only to vertices in $\N_i \cup \lab(\N_i)$. Further, vertex $v_i$ is adjacent in $H$ only to vertices in $\N_i$ and some of vertices in $\bigcup_{\e^* \in \mathcal{E}_i} \lab(\e^*)$. Therefore, every vertex in $\lab(\N_i) = \{v_i\} \cup \bigcup_{\e^* \in \mathcal{E}_i} \lab(\e^*)$ is adjacent in $H$ only to vertices in $\N_i \cup \lab(\N_i)$.
\end{proof}


Let us state the following property of our construction that follows from our discussion.

\begin{claim}
\label{claim:properties-of-Mi}
For every $i$, $1 \le i \le \Hsize$, the hypergraph $\M_i$ contains vertices $\{v_i,\dots,v_{\Hsize}\}$ and two types of edges:
\begin{itemize}
\item ``regular'' edges: if $(v_j,v_{\ell}) \in E(H)$ with $i \le j, \ell \le \Hsize$, then $(v_j,v_{\ell})$ is an edge in $\M_i$;
\item hyperedges: if there is $j$, $1 \le j < i$, with $\N_j \cap \{v_{j+1},\dots,v_{i-1}\} = \emptyset$ then $\N_j$ forms a hyperedge in $\M_i$.
\end{itemize}
\end{claim}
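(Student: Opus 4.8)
The plan is to prove the claim by induction on $i$, after strengthening it slightly so that the induction closes: I will show that for every $i$ with $1 \le i \le \Hsize$, the edge set of $\M_i$ is \emph{exactly} the disjoint union of (i) the ``regular'' edges, namely all $(v_j,v_\ell)$ with $(v_j,v_\ell)\in E(H)$ and $i \le j,\ell \le \Hsize$, and (ii) the hyperedges $\N_j$ for precisely those $j$ with $1 \le j < i$ and $\N_j \cap \{v_{j+1},\dots,v_{i-1}\} = \emptyset$, each such $\N_j$ kept as a separate labelled hyperedge. The exact characterization is what makes the induction work; the ``if $\dots$ then $\dots$'' form in the statement is then immediate. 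The identity $V(\M_i) = \{v_i,\dots,v_\Hsize\}$ is already recorded in Claim~\ref{claim:properties-of-labels-in-M_i}, and is in any case clear from the construction since step $i$ deletes exactly the vertex $v_i$.

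For the base case $i=1$ we have $\M_1 = H$, so $V(\M_1) = V(H) = \{v_1,\dots,v_\Hsize\}$, every edge of $\M_1$ is an edge of $H$ carrying the empty label (hence a regular edge with $1 \le j,\ell \le \Hsize$), and there is no index $j$ with $1 \le j < 1$, so there are no hyperedges; this matches the characterization. For the inductive step, assume the characterization for $\M_i$ with $i < \Hsize$ and pass to $\M_{i+1}$. By the construction of Section~\ref{subsec:shrinking-H}, $\M_{i+1}$ is obtained by taking $\N_i$ to be the set of neighbours of $v_i$ in $\M_i$, deleting $v_i$ together with every edge incident to it, and adding one new hyperedge on vertex set $\N_i$ (with label $\{v_i\}\cup\bigcup_{\e\in\mathcal{E}_i}\lab(\e)$). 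Hence $V(\M_{i+1}) = V(\M_i)\setminus\{v_i\} = \{v_{i+1},\dots,v_\Hsize\}$. No contraction ever creates a regular edge (every newly added edge is some $\N_k$, which has a non-empty label), so the regular edges of $\M_{i+1}$ are precisely the regular edges of $\M_i$ not incident to $v_i$, i.e.\ the $(v_j,v_\ell)\in E(H)$ with $i \le j,\ell \le \Hsize$ and $j,\ell\neq i$, i.e.\ with $i+1 \le j,\ell \le \Hsize$. For the hyperedges, an edge (regular or hyper) is incident to $v_i$ exactly when $v_i$ lies in its vertex set, so a hyperedge $\N_j$ of $\M_i$ survives into $\M_{i+1}$ iff $v_i \notin \N_j$; combining this with the inductive description, the surviving hyperedges are the $\N_j$ with $1 \le j < i$, $\N_j \cap \{v_{j+1},\dots,v_{i-1}\} = \emptyset$, and $v_i\notin\N_j$, which is the single condition $1 \le j < i$ and $\N_j\cap\{v_{j+1},\dots,v_i\} = \emptyset$. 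Adjoining the newly created hyperedge $\N_i$ --- which corresponds to $j=i$, for which $\{v_{j+1},\dots,v_i\} = \{v_{i+1},\dots,v_i\} = \emptyset$, so its defining condition holds vacuously --- we conclude that the hyperedges of $\M_{i+1}$ are exactly the $\N_j$ with $1 \le j < i+1$ and $\N_j\cap\{v_{j+1},\dots,v_{(i+1)-1}\} = \emptyset$. This is the characterization for $\M_{i+1}$, completing the induction, and in particular proving the claim.

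A few minor points deserve an explicit word. First, $v_i\notin\N_i$: a vertex is not its own neighbour, even when a self-loop hyperedge $\{v_i\}$ is present (such a self-loop is incident to $v_i$ and is deleted together with $v_i$), so $\N_i$ is a bona fide hyperedge on a subset of $\{v_{i+1},\dots,v_\Hsize\}$. Second, distinct indices $j$ may produce sets $\N_j$ on the same vertex set, i.e.\ parallel hyperedges; as the paper already emphasises these are treated as distinct hyperedges, distinguished by their (pairwise disjoint) labels, and the induction above tracks them one index at a time, so nothing is lost. I do not expect a genuine obstacle here; the only step where a reader might pause is ``a hyperedge $\N_j$ survives iff $v_i\notin\N_j$'', which I would justify in one line by recalling that contracting (removing) $v_i$ deletes an edge $\e$ precisely when $v_i\in\e$, directly from the definition of the operation.
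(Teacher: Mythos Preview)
Your proof is correct and follows essentially the same inductive argument as the paper's own proof: both establish the base case $\M_1 = H$ trivially and, in the inductive step, track which regular edges and which existing hyperedges $\N_j$ survive the removal of $v_i$ (namely those with $v_i\notin\N_j$), then adjoin the new hyperedge $\N_i$ whose condition holds vacuously. Your version is slightly more explicit in framing the claim as an exact characterization and in handling parallel hyperedges and self-loops, but the structure and all key steps coincide with the paper's.
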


\begin{proof}
The proof is by induction. The claim trivially holds for $\M_1$, since $\M_1 = H$. Therefore, let us assume the claim for $\M_i$ with $i < \Hsize$, and consider it for $i+1$.

The construction of $\M_{i+1}$ ensures that its vertex set is $\{v_{i+1},\dots,v_{\Hsize}\}$ and the only changes between $\M_i$ and $\M_{i+1}$ are in vertex $v_i$ and in the edges/hyperedges incident to $v_i$ in $\M_i$. Any regular edge $(v_j,v_{\ell}) \in E(H)$ with $i \le j, \ell \le \Hsize$ in $\M_i$ stays as a regular edge in $\M_{i+1}$ if $j, \ell > i$. Therefore, if  $(v_j,v_{\ell}) \in E(H)$ with $i+1 \le j, \ell \le \Hsize$, then $(v_j,v_{\ell})$ is an edge in $\M_{i+1}$.

For hyperedges, a hyperedge $\N_j$ ($1 \le j < i$) in $\M_i$ stays as a hyperedge in $\M_{i+1}$ only if $v_i \notin \N_j$. Hence, any such $\N_j$ satisfies the property that $\N_j \cap \{v_{j+1},\dots,v_{i-1}\} = \emptyset$ and that $v_i \notin \N_j$, and therefore $\N_j \cap \{v_{j+1},\dots,v_i\} = \emptyset$.

Furthermore, our construction adds also a new single hyperedge $\N_i$ with all vertices in the hyperedge in $\{v_{i+1},\dots,v_{\Hsize}\}$. Therefore, such a new hyperedge $\N_i$ satisfies the property that $\N_i \cap \{v_{i+1},\dots,v_i\} = \emptyset$. Hence, in either case, if there is $j$, $1 \le j < i+1$, with $\N_j \cap \{v_{j+1},\dots,v_i\} = \emptyset$ then $\N_j$ forms a hyperedge in $\M_{i+1}$, as required.
\end{proof}


\junk{
\begin{claim}
\label{claim:H-modeled-by-Mi}\Artur{Do we really want or need this claim?}
Let $1 \le i \le \Hsize$. A graph $H = (V(H), E(H))$ is represented by a hypergraph $\M_i$ such that:
\begin{itemize}
\item the set of ``regular'' edges in $\M_i$ is equal to the set of edges in the subgraph of $H$ induced by the vertex set of $\M_i$, and
\item every hyperedge $\e$ in $\M_i$ represents the subgraph of $H$ with the vertex set $\e \cup \lab(\e)$ and the edge set $\{(x,y) \in E(H): x \in \lab(\e) \text{ and } y \in \e \cup \lab(\e)\}$.
\end{itemize}
Furthermore, all so obtained subgraphs of $H$ are edge-disjoint and their union defines $H$.
\end{claim}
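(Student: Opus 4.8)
The final statement I need to prove is Claim \ref{claim:H-modeled-by-Mi}, which asserts that the hypergraph $\M_i$ decomposes $H$ into edge-disjoint pieces: the regular edges of $\M_i$ form exactly the induced subgraph of $H$ on $V(\M_i)$, each hyperedge $\e$ encodes the subgraph of $H$ with vertex set $\e \cup \lab(\e)$ and all $H$-edges incident to $\lab(\e)$, and these pieces partition $E(H)$.

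My plan is a straightforward induction on $i$, building on the properties already established in Claims \ref{claim:properties-of-labels-in-M_i} and \ref{claim:properties-of-Mi}. For the base case $i=1$, we have $\M_1 = H$ with all labels empty, so there are no hyperedges; the regular edges are exactly $E(H)$, which is the edge set of the induced subgraph on $V(H) = V(\M_1)$, and the decomposition is trivial. For the inductive step, I would assume the claim for $\M_i$ and analyze the single transformation that produces $\M_{i+1}$: we delete $v_i$ together with the incident edge set $\mathcal{E}_i = \{\e_1, \dots, \e_\ell\}$, and add the hyperedge $\N_i$ with $\lab(\N_i) = \{v_i\} \cup \bigcup_{j} \lab(\e_j)$. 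Nothing else changes, so it suffices to check that (1) the regular edges of $\M_{i+1}$ form the induced subgraph of $H$ on $V(\M_{i+1}) = V(\M_i) \setminus \{v_i\}$ — this follows because a regular edge of $\M_i$ survives in $\M_{i+1}$ iff it is not incident to $v_i$, i.e. iff both its endpoints lie in $V(\M_{i+1})$; and (2) the new hyperedge $\N_i$ correctly represents the subgraph it is supposed to, and the pieces remain edge-disjoint with union $E(H)$.

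The key computational step is showing that the subgraph represented by $\N_i$ is precisely the union of the subgraphs represented by $v_i$-incident edges $\e_1, \dots, \e_\ell$ of $\M_i$, together with the ``edges of $H$ from $v_i$'' that were implicitly carried by those edges. Concretely: the subgraph $S_j$ represented by $\e_j$ (by induction) has vertex set $\e_j \cup \lab(\e_j)$ and edge set $\{(x,y) \in E(H) : x \in \lab(\e_j), y \in \e_j \cup \lab(\e_j)\}$, while the subgraph $S$ represented by $\N_i$ should have vertex set $\N_i \cup \lab(\N_i)$ and edge set $\{(x,y) \in E(H): x \in \lab(\N_i), y \in \N_i \cup \lab(\N_i)\}$. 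Since $\lab(\N_i) = \{v_i\} \cup \bigcup_j \lab(\e_j)$ and $\N_i \cup \lab(\N_i) = \bigcup_j (\e_j \cup \lab(\e_j))$ (using that $\N_i$ is the neighborhood of $v_i$, so every $\e_j$ contributes its non-$v_i$ vertices, plus $v_i$ itself, and that by Claim \ref{claim:properties-of-labels-in-M_i} every vertex of $\lab(\e_j)$ has all its $H$-neighbors inside $\e_j \cup \lab(\e_j)$), I can show $E(S) = \big(\bigcup_j E(S_j)\big) \cup \{(v_i, w) \in E(H) : w \in \N_i\}$. The last set is exactly the ``missing'' edges incident to $v_i$ in the induced subgraph on $V(\M_i)$ that get removed when we delete $v_i$ as a regular vertex; this is where edge-disjointness across the old piece-for-$v_i$-star and the new hyperedge has to be tracked carefully, and it is the main (though not deep) obstacle.

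Finally, edge-disjointness and the covering property follow by bookkeeping: by induction the pieces of $\M_i$ partition $E(H)$; the transformation removes the regular edges incident to $v_i$ from the ``induced-subgraph'' piece and removes the pieces $S_1, \dots, S_\ell$, and replaces all of these by the single piece $S$ whose edge set is (as computed above) exactly the union of the edges removed. Since labels are pairwise disjoint (Claim \ref{claim:properties-of-labels-in-M_i}) and the vertex $v_i$ belongs to no label before this step and to no other piece's label after, no edge is double-counted. Hence the pieces of $\M_{i+1}$ again partition $E(H)$, completing the induction. I expect the proof to be short once the set-theoretic identities $\lab(\N_i) = \{v_i\} \cup \bigcup_j \lab(\e_j)$ and $\N_i \cup \lab(\N_i) = \bigcup_j (\e_j \cup \lab(\e_j))$ are nailed down, with Claim \ref{claim:properties-of-labels-in-M_i} doing most of the heavy lifting for the ``no stray $H$-edges leave the piece'' direction.
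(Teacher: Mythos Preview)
Your proposal is correct and follows essentially the same route as the paper's proof: induction on $i$, with the inductive step showing that the subgraph represented by the new hyperedge $\N_i$ is exactly the union of the regular $H$-edges incident to $v_i$ and the subgraphs represented by the hyperedges in $\mathcal{E}_i$, using the identities $\lab(\N_i) = \{v_i\} \cup \bigcup_j \lab(\e_j)$ and $\N_i \cup \lab(\N_i) = \bigcup_j (\e_j \cup \lab(\e_j))$ together with Claim~\ref{claim:properties-of-labels-in-M_i}. The paper's version makes the edge-set identity explicit via a four-way case split (on whether $x = v_i$ and whether $y$ lies in a label or in a hyperedge), but this is just a more detailed unpacking of the same computation you outline.
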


\begin{proof}
The proof is by induction on $i$. The claim obviously holds for $i=1$, since $\M_1 = H$. Next, assuming the claim holds for $i$, let us consider the claim for $i+1$. Since the first part about regular edges follows immediately from Claim \ref{claim:properties-of-Mi}, we will only focus on the other parts.

There are two changes between $\M_i$ and $\M_{i+1}$. Firstly, we remove from $\M_i$ set $\mathcal{E}_i$ of all edges and hyperedges that contain vertex $v_i$. And next, we add a new hyperedge $\N_i = \bigcup_{\e \in \mathcal{E}_i} \e \setminus \{v_i\}$ with label $\lab(\N_i) = \{v_i\} \cup \bigcup_{\e \in \mathcal{E}_i} \lab(\e)$. We will consider separately the edges in $\mathcal{E}_i$ and the hyperedges in $\mathcal{E}_i$. Let $\mathcal{E}_i^E$ denote the set of all ``regular'' edges in $\mathcal{E}_i$ incident to $v_i$ in $\M_i$ (all these edges belong to $E(H)$) and let $\mathcal{E}_i^h$ denote the set of all hyperedges in $\mathcal{E}_i$ containing $v_i$ in $\M_i$.

For regular edges $e \in \mathcal{E}_i^E$, let $\N_i^E$ be the set of neighbors of $v_i$ in $\mathcal{E}_i^E$, that is, by induction, $\N_i^E = \{ v_j \in V(H): (v_i,v_j) \in E(H) \text{ and } j > i\}$.
Similarly, by induction, every hyperedge $\e \in \mathcal{E}_i^h$ represents the subgraph of $H$ with the vertex set $\e \cup \lab(\e)$ and the edge set $\{(x,y) \in E(H): x \in \lab(\e), y \in \e \cup \lab(\e)\}$.
Therefore, by induction, all edges and hyperedges $\e \in \mathcal{E}_i$ together represent the subgraph of $H$ with
\begin{itemize}
\item the vertex set $\N_i^E \cup \{v_i\} \cup \bigcup_{\e \in \mathcal{E}_i^h} (\e \cup \lab(\e)) = \bigcup_{\e \in \mathcal{E}_i} (\e \cup \lab(\e))$ and
\item the edge set $\mathcal{E}_i^E \cup \bigcup_{\e \in \mathcal{E}_i^h} \{(x,y) \in E(H): x \in \lab(\e), y \in \e \cup \lab(\e) \}$.
\end{itemize}
Hence, it suffices to show that the graph with the vertex set $\bigcup_{\e \in \mathcal{E}_i} (\e \cup \lab(\e))$ and the edge set $\mathcal{E}_i^E \cup \bigcup_{\e \in \mathcal{E}_i^h} \{(x,y) \in E(H): x \in \lab(\e), y \in \e \cup \lab(\e) \}$ is identical to the graph induced (in the sense of Claim \ref{claim:H-modeled-by-Mi}) by the hyperedge $\N_i$. (We note that by induction, this also implies the edge-disjointness of our construction.)

The graph induced by the hyperedge $\N_i$ has
\begin{itemize}
\item the vertex set $\N_i \cup \lab(\N_i)$ and
\item the edge set $\{(x,y) \in E(H): x \in \lab(\N_i), y \in \N_i \cup \lab(\N_i)\}$,
\end{itemize}
where $\N_i = \bigcup_{\e \in \mathcal{E}_i} \e \setminus \{v_i\}$ and $\lab(\N_i) = \{v_i\} \cup \bigcup_{\e \in \mathcal{E}_i} \lab(\e)$.

Since $\N_i \cup \lab(\N_i) = \left(\bigcup_{\e \in \mathcal{E}_i} \e \setminus \{v_i\}\right) \cup \left(\{v_i\} \cup \bigcup_{\e \in \mathcal{E}_i} \lab(\e)\right) = \bigcup_{\e \in \mathcal{E}_i} \e \cup \{v_i\} \cup \bigcup_{\e \in \mathcal{E}_i} \lab(\e)$, and since $v_i \in \e$ for every $\e \in \mathcal{E}_i$, we obtain that the vertex set of the graph induced by $\N_i$ is $\N_i \cup \lab(\N_i) = \bigcup_{\e \in \mathcal{E}_i} (\e \cup \lab(\e))$, as required.

Next, let us consider the edges of the graph induced by the hyperedge $\N_i$, that is, the set $\{(x,y) \in E(H): x \in \lab(\N_i), y \in \N_i \cup \lab(\N_i)\}$, which is the same as $\{(x,y) \in E(H): x \in \{v_i\} \cup \bigcup_{\e \in \mathcal{E}_i^h} \lab(\e), y \in \bigcup_{\e \in \mathcal{E}_i} (\e \cup \lab(\e))\}$. We split this set into four separate subsets:
\begin{itemize}
\item $\{(v_i,y) \in E(H): y \in \bigcup_{\e \in \mathcal{E}_i} \e\}$, which is equal to $\mathcal{E}_i^E$.
\item $\{(v_i,y) \in E(H): y \in \bigcup_{\e \in \mathcal{E}_i^h} \lab(\e)\}$ which is the same as $\bigcup_{\e \in \mathcal{E}_i^h} \{(v_i,y) \in E(H): y \in \lab(\e)\}$.
\item $\{(x,y) \in E(H): x \in \bigcup_{\e \in \mathcal{E}_i^h} \lab(\e), y \in \bigcup_{\e \in \mathcal{E}_i^h} \lab(\e)\}$, which, since by Claim \ref{claim:properties-of-labels-in-M_i} $\lab(\e) \cap \lab(\e') = \emptyset$ for distinct $\e, \e' \in \mathcal{E}_i^h$, is the same as $\bigcup_{\e \in \mathcal{E}_i^h} \{(x,y) \in E(H): x, y \in \lab(\e)\}$.
\item $\{(x,y) \in E(H): x \in \bigcup_{\e \in \mathcal{E}_i^h} \lab(\e), y \in \bigcup_{\e \in \mathcal{E}_i} \e \}$, which since $\bigcup_{\e \in \mathcal{E}_i} \e = \N_i \cup \{v_i\}$, is the same as $\bigcup_{\e \in \mathcal{E}_i^h} \{(x,y) \in E(H): x \in \lab(\e), y \in \N_i \cup \{v_i\}\}$, which in turn, by Claim \ref{claim:properties-of-labels-in-M_i}, is the same as $\bigcup_{\e \in \mathcal{E}_i^h} \{(x,y) \in E(H): x \in \lab(\e), y \in \e \}$.
\end{itemize}

If we take the union of all these four sets, then we obtain that the set of the edges of the graph induced by the hyperedge $\N_i$ is equal to the following:
\begin{align*}
    \{&(x,y) \in E(H): x \in \lab(\N_i), y \in \N_i \cup \lab(\N_i)\}
        =
    \{(x,y) \in E(H): x \in \{v_i\} \cup \bigcup_{\e \in \mathcal{E}_i^h} \lab(\e), y \in \bigcup_{\e \in \mathcal{E}_i} (\e \cup \lab(\e))\}
        \\
        &=
    \{(v_i,y) \in E(H): y \in \bigcup_{\e \in \mathcal{E}_i} \e\}
        \cup
    \{(v_i,y) \in E(H): y \in \bigcup_{\e \in \mathcal{E}_i^h} \lab(\e)\}
        \ \cup
        \\
        &\qquad\qquad
    \{(x,y) \in E(H): x \in \bigcup_{\e \in \mathcal{E}_i^h} \lab(\e), y \in \bigcup_{\e \in \mathcal{E}_i^h} \lab(\e)\}
        \cup
    \{(x,y) \in E(H): x \in \bigcup_{\e \in \mathcal{E}_i^h} \lab(\e), y \in \bigcup_{\e \in \mathcal{E}_i} \e \}
        \\
        &=
    \mathcal{E}_i^E
        \cup
    \bigcup_{\e \in \mathcal{E}_i^h}
        \{(v_i,y) \in E(H): y \in \lab(\e)\}
            \cup
        \{(x,y) \in E(H): x, y \in \lab(\e)\}
            \cup
        \{(x,y) \in E(H): x \in \lab(\e), y \in \e\}
        \\
        &=
    \mathcal{E}_i^E
        \cup
    \bigcup_{\e \in \mathcal{E}_i^h}
        \{(x,y) \in E(H): x \in \lab(\e),
            y \in (\e \cup \lab(\e))\}
        \enspace,
\end{align*}
where in the last equation we used the fact that $v_i \in \e$ for every $\e \in \mathcal{E}_i^h$. This completes the proof.
\end{proof}
}


\section{Basic properties of consistent hypergraphs}
\label{subsec:properties-of-consistent-hypergraphs}

In this section we will present some basic properties of consistent hypergraphs (as defined in Section \ref{subsubsec:safe-vertices-and-consistent-hypergraphs}) used in our analysis.

Let us begin with the following simple claim.

\begin{claim}
\label{claim:consistent-for-subset}
Let $\DDCH{i}$ be a set of edge-disjoint colored copies of $H$ in $G$ such that the hypergraph $\HQ_i(\DDCH{i})$ is consistent for $\DDCH{i}$. Then for any $\DCH \subseteq \DDCH{i}$, the hypergraph $\HQ_i(\DCH)$ is consistent for $\DCH$.
\end{claim}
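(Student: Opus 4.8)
The plan is to induct on $i$, mirroring the recursive structure of Definition~\ref{def:consistent-hypergraphs}. The base case $i=1$ is immediate: regardless of hypotheses, $\HQ_1(\DCH)$ is by definition the graph $\G[\DCH]$, which the first clause of Definition~\ref{def:consistent-hypergraphs} declares consistent for $\DCH$.

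For the inductive step I would assume the statement for index $i$ and prove it for index $i+1$, with the vertex order $v_1, \dots, v_{\Hsize}$ underlying the whole construction held fixed. Unwinding the hypothesis via Definition~\ref{def:consistent-hypergraphs}: if $\HQ_{i+1}(\DDCH{i+1})$ is consistent for $\DDCH{i+1}$, then there is a set $\DDCH{i} \supseteq \DDCH{i+1}$ with $\HQ_i(\DDCH{i})$ consistent for $\DDCH{i}$ and with every vertex $u \in V(\HQ_i(\DDCH{i}))$ satisfying $\chi(u)=\chi(v_i)$ being safe with respect to $\DDCH{i+1}$ and $\HQ_i(\DDCH{i})$. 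Now fix $\DCH \subseteq \DDCH{i+1}$. Since $\DCH \subseteq \DDCH{i+1} \subseteq \DDCH{i}$, the induction hypothesis applied to $\DDCH{i}$ yields that $\HQ_i(\DCH)$ is consistent for $\DCH$. Recalling (Remark~\ref{remark:independent-shrinking}, and its consequence Claim~\ref{claim:final-hypergraph-invariant}) that $\HQ_{i+1}(\DCH)$ is exactly the hypergraph obtained from $\HQ_i(\DCH)$ by the colour-$\chi(v_i)$ shrinking step, it then remains only to verify the single extra requirement of Definition~\ref{def:consistent-hypergraphs}: that every vertex $u \in V(\HQ_i(\DCH))$ with $\chi(u)=\chi(v_i)$ is safe with respect to $\DCH$ and $\HQ_i(\DCH)$.

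This last point is the crux. The observation I would use is that the quantity $\N_i^{\h}\langle u \rangle$ depends only on the individual copy $\h$ and on $v_1, \dots, v_{i-1}$ --- not on the ambient family of copies. Indeed, by the independence of the shrinking process established in Remark~\ref{remark:independent-shrinking}, the edges of $\HQ_i(\cdot)$ belonging to a fixed copy $\h$ form a copy of $\M_i$ produced by shrinking $\h$ in isolation; hence the neighbours of $u$ within that copy are the same whether we work inside $\HQ_i(\DDCH{i})$ or inside $\HQ_i(\DCH)$. Moreover the two candidate vertex sets coincide, $V(\HQ_i(\DCH)) = V \setminus \{w : \chi(w) \in \{\chi(v_j) : j < i\}\} = V(\HQ_i(\DDCH{i}))$, so the two safeness assertions speak about the same vertices. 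Therefore, if $u$ is safe with respect to $\DDCH{i+1}$ and $\HQ_i(\DDCH{i})$ --- say $\N_i^{\h}\langle u \rangle = N_u$ for every $\h \in \DDCH{i+1}$ containing $u$ --- then a fortiori $\N_i^{\h}\langle u \rangle = N_u$ for every $\h \in \DCH \subseteq \DDCH{i+1}$ containing $u$, so $u$ is safe with respect to $\DCH$ and $\HQ_i(\DCH)$. This verifies the extra requirement and closes the induction.

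The only genuine (and minor) obstacle I anticipate is making the ``depends only on the copy'' assertion airtight: one must invoke the labelling/colouring bookkeeping of Section~\ref{subsec:shrinking-copies-of-H} to see that each edge of $\HQ_i(\cdot)$ is carried by a unique copy and that the sub-hypergraph carried by $\h$ is the same function of $\h$ in both families. Everything else is a direct unwinding of Definitions~\ref{def:safe-vertices} and \ref{def:consistent-hypergraphs}.
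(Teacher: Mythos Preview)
Your proposal is correct and relies on the same key observation as the paper's proof: via Remark~\ref{remark:independent-shrinking}, the per-copy sub-hypergraph of $\h$ in $\HQ_j(\cdot)$ is independent of the ambient family, so safeness with respect to a larger family implies safeness with respect to any subfamily. The paper's proof compresses this into two sentences without spelling out the induction on $i$, whereas you explicitly unwind the recursive Definition~\ref{def:consistent-hypergraphs}; your version is more careful about the recursive structure of ``consistent,'' but the substance is identical.
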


\begin{proof}
%
By Remark \ref{remark:independent-shrinking}, we can define the hypergraph $\HQ_i(\DDCH{i})$ independently for each copy of $H$ in $\DDCH{i}$. Thus, if $\DCH \subseteq \DDCH{i}$ then $\HQ_i(\DCH)$ is a sub-hypergraph of $\HQ_i(\DDCH{i})$, and hence all safe vertices in $\HQ_i(\DDCH{i})$ are also safe in $\HQ_i(\DCH)$, meaning that $\HQ_i(\DCH)$ is consistent for $\DCH$.
\end{proof}

Let us also state the following simple claim.

\begin{claim}
\label{claim:final-hypergraph-invariant}
Let $\DDCH{1}, \DDCH{2}, \dots, \DDCH{i}$ be a set of edge-disjoint colored copies of $H$ in $G$ with $\DDCH{i} \subseteq \dots \subseteq \DDCH{2} \subseteq \DDCH{1}$. Let $\HQ_1(\DDCH{1}), \HQ_2(\DDCH{2}), \dots, \HQ_i(\DDCH{i})$ be the sequence of hypergraphs constructed by the algorithm above, with each $\HQ_j(\DDCH{j})$ consistent for $\DDCH{j}$. Then the same hypergraph $\HQ_i(\DDCH{i})$ would be obtained if we applied the algorithm above for the sequence $\DDCH{j} = \DDCH{i}$, for every $1 \le j \le i$.
\qed
\end{claim}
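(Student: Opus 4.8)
The final statement to prove is Claim~\ref{claim:final-hypergraph-invariant}: if $\DDCH{i} \subseteq \dots \subseteq \DDCH{1}$ and each $\HQ_j(\DDCH{j})$ is consistent for $\DDCH{j}$, then applying the construction to the constant sequence $\DDCH{j} = \DDCH{i}$ (for $1 \le j \le i$) yields the same final hypergraph $\HQ_i(\DDCH{i})$.

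The plan is to argue by induction on $i$, using Remark~\ref{remark:independent-shrinking} as the main tool, which already tells us that $\HQ_{j+1}(\DDCH{j+1})$ has an equivalent description: start from $\G[\DDCH{j+1}]$, shrink each copy $\h$ of $H$ in $\DDCH{j+1}$ independently according to the fixed vertex order $v_1, \dots, v_j$, combine the resulting copies of $\M_{j+1}$, and delete the stray vertices of already-contracted colors not lying in any copy. First I would observe that the vertex set of $\HQ_i(\DDCH{i})$ depends only on the vertex order $v_1, \dots, v_{i-1}$ (it is $V \setminus \{u : \chi(u) \in \{\chi(v_j): j < i\}\}$), so it is the same under both constructions. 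The key point is then that the hyperedges of $\HQ_i(\DDCH{i})$ are obtained by taking each copy $\h \in \DDCH{i}$, performing the \emph{same} sequence of contractions of vertices of colors $\chi(v_1), \dots, \chi(v_{i-1})$ within $\h$, and combining; since $\DDCH{i}$ is the same set in both constructions, and the per-copy shrinking of a fixed copy $\h$ depends only on $\h$ itself and the order $v_1,\dots,v_{i-1}$ (not on which other copies are present), the resulting multiset of hyperedges is identical. This is exactly what Remark~\ref{remark:independent-shrinking} asserts, applied with $\DDCH{i}$ in the role of $\DDCH{i+1}$ there.

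The one subtlety — and the step I expect to be the main (minor) obstacle — is making sure the ``per-copy shrinking is well defined'' issue does not bite: the construction of $\HQ_{j+1}(\DDCH{j+1})$ from $\HQ_j(\DDCH{j})$ replaces a vertex $u$ of color $\chi(v_j)$ by the hyperedges $\N_j^{\h}\langle u\rangle$ for the copies $\h \ni u$, and for this to coincide with the ``independent'' per-copy description one needs $u$ to be \emph{safe}, i.e. $\N_j^{\h}\langle u\rangle$ independent of $\h$. This is where the hypotheses that each $\HQ_j(\DDCH{j})$ is consistent for $\DDCH{j}$, together with Claim~\ref{claim:consistent-for-subset} (which gives that $\HQ_j(\DDCH{i})$ is consistent for $\DDCH{i}$ since $\DDCH{i} \subseteq \DDCH{j}$), enter: they guarantee that at each stage of the constant-sequence construction the vertices being contracted are safe with respect to $\DDCH{i}$, so the constant-sequence construction is itself well defined and proceeds by removing, at step $j \to j+1$, precisely the vertices of color $\chi(v_j)$ and replacing them copy-by-copy. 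Thus I would prove, by induction on $j$, that the hypergraph produced after $j$ steps of the constant-sequence construction equals the sub-hypergraph of $\HQ_{j+1}(\DDCH{j+1})$ obtained by deleting the hyperedges corresponding to copies in $\DDCH{j+1} \setminus \DDCH{i}$ and the now-isolated stray vertices — or, more cleanly, that both constructions produce the hypergraph described by the ``independent shrinking'' recipe of Remark~\ref{remark:independent-shrinking} applied to $\DDCH{i}$ with order $v_1, \dots, v_j$. The base case $j=0$ is trivial ($\HQ_1$ is just $\G[\DDCH{i}]$ in both), and the inductive step is exactly Remark~\ref{remark:independent-shrinking}. This establishes the claim.

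In short, the proof is essentially a bookkeeping argument showing that the final hypergraph is a function of $(\DDCH{i}, v_1, \dots, v_{i-1})$ alone and not of the intermediate sets $\DDCH{1}, \dots, \DDCH{i-1}$; the content is entirely supplied by Remark~\ref{remark:independent-shrinking} and Claim~\ref{claim:consistent-for-subset}, and the only care needed is to invoke consistency at each stage so that every contraction performed along the constant sequence is by a safe vertex and hence agrees with the independent per-copy shrinking.
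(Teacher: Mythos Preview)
Your proposal is correct. Note that the paper itself gives no proof at all for this claim: it places a \qed\ immediately after the statement (and the source even contains a commented-out remark ``Isn't it straightforward?''), treating it as an immediate consequence of the construction and of Remark~\ref{remark:independent-shrinking}. Your argument --- observing that the vertex set depends only on $v_1,\dots,v_{i-1}$, invoking Remark~\ref{remark:independent-shrinking} to reduce the edge set to the per-copy shrinking on $\DDCH{i}$, and using Claim~\ref{claim:consistent-for-subset} to justify that the constant-sequence construction is well defined (all contracted vertices are safe w.r.t.\ $\DDCH{i}$) --- is exactly the bookkeeping the paper leaves implicit, so you are filling in the intended details rather than taking a different route.
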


\junk{
\begin{proof}\small
\textcolor[rgb]{0.51,0.00,0.00}{Isn't it straightforward? Or does it require a proof?}
\end{proof}
}


\junk{
Similarly as we did it in Section \ref{subsec:shrinking-H}, the description above corresponds to the following modeling of the set $\DDCH{i}$ of edge-disjoint colored copies of $H$ in $G$ by the hypergraph $\HQ_i$.

\begin{claim}
\label{claim:copies-of-H-modeled-by-Qi}
The edge set of graph $\G[\DDCH{i}]$ is represented by a hypergraph $\HQ_i(\DDCH{i})$ consistent for $\DDCH{i}$ such that:
\begin{itemize}
\item the set of ``regular'' edges in $\HQ_i(\DDCH{i})$ is equal to the set of edges in the subgraph of $\G[\DDCH{i}]$ induced by the vertex set of $V(\HQ_i(\DDCH{i}))$, and
\item every hyperedge $\e$ with label $\lab(\e)$ in $\HQ_i(\DDCH{i})$ represents the subgraph of $\G[\DDCH{i}]$ with the vertex set $\e \cup \lab(\e)$ and the edge set $\{(x,y) \in E(\G[\DDCH{i}]): x \in \e \text{ and } y \in \e \cup \lab(\e)\}$.
\end{itemize}
Furthermore, every edge from $\G[\DDCH{i}]$ appears in this representation exactly once.
\end{claim}

\begin{proof}
\Artur{Does it make sense? Double-check the proof.}%
The proof is by induction. The claim is obviously true for $i=1$, since $\HQ_1(\DDCH{1}) = \G[\DDCH{1}]$.

Next, assume the claim holds for $i$ and we show by induction that it holds for $i+1 \le \Hsize$.

Since $\DDCH{i+1} \subseteq \DDCH{i}$, by the inductive hypothesis and Claims \ref{claim:consistent-for-subset} and \ref{claim:final-hypergraph-invariant}, we have that
\begin{itemize}
\item the set of ``regular'' edges in $\HQ_i(\DDCH{i+1})$ is equal to the set of edges in the subgraph of $\G[\DDCH{i+1}]$ that is induced by the vertex set of $V(\HQ_i(\DDCH{i+1}))$, and
\item every hyperedge $\e$ in $\HQ_i(\DDCH{i+1})$ represents the subgraph of $\G[\DDCH{i+1}]$ with the vertex set $\e \cup \lab(\e)$ and the edge set $\{(x,y) \in E(\G[\DDCH{i+1}]): x \in \e \text{ and } y \in \e \cup \lab(\e)\}$.
\end{itemize}
\end{proof}
}

We will also use the following property of consistent hypergraphs.

\begin{claim}
\label{claim:neighbors-in-consistent-hypergraphs}
Let $\DDCH{i}$ be a set of edge-disjoint colored copies of $H$ in $G$. Let $\HQ_1(\DDCH{i}), \HQ_2(\DDCH{i}), \dots, \HQ_i(\DDCH{i})$ be the sequence of hypergraphs constructed by the algorithm above (cf. Section \ref{subsec:shrinking-copies-of-H}), with each $\HQ_j(\DDCH{i})$ being consistent for $\DDCH{i}$, $1 \le j \le i$. Then, for every $j$, $1 \le j \le i-1$, for any vertex $u \in V(\HQ_j(\DDCH{i}))$ with $\chi(u) = \chi(v_j)$, the neighbors in $\HQ_j(\DDCH{i})$ have distinct colors.
\end{claim}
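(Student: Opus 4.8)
\textbf{Proof proposal for Claim \ref{claim:neighbors-in-consistent-hypergraphs}.}

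The plan is to prove the statement by induction on $j$, exploiting the fact that each $\HQ_j(\DDCH{i})$ is obtained from $\HQ_{j-1}(\DDCH{i})$ by a consistent contraction of all vertices of color $\chi(v_{j-1})$, together with the structural invariant carried by $\M_j$ from Claims \ref{claim:properties-of-labels-in-M_i} and \ref{claim:properties-of-Mi}. First I would set up the base case $j=1$: here $\HQ_1(\DDCH{i}) = \G[\DDCH{i}]$ is an ordinary (colored, edge-disjoint) union of copies of $H$, and $u$ with $\chi(u) = \chi(v_1)$ plays the role of $v_1$ in every copy of $H$ containing it; since $H$ is properly colored (distinct color on each vertex) the neighbors of $v_1$ in $H$ have pairwise distinct colors, and since $u$ is safe (by consistency of $\HQ_1$, trivially, or more precisely by the fact that $\G[\DDCH{i}]$ has no monochromatic edges) its neighbors in $\G[\DDCH{i}]$ inherit exactly the color multiset of $N_H(v_1)$, hence are pairwise distinct.

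For the inductive step, I would use the key observation that, because $\HQ_j(\DDCH{i})$ is consistent for $\DDCH{i}$, every vertex $u$ with $\chi(u) = \chi(v_j)$ is safe, and therefore (Remark \ref{remark-properties-of-safe-vertices}, item (iii)) the set of colors of $\N_j^{\h}\langle u \rangle$ equals $\{\chi(x) : x \in \bigcup_{s} \e_s \setminus \{v_j\}\}$, where $\e_1,\dots,\e_\ell$ are the edges/hyperedges incident to $v_j$ in $\M_j$. Thus it suffices to show that in $\M_j$ the neighbors of $v_j$ (i.e., the vertices in $\bigcup_s \e_s \setminus \{v_j\}$) have pairwise distinct colors. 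But by Claim \ref{claim:properties-of-Mi}, $V(\M_j) = \{v_j,\dots,v_{\Hsize}\}$ and each vertex of $\M_j$ carries its own original color from $H$ (the coloring $\chi$ assigns distinct colors to distinct vertices of $H$), so \emph{any} two distinct vertices of $\M_j$ have distinct colors; in particular the neighbors of $v_j$ in $\M_j$ do. This immediately yields that the colors of $\N_j^{\h}\langle u \rangle$ are pairwise distinct, which is exactly the claim, since the neighbors of $u$ in $\HQ_j(\DDCH{i})$ are $\N_j^{\h}\langle u \rangle$ for any copy $\h$ containing $u$ (and this is independent of $\h$ by safety).

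The only genuine subtlety — and the step I expect to require the most care — is making precise the claim that \emph{``each vertex of $\M_j$ retains a well-defined color and distinct vertices of $\M_j$ get distinct colors.''} This is really a statement about the correspondence between vertices of $\HQ_j(\DDCH{i})$ and vertices of $\M_j$: by construction (Section \ref{subsec:shrinking-copies-of-H}), $V(\HQ_j(\DDCH{i})) = V \setminus \{u : \chi(u) \in \{\chi(v_s) : s < j\}\}$, and the colors present among non-isolated vertices are exactly $\{\chi(v_j),\dots,\chi(v_{\Hsize})\}$, each in bijection with a vertex of $\M_j$; contractions only ever remove a whole color class, never split or merge colors, so this bijection is maintained. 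I would state this as a one-line consequence of Claim \ref{claim:properties-of-Mi} and the construction, then feed it into the argument above. Note that the claim is restricted to $j \le i-1$ precisely because for such $j$ the vertex of color $\chi(v_j)$ is still present and about to be contracted (so its safety and the structure of $\M_j$ are exactly what we need), whereas for $j = i$ we would only know consistency of $\HQ_i(\DDCH{i})$ itself, which is in fact all the induction needs as its running hypothesis.
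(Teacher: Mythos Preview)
Your argument is essentially the same as the paper's, though considerably more elaborate: the paper observes in three lines that consistency of $\HQ_{j+1}(\DDCH{i})$ makes every vertex $u$ of color $\chi(v_j)$ safe in $\HQ_j(\DDCH{i})$, so its neighbor set equals $\N_j^{\h}\langle u\rangle$ for a \emph{single} copy $\h$, and within one colored copy of $H$ all vertices have distinct colors --- no induction, no detour through $\M_j$, no appeal to Remark~\ref{remark-properties-of-safe-vertices}. Two small points: (i) safety of vertices of color $\chi(v_j)$ in $\HQ_j$ follows from the consistency of $\HQ_{j+1}$, not of $\HQ_j$ as you wrote (both are in your hypotheses, so the conclusion survives, but the attribution is off); (ii) your induction on $j$ is never actually used --- the argument for each $j$ is self-contained, so you can drop the inductive scaffolding entirely.
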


\begin{proof}
The proof follows directly from the definition of safe vertices and consistent hypergraphs. Indeed, since $\HQ_{j+1}(\DDCH{i})$ is consistent for $\DDCH{i}$, by definition, every vertex $u \in V(\HQ_j(\DDCH{i}))$ with $\chi(u) = \chi(v_j)$ is safe with respect to $\DDCH{i}$ and $\HQ_j(\DDCH{i})$. That is, from definition of being safe, for all colored copies $\h \in \DDCH{i}$ of $H$ that contain $u$, the sets $\N_j^{\h}\langle u\rangle$ are the same, where $\N_j^{\h}\langle u \rangle$ is the set of neighbors of $u$ in $\h$ in the hypergraph $\HQ_j(\DDCH{i})$. Since every copy $\h \in \DDCH{i}$ of $H$ consists of vertices of distinct colors, this yields the claim.
\end{proof}


\section{Lemma \ref{lemma:central-small-degrees}: Planarization of hypergraphs via shadow graphs}
\label{subsec:planarization-of-hypergraphs}

\newcommand{\fr}{\ensuremath{\mathfrak{col}}}
\newcommand{\FREEC}{\ensuremath{\mathfrak{\alpha}}}
\renewcommand{\FREEC}{\ensuremath{\Xi}}

\junk{\Artur{Notation:
    \begin{compactitem}
    \item \texttt{$\backslash$FREEC} denotes $\FREEC$, a set of colors of vertices in $\HQ_i(\DDCH{i})$ (e.g., $\FREEC = \{1, \dots, \Hsize\} \setminus \{\chi(v_j): 1 \le j < i\}$)
    \item \texttt{$\backslash$fr} denotes $\fr$, a color, typically from $\FREEC$
    \item \texttt{$\backslash$GG} denotes $\GG$, also $\GG_1, \dots, \GG_i$ and as a shadow graph $\GG(\HQ_i(\DDCH{i}))$
    \item \texttt{$\backslash$GG\_i\^{\fr}} denotes $\GG_i^{\fr}$, the color-$\fr$ shadow graph of the hypergraph $\HQ_i(\DDCH{i})$
    \item \texttt{$\backslash$Gii = ($\backslash$Vii, $\backslash$Eii)} denote $\Gii = (\Vii,\Eii)$, a simple graph that is a union of graphs $\GG^{\fr}(\HQ_i(\DDCH{i}))$
    \item \texttt{$\backslash$NEIG\{$\alpha$\}\{$\beta$\}} (with two parameters $\alpha, \beta$) denotes $\NEIG{\alpha}{\beta}$, set of neighbors of vertex $\beta$ in graph/hypergraph~$\alpha$
    \item \texttt{$\backslash$HQ} denotes $\HQ$ (also in $\HQ_i(\DDCH{i})$, which is a hypergraph constructed from $\DDCH{i}$ by contracting vertices of $H$)
    \end{compactitem}
}
}
In this section we show how to model hypergraphs $\HQ_i(\DDCH{i})$ using planar graphs (via the notion of \emph{shadow graphs}) to establish the proof of Lemma \ref{lemma:central-small-degrees}. In what follows, for fixed $i$, we will mimic the construction of the hypergraph $\HQ_i(\DDCH{i})$ to construct color-$\fr$ shadow graphs $\GG^{\fr}(\HQ_i(\DDCH{i}))$, one for each relevant color $\fr$, such that each $\GG^{\fr}(\HQ_i(\DDCH{i}))$ is \emph{planar} 
and it \emph{maintains the neighborhood of all vertices of color $\fr$} in $\HQ_i(\DDCH{i})$. 
With this construction at hand, Lemma \ref{lemma:central-small-degrees} will easily follow.


As in the conditions of Lemma \ref{lemma:central-small-degrees}, let $\DDCH{i}$ be a set of edge-disjoint colored copies of $H$ in $G$ and let $\HQ_i(\DDCH{i})$ be a hypergraph consistent for $\DDCH{i}$. Let us recall how the hypergraph $\HQ_i(\DDCH{i})$ is built by our algorithm from Section \ref{subsec:shrinking-copies-of-H}. In the construction of $\HQ_{i}(\DDCH{i})$, we assume that we have already fixed $v_1, \dots, v_{i-1}$ (and we have \textbf{not} fixed the order of other vertices from $H$, since in fact, these choices will depend on our constructions of $\HQ_{i}(\DDCH{i}), \HQ_{i+1}(\DDCH{i+1}), \dots, \HQ_{\Hsize}(\DDCH{\Hsize})$). The algorithm takes $\HQ_{i-1}(\DDCH{i-1})$ with $\DDCH{i} \subseteq \DDCH{i-1}$, and first removes all hyperedges corresponding to the edge-disjoint copies of $H$ in $\DDCH{i-1} \setminus \DDCH{i}$ and then takes the set $\DDCH{i}$ of copies of $H$ and shrink them, in the same way as $\M_{i-1}$ is transformed into $\M_{i}$. Let us note that by Claim \ref{claim:final-hypergraph-invariant}, the hypergraph $\HQ_i(\DDCH{i})$ can be built by applying our algorithm above with all sets $\DDCH{j} = \DDCH{i}$ for all $j \le i$.

Before we proceed, let us introduce some useful notation. Fix $i$. Let $\FREEC = \{1, \dots, \Hsize \} \setminus \{\chi(v_j): j < i\}$, that is, $\FREEC$ is the set of the colors of vertices from $\HQ_i(\DDCH{i})$. Let us recall that since for any $j \le i$, the hypergraph $\HQ_j(\DDCH{i})$ is consistent for $\DDCH{i}$, by Claim \ref{claim:neighbors-in-consistent-hypergraphs}, every  vertex $u \in V(\HQ_{j-1}(\DDCH{i}))$ with $\chi(u) = \chi(v_{j-1})$ has all neighbors in $\HQ_{j-1}(\DDCH{i})$ with distinct colors. To facilitate this property, for any set $X \subseteq V$ consisting of vertices of distinct colors (e.g., $X = \e$ for an edge/hyperedge in $\HQ_j(\DDCH{i})$), if $X$ has a vertex of color from outside $\FREEC$ (that is, $\{\chi(x): x \in X\} \setminus \FREEC \ne \emptyset$), then we call a vertex $y$ in $X$ with $\chi(y) \notin \FREEC$ the \emph{lowest color vertex of $X$} if it minimizes $\ell$ with $\chi(y) = \chi(v_{\ell})$ (that is, for any vertex $z \in X$ with $\chi(z) \in \{\chi(v_j): j < i\}$, if $\chi(y) = \chi(v_{\ell})$ and $\chi(z) = \chi(v_{s})$ then $\ell \le s$).

Let $\fr$ be an arbitrary color from $\FREEC$. We mimic the algorithm that builds $\HQ_i(\DDCH{i})$ to create a sequence of graphs $\GG_1^{\fr}, \dots, \GG_i^{\fr}$ as follows:

\begin{walgo}\vspace*{-0.3in}
\begin{itemize}
\item Set $\GG_1^{\fr}$ to be equal to the graph $\G[\DDCH{i}]$ after removing all isolated vertices in $\G[\DDCH{i}]$.
\item For $j:=2$ to $i$, build $\GG_j^{\fr}$ as follows:
\begin{itemize}[$\diamond$]
\item Take vertex $v_{j-1} \in V(H)$.
\item For every vertex $u \in V(\HQ_{j-1}(\DDCH{i}))$ with $\chi(u) = \chi(v_{j-1})$:
    \begin{itemize}[$\circ$]
    \item let $\Gamma_{j-1}(u)$ be the set of all neighbors of $u$ in $\HQ_{j-1}(\DDCH{i})$ ($u \notin \Gamma_{j-1}(u)$);
    \item if $\Gamma_{j-1}(u)$ has a vertex of color not from $\FREEC$ (i.e., $\{\chi(x): x \in \Gamma_{j-1}(u)\} \setminus \FREEC \ne \emptyset$) then
    \begin{itemize}[$\triangleright$]
    \item let $w$ be a lowest color vertex in $\Gamma_{j-1}(u)$;
    \item contract edge $(u,w)$ into vertex $w$;
    \end{itemize}
    \item else {\small (i.e., $\{\chi(x): x \in \Gamma_{j-1}(u)\} \subseteq \FREEC$)}, if there is $w \in \Gamma_{j-1}(u)$ with $\chi(w) = \fr$, then
    \begin{itemize}[$\triangleright$]
    \item contract edge $(u,w)$ into vertex $w$;
    \end{itemize}
    \item else, remove vertex $u$.
    \end{itemize}
\item Remove all parallel edges and all self-loops.
\end{itemize}
\end{itemize}
\end{walgo}

The graph $\GG_i^{\fr}$ will be called the \emph{color-$\fr$ shadow graph} of the hypergraph $\HQ_i(\DDCH{i})$ and will be denoted by $\GG^{\fr}(\HQ_i(\DDCH{i}))$.

Observe that any $\GG_j^{\fr}$ is a simple graph (contains no self-loops nor parallel edges). Furthermore, to argue that the algorithm above makes sense, we will have to ensure that every time we perform contraction of an edge $(u,w)$ into vertex $w$, we must have that $(u,w)$ is an edge in $\GG_{j-1}^{\fr}$. Let us also notice that every time we refer to the lowest color vertex $w$ in the algorithm, by Claim \ref{claim:neighbors-in-consistent-hypergraphs}, this vertex is well defined (since all vertices from $\Gamma_{j-1}(u)$ have distinct colors).

In what follows, we will prove three keys properties of our construction,
%
\begin{compactenum}[(1)]
\item that each $\GG_j^{\fr}$ is planar,
\item that for every contraction of an edge $(u,w)$ into vertex $w$, $(u,w)$ is an edge in $\GG_{j-1}^{\fr}$, and\label{case2-in-subsec:planarization-of-hypergraphs}
\item that we maintain some \emph{partial} neighborhoods of all vertices of color $\fr$ \emph{and} of vertices that later can be contracted to create new edges (note that all these vertices must have colors from outside $\FREEC$, since vertices from $\FREEC$ will not be contracted in future iterations).\label{case3-in-subsec:planarization-of-hypergraphs}
\end{compactenum}
%
Case (\ref{case3-in-subsec:planarization-of-hypergraphs}) requires some additional care, so that if we contract an edge/hyperedge $\e$ in $\HQ_j(\DDCH{i})$ and if $\e$ has a vertex of color from outside $\FREEC$, then we will maintain only the neighborhood of one vertex from this edge, the first one that will be later contracted in the algorithm --- which is the lowest color vertex of $\e$.


\subsection{Properties of color-$\fr$ shadow graphs}
\label{subsubsec:shadow-graphs}

Let us begin with a characterization of the vertex sets in $\GG_j^{\fr}$ and $\GG^{\fr}(\HQ_i(\DDCH{i}))$.

\begin{claim}
\label{claim:shadow-graphs-properties-2}
For every $j$, $1 \le j \le i$, for every $u \in V$, vertex $u$ is not in $\GG_j^{\fr}$ if and only if either $u$ is an isolated vertex in $\G[\DDCH{i}]$ or $u \not\in V(\HQ_j(\DDCH{i}))$.
\end{claim}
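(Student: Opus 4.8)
\textbf{Proof plan for Claim~\ref{claim:shadow-graphs-properties-2}.}

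The plan is to prove both the statement and its twin for $\GG^{\fr}(\HQ_i(\DDCH{i}))$ simultaneously by induction on $j$, since the definition of $\GG_j^{\fr}$ mimics step-by-step the construction of the hypergraphs $\HQ_1(\DDCH{i}), \HQ_2(\DDCH{i}), \dots, \HQ_i(\DDCH{i})$ (which, by Claim~\ref{claim:final-hypergraph-invariant}, may be taken to be the sequence obtained by running the shrinking algorithm with all sets equal to $\DDCH{i}$). First I would record the base case $j=1$: by definition $\GG_1^{\fr}$ is $\G[\DDCH{i}]$ with all isolated vertices removed, so $u$ is absent from $\GG_1^{\fr}$ exactly when $u$ is isolated in $\G[\DDCH{i}]$; and since $V(\HQ_1(\DDCH{i})) = V$ (no vertex has yet been contracted), the condition ``$u$ is isolated in $\G[\DDCH{i}]$ or $u \notin V(\HQ_1(\DDCH{i}))$'' reduces to ``$u$ is isolated in $\G[\DDCH{i}]$,'' so the equivalence holds.

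For the inductive step, assume the claim for $j-1$ and examine how $\GG_{j-1}^{\fr}$ is transformed into $\GG_j^{\fr}$. The only vertices removed in this step are some of the vertices $u$ with $\chi(u) = \chi(v_{j-1})$: such a $u$ disappears from $\GG_j^{\fr}$ either because it is contracted into one of its neighbors $w$ (into the lowest color vertex of $\Gamma_{j-1}(u)$, or into a color-$\fr$ neighbor), or because it is removed outright (when $\Gamma_{j-1}(u)$ has no vertex of color outside $\FREEC$ and no vertex of color $\fr$). In every case the vertices leaving $\GG_{j-1}^{\fr}$ are precisely the vertices of color $\chi(v_{j-1})$ that were present in $\GG_{j-1}^{\fr}$ --- i.e. the non-isolated-in-$\G[\DDCH{i}]$ vertices of color $\chi(v_{j-1})$ that still belong to $V(\HQ_{j-1}(\DDCH{i}))$. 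On the hypergraph side, the construction of $\HQ_j(\DDCH{i})$ from $\HQ_{j-1}(\DDCH{i})$ removes exactly the vertices of color $\chi(v_{j-1})$ from the vertex set, so $V(\HQ_j(\DDCH{i})) = V(\HQ_{j-1}(\DDCH{i})) \setminus \{u : \chi(u) = \chi(v_{j-1})\}$ (recall $\chi(v_{j-1}) \notin \FREEC$, so no color-$\fr$ vertex is of color $\chi(v_{j-1})$ and the twin statement is unaffected). Combining this with the induction hypothesis: $u \notin V(\GG_j^{\fr})$ iff [$u \notin V(\GG_{j-1}^{\fr})$] or [$\chi(u) = \chi(v_{j-1})$ and $u \in V(\GG_{j-1}^{\fr})$], which by induction is iff [$u$ isolated in $\G[\DDCH{i}]$ or $u \notin V(\HQ_{j-1}(\DDCH{i}))$] or [$\chi(u) = \chi(v_{j-1})$ and $u$ non-isolated and $u \in V(\HQ_{j-1}(\DDCH{i}))$], which simplifies to [$u$ isolated in $\G[\DDCH{i}]$ or $u \notin V(\HQ_j(\DDCH{i}))$], as required.

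The one technical point I expect to need care is making precise the assertion ``the only vertices removed from $\GG_{j-1}^{\fr}$ are some vertices of color $\chi(v_{j-1})$.'' Contractions of edges $(u,w)$ into $w$ remove $u$ but keep $w$; the removal of parallel edges and self-loops at the end of each iteration affects only edges, not vertices; and no new vertices are ever introduced. So no vertex other than the processed color-$\chi(v_{j-1})$ vertices changes status, which is exactly the parallel to the fact that $\HQ_j(\DDCH{i})$ is obtained from $\HQ_{j-1}(\DDCH{i})$ purely by deleting the color-$\chi(v_{j-1})$ vertices and adjusting (hyper)edges. I would also flag that this claim is logically prior to the claims establishing planarity and the well-definedness of the contractions (Cases (1)--(3) in Section~\ref{subsec:planarization-of-hypergraphs}); here we only need to track which vertices are present, not which edges, so no circularity arises. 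This makes the main obstacle essentially bookkeeping rather than a genuine difficulty --- the substance is entirely in the precise correspondence between the shadow-graph construction and the hypergraph construction, which the definitions have been set up to guarantee.
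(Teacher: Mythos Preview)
Your proof is correct and follows essentially the same approach as the paper's: both rely on the observation that vertices leave the shadow graph only at the initial removal of isolated vertices and at each iteration when exactly the color-$\chi(v_{j-1})$ vertices are deleted (by contraction or outright), matching precisely the vertex removals in passing from $\HQ_{j-1}(\DDCH{i})$ to $\HQ_j(\DDCH{i})$. The paper just states this observation directly without writing out the induction, whereas you make the induction explicit; the substance is identical.
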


\begin{proof}
The claim easily follows from our construction. Firstly, the first step of our construction removes all isolated vertices in $\G[\DDCH{i}]$. Secondly, the only other moment when one removes vertices is when one removes every vertex $u \in V(\HQ_{j-1}(\DDCH{i}))$ with $\chi(u) = \chi(v_{j-1})$ and $2 \le j \le i$. That is, one removes all non-isolated vertices $u$ in $\G[\DDCH{i}]$ with $\chi(u) \in \{\chi(v_j): j < i\}$, which are exactly all vertices $u \not\in V(\HQ_i(\DDCH{i}))$.
\end{proof}

Our next claim describes a key property of color-$\fr$ shadow graphs. The construction of the color-$\fr$ shadow graph mimics the construction of the hypergraph $\HQ_i(\DDCH{i})$ with one key difference: while to construct $\HQ_j(\DDCH{i})$ from $\HQ_{j-1}(\DDCH{i})$ we remove every vertex $u$ of color $\chi(v_{j-1})$ from $\HQ_{j-1}(\DDCH{i})$ and add a new hyperedge ``connecting'' the neighbors of $u$ in $\HQ_{j-1}(\DDCH{i})$, in the color-$\fr$ shadow graph we perform a similar operation to define $\GG_j^{\fr}$, but instead of connecting the neighbors using a single hyperedge, we either connect them by adding edges from all neighbors to a single vertex (edge-contraction), or we do nothing.
The following lemma shows that this construction properly maintains the neighborhoods of vertices of color $\fr$ and our property (\ref{case2-in-subsec:planarization-of-hypergraphs}) above.

\begin{lemma}
\label{lemma:key-property-color-shadow-graphs}
Let $\DDCH{i}$ be a set of edge-disjoint colored copies of $H$ in $G$ and let $\HQ_i(\DDCH{i})$ be a hypergraph consistent for $\DDCH{i}$. Let $\fr$ be any color in $\FREEC$. Then,
\begin{enumerate}[(a)]
\item for any vertex $u \in V(\HQ_i(\DDCH{i}))$ of color $\fr$, if $x \in V \setminus \{u\}$ is a neighbor of $u$ in $\HQ_i(\DDCH{i})$, then $x$ is also a neighbor of $u$ in the color-$\fr$ shadow graph $\GG^{\fr}(\HQ_i(\DDCH{i}))$, and \label{prop-a}
\item every time to define $\GG_j^{\fr}$, $j > 1$, we perform contraction of an edge $(u,w)$ into vertex $w$, we have that $(u,w)$ is an edge in $\GG_{j-1}^{\fr}$.\label{prop-b}
\end{enumerate}
\end{lemma}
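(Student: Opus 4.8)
The plan is to prove both parts simultaneously by induction on $j$, since establishing part (\ref{prop-b}) (that every contracted edge $(u,w)$ actually exists in $\GG_{j-1}^{\fr}$) is precisely what we need to know the construction is well-defined at each step, and part (\ref{prop-a}) then follows by tracking what happens to neighbors of color-$\fr$ vertices through the whole process. Concretely, I would set up a joint invariant: for every $j$ with $1 \le j \le i$, (I) $\GG_j^{\fr}$ is a well-defined simple graph, (II) for every edge/hyperedge $\e$ of $\HQ_j(\DDCH{i})$ and every pair $x,y \in \e$ such that $y$ is the lowest color vertex of $\e$ when $\e$ has a vertex with color outside $\FREEC$, or $x$ has color $\fr$ and $y \in \e$ arbitrary when all colors of $\e$ lie in $\FREEC$, the pair $\{x,y\}$ is an edge of $\GG_j^{\fr}$ (this is the ``partial neighborhood maintenance'' property promised in case (\ref{case3-in-subsec:planarization-of-hypergraphs})). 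The base case $j=1$ is immediate: $\GG_1^{\fr} = \G[\DDCH{i}]$ with isolated vertices removed, $\M_1 \equiv H$ so every hyperedge is a genuine edge, and $\HQ_1(\DDCH{i}) \equiv \G[\DDCH{i}]$, so both the simplicity and the edge-preservation properties hold trivially.

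For the inductive step, I would carefully compare how $\HQ_j(\DDCH{i})$ is built from $\HQ_{j-1}(\DDCH{i})$ with how $\GG_j^{\fr}$ is built from $\GG_{j-1}^{\fr}$. Fix a vertex $u \in V(\HQ_{j-1}(\DDCH{i}))$ with $\chi(u) = \chi(v_{j-1})$, and let $\Gamma_{j-1}(u)$ be its neighbor set in $\HQ_{j-1}(\DDCH{i})$. By Claim \ref{claim:neighbors-in-consistent-hypergraphs} all these neighbors have distinct colors (so the ``lowest color vertex'' $w$ referred to by the algorithm is well-defined), and since $\HQ_{j-1}(\DDCH{i})$ is consistent, $u$ is safe, so all copies of $H$ through $u$ induce the same neighbor set $\Gamma_{j-1}(u)$. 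Now I consider the three branches of the shadow-graph algorithm. If $\Gamma_{j-1}(u)$ contains a vertex $w$ of color outside $\FREEC$ (the lowest such), then the inductive invariant (II) applied to the hyperedges through $u$ in $\HQ_{j-1}(\DDCH{i})$ — each of which contains $u$ and has $u$ among its vertices, with $u$ being a candidate and $w$ being the lowest color vertex when $u$ itself is not of lowest color, or more carefully, each edge $\e \ni u$ contributes the pair $\{u, \text{lowest color vertex of }\e\}$ — gives that the relevant edges incident to $u$ are present in $\GG_{j-1}^{\fr}$, in particular $(u,w)$ is an edge, so the contraction is legal; this proves (\ref{prop-b}) in this branch. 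Similarly if all colors in $\Gamma_{j-1}(u)$ lie in $\FREEC$ and some $w \in \Gamma_{j-1}(u)$ has color $\fr$, then invariant (II) (the ``$x$ has color $\fr$'' clause) gives that $(u,w) \in \GG_{j-1}^{\fr}$, so again the contraction is legal. In the remaining branch we just delete $u$ and there is nothing to check for (\ref{prop-b}).

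Having verified legality, I would check that the invariant (II) propagates to $\GG_j^{\fr}$. When we contract $(u,w)$ into $w$, every neighbor $x$ of $u$ in $\GG_{j-1}^{\fr}$ becomes a neighbor of $w$ in $\GG_j^{\fr}$ (modulo the removal of parallel edges and loops, which does not destroy adjacency). On the hypergraph side, the new hyperedge created by contracting $u$ is $\N_{j-1}^{\h}\langle u\rangle = \Gamma_{j-1}(u)$, and the lowest color vertex of this new hyperedge is exactly $w$ in the first branch (since $w$ was chosen to be the lowest color vertex of $\Gamma_{j-1}(u)$), while in the second branch $w$ is a color-$\fr$ vertex of it; in either case the pairs $\{x,w\}$ for $x \in \Gamma_{j-1}(u)$ that invariant (II) demands for the new hyperedge are precisely the edges created by the contraction, plus the old edges among $\Gamma_{j-1}(u)$ that survive. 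Edges and hyperedges of $\HQ_{j-1}(\DDCH{i})$ not incident to any contracted vertex are unchanged, and for them (II) carries over verbatim. Finally, for part (\ref{prop-a}): a vertex $u$ of color $\fr$ is never contracted away (its color lies in $\FREEC$, and $\fr$-vertices are contracted in the shadow-graph construction only toward a surviving vertex, never removed), so applying invariant (II) at $j = i$ to any hyperedge $\e$ of $\HQ_i(\DDCH{i})$ containing $u$ — here $u$ itself plays the role of the color-$\fr$ vertex $x$ — yields that $\{u, y\}$ is an edge of $\GG_i^{\fr} = \GG^{\fr}(\HQ_i(\DDCH{i}))$ for every $y \in \e$, which is exactly the statement that every $\HQ_i(\DDCH{i})$-neighbor of $u$ is a $\GG^{\fr}(\HQ_i(\DDCH{i}))$-neighbor of $u$.

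The main obstacle I anticipate is bookkeeping the ``lowest color vertex'' discipline consistently across the two parallel constructions: one must be sure that the vertex toward which the shadow graph contracts is the same vertex that will be contracted next (and hence whose neighborhood must be preserved) on the hypergraph side, and that the label/colored-label machinery of Section \ref{subsec:shrinking-copies-of-H} (in particular the fact that in future rounds only vertices with colors outside $\FREEC$ get contracted) guarantees we never need to have preserved the neighborhood of a vertex we already deleted. Getting the invariant (II) stated tightly enough that it both survives the induction and implies (\ref{prop-a}) at the end is the delicate point; everything else (simplicity, the trivial base case, the unchanged-edges case) is routine. Planarity of the $\GG_j^{\fr}$ — property (1) in the list — follows separately and easily from this lemma together with Fact \ref{fact:planar_minor_planar}, since each $\GG_j^{\fr}$ is obtained from the planar graph $\G[\DDCH{i}]$ by vertex deletions, edge deletions, and edge contractions, hence is a minor of a planar graph.
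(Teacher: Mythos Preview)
Your approach is essentially the same as the paper's: the paper also proves (a) and (b) jointly by induction on $j$, using exactly your invariant (II) split into its two clauses (what the paper calls invariants (1) and (2)), and deduces (\ref{prop-b}) from clause (1) at level $j-1$.

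There is one point where your justification is muddled and should be cleaned up. When verifying that $(u,w)$ is an edge of $\GG_{j-1}^{\fr}$, you try to locate the pair $\{u,w\}$ via ``$w$ being the lowest color vertex'' of some hyperedge, or via the color-$\fr$ clause in the second branch. Neither is the right move. The key observation (which the paper states explicitly) is that the vertex $u$ being contracted has $\chi(u)=\chi(v_{j-1})$, and since every vertex appearing in any hyperedge of $\HQ_{j-1}(\DDCH{i})$ has color in $\{\chi(v_{j-1}),\dots,\chi(v_{\Hsize})\}$, it is $u$ itself that is the lowest color vertex of \emph{every} hyperedge containing it. Clause (1) of your invariant then says that \emph{every} $x\in\Gamma_{j-1}(u)$ is adjacent to $u$ in $\GG_{j-1}^{\fr}$, so in particular $(u,w)$ is an edge --- and this holds uniformly in all three branches of the shadow-graph algorithm, without ever invoking the color-$\fr$ clause at this step. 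Once you make this observation, the propagation of the invariant and the final deduction of (\ref{prop-a}) go through exactly as you outline.
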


\begin{proof}
We first prove by induction on $j$ the following invariant for every $j$, $1 \le j \le i$:
\begin{enumerate}[(1)]
\item if $\e$ is an edge in $\HQ_j(\DDCH{i})$ that contains at least one vertex with colors from outside $\FREEC$, then 
    for every $x \in \e \setminus \{w\}$, $\GG_j^{\fr}$ contains edge $(x,w)$, where $w$ is the lowest color vertex in $\Gamma_{j-1}(u)$;\label{fir-cond}
\item if $\e$ is an edge in $\HQ_j(\DDCH{i})$ that contains only vertices with colors from $\FREEC$ and contains a vertex $u$ of color $\fr$, then 
    for every $x \in \e \setminus \{u\}$, $\GG_j^{\fr}$ contains edge $(x,u)$;\label{sec-cond}
\item every time to define $\GG_j^{\fr}$, $j > 1$, we perform contraction of an edge $(u,w)$ into vertex $w$, we have that $(u,w)$ is an edge in $\GG_{j-1}^{\fr}$.\label{third-cond}
\end{enumerate}

(Let us remark that the reason of special treatment of the edges/hyperedges $\e$ in $\HQ_j(\DDCH{i})$ containing only vertices with colors from $\FREEC \setminus {\fr}$, is that our construction ensures that all such edges will stay unchanged in $\GG_{j+1}^{\fr}, \dots, \GG_i^{\fr}$, and hence, since they contain no vertices of color $\fr$, they are irrelevant for the set of neighbors of any vertex $u$ of color $\fr$.)

To prove the invariants, let us first note that
since $\GG_1^{\fr} = \G[\DDCH{i}] = \HQ_1(\DDCH{i})$, all invariants trivially hold for $j=1$.

Next, let us assume that $j > 1$.

First, we observe that invariant (\ref{third-cond}) for $j$ follows immediately from invariant (\ref{fir-cond}) for $j-1$. Indeed, let us consider a vertex $u \in V(\HQ_{j-1}(\DDCH{i}))$ with $\chi(u) = \chi(v_{j-1})$. Then, every neighbor $x$ of $u$ will be adjacent to $u$ via an edge/hyperedge in $\HQ_{j-1}(\DDCH{i})$ containing $u$. Since each edge has vertices of distinct colors, vertex $u$ is the lowest color vertex in $\e$. Therefore, by invariant (\ref{fir-cond}), graph $\GG_{j-1}^{\fr}$ contains edge $(x,u)$.

Next, we prove that invariants (\ref{fir-cond}) and (\ref{sec-cond}) hold for $j>1$, assuming the invariants hold for $j-1$.

First, let us notice that any edge/hyperedge $\e$ in $\HQ_{j-1}(\DDCH{i})$ that contains no vertex of color $\chi(v_{j-1})$ will also be an edge in $\HQ_{j}(\DDCH{i})$, and hence invariants (\ref{fir-cond}) and (\ref{sec-cond}) will be satisfied.

Therefore we only have to consider the case when a new hyperedge $\e$ is created in $\HQ_j(\DDCH{i})$. This edge $\e$ has been obtained by taking a vertex $z \in V(\HQ_{j-1}(\DDCH{i}))$ with $\chi(z) = \chi(v_{j-1})$, and creating $\e$ to be equal to the set of all neighbors of $z$ in $\HQ_{j-1}(\DDCH{i})$. Notice that by invariant (\ref{fir-cond}) and because $z$ is the lowest color vertex in every edge/hyperedge incident to $z$ in $\HQ_{j-1}(\DDCH{i})$, vertex $z$ is adjacent in $\GG_{j-1}^{\fr}$ to all of its neighbors in $\HQ_{j-1}(\DDCH{i})$, that is, $\e \subseteq \Gamma_{j-1}(z)$. Therefore, the operation of contracting any edge $(z,y)$ in $\GG_{j-1}^{\fr}$ to vertex $y$ in $\GG_{j}^{\fr}$ will result in vertex $y$ being adjacent in $\GG_{j}^{\fr}$ to every vertex from $\e \setminus \{y\}$. Therefore, by the way how our algorithm constructing $\GG_{j}^{\fr}$ chooses vertex $w$, we can conclude that invariants (\ref{fir-cond}) and (\ref{sec-cond}) hold for~$j$.

Now, once we have proven the invariants, we can conclude the proof of Lemma \ref{lemma:key-property-color-shadow-graphs}. Indeed, invariant (\ref{sec-cond}) for $j=i$ implies the proof of property (\ref{prop-a}) and invariant (\ref{third-cond}) yields property (\ref{prop-b}).
\end{proof}

Notice that property (\ref{prop-b}) of Lemma \ref{lemma:key-property-color-shadow-graphs} together with the well-known fact that any number of edge-contractions of a planar graph leaves the graph planar (cf. Section \ref{subsec:basic-planar}) yields the following.

\begin{claim}
\label{claim:shadow-graphs-are-planar}
If $G$ is a planar graph then so is 
$\GG_i^{\fr}$, for every $i$ and $\fr \in \FREEC$.
\end{claim}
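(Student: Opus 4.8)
The plan is to fix $i$ and a color $\fr \in \FREEC$ and prove by induction on $j$ that every graph in the sequence $\GG_1^{\fr}, \GG_2^{\fr}, \dots, \GG_i^{\fr}$ produced by the construction is planar; taking $j = i$ and recalling that $\GG_i^{\fr} = \GG^{\fr}(\HQ_i(\DDCH{i}))$ then gives the claim. For the base case, I would observe that $\GG_1^{\fr}$ is, by definition, the graph $\G[\DDCH{i}]$ with all its isolated vertices deleted. Since $\DDCH{i}$ is a set of edge-disjoint colored copies of $H$ in $G$, the graph $\G[\DDCH{i}]$ is a subgraph of $G$, hence planar because $G$ is planar, and deleting vertices from a planar graph leaves it planar (a special case of Fact \ref{fact:planar_minor_planar}, or simply the fact that subgraphs of planar graphs are planar). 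So $\GG_1^{\fr}$ is planar.

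For the inductive step I would examine exactly which operations take $\GG_{j-1}^{\fr}$ to $\GG_j^{\fr}$. According to the construction, for each vertex $u \in V(\HQ_{j-1}(\DDCH{i}))$ with $\chi(u) = \chi(v_{j-1})$ the algorithm either (i) contracts a pair $(u,w)$ into $w$ for a carefully chosen neighbour $w$ (the lowest-color vertex, or a vertex of color $\fr$), or (ii) deletes $u$; and afterwards it deletes all parallel edges and self-loops. Vertex deletion and deletion of parallel edges and self-loops are subgraph operations, so they trivially preserve planarity. The only step that could a priori damage planarity is the "contraction'': identifying $u$ with $w$ is a genuine minor operation — and hence planarity-preserving — only if $(u,w)$ is actually an edge of the graph being modified, not an arbitrary pair of non-adjacent vertices. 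This is precisely the point where I would invoke property (\ref{prop-b}) of Lemma \ref{lemma:key-property-color-shadow-graphs}, which states that every time the construction contracts a pair $(u,w)$ while building $\GG_j^{\fr}$, $(u,w)$ is already an edge of $\GG_{j-1}^{\fr}$. Thus each such contraction is an honest edge contraction, and since edge contractions and vertex/edge deletions of a planar graph leave the graph planar (Section \ref{subsec:basic-planar}, Fact \ref{fact:planar_minor_planar}), the inductive hypothesis that $\GG_{j-1}^{\fr}$ is planar yields that $\GG_j^{\fr}$ is planar as well. This closes the induction.

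The only real content of the argument, beyond routine appeals to the hereditary and minor-closed nature of planarity, is the verification that the "contraction'' steps in the construction are legitimate minor operations; everything else is bookkeeping. Since that verification is exactly Lemma \ref{lemma:key-property-color-shadow-graphs}(b), which is already established, I do not expect any further obstacle — the induction is short and introduces no new combinatorial ingredient.
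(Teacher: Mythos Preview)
Your proposal is correct and matches the paper's own argument essentially verbatim: the paper simply notes that property~(\ref{prop-b}) of Lemma~\ref{lemma:key-property-color-shadow-graphs} guarantees each contraction is a genuine edge contraction, so planarity is preserved throughout by Fact~\ref{fact:planar_minor_planar}. Your write-up spells out the induction in a bit more detail, but the content is the same.
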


Now we are ready to complete the proof of Lemma \ref{lemma:central-small-degrees}.

\LSG* 

\begin{proof}
Let us define a simple graph $\Gii = (\Vii,\Eii)$ that is a union of graphs $\GG^{\fr}(\HQ_i(\DDCH{i}))$ for $\fr \in \FREEC$. That is, its vertex set $\Vii$ is equal to the set of non-isolated vertices in $\HQ_i(\DDCH{i})$ (or equivalently, vertices of $\G[\DDCH{i}]$ of colors from $\FREEC$) and its edge set $\Eii$ contains all edges from all graphs $\GG^{\fr}(\HQ_i(\DDCH{i}))$ for $\fr \in \FREEC$, that is,
\begin{displaymath}
    \Eii =
        \{(x,y) \in V^2: x \ne y \text{ and there exists } \fr \in \FREEC
            \text{ such that }
            (x,y) \in E(\GG^{\fr}(\HQ_i(\DDCH{i})))
        \}
    \enspace.
\end{displaymath}

For any vertex $u \in V$, let $\NEIG{\Gii}{u}$ be the set of neighbors of vertex $u$ in $\Gii$ and let $\NEIG{\HQ_i(\DDCH{i})}{u}$ be the set of neighbors of vertex $u$ in $\HQ_i(\DDCH{i})$. Notice that by Lemma \ref{lemma:key-property-color-shadow-graphs}, for any vertex $u \in V$, every neighbor $x \in V \setminus \{u\}$ of $u$ in $\HQ_i(\DDCH{i})$ is also a neighbor of $u$ in $\Gii$. That is,
\begin{equation}
\label{eq:subset-neighbors}
    \text{ for every } u \in V \text{ it holds that }
    \NEIG{\HQ_i(\DDCH{i})}{u} \setminus \{u\}
        \subseteq \NEIG{\Gii}{u}
    \enspace.
\end{equation}

Now, we define the \emph{shadow graph} $\Gi$ to be the maximal subgraph of $\Gii$ for which (\ref{eq:subset-neighbors}) holds, that is, a subgraph of $\Gii$ for which $\NEIG{\HQ_i(\DDCH{i})}{u} \setminus \{u\} = \NEIG{\Gi}{u}$ for every $u \in V$. Clearly, since $\Gi$ is a subgraph of $\Gii$, this completes the proof of Lemma \ref{lemma:central-small-degrees}.
\end{proof}


\section{Proof of Lemma \ref{lemma:small-vertices-new}: Finding many copies of $H$ with low-degree vertices}
\label{sec:proof-lemma:small-vertices-new}

In this section we prove Lemma \ref{lemma:small-vertices-new}, which states that for any set $\DDCH{i}$ of edge-disjoint colored copies of $H$ in $G$ with consistent $\HQ_i(\DDCH{i})$, there is a set $\DCH \subseteq \DDCH{i}$ of size at least $\frac{|\DDCH{i}|}{4\Hsize+2}$ such that in the hypergraph $\HQ_i(\DCH)$, every copy of $H$ in $\DCH$ has a vertex with at most $6 \Hsize$ distinct neighbors.
The proof of Lemma \ref{lemma:small-vertices-new} follows closely the arguments from \cite{CMOS11}, though the analysis needs to be expanded to deal with the underlying hypergraphs rather than graphs, and to rely on a graph representation that is a union of several planar graphs, rather than a single simple planar graph.

\LSV* 
\junk{
\begin{restatable}{lemma}{LSV}
\label{lemma:small-vertices-new}
Let $\DDCH{i}$ be a set of edge-disjoint colored copies of $H$ in $G$ and let $\HQ_i(\DDCH{i})$ be a hypergraph consistent for $\DDCH{i}$. Then, there is a set $\DCH \subseteq \DDCH{i}$ of size at least $\frac{|\DDCH{i}|}{4\Hsize+2}$ such that in the hypergraph $\HQ_i(\DCH)$, every copy of $H$ in $\DCH$ has a vertex 
with at most $6 \Hsize$ distinct neighbors.
\end{restatable}
}

\begin{proof}
Our proof relies on Lemma \ref{lemma:central-small-degrees}, which ensures that in order to analyze the neighbors of any vertex in $\HQ_i(\DDCH{i})$ (or its sub-hypergraph $\HQ_i(\DCH)$ with $\DCH \subseteq \DDCH{i}$) it is sufficient to consider the neighbors of that vertex in $\GG(\HQ_i(\DDCH{i}))$ (or its relevant subgraph).

Notice that the vertex set of $\GG(\HQ_i(\DDCH{i}))$ is the set of all non-isolated vertices in $\HQ_i(\DDCH{i})$. Since $\GG(\HQ_i(\DDCH{i}))$ is a union of at most $\Hsize$ simple planar graphs, by Euler's formula, in any subgraph of $\GG(\HQ_i(\DDCH{i}))$ there exists a non-isolated vertex with at most $6 \cdot \Hsize - 1$ neighbors; taking into account edges incident to $u$ in $\HQ_i(\DDCH{i})$ that contain $u$ itself, there is always a non-isolated vertex in $\HQ_i(\DDCH{i})$ with at most $6 \cdot \Hsize$ neighbors. We will rely on this property throughout the proof.\footnote{To prove it, let $\widehat{G}$ be a subgraph $\GG(\HQ_i(\DDCH{i}))$, and let $V_{\widehat{G}}$ be the set of non-isolated vertices in $\widehat{G}$. By Euler's formula (cf. Fact \ref{fact:planar_limited_edges}), $\widehat{G}$ has at most $\Hsize \cdot (3 |V_{\widehat{G}}| - 6) < 3 \cdot \Hsize \cdot |V_{\widehat{G}}|$ edges. Therefore, since $\sum_{u \in V_{\widehat{G}}}\deg(u) = 2 \cdot E(\widehat{G}) < 6 \cdot \Hsize \cdot |V_G|$, there must be a vertex in $V_{\widehat{G}}$ with at most $6\Hsize-1$ distinct neighbors.}

%
In what follows, we will consider subsets of the input set $\DDCH{i}$ of copies of $H$ and their representation in the subgraph of the shadow graph $\GG(\HQ_i(\DDCH{i}))$ (by using the properties from Lemma \ref{lemma:central-small-degrees}).
Any copy $\h$ of $H$ in $\DDCH{i}$ is represented by a subgraph of $\GG(\HQ_i(\DDCH{i}))$, such that if $\h$ corresponds to a copy of $\M_i$ in $\HQ_i(\DDCH{i})$, then for any vertex $u \in V(\HQ_i(\DDCH{i}))$ in that copy, the neighbors of $u$ in that copy are also the neighbors of $u$ in that subgraph of $\GG(\HQ_i(\DDCH{i}))$. 
This definition can be extended to the representation of any subset $\DCH^* \subseteq \DDCH{i}$ of copies of $H$: $\DCH^*$ is represented by a subgraph of $\GG(\HQ_i(\DDCH{i}))$ with the edge set that is a union of all subgraphs corresponding to all copies of $H$ in $\DCH^*$. This representation will allow to naturally define the operation of removal of some copies of $H$ from $\DDCH{i}$ in the context of the subgraphs of $\GG(\HQ_i(\DDCH{i}))$.

\medskip

Let $\GG$ be the shadow graph $\GG(\HQ_i(\DDCH{i}))$, as defined in Lemma \ref{lemma:central-small-degrees}. We find $\DCH$ in two phases.

\paragraph{Phase~1:}
Let $\DCH$ be initially set up to be the input set $\DDCH{i}$ of copies of $H$. We partition $\DCH$ into \emph{levels}, iteratively removing the copies of $H$ until $\DCH$ is empty. In the $j^{\text{th}}$ iteration, we choose an arbitrary vertex $u_j$ that belongs to at least one copy of $H$ in $\DCH$ and which has at most $6 \Hsize$ distinct neighbors in the current graph $\GG$ representing $\DCH$.
(Here $\DCH$ refers to the current set $\DCH$, i.e., after the removal of the sets from the previous iterations of the repeat-loop.) Every copy $\h$ of $H$ in $\DCH$ that contains $u_j$ is removed from $\DCH$. If a copy $\h$ is removed in the $j^{\text{th}}$ iteration, then its \emph{level} $\ell(\h)$ is equal to $j$.

\paragraph{Phase~2:}
We start again with $\DCH$ being the input set $\DDCH{i}$ of copies of $H$. We iterate through the levels in decreasing order. For each level $j$, we let $A(j)$ denote the current subset of copies of $H$ in $\DCH$ at level $j$. By definition of the level, all copies of $H$ in $A(j)$ must contain vertex $u_j$. Furthermore, we define $B(j)$ to be the subset of copies of $H$ in $\DCH$ that contain $u_j$ and have a level smaller than $j$. We observe that if we remove all copies of $H$ in $B(j)$ from $\DCH$, then every copy of $H$ in $A(j)$ contains a vertex (e.g., vertex $u_j$) with at most $6 \Hsize$ distinct neighbors in $\GG$. The second phase relies on this observation, and for every $j$, we will decide whether we want to return in the final $\DCH$ all copies of $H$ in $A(j)$, in which case we will remove all copies of $H$ in $B(j)$, or not. For that, we compare the size of $A(j)$ to the size of $B(j)$. If $|A(j)| \ge \frac{1}{2\Hsize} \cdot |B(j)|$, then we keep $A(j)$ and remove $B(j)$ from $\DCH$; otherwise, we remove $A(j)$.
By our arguments above, the set $\DCH \subseteq \DDCH{i}$ obtained at the end will consist solely of copies of $H$ that contain at least one vertex  with at most $6 \Hsize$ distinct neighbors in $\GG$. Then, we only will have to prove below that $|\DCH| \ge \frac{|\DDCH{i}|}{4\Hsize+2}$.

We will now present more detailed arguments after describing a pseudocode of the process.

\medskip
\begin{walgo}
\AL\,(set $\DDCH{i}$ of copies of $H$ and a shadow graph $\GG(\HQ_i(\DDCH{i}))$)
\small
\begin{itemize}
\item $j = 1$
\item[] \textbf{\!\!Phase 1:}
\item $\DCH = \DDCH{i}$; $\GG = \GG(\HQ_i(\DDCH{i}))$
\item Repeat until $\DCH$ is empty:
    \begin{itemize}[$\circ$]
    \item Let $u_j$ be a non-isolated vertex that has at most $6 \Hsize$ distinct neighbors in~$\GG$
    \item For all copies of $H$ \ $\h \in \DCH$ that contain $u_j$, let $\ell(\h)=j$
		\item Remove from $\DCH$ all copies of $H$ that contain $u_j$ and update $\GG$ accordingly
    \item $j=j+1$
		\end{itemize}
\item[]\textbf{\!\!Phase 2:}
\item $\DCH = \DDCH{i}$
\item Repeat until $j=1$:
     \begin{itemize}[$\circ$]
		 \item $j=j-1$
		 \item $A(j) = \{\h \in \DCH: \ell(\h) = j\}$
		 \item $B(j) = \{\h \in \DCH: \ell(\h) < j \text{ and } \h \text{ contains } u_j\}$
	   \item if $|A(j)| \ge \frac{1}{2\Hsize} \cdot |B(j)|$ then
		       $\DCH = \DCH \setminus B(j)$ else $\DCH = \DCH \setminus A(j)$
        \item Update $\GG$ accordingly
		 \end{itemize}
\item Return $\DCH$
\end{itemize}
\end{walgo}

In what follows we will prove the correctness of the algorithm. We first observe that Phase~1 terminates since $\GG$ is a union of at most $\Hsize$ copies of planar graph and of self-loops, and therefore by Euler's formula, it has a non-isolated vertex with at most $6 \Hsize$ neighbors (this also holds during the execution of the algorithm since planarity is closed under edge removals).

It remains to analyze Phase 2 of the algorithm. Every copy of $H$ in $\DCH$:
%
\begin{enumerate}[\it (a)]
\item is removed because it is contained in some set $A(j)$ that is removed from $\DCH$ in Phase 2, or
\item is removed because it is contained in some set $B(j)$ that is removed from $\DCH$ in Phase 2, or
\item is not removed and stays in the final set $\DCH$ (and hence, by our arguments above, it contains at least one vertex  with at most $6 \Hsize$ distinct neighbors in $\GG$, and thus in $\HQ_i(\DCH)$).
\end{enumerate}

Let $\alpha$, $\beta$, $\gamma$ be the respective numbers of copies of $H$ in the original $\DDCH{i}$ (notice that $\gamma = |\DCH|$). Clearly, $|\DDCH{i}| = \alpha + \beta + \gamma$ and to prove Lemma \ref{lemma:small-vertices-new} we have to show that $\gamma \ge \frac{1}{4\Hsize+2} \cdot |\DDCH{i}|$. We proceed in two steps. We first prove in Claim \ref{claim:1a} that $\alpha \le \frac12 |\DDCH{i}|$, which implies that $\beta + \gamma \ge \frac12 |\DDCH{i}|$. Then we argue in Claim \ref{claim:1b} that $2 \Hsize \gamma \ge \beta$. This yields $(2\Hsize+1) \gamma \ge \frac12 |\DDCH{i}|$ and hence $|\DCH| = \gamma \ge \frac{1}{4\Hsize+2} \cdot |\DDCH{i}|$, completing the proof of Lemma \ref{lemma:small-vertices-new}.

\begin{claim}
\label{claim:1a}
$\alpha \le \frac12 |\DDCH{i}|$.
\end{claim}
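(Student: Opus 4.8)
The plan is to prove Claim~\ref{claim:1a} by a charging argument internal to Phase~2. A copy $\h$ contributes to $\alpha$ precisely when the block $A(j)$ it belongs to is discarded, and since a copy of level $\ell(\h)$ lies only in $A(\ell(\h))$, that block is $A(\ell(\h))$. By the rule of Phase~2, $A(j)$ is discarded only when $|A(j)|<\tfrac{1}{2\Hsize}\,|B(j)|$. Hence, writing $R$ for the set of iterations $j$ at which $A(j)$ is discarded,
\[
\alpha \;=\; \sum_{j\in R}|A(j)| \;<\; \frac{1}{2\Hsize}\sum_{j\in R}|B(j)| \;\le\; \frac{1}{2\Hsize}\sum_{j}|B(j)|,
\]
so it suffices to prove $\sum_{j}|B(j)|\le(\Hsize-1)\,|\DDCH{i}|$, the sum ranging over all iterations of the algorithm.

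I would establish this by counting $\sum_j|B(j)|=\sum_{\h\in\DDCH{i}}N(\h)$, where $N(\h)$ is the number of iterations $j$ for which $\h\in B(j)$ at the moment $j$ is processed, and bounding each $N(\h)$ by $\Hsize-1$. Three observations drive this. First, the map $j\mapsto u_j$ is injective: once $u_j$ is picked in Phase~1, every copy containing $u_j$ is removed, so $u_j$ becomes isolated in the updated shadow graph $\GG$ and is never picked again. Second, $\ell(\h)=\min\{k:u_k\in\h\}$, because $\h$ is deleted at the first iteration $k$ whose pivot $u_k$ lies in $\h$; consequently every $j$ with $u_j\in\h$ and $j\ne\ell(\h)$ automatically satisfies $\ell(\h)<j$, one of the two requirements for $\h\in B(j)$. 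Third, a copy $\h$ of $H$ in $\DDCH{i}$ is represented in $\HQ_i(\DDCH{i})$ by a copy of $\M_i$, which has $\Hsize-i+1\le\Hsize$ vertices; since distinct $u_j$'s are distinct vertices, there are at most $\Hsize-i+1$ iterations $j$ with $u_j\in\h$, and discarding $j=\ell(\h)$ leaves at most $\Hsize-i\le\Hsize-1$. The remaining constraint in the definition of $B(j)$ --- that $\h$ still belong to the current $\DCH$ when $j$ is processed --- only shrinks the count, so $N(\h)\le\Hsize-1$.

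Putting the two displays together yields $\alpha<\frac{\Hsize-1}{2\Hsize}\,|\DDCH{i}|<\frac12|\DDCH{i}|$, which is the claim. (In the degenerate case $i=\Hsize$ the sum $\sum_j|B(j)|$ is empty and $\alpha=0$; and whenever $A(j)$ is discarded we have $|B(j)|>0$, so the inequality $|A(j)|<\tfrac{1}{2\Hsize}|B(j)|$ is a genuine strict one.) The one delicate point is not a real difficulty but bookkeeping: one must check that $\alpha,\beta,\gamma$ genuinely partition $\DDCH{i}$. This is immediate once one notes that a copy $\h$ of level $\ell(\h)$ can belong to $A(j)$ only for $j=\ell(\h)$ and to $B(j)$ only for $j>\ell(\h)$, that Phase~2 processes iterations in decreasing order, and that a copy deleted (whether inside an $A(j)$ or a $B(j)$) leaves $\DCH$ and hence cannot be touched again; so every copy is accounted for exactly once, in exactly one of the three groups.
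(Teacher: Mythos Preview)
Your proof is correct and follows essentially the same approach as the paper: bound $\sum_j |B(j)|$ by $(\Hsize-1)\,|\DDCH{i}|$ via the observation that each copy can lie in at most $\Hsize-1$ sets $B(j)$, then combine with the Phase~2 rule $|A(j)|<\tfrac{1}{2\Hsize}|B(j)|$ for discarded $A(j)$. You give slightly more detail (injectivity of $j\mapsto u_j$, the sharper vertex count $\Hsize-i+1$ in $\M_i$), but the argument is the same.
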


\begin{proof}
We charge the vertices from the removed sets $A(j)$ to the sets $B(j)$ and derive a bound on the sum of sizes of the sets $B(j)$. Recall that every copy of $H$ contains $\Hsize$ vertices. In every copy, one vertex is the vertex that has at most $6 \Hsize$ distinct neighbors in $\GG$, when the copy is removed in Phase 1 of the algorithm. Thus, every copy is contained in at most $\Hsize-1$ different sets $B(j)$. It follows that
\begin{displaymath}
    \sum_{j} |B(j)| \le (\Hsize-1) \cdot |\DDCH{i}|
    \enspace.
\end{displaymath}
Let $R$ denote the set of indices $j$ such that $A(j)$ is removed from $\DCH$ during Phase 2. Observe that whenever we remove a set $A(j)$, we have $|A(j)| < \frac{1}{2\Hsize} |B(j)|$ by the condition in the process. It follows that
\begin{align*}
    \alpha
        &=
    \sum_{j \in R} |A(j)|
        <
    \sum_{j \in R} \frac{1}{2\Hsize} |B(j)|
        \le
    \frac{\Hsize-1}{2\Hsize} \cdot |\DDCH{i}|
        <
    \frac{1}{2} \cdot |\DDCH{i}|
        \enspace.
        \qedhere
\end{align*}
\end{proof}

\begin{claim}
\label{claim:1b}
$2 \Hsize \gamma \ge \beta$.
\end{claim}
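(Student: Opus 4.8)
\textbf{Proof proposal for Claim~\ref{claim:1b}.}

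The plan is to charge, for each index $j$ that survives Phase~2 via the ``keep $A(j)$, remove $B(j)$'' branch, the copies in $B(j)$ to the copies in $A(j)$, and to show that each copy contributes to at most $2\Hsize$ such charges. Recall that $\gamma = |\DCH|$ where $\DCH$ is the final set returned, and $\beta$ counts exactly the copies that were discarded as members of some removed $B(j)$. Let $K$ denote the set of indices $j$ for which the algorithm executed $\DCH = \DCH\setminus B(j)$ (the ``keep'' branch). For every such $j$ we had the inequality $|A(j)| \ge \frac{1}{2\Hsize}\,|B(j)|$ at the moment of processing, i.e.\ $|B(j)| \le 2\Hsize\,|A(j)|$, where here $A(j)$ and $B(j)$ refer to the \emph{current} sets (after the removals performed in earlier, higher-index, iterations of Phase~2).

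First I would observe that when $j$ is processed and the keep branch is taken, the current $A(j)$ consists exactly of copies that are \emph{not} removed in this iteration and will never be removed afterwards: they have level exactly $j$, and in later (lower-index) iterations only sets $A(j')$ or $B(j')$ with $j' < j$ are touched, none of which contains a copy of level $j$ (a copy of level $j$ in $B(j')$ would need level $<j'<j$, impossible; and $A(j')$ has level $j'\ne j$). Hence every copy in the current $A(j)$ at a keep-step ends up in the final $\DCH$. Therefore $\sum_{j\in K}|A(j)_{\text{current}}| \le \gamma$, because these sets are pairwise disjoint (a copy has a single level) and all contained in $\DCH$.

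Next, every copy counted in $\beta$ was removed as a member of the current $B(j)$ for exactly one $j\in K$ (once removed it is gone, and only keep-steps remove $B$-sets). Combining this with $|B(j)_{\text{current}}| \le 2\Hsize\,|A(j)_{\text{current}}|$ for each $j\in K$, we get
\begin{displaymath}
    \beta \;=\; \sum_{j\in K} |B(j)_{\text{current}}| \;\le\; \sum_{j\in K} 2\Hsize\,|A(j)_{\text{current}}| \;=\; 2\Hsize \sum_{j\in K}|A(j)_{\text{current}}| \;\le\; 2\Hsize\,\gamma \enspace,
\end{displaymath}
which is the claim. The only point needing a little care is the bookkeeping that the $B(j)$ removed at distinct keep-steps are disjoint and that each contributes its \emph{current} size (so that no copy of $\beta$ is double counted); this follows because a copy, once removed from $\DCH$, never re-enters, and a removed $B(j)$ is a subset of the then-current $\DCH$. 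I expect this disjointness/uniqueness bookkeeping to be the main (and only real) obstacle, and it is handled by the level structure as just described; everything else is the single inequality guaranteed by the branch condition.
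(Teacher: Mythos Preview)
Your proof is correct and follows essentially the same approach as the paper: you use the branch condition $|B(j)|\le 2\Hsize\,|A(j)|$ at each keep-step, argue that the kept $A(j)$'s survive to the end and are pairwise disjoint (hence sum to at most $\gamma$), and observe that each copy counted in $\beta$ is removed in exactly one current $B(j)$. Your write-up is in fact slightly more careful than the paper's in making explicit why the $A(j)$'s at keep-steps are never touched later and why the removed $B(j)$'s are disjoint.
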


\begin{proof}
For every set $B(j)$ removed from $\DCH$, we know that $|A(j)| \ge \frac{1}{2\Hsize} |B(j)|$. At the point of time when $B(j)$ is removed from $\DCH$, the set $A(j)$ remains in $\DCH$ because $A(j)$ and $B(j)$ are disjoint. Since we are iterating downwards through the levels of the copies of $H$, the set $A(j)$ is also disjoint from all sets $B(j')$, $j'<j$, and so it is not removed also in any future iteration of the repeat loop. Thus, in this case each copy of $H$ from $A(j)$ remains in $\DCH$ until the end of the process and contributes to the value of $\gamma$. Let $R'$ be the set of indices $j$ such that $A(j)$ remains in $\DCH$ during Phase 2 (and hence $B(j)$ is removed from $\DCH$). Since each copy of $H$ in $A(j)$, $j\in R'$, contributes to $\gamma$ and since sets $A(j)$ are disjoint, we obtain $\sum_{j\in R'} |A(j)| \le \gamma$. Hence,
\begin{align*}
    \frac{1}{2\Hsize} \beta
        &=
    \frac{1}{2\Hsize} \sum_{j\in R'} |B(j)|
        \le
    \sum_{j\in R'} |A(j)|
        \le
    \gamma
        \enspace,
\end{align*}
which implies the claim.
\end{proof}

With Claims \ref{claim:1a}--\ref{claim:1b} at hand, we obtain that the set $\DCH$ ($\DCH \subseteq \DDCH{i}$) contains copies of $H$ such that
\begin{itemize}
\item each copy of $H$ in $\DCH$ has a vertex with at most $6 \Hsize$ distinct neighbors in $\HQ_i(\DCH)$, and
\item $|\DCH| = \gamma \ge \frac{1}{4\Hsize+2} \cdot |\DDCH{i}|$.
\end{itemize}

This completes the proof of Lemma \ref{lemma:small-vertices-new}.
\end{proof}


\end{document}